\keywords{cubical type theory, parametricity, computational type theory, modal type theory}
\DeclareFontFamily{OMX}{MnSymbolE}{}
\DeclareFontShape{OMX}{MnSymbolE}{m}{n}{
    <-6>  MnSymbolE5
   <6-7>  MnSymbolE6
   <7-8>  MnSymbolE7
   <8-9>  MnSymbolE8
   <9-10> MnSymbolE9
  <10-12> MnSymbolE10
  <12->   MnSymbolE12}{}
\DeclareFontShape{OMX}{MnSymbolE}{b}{n}{
    <-6>  MnSymbolE-Bold5
   <6-7>  MnSymbolE-Bold6
   <7-8>  MnSymbolE-Bold7
   <8-9>  MnSymbolE-Bold8
   <9-10> MnSymbolE-Bold9
  <10-12> MnSymbolE-Bold10
  <12->   MnSymbolE-Bold12}{}
\DeclareSymbolFont{MnSymE}{OMX}{MnSymbolE}{m}{n}
\DeclareMathSymbol{\llangle}{\mathopen}{MnSymE}{116}
\DeclareMathSymbol{\rrangle}{\mathclose}{MnSymE}{121}
\DeclareMathSymbol{\llangle}{\mathopen}{MnSymE}{116}
\DeclareMathSymbol{\rrangle}{\mathclose}{MnSymE}{121}
\DeclareSymbolFont{arrows1}{LS1}{stixsf}{m}{n}
\DeclareMathDelimiter{\Ddownarrow}{\mathrel}{arrows1}{"60}{arrows1}{"60} 
\tikzset{
  ->-/.style={
    decoration={
      markings,
      mark=at position #1 with {\arrow{>}}},
    postaction={decorate}
  }
}
\tikzset{
  -<-/.style={
    decoration={
      markings,
      mark=at position #1 with {\arrow{<}}},
    postaction={decorate}
  }
}
\mathchardef\mh="2D 
\renewcommand{\_}{\ensuremath{\rule{1ex}{.4pt}}}
\renewcommand{\phi}{\varphi}
\def\rightharpoonupfill@{\arrowfill@\relbar\relbar\rightharpoonup}
\newcommand{\overrightharpoonup}{%
\mathpalette{\overarrow@\rightharpoonupfill@}}
\definecolor{Revolutionary}{RGB}{232,70,68}
\newcommand{\eqdef}{\coloneqq}
\newcommand{\sem}[1]{\llbracket{#1}\rrbracket}
\newcommand{\BB}{\ensuremath{\mathbb{B}}}
\newcommand{\BI}{\ensuremath{\mathbb{I}}}
\newcommand{\BN}{\ensuremath{\mathbb{N}}}
\newcommand{\BZ}{\ensuremath{\mathbb{Z}}}
\newcommand{\BFI}{\ensuremath{\mathbf{I}}}
\newcommand{\CU}{\ensuremath{\mathcal{U}}}
\newcommand{\GD}{\ensuremath{\Delta}}
\newcommand{\GG}{\ensuremath{\Gamma}}
\newcommand{\GO}{\ensuremath{\Omega}}
\newcommand{\GPS}{\ensuremath{\Psi}}
\newcommand{\GX}{\ensuremath{\Xi}}
\newcommand{\Ga}{\ensuremath{\alpha}}
\newcommand{\Gb}{\ensuremath{\beta}}
\newcommand{\Gf}{\ensuremath{\phi}}
\newcommand{\Gg}{\ensuremath{\gamma}}
\newcommand{\Gd}{\ensuremath{\delta}}
\newcommand{\Ge}{\ensuremath{\varepsilon}}
\newcommand{\Gm}{\ensuremath{\mu}}
\newcommand{\Gps}{\ensuremath{\psi}}
\newcommand{\Gt}{\ensuremath{\tau}}
\newcommand{\Gx}{\ensuremath{\xi}}
\newcommand{\define}[4]{\expandafter#1\csname#3#4\endcsname{#2{#4}}}
\NewDocumentCommand{\bad}{s}{\IfBooleanF{#1}{\ \ }\text{\ding{56}}}
\newcommand{\dataheading}[1]{\mathsf{data}~{#1}~\mathsf{where}}
\NewDocumentCommand\niceconstr{m o m g}{\mid {#1}\IfValueT{#2}{({#2})} \in {#3} \IfValueT{#4}{~[{#4}]}}
\newcommand{\rulename}[1]{\textsc{#1}}
\newcommand{\usubst}[3]{\ensuremath{#1 [#2 / #3]}}
\newcommand{\usubstdim}[3]{\ensuremath{#1 [{#2}/{#3}]}}
\newcommand{\usubstdims}[2]{{#1}{#2}}
\newcommand{\usubstlist}[2]{{#1}{#2}}
\NewDocumentCommand\typi{s o m m}{%
  \ensuremath{({#2}{:}{#3}) \IfBooleanF{#1}{\to}{\,} #4}}
\newcommand{\tyarr}[2]{\ensuremath{#1 \to #2}}
\newcommand{\tmlam}[2][]{\ensuremath{\lambda{#1}.{#2}}}
\newcommand{\tysigma}[3][]{\ensuremath{({#1}:{#2}) \times #3}}
\newcommand{\typrod}[2]{{#1} \times {#2}}
\NewDocumentCommand\tmfst{g}{\mathsf{fst}\IfValueT{#1}{(#1)}}
\NewDocumentCommand\tmsnd{g}{\mathsf{snd}\IfValueT{#1}{(#1)}}
\NewDocumentCommand\tmpair{s m m}{%
  \ensuremath{\IfBooleanF{#1}{\langle} #2,#3\IfBooleanF{#1}{\rangle}}}
\NewDocumentCommand\typath{g g g}{%
  \ensuremath{\mathsf{Path}\IfValueT{#1}{_{#1}\IfValueT{#2}{(#2,#3)}}}}
\newcommand{\tmplam}[2][]{\ensuremath{\lambda^{\BI} #1. #2}}
\newcommand{\tmpapp}[2]{\ensuremath{#1 @ #2}}
\NewDocumentCommand\tyv{g g g g}{%
  \ensuremath{\mathsf{V}\IfValueT{#1}{_{#1}\IfValueT{#2}{(#2,#3,#4)}}}}
\newcommand{\tmvin}[3]{\ensuremath{\mathsf{vin}_{#1}(#2,#3)}}
\newcommand{\tmvproj}[3]{\ensuremath{\mathsf{vproj}_{#1}(#2,#3)}}
\newcommand{\tyglue}{\mathsf{Glue}}
\newcommand{\tyg}{\mathsf{G}}
\newcommand{\tyuniv}{\CU}
\newcommand{\tyunivptd}{\tyuniv_{\mathsf{pt}}}
\NewDocumentCommand\tybridge{g g g}{%
  \ensuremath{\mathsf{Bridge}\IfValueT{#1}{_{#1}\IfValueT{#2}{(#2,#3)}}}}
\newcommand{\tmblam}[2][]{\ensuremath{\lambda^{\BFI} \bm{#1}. #2}}
\newcommand{\tmbapp}[2]{\ensuremath{#1 \bm{@} {#2}}}
\NewDocumentCommand\tmextent{s g g g g g}{%
  \ensuremath{\mathsf{\IfBooleanTF{#1}{ex}{extent}}\IfValueT{#2}{_{#2}\IfValueT{#3}{(#3;\IfValueTF{#4}{#4,#5,#6}{\cdots})}}}}
\NewDocumentCommand\tygel{G{{}} g g g}{%
  \ensuremath{\mathsf{Gel}_{#1}\IfValueT{#2}{(#2,#3,#4)}}}
\NewDocumentCommand\tmgel{G{{}} g g g}{%
  \ensuremath{\mathsf{gel}_{#1}\IfValueT{#2}{(#2,#3,#4)}}}
\NewDocumentCommand{\tmungel}{g}{\ensuremath{\mathsf{ungel}\IfValueT{#1}{(#1)}}}
\NewDocumentCommand\tyid{g g g}{%
  \ensuremath{\mathsf{Id}\IfValueT{#1}{_{#1}\IfValueT{#2}{(#2,#3)}}}}
\newcommand{\tyz}{\BZ}
\newcommand{\tyzmod}{\BZ/2\BZ}
\NewDocumentCommand\tmzin{g}{\mathsf{in}\IfValueT{#1}{(#1)}}
\NewDocumentCommand\tmzmod{g g}{\mathsf{mod}\IfValueT{#1}{(#1,#2)}}
\NewDocumentCommand\tmzinc{g}{\mathsf{inc}\IfValueT{#1}{(#1)}}
\NewDocumentCommand\tmzmodelim{g g g g}{\mathsf{mod}\mh\mathsf{elim}\IfValueT{#1}{_{#1}(#2,#3,#4)}}
\newcommand{\tyi}{\mathsf{line}}
\NewDocumentCommand\tmiin{g}{\mathsf{in}\IfValueT{#1}{(#1)}}
\NewDocumentCommand\tmielim{g g g}{\mathsf{interval}\mh\mathsf{elim}\IfValueT{#1}{_{#1}(#2,#3)}}
\newcommand{\tmj}[3]{\mathsf{J}_{#1}(#2,#3)}
\newcommand{\pto}{\ensuremath{\mathrel{\to_*}}}
\newcommand{\tylinv}[3]{\mathsf{Linv}(#1,#2,#3)}
\newcommand{\tyrinv}[3]{\mathsf{Rinv}(#1,#2,#3)}
\NewDocumentCommand\tyisiso{g g g}{\mathsf{isIso}\IfValueT{#1}{(#1,#2,#3)}}
\newcommand{\tyiso}[2]{\mathsf{Iso}(#1,#2)}
\newcommand{\tmua}[3]{\mathsf{ua}(#1,#2,#3)}
\newcommand{\tmidiso}[1]{\mathsf{idiso}(#1)}
\newcommand{\tychurchbool}{\BB}
\newcommand{\tybool}{\ensuremath{\mathsf{bool}}}
\newcommand{\tmtrue}{\ensuremath{\mathsf{tt}}}
\newcommand{\tmfalse}{\ensuremath{\mathsf{ff}}}
\NewDocumentCommand\tmif{G{{}} g g g}{%
  \ensuremath{\mathsf{if}_{#1}
    \IfValueT{#2}{\IfNoValueTF{#3}{({#2})}{({#2};{#3},{#4})}}}}
\NewDocumentCommand\tmbooleta{g}{\mathsf{\tybool\mh\eta}\IfValueT{#1}{(#1)}}
\newcommand{\tynot}[1]{\lnot {#1}}
\newcommand{\tmboolnot}{\mathsf{not}}
\newcommand{\tmsmbasel}{\mathsf{\circledast^L}}
\newcommand{\tmsmbaser}{\mathsf{\circledast^R}}
\NewDocumentCommand{\tmsmpair}{g g}{\llangle \IfValueTF{#1}{#1}{-},\IfValueTF{#2}{#2}{-} \rrangle}
\NewDocumentCommand{\tmsmgluel}{g g}{\mathsf{spoke^L}\IfValueT{#1}{(#1,#2)}}
\NewDocumentCommand{\tmsmgluer}{g g}{\mathsf{spoke^R}\IfValueT{#1}{(#1,#2)}}
\NewDocumentCommand\tmbridgefunext{g}{\ensuremath{\mathsf{bridge{\mh}funext}\IfValueT{#1}{(#1)}}}
\NewDocumentCommand\tmbridgeisoext{g}{\ensuremath{\mathsf{bridge{\mh}isoext}\IfValueT{#1}{(#1)}}}
\NewDocumentCommand\tmloosen{g g}{\ensuremath{\mathsf{loosen}\IfValueT{#1}{_{#1}\IfValueT{#2}{(#2)}}}}
\NewDocumentCommand\tmtighten{o g}{\ensuremath{\mathsf{tighten}\IfValueT{#1}{_{#1}}\IfValueT{#2}{(#2)}}}
\NewDocumentCommand\tmloosentighten{o g}{\ensuremath{\mathsf{inv}\IfValueT{#1}{_{#1}}\IfValueT{#2}{(#2)}}}
\newcommand{\tyisbdisc}[1]{\ensuremath{\mathsf{isBDisc}(#1)}}
\newcommand{\tybdisc}{\ensuremath{\CU}_{\mathsf{BDisc}}}
\newcommand{\tyisprop}[1]{\mathsf{isProp}(#1)}
\newcommand{\tyleminfty}{\mathsf{LEM}_\infty}
\newcommand{\tylem}{\mathsf{LEM}_{-1}}
\newcommand{\tywlem}{\mathsf{WLEM}}
\newcommand{\tmconcinv}[5]{\mathsf{conc{\mh}inv}_{#1}^{#2,#3}(#4,#5)}
\newcommand{\tmorcnx}[2]{\mathsf{connect}_{#1}(#2)}
\NewDocumentCommand\tygr{s G{{}} g g g}{%
  \ensuremath{\mathsf{Gr}\IfBooleanT{#1}{^*}_{#2}\IfValueT{#3}{(#3,#4,#5)}}}
\newcommand{\tmsmashgr}[1]{\wedge\mh\mathsf{graph}_{\bm{#1}}}
\newcommand{\etc}[1]{\ensuremath{\overrightharpoonup{#1}}}
\newcommand{\tube}[2]{\ensuremath{#1\hookrightarrow #2}}
\newcommand{\arraytube}[2]{\ensuremath{#1&\hookrightarrow& #2}}
\newcommand{\sys}[2]{\etc{\tube{#1}{#2}}}
\NewDocumentCommand\NewCoercionOperator{m m O{\rightsquigarrow} O{M}}{%
  \NewDocumentCommand#1{s G{{}} g g g}{%
    \ensuremath{\mathsf{#2}_{##2}%
    \IfBooleanTF{##1}
      {^{r #3 s}(#4)}
      {\IfValueT{##3}{^{##3 #3 ##4}(##5)}}}}
}
\NewCoercionOperator{\coe}{coe}
\NewDocumentCommand\NewCompositionOperator{s m m O{\rightsquigarrow} O{M} O{y.N_i}}{%
  \IfBooleanTF{#1}
    {\NewDocumentCommand#2{s g g g g}{%
      \IfBooleanTF{##1}
        {\ensuremath{\mathsf{#3}^{r #4 s}(#5;\sys{##2}{#6})}}
        {\IfNoValueTF{##2}
          {\ensuremath{\mathsf{#3}}}
          {\ensuremath{\mathsf{#3}^{##2 #4 ##3}(##4\IfValueT{##5}{;##5})}}}}}
    {\NewDocumentCommand#2{s G{{}} g g g g}{%
      \IfBooleanTF{##1}
        {\ensuremath{\mathsf{#3}_{##2}^{r #4 s}(#5;\sys{##3}{#6})}}
        {\IfNoValueTF{##3}
          {\ensuremath{\mathsf{#3}_{##2}}}
          {\ensuremath{\mathsf{#3}_{##2}^{##3 #4 ##4}(##5\IfValueT{##6}{;##6})}}}}}
}
\NewDocumentCommand\NewBigCompositionOperator{s m m O{\rightsquigarrow} O{M} O{y.N_i}}{%
  \IfBooleanTF{#1}
    {\NewDocumentCommand#2{s g g g g}{%
      \IfBooleanTF{##1}
        {\ensuremath{\mathsf{#3}^{r #4 r'}\left(#5;\sys{##2}{#6}\right)}}
        {\IfNoValueTF{##2}
          {\ensuremath{\mathsf{#3}}}
          {\ensuremath{\mathsf{#3}^{##2 #4 ##3}\left(##4\IfValueT{##5}{;##5}\right)}}}}}
    {\NewDocumentCommand#2{s G{{}} g g g g}{%
      \IfBooleanTF{##1}
        {\ensuremath{\mathsf{#3}_{##2}^{r #4 r'}\left(#5;\sys{##3}{#6}\right)}}
        {\IfNoValueTF{##3}
          {\ensuremath{\mathsf{#3}_{##2}}}
          {\ensuremath{\mathsf{#3}_{##2}^{##3 #4 ##4}\left(##5\IfValueT{##6}{;##6}\right)}}}}}
}
\NewCompositionOperator{\hcomp}{hcom}
\NewBigCompositionOperator{\bighcomp}{hcom}
\NewCompositionOperator{\comp}{com}
\newcommand{\ctxnil}{\cdot}
\NewDocumentCommand\ctxsnoc{m o m}{\IfValueTF{#2}{\ifblank{#1}{}{{#1},} {#2}:{#3}}{{#1}.{#3}}}
\NewDocumentCommand\ctxpdim{m o}{\IfValueTF{#2}{\ifblank{#1}{}{{#1},} {#2}:\BI}{{#1}.\BI}}
\NewDocumentCommand\ctxbdim{m o}{\IfValueTF{#2}{\ifblank{#1}{}{{#1},} \bm{#2}:\BFI}{{#1}.\BFI}}
\NewDocumentCommand\ctxpeq{m m m}{\IfValueTF{#2}{\ifblank{#1}{}{{#1},} {#2} = {#3}}{{#1}.{#2} = {#3}}}
\NewDocumentCommand\ctxcst{m m}{\IfValueTF{#2}{\ifblank{#1}{}{{#1},} {#2}}{{#1}.{#2}}}
\NewDocumentCommand\ctxres{s m m}{\IfBooleanTF{#1}{{#2}.{\setminus}{#3}}{\ifblank{#2}{}{{#2}}{\setminus}{#3}}}
\newcommand{\ctxmod}[2]{{#1}\sx{.}{#2}}
\newcommand{\ofp}[1]{#1 : \BI}
\NewDocumentCommand\wfctx{s d!! d<> m}
  {{#4}\ \mathsf{ctx}\IfValueT{#2}{_{#2}}\IfValueT{#3}{\ BAD}}
\NewDocumentCommand\eqctx{s d!! d<> m m}
  {{#4} = {#5}\ \mathsf{ctx}\IfValueT{#3}{\ BAD}}
\NewDocumentCommand\wftele{s d!! d<> o m}
  {\IfValueT{#3}{{#3} \Vdash}\IfValueT{#4}{{#4} \IfBooleanTF{#1}{\vdash\IfValueT{#2}{_{#2}}}{\gg}} {#5} \ \mathsf{ctx}\IfValueT{#3}{\ BAD}}
\NewDocumentCommand\wfpdim{s d!! d<> o m}
  {\IfValueT{#3}{{#3} \Vdash}\IfValueT{#4}{{#4} \IfBooleanTF{#1}{\vdash\IfValueT{#2}{_{#2}}}{\gg}} {#5} \IfBooleanTF{#1}{:}{\in} \BI}
\NewDocumentCommand\eqpdim{s d!! d<> o m m}
  {\IfValueT{#3}{{#3} \Vdash}\IfValueT{#4}{{#4} \IfBooleanTF{#1}{\vdash\IfValueT{#2}{_{#2}}}{\gg}} {#5} = {#6} \IfBooleanTF{#1}{:}{\in} \BI}
\NewDocumentCommand\wfbdim{s d!! d<> o m}
  {\IfValueT{#3}{{#3} \Vdash}\IfValueT{#4}{{#4} \IfBooleanTF{#1}{\vdash\IfValueT{#2}{_{#2}}}{\gg}} {#5} \IfBooleanTF{#1}{:}{\in} \BFI}
\NewDocumentCommand\eqbdim{s d!! d<> o m m}
  {\IfValueT{#3}{{#3} \Vdash}\IfValueT{#4}{{#4} \IfBooleanTF{#1}{\vdash\IfValueT{#2}{_{#2}}}{\gg}} {#5} = {#6} \IfBooleanTF{#1}{:}{\in} \BFI}
\NewDocumentCommand\wfcst{s d!! d<> o m}
  {\IfValueT{#3}{{#3} \Vdash}\IfValueT{#4}{{#4} \IfBooleanTF{#1}{\vdash\IfValueT{#2}{_{#2}}}{\gg}} {#5}\ \mathsf{constraint}}
\NewDocumentCommand\eqcst{s d!! d<> o m m}
  {\IfValueT{#3}{{#3} \Vdash}\IfValueT{#4}{{#4} \IfBooleanTF{#1}{\vdash\IfValueT{#2}{_{#2}}}{\gg}} {#5} = {#6}\ \mathsf{constraint}}
\NewDocumentCommand\wftype{s d!! d<> o m}
  {\IfValueT{#3}{{#3} \Vdash}\IfValueT{#4}{{#4} \IfBooleanTF{#1}{\vdash\IfValueT{#2}{_{#2}}}{\gg}} {#5}\ \mathsf{type}}
\NewDocumentCommand\eqtype{s d!! d<> o m m}
  {\IfValueT{#3}{{#3} \Vdash}\IfValueT{#4}{{#4} \IfBooleanTF{#1}{\vdash\IfValueT{#2}{_{#2}}}{\gg}} {#5} = {#6}\ \mathsf{type}}
\NewDocumentCommand\wfpretype{s d!! d<> o m}
  {\IfValueT{#3}{{#3} \Vdash}\IfValueT{#4}{{#4} \IfBooleanTF{#1}{\vdash\IfValueT{#2}{_{#2}}}{\gg}} {#5}\ \mathsf{pretype}}
\NewDocumentCommand\eqpretype{s d!! d<> o m m}
  {\IfValueT{#3}{{#3} \Vdash}\IfValueT{#4}{{#4} \IfBooleanTF{#1}{\vdash\IfValueT{#2}{_{#2}}}{\gg}} {#5} = {#6}\ \mathsf{pretype}}
\NewDocumentCommand\wftm{s d!! d<> o m m}
  {\IfValueT{#3}{{#3} \Vdash}\IfValueT{#4}{{#4} \IfBooleanTF{#1}{\vdash\IfValueT{#2}{_{#2}}}{\gg}} {#5} \IfBooleanTF{#1}{:}{\in} {#6}}
\NewDocumentCommand\eqtm{s d!! d<> o m m m}
  {\IfValueT{#3}{{#3} \Vdash}\IfValueT{#4}{{#4} \IfBooleanTF{#1}{\vdash\IfValueT{#2}{_{#2}}}{\gg}} {#5} = {#6} \IfBooleanTF{#1}{:}{\in} {#7}}
\NewDocumentCommand\wflist{s d!! d<> o m m}
  {\IfValueT{#3}{{#3} \Vdash}\IfValueT{#4}{{#4} \IfBooleanTF{#1}{\vdash\IfValueT{#2}{_{#2}}}{\gg}} {#5} \IfBooleanTF{#1}{:}{\in} {#6}}
\NewDocumentCommand\eqlist{s d!! d<> o m m m}
  {\IfValueT{#3}{{#3} \Vdash}\IfValueT{#4}{{#4} \IfBooleanTF{#1}{\vdash\IfValueT{#2}{_{#2}}}{\gg}} {#5} = {#6} \IfBooleanTF{#1}{:}{\in} {#7}}
\NewDocumentCommand\wfsubst{s d!! d<> o m m}
  {\IfValueT{#3}{{#3} \Vdash}\IfValueT{#4}{{#4} \IfBooleanTF{#1}{\vdash\IfValueT{#2}{_{#2}}}{\gg}} {#5} \IfBooleanTF{#1}{:}{\in} {#6}}
\NewDocumentCommand\eqsubst{s d!! d<> o m m m}
  {\IfValueT{#3}{{#3} \Vdash}\IfValueT{#4}{{#4} \IfBooleanTF{#1}{\vdash\IfValueT{#2}{_{#2}}}{\gg}} {#5} = {#6} \IfBooleanTF{#1}{:}{\in} {#7}}
\NewDocumentCommand\wfj{s d!! d<> o m}
  {\IfValueT{#4}{{#4} \IfBooleanTF{#1}{\vdash\IfValueT{#2}{_{#2}}}{\gg}} {#5}}
\newcommand{\steps}{\ensuremath{\longmapsto}}
\newcommand{\msteps}{\ensuremath{\longmapsto^*}}
\newcommand{\evals}{\ensuremath{\Downarrow}}
\newcommand{\isval}[1]{#1\ \mathsf{val}}
\newcommand{\dctxsubst}[3]{{#1} \Vdash {#2} \in {#3}} 
\newcommand{\id}{\ensuremath{\mathrm{id}}}
\newcommand{\syseval}[1]{{#1}^{\Downarrow}}
\newcommand{\releval}[1]{{#1}^{\Downarrow}}
\newcommand{\sysbridge}[1]{\textsc{Bridge}(#1)}
\newcommand{\sysgel}[1]{\textsc{Gel}(#1)}
\newcommand{\isvalcoh}[1]{\mathsf{Coh}(#1)}
\newcommand{\candwftype}[4]{{#1} \Vdash {#2} \downarrow {#3} \in {#4}}
\newcommand{\candeqtype}[5]{{#1} \Vdash {#2} \sim {#3} \downarrow {#4} \in {#5}}
\newcommand{\candeqtm}[4]{{#1} \Vdash {#2} \sim {#3} \in {#4}}
\newcommand{\sx}[1]{#1}
\newcommand{\tysubst}[2]{{#1}\sx{[}{#2}\sx{]}} 
\newcommand{\tmvar}{\mathsf{q}}
\newcommand{\tmsubst}[2]{{#1}\sx{[}{#2}\sx{]}} 
\newcommand{\pdimvar}{\mathsf{q}_\BI}
\newcommand{\pdimsubst}[2]{{#1}\sx{[}{#2}\sx{]}} 
\newcommand{\bdimvar}{\mathsf{q}_\BFI}
\newcommand{\bdimsubst}[2]{{#1}\sx{[}{#2}\sx{]}} 
\NewDocumentCommand{\substnil}{s}{\IfBooleanTF{#1}{\sx{!}}{\cdot}}
\NewDocumentCommand{\substid}{o}{\sx{\mathsf{id}}\IfValueT{#1}{_{#1}}}
\newcommand{\substconc}[2]{{#1} \mathbin{\sx{\circ}} {#2}}
\NewDocumentCommand{\substsnoc}{s m m o}{\IfBooleanTF{#1}{{#2}\sx{.}{#3}}{\ifblank{#2}{}{{#2},}{#3}/{#4}}}
\NewDocumentCommand\substproj{o}{\sx{\mathsf{p}\IfValueT{#1}{^{#1}}}}
\NewDocumentCommand{\substbdim}{s m m o}{\IfBooleanTF{#1}{{#2}\sx{.}{#3}}{\ifblank{#2}{}{{#2},}{#3}/{#4}}}
\NewDocumentCommand{\substpdim}{s m m o}{\IfBooleanTF{#1}{{#2}\sx{.}{#3}}{\ifblank{#2}{}{{#2},}{#3}/{#4}}}
\newcommand{\substbtranspose}[1]{{#1}^{\sx{\dagger}}}
\newcommand{\substbface}[1]{\sx{{\bm{#1}}_\BFI}}
\newcommand{\substprojbdim}{\sx{\mathsf{p}_\BFI}}
\newcommand{\substbex}{\sx{\mathsf{ex}_\BFI}}
\newcommand{\substprojpdim}{\sx{\mathsf{p}_\BI}}
\newcommand{\substpb}{\mathsf{ex}_{\BI\BFI}}
\newcounter{loopcntr}
\newcommand{\rpt}[2][1]{%
  \forloop{loopcntr}{0}{\value{loopcntr}<#1}{#2}%
}
\newcommand{\funcsnoc}[2][1]{{#2}^{\rpt[#1]{\times}}}
\newcommand{\funcbdim}[2][1]{{#2}^{\rpt[#1]{\BFI}}}
\newcommand{\funcres}[2]{{#1}{\setminus}{#2}}
\NewDocumentCommand\YoSymScaled{m}{
  \raisebox{#1em * \real{-.02}}{%
    \includegraphics[quiet=true,draft=false,height=#1em * \real{.67},keepaspectratio]{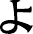}%
    \hspace{0.1em}%
  }
}
\NewDocumentCommand{\YoSym}{}{
  \mathord{%
    \mathchoice{\YoSymScaled{1}}{\YoSymScaled{1}}{\YoSymScaled{\defaultscriptratio}}{\YoSymScaled{\defaultscriptscriptratio}}
  }%
}
\newcommand{\cube}{\square}
\newcommand{\cart}{\cube_{\mathit{c}}}
\newcommand{\aff}{\cube_{\mathit{a}}}
\newcommand{\bicube}{\cube_{\mathit{ca}}}
\newcommand{\Set}{\mathbf{Set}}
\newcommand{\op}[1]{#1^{\mathrm{op}}}
\newcommand{\PSh}[1]{[\op{#1},\Set]}
\newcommand{\yon}[1]{\YoSym{#1}}
\newcommand{\subdec}{\GO_{\mathit{dec}}}
\newcommand{\cubres}{\mathit{Res}}
\newcommand{\cubext}{\mathit{Ext}}
\newcommand{\cubprod}[1]{{#1} \otimes \BFI}
\newcommand{\funcid}{\mathit{Id}}
\newcommand{\sembridge}[3]{\mathit{Bridge}(#1,#2,#3)}
\newcommand{\semblam}[1]{\mathit{lam}^\BFI(#1)}
\newcommand{\sembapp}[1]{\mathit{app}^\BFI(#1)}
\newcommand{\semgel}[4]{\mathit{Gel}_{#1}(#2,#3,#4)}
\newcommand{\sembvar}[1]{\mathit{var}(#1)}
\NewDocumentCommand\tydisc{g}{\ensuremath{\mathsf{Disc}\IfValueT{#1}{(#1)}}}
\NewDocumentCommand\tyglobal{g}{\ensuremath{\mathsf{Glo}\IfValueT{#1}{(#1)}}}
\NewDocumentCommand\tycodisc{g}{\ensuremath{\mathsf{Codisc}\IfValueT{#1}{(#1)}}}
\NewDocumentCommand\tmletmod{g g g g g g}{\mathsf{letmod}\IfValueT{#1}{^{#1}_{#2}({#3},{#4},{#5},#6)}}
\NewDocumentCommand\tmunmod{m m g}{\mathsf{unmod}^{#1}_{#2}\IfValueT{#3}{(#3)}}
\NewDocumentCommand\tmdisc{g}{\ensuremath{\mathsf{disc}\IfValueT{#1}{(#1)}}}
\NewDocumentCommand\tmglobal{g}{\ensuremath{\mathsf{glo}\IfValueT{#1}{(#1)}}}
\NewDocumentCommand\tmcodisc{g}{\ensuremath{\mathsf{codisc}\IfValueT{#1}{(#1)}}}
\NewDocumentCommand\tmletdisc{g g g g g}{\mathsf{letdisc}\IfValueT{#1}{^{#1}({#2},{#3},{#4},{#5})}}
\NewDocumentCommand\tmletglobal{g g g g g}{\mathsf{letglo}\IfValueT{#1}{^{#1}({#2},{#3},{#4},{#5})}}
\NewDocumentCommand\tmletcodisc{g g g g g}{\mathsf{letcodisc}\IfValueT{#1}{^{#1}({#2},{#3},{#4},{#5})}}
\NewDocumentCommand\tmunglobal{g}{\ensuremath{\mathsf{unglo}\IfValueT{#1}{(#1)}}}
\NewDocumentCommand\tmuncodisc{g}{\ensuremath{\mathsf{uncodisc}\IfValueT{#1}{(#1)}}}
\NewDocumentCommand\tmstrip{g}{\ensuremath{\mathsf{strip}\IfValueT{#1}{(#1)}}}
\newtheorem{rul}[thm]{Rule}
\def\eg{\emph{e.g.}}
\def\cf{\emph{cf.}}
\def\ie{\emph{i.e.}}
\def\etal{\emph{et al.}}
\begin{document}

\title[Internal Parametricity for Cubical Type Theory]{Internal Parametricity for Cubical Type Theory}
\titlecomment{{\lsuper*}This article is an extended version of~\cite{cavallo20b}.}

\author[E.~Cavallo]{Evan Cavallo}	
\author[R.~Harper]{Robert Harper}	
\address{Department of Computer Science, Carnegie Mellon University, Pittsburgh, Pennsylvania, USA}	
\email{\{ecavallo,rwh\}@cs.cmu.edu}  

\thanks{This material is based on research sponsored by Air Force Office of Scientific Research through MURI
  grants FA9550--15--1--0053 and FA9550--21--0009 (Tristan Nguyen, program manager). Any opinions, findings and conclusions or  recommendations expressed in this material are those of the authors and do not necessarily reflect the views of the AFOSR} 





\begin{abstract}
  We define a computational type theory combining the contentful equality structure of cartesian cubical type
  theory with internal parametricity primitives. The combined theory supports both univalence and its
  relational equivalent, which we call relativity. We demonstrate the use of the theory by analyzing
  polymorphic functions between higher inductive types, observe how cubical equality regularizes parametric
  type theory, and examine the similarities and discrepancies between cubical and parametric type theory,
  which are closely related. We also abstract a formal interface to the computational interpretation and show
  that this also has a presheaf model.
\end{abstract}

\maketitle

\section*{Introduction}\label{sec:introduction}

In the past decade or so, the study of dependent type theory has been transformed by a growing recognition of
the importance of \emph{contentful} (or \emph{proof-relevant}) \emph{equality}. At its root, the idea is
simple: \emph{a proof of an equality is a piece of data}. To go a bit a farther, \emph{a proof of equality may
  play a non-trivial role in computation}. From the type-theoretic perspective, where the computational
content of proofs has always been emphasized (``proofs as programs''), it is completely natural to think of
equality this way. Nevertheless, it has been common to treat proofs of equality as irrelevant: we prove
equalities to check code correctness or to prove a theorem, but we do not expect those proofs to influence how
our code runs.

That expectation was shaken by Hofmann and Streicher's \emph{groupoid model}~\cite{hofmann95} of
Martin-L\"{o}f's intensional type theory (ITT)~\cite{martin-lof75}. Intensional type theory includes the
\emph{identity type}: for every type $A$ and elements $M,N \in A$, there is a type $\tyid{A}{M}{N}$ whose
elements are proofs that $M$ and $N$ are ``equal''. (We henceforth call these elements \emph{identities} or
\emph{identifications}.)  Hofmann and Streicher's model is designed to falsify the principle of
\emph{uniqueness of identity proofs}, which states that all proofs of a given identity are themselves
identical. They thereby show that this principle is, oddly enough, independent of ITT\@. Far from being a
contrived counter-model, the groupoid model demonstrates that contentful equality arises quite naturally in
mathematics. Hofmann and Streicher highlight isomorphism as the premiere example: two isomorphic sets are
essentially ``the same'', but the same two sets can be isomorphic in many different ways. Awodey and Warren~\cite{warren08,awodey09} and van den Berg and Garner~\cite{van-den-berg12} generalized the groupoid model
construction to produce models where there are not only distinct proofs of identities, but distinct proofs of
identities between proofs of identities and so on. Voevodsky, who was separately developing a simplicial model
with similar properties~\cite{kapulkin12}, proposed to extend ITT with his \emph{univalence axiom}, which
asserts precisely that identifications between types correspond to isomorphisms.

Voevodsky's univalence axiom codifies a kind of reasoning that is already ubiquitous in informal mathematics,
that of treating isomorphic objects as interchangeable. In fact, the axiom has far-reaching consequences, as
subsequently explored in the fields of \emph{homotopy type theory}~\cite{hott-book} and \emph{univalent
  foundations}~\cite{voevodsky15,unimath}. As a simple but characteristic example, it implies \emph{function
  extensionality} as a corollary: functions are identical when they are identical on all arguments~\cite[\S4.9]{hott-book}. Analogous extensionality principles for equality in coinductive types (\eg,~\cite{ahrens15}) and quotients (\eg,~\cite{kraus19}) follow as well. In short, univalence regularizes the
behavior of equality throughout type theory.

Of course, there is one sense in which univalent ITT is spectacularly ill-behaved: by introducing an axiom, we
destroy the computational content of type theory. There is no way to run a program written in ITT that uses
the univalence axiom, because the ``proof'' of the axiom does not compute. This was finally addressed by the
development of \emph{cubical type theories}~\cite{cchm,angiuli18,orton18,abcfhl,cavallo20}, a family of
univalent type theories (with constructive models) where the univalence axiom follows from more fundamental
primitives that do compute. The central principle of cubical type theory is that equalities in a type
$A$---now called \emph{paths}---are represented by maps from an interval object $\BI$ into $A$.

Cubical type theory will be our starting point, our setting to explore contentful equality. In this work, we
develop \emph{internal parametricity} as an effective tool to reason about contentful equality,
which---despite its remarkable usefulness---presents new difficulties as well.

\subsection*{The challenges of contentful equality} As users of ITT have long known, a lack of uniqueness of
identity proofs has some frustrating consequences. To put it pithily, when equalities are not always equal, we
sometimes need to prove that they are. For example, we typically need to know that composition of equalities
(\ie, transitivity) is associative. When we have contentful equality in mind, these ``coherence'' proofs
\emph{are} mathematically significant, but their proofs are often tedious, uninteresting, and difficult to
conceptualize, especially as one gets to the point of proving equalities between equalities between
equalities.

The problem is most acute when we work with quotients. In cubical type theory, as in homotopy type theory and
the univalent foundations, inductive types and quotient types both arise as specializations of \emph{higher
  inductive types}~\cite{hott-book,coquand18,cavallo19b}. Where an inductive type is defined by constructors
that generate elements of the type, a higher inductive type is defined by a specification of element and
\emph{path} constructors. As a simple example, we can specify the type $\tyzmod$ of integers \emph{mod} 2 in
cubical type theory as the following higher inductive type.

\[
  \begin{array}{l}
    \dataheading{\tyzmod} \\
    \niceconstr{\tmzin}[\ctxsnoc{}[n]{\tyz}]{\tyzmod} \\
    \niceconstr{\tmzmod}[\ctxpdim{\ctxsnoc{}[n]{\tyz}}[x]]{\tyzmod}{\tube{x=0}{\tmzin{n}} \mid \tube{x=1}{\tmzin{n+2}}}
  \end{array}
\]

The first constructor of this type is standard: whenever we have an integer $n : \tyz$, we get
$\tmzin{n} \in \tyzmod$. The second is a path constructor: whenever we have $n : \tyz$, we get a path from
$\tmzin{n}$ to $\tmzin{n+2}$. That path is represented by a term $\tmzmod{n}{x}$ depending on an
\emph{interval variable} $x$, together with equations declaring that $\tmzmod{n}{0}$ is $\tmzin{n}$ and
$\tmzmod{n}{1}$ is $\tmzin{n+2}$. The interval is to be thought of roughly as the real interval from analysis:
as $x : \BI$ varies from $0$ to $1$, the constructor $\tmzmod{n}{x}$ draws a line from $\tmzin{n}$ to
$\tmzin{n+2}$. Pictorially, we have something like the following.
\[
  \begin{tikzcd}
    \cdots \ar[bend right]{rr}[below]{\tmzmod{-3}{x}} &
    \tmzin{-2} \ar[bend left]{rr}{\tmzmod{-2}{x}} &
    \tmzin{-1} \ar[bend right]{rr}[below]{\tmzmod{-1}{x}} &
    \tmzin{0} \ar[bend left]{rr}{\tmzmod{0}{x}} &
    \tmzin{+1} \ar[bend right]{rr}[below]{\tmzmod{+1}{x}} &
    \tmzin{+2} &
    \cdots
  \end{tikzcd}
\]

To construct a map from $\tyzmod$ to another type, we simply explain where to send $\tmzin{n}$ and
$\tmzmod{n}{x}$, just as in ordinary induction. For example, the increment map
$\tmzinc \in \tyzmod \to \tyzmod$ is defined by the clauses $\tmzinc{\tmzin{n}} \eqdef \tmzin{n+1}$ and
$\tmzinc{\tmzmod{n}{x}} \eqdef \tmzmod{n+1}{x}$. In order for the definition to be sensible, we need to check
that $\tmzinc{\tmzmod{n}{0}} = \tmzinc{\tmzin{n}}$ and $\tmzinc{\tmzmod{n}{1}} =
\tmzinc{\tmzin{n+2}}$. Similarly, we can define addition by an iterated induction of the following form.
\[
  \begin{array}{lclcl}
    \tmzin{m} &+& \tmzin{n} &\eqdef& \tmzin{m+n} \\
    \tmzmod{m}{x} &+& \tmzin{n} &\eqdef& \cdots \\
    \tmzin{m} &+& \tmzmod{n}{y} &\eqdef& \cdots \\
    \tmzmod{m}{x} &+& \tmzmod{n}{y} &\eqdef& \cdots \\
  \end{array}
\]

The final clause of this definition depends on two interval variables $x,y : \BI$. We can visualize it as a
square with a boundary determined by the other clauses.
\[
  \begin{tikzpicture}
    \draw (-2.2 , 2.4) [->] to node [above] {\small $x$} (-1.7 , 2.4) ;
    \draw (-2.2 , 2.4) [->] to node [left] {\small $y$} (-2.2 , 1.9) ;
    \node (tl) at (1.5 , 2) {$\bullet$} ;
    \node (tr) at (6 , 2) {$\bullet$} ;
    \node (bl) at (1.5 , 0.5) {$\bullet$} ;
    \node (br) at (6 , 0.5) {$\bullet$} ;
    \draw (tl) [->] to node [above] {$\tmzmod{m}{x} + \tmzin{n}$} (tr) ;
    \draw (tl) [->] to node [left] {$\tmzin{m} + \tmzmod{n}{y}$} (bl) ;
    \draw (tr) [->] to node [right] {$\tmzin{m+2} + \tmzmod{n}{y}$} (br) ;
    \draw (bl) [->] to node [below] {$\tmzmod{m}{x} + \tmzin{n+2}$} (br) ;
  \end{tikzpicture}
\]
Finding a term to fill this square is not so simple, particularly if the edge clauses are already defined in a
complicated way.

Iterated induction on higher inductive types is a frequent source of such coherence obligations. Particularly
notorious instances, which will serve as a test case in this paper, are proofs establishing the algebraic
structure of the \emph{smash product}~\cite[\S6.8]{hott-book}. The smash product $\wedge_*$ is a binary
operator on pointed types, pairs $A_* = \tmpair{A}{a_0}$ of types $A$ equipped with a chosen ``basepoint''
element $a_0 \in A$. We will define the product in \cref{sec:practice:smash}; for now, it suffices to know
that we define its underlying type as a higher inductive type. The smash product is a natural notion of tensor
product for the category of pointed types. In particular, suppose we write $A_* \to B_*$ for the type of
basepoint-preserving functions between pointed types $A_*$ and $B_*$, which we can make into a pointed type
$A_* \to_* B_*$ by taking the unique basepoint-preserving constant function as its basepoint. Then we have a
(pointed) isomorphism $A_* \pto (B_* \pto C_*) \simeq (A_* \wedge_* B_*) \pto C_*$. The smash product appears
as a basic tool in \emph{synthetic homotopy theory}, the study of higher-dimensional structure (\emph{homotopy
  theory}) through the lens of univalent type theory.

We would like to know that the smash product is commutative, associative, and so on. To construct a commutator
$A_* \wedge_* B_* \to B_* \wedge_* A_*$, we naturally go by induction on elements of $A_* \wedge_* B_*$; to
construct an associator $(A_* \wedge_* B_*) \wedge_* C_* \to A \wedge_* (B \wedge_* C)$, we need iterated
induction on the two instances of $\wedge_*$ in the domain. This is already quite non-trivial, but it gets
worse. If we want to prove that our associator is an isomorphism, then we need to prove \emph{equalities}
between elements of $(A_* \wedge_* B_*) \wedge_* C_*$ (and $A_* \wedge_* (B_* \wedge_* C_*)$) by
induction. This increases the dimension by another notch, forcing us to reason with 3-dimensional terms. Going
further, we can ask whether the associator satisfies the \emph{pentagon identity}, which equates the two
\emph{a priori} distinct ways of re-associating from $((A_* \wedge_* B_*) \wedge_* C_*) \wedge_* D_*$ to
$A_* \wedge_* (B_* \wedge_* (C_* \wedge_* D_*))$.
\[
  \begin{tikzcd}[column sep=-5em]
    && ((A_* \wedge_* B_*) \wedge_* C_*) \wedge_* D_* \ar{dll}[sloped]{\simeq} \ar{drr}[sloped]{\simeq} \\
    (A_* \wedge_* (B_* \wedge_* C_*)) \wedge_* D_* \ar{dr}[sloped,below]{\simeq} &&&& (A_* \wedge_* B_*) \wedge_* (B_* \wedge_* C_*) \ar{dl}[sloped,below]{\simeq} \\
    & A_* \wedge_* ((B_* \wedge_* C_*) \wedge_* D_*) \ar{rr}[sloped,below]{\simeq} && A_* \wedge_* (B_* \wedge_* (C_* \wedge_* D_*))
  \end{tikzcd}
\]
This is an equality between elements of a thrice-iterated smash product, so its proof requires constructing
4-dimensional terms. Going further, we might also want to check that these proofs are natural in the arguments
$A_*$, $B_*$, $C_*$, and $D_*$! There is, in fact, an infinite tower of coherence conditions that we expect
the smash product to satisfy, making it into an \emph{$\infty$-coherent symmetric monoidal product}.

Sadly, it quickly becomes first painful and then infeasible to construct these proofs by hand. In homotopy
type theory, Van Doorn verifies that the smash product is a 1-coherent symmetric monoidal product by first
proving the isomorphism $A_* \pto (B_* \pto C_*) \simeq (A_* \wedge_* B_*) \pto C_*$ and using this to obtain
the other results~\cite[\S4.3]{van-doorn18}. (1-coherence goes as far as the pentagon and its cousin the
hexagon identity, which relates the associator and unit laws.) As Van Doorn notes~\cite[Remark
4.3.29]{van-doorn18}, there is a gap in the argument: roughly, the proofs use that the above is a pointed
isomorphism natural in $A_*$, $B_*$, $C_*$, but only proves that it is natural as an unpointed
isomorphism. Once again, there is no doubt that the gap can be filled, but to do so involves a prohibitive
amount of path manipulation. Seeking to avoid all this, Brunerie suggests automating coherence proofs, using a
simple strategy of searching for opportunities to apply the elimination principle for the equality type~\cite{brunerie18}. Unfortunately, this approach also reaches its practical limit around the 1-coherence
mark. In either case, while it might be possible to reach the 2-coherences with enough effort and
optimization, there is little hope of handling general $n$-coherences.

\subsection*{Parametricity}

We propose a novel approach to these problems using a well-established tool from computer science: Reynolds'
\emph{parametricity}~\cite{reynolds83}. Parametricity is a versatile technique used to prove uniformity
properties of terms constructed in type theory; these are popularly known as ``theorems for free!''\ after
Wadler~\cite{wadler89}.  Reynolds' original results concerned the simply typed $\lambda$-calculus with type
variables. Since his seminal paper, parametricity has been extended in innumerable directions---most notably
for our purposes, to dependent type theory~\cite{takeuti01,bernardy10,krishnaswami13,atkey14}.

To motivate Reynolds' insight, suppose we have been given a family of functions
$F \in \typi[A]{\tyuniv}{A \to A}$. There is one obvious term that $F$ could be: the polymorphic identity
function $\tmlam[A]{\tmlam[a]{a}}$. Moreover, this would appear to be the \emph{only} term $F$ could be: if we
are given a type $A$ we know nothing about except that it has an element $a : A$, then the only way we can
produce an element of $A$ is by using the one given to us. This kind of reasoning relies on the fact that
there is no \emph{type-case} function in the type theory; there is no way to write a function like the
following that inspects the shape of $A$.
\[
  \tmlam[A]{\tmlam[a]{(\text{if $A$ is $\tybool$ then $\tmfalse$ else $a$})}} \in \typi[A]{\tyuniv}{A \to A}
\]
Reynolds translated this apparently syntactic property---the lack of constructs for inspecting types---into a
semantic one: if we take a term in type theory and interpret it in set theory, it has an action on
relations. In the case of a term $F \in \typi[A]{\tyuniv}{A \to A}$, its set-theoretic interpretation
$\sem{F}$ has the following property.

\begin{fact}%
  \label{fact:polymorphic-identity}
  Let a pair of sets $A,B$ and a relation $R \subseteq A \times B$ be given. If $R(a,b)$ for some $a \in A$
  and $b \in B$, then $R(\sem{F}Aa,\sem{F}Bb)$.
\end{fact}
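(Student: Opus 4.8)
The plan is to prove this as an instance of Reynolds' \emph{abstraction theorem}~\cite{reynolds83} via the method of logical relations, specialized to the term $F$. First I would fix two interpretations of the ambient context and, for each type $T$ that may contain the free type variable $A$, define by induction on $T$ a binary relation $\sem{T}_R$ between its two set-theoretic meanings: a base type such as $\tybool$ is sent to the identity relation; the type variable $A$ is sent to the given relation $R \subseteq A \times B$; a function type $S \to T$ is sent to the relation under which $f$ and $g$ are related exactly when $x \mathrel{\sem{S}_R} y$ implies $f(x) \mathrel{\sem{T}_R} g(y)$; and a polymorphic type $\typi[A]{\tyuniv}{T}$ is sent to the relation under which $u$ and $v$ are related exactly when, for all sets $A$, $B$ and all relations $R \subseteq A \times B$, we have $u(A) \mathrel{\sem{T}_R} v(B)$.

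With the relational interpretation in hand, the second step is the abstraction theorem itself: if $M : T$ is well typed in a context all of whose term and type variables are assigned values related by the corresponding relational interpretations, then the two interpretations $\sem{M}$ are related by $\sem{T}$. This is proved by a routine induction on the typing derivation of $M$; the only clauses that require any thought are $\lambda$-abstraction and application (immediate from the definition of $\sem{S \to T}_R$) and type abstraction and type application (which appeal to the universe clause above).

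The final step is to instantiate the abstraction theorem at the closed term $F : \typi[A]{\tyuniv}{A \to A}$, obtaining $\sem{F} \mathrel{\sem{\typi[A]{\tyuniv}{A \to A}}} \sem{F}$. Unfolding the polymorphic clause, this says exactly that for every pair of sets $A$, $B$ and every relation $R \subseteq A \times B$ we have $\sem{F}A \mathrel{\sem{A \to A}_R} \sem{F}B$; unfolding the function clause, the latter says that $R(a,b)$ implies $R(\sem{F}Aa,\sem{F}Bb)$, which is the claim.

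I expect the main subtlety to lie not in the bookkeeping but in setting up the universe clause of the logical relation so that the type-abstraction case of the induction goes through — in a genuinely dependent setting one must check that $\sem{T}_R$ is well defined simultaneously with the interpretation of $T$ itself, and that the whole construction lives in set theory (the quantification over \emph{all} set-relations $R$ is what makes the universe case work, and is unproblematic set-theoretically). For the present statement this heavy machinery is not actually needed: since $F$ only ever applies its type argument and feeds the supplied element to itself, no feature of the ambient theory beyond the $\lambda$-calculus fragment with type variables is used, and the displayed instance already follows from Reynolds' original abstraction theorem for that fragment.
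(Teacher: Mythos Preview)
Your proposal is correct and is precisely the argument the paper has in mind: the paper does not give its own proof of this fact but simply attributes it to Reynolds, noting that ``Reynolds' proof consists in defining a relational model of type theory'' (the reflexive-graph model), which is exactly the logical-relations construction you sketch. Your outline of the abstraction theorem and its instantiation at $\typi[A]{\tyuniv}{A \to A}$ is the standard derivation, and your closing caveat about the universe clause is apt but, as you observe, not needed for this particular instance.
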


This property actually suffices to show that $\sem{F}$ is the polymorphic identity function. Briefly, for any
set $A$ and $a \in A$, we can define the relation $R \subseteq A \times 1$ by $R(a',\_) \eqdef (a' = a)$;
then we have $R(a,*)$, so $R(\sem{F}Aa,\sem{F}1*)$. Note that Fact~\ref{fact:polymorphic-identity} also
immediately implies (though trivially in this case) that $\sem{F}$ is \emph{natural}: for any function of sets
$f \in A \to B$ and $a \in B$, we have $f \circ \sem{F}A = \sem{F}B \circ f$.

In essence, Reynolds' proof consists in defining a relational model of type theory, which Robinson and
Rosolini~\cite{robinson94} reinterpret as a model in the category of \emph{reflexive graphs}. Each type is
modeled by a reflexive graph, with vertices representing elements in the ordinary sense and edges defining a
relation on those elements. Functions take vertices to vertices and edges to edges. Fact~\ref{fact:polymorphic-identity} is then the action of $\sem{F}$ on edges. Atkey, Ghani, and Johann extend the
reflexive graph model to dependent type theory~\cite{atkey14}. In particular, Atkey \etal\ define a universe
whose vertices are sets (discrete reflexive graphs) and edges are relations between those sets. The astute
reader will notice a similarity to Hofmann and Streicher's groupoid model; note that a groupoid is simply a
reflexive graph supporting composition and inverse operations. (Atkey \etal\ make this comparison themselves.)

Can parametricity be used to conquer the problem of smash product coherences? Suppose we have managed to
define an associator $F \in (A_* \wedge_* B_*) \wedge_* C_* \to A_* \wedge_* (B_* \wedge_* C_*)$ and a
candidate inverse $G \in A_* \wedge_* (B_* \wedge_* C_*) \to (A_* \wedge_* B_*) \wedge_* C_*$. (Let us
quantify implicitly over $A_*,B_*,C_*$ for the moment.) For one, we certainly expect parametricity to
guarantee that these functions are natural in their type arguments. To show that they form an isomorphism, we
would need to show $G \circ F$ is the identity function (likewise for $F \circ G$). This is a pointed function
$(A_* \wedge_* B_*) \wedge_* C_* \to (A_* \wedge_* B_*) \wedge_* C_*$; perhaps parametricity can show that the
identity is the \emph{only} such function. (In truth, there is the possibility that it is a constant function,
but we can exclude that case by testing it at $A = B = C = \tybool$.) The pentagon identity establishes the
equality of two isomorphisms
$E, E' \in ((A_* \wedge_* B) \wedge_* C_*) \wedge_* D_* \simeq A_* \wedge_* (B_* \wedge_* (C_* \wedge_*
D_*))$; this we can recast as showing that the composite $E^{-1} \circ E$, regarded as a pointed function
$((A_* \wedge_* B_*) \wedge_* C_*) \wedge_* D_* \to ((A_* \wedge_* B_*) \wedge_* C_*) \wedge_* D_*$, is the
identity. Ultimately, all the higher coherences can be expressed as properties of types of the following form,
where  $A^1_*,\ldots,A^n_*$ are universally quantified type variables.
\[
  (A^1_* \wedge_* \cdots \wedge_* A^n_*) \pto (A^1_* \wedge_* \cdots \wedge_* A^n_*)
\]
We will indeed be able to use parametricity to characterize types of this form, showing that their only
inhabitants are identity and constant functions.

\subsection*{Internalizing parametricity}

Rather than constructing a model and showing that the denotations of terms satisfy parametricity properties,
as Reynolds did, we follow Bernardy, Coquand, and Moulin's recent work~\cite{bernardy12,bernardy13,bernardy15,moulin16} by \emph{internalizing} parametricity as part of our type
theory. Bernardy and Moulin introduce so-called \emph{parametricity primitives}, new type and term formers
that make it possible to prove theorems such as the following.
\[
  \typi*[f]{\typi[A]{\tyuniv}{\tyarr{A}{A}}}{\typi*[A]{\tyuniv}{\typi[a]{A}{\tyid{A}{fAa}{a}}}}
\]
Notably, these primitives have a computational interpretation. We take the ideas of internal parametricity and
apply them to contentful equality, producing a \emph{parametric cubical type theory}.

Internalizing parametricity has the advantage of allowing us to use parametricity results without going
outside the theory. It is, moreover, coherent with the perspective that leads us to the univalence axiom. From
one angle, univalence serves to internalize the action of type-theoretic constructions on isomorphisms. In
much the same way, internal parametricity expresses the action of constructions on relations. We are not the
first to remark on the similarity between the two---both Atkey \etal\ and Bernardy \etal\ make the
observation---but we will endeavor here to sharpen the comparison. Parametric type theory bears a strong
resemblance to cubical type theory, particularly as presented by Bernardy, Coquand, and Moulin (BCM)~\cite{bernardy15}. We will explore that resemblance here, with special attention to the points at which
cubical and parametric type theory diverge.

\subsection*{Contributions}

Our results can be divided into several camps, depending on how they relate to the interplay between internal
parametricity and cubical equality.

First, we establish that parametricity primitives can in fact be added to cubical type theory. Our combined
type theory is grounded in a computational interpretation in the style of Allen~\cite{allen87}, following the
work of Angiuli \etal\ for cubical type theory~\cite{angiuli18}. Starting from the computational
interpretation, we abstract a formal, generalized algebraic type theory. We show that this theory also has
interpretation in (some variety of) Kan bicubical sets. In all these constructions, the cubical side is
already fairly well understood, so we focus on the parametricity primitives.

Next, we come to applications. On the one hand, we use internal parametricity as a tool for proving theorems
in cubical type theory. Here, the smash product is our representative example of a higher inductive type with
complex algebraic structure.  We show that in internally parametric type theory, we can obtain the higher
coherence properties of the smash product in a uniform way. While the proofs are still not trivial, they are
distinguished from the prior work by their scalability: it is not much more difficult to obtain $n$-coherent
structure than $1$-coherent structure.

On the other hand, we use the well-behaved equality of cubical type theory to regularize parametric type
theory. Just as cubical equality produces an extensionality principle for function types, it implies
extensionality principles for the parametricity primitives. In the presence of univalence, we can also make do
with a weaker version of $\tygel$-types, the other parametricity primitive, than is used in the BCM
theory. This allows us to give a simpler model of the theory, avoiding the technical device of \emph{refined
  presheaves} used in the BCM model.

Finally, we compare the design principles underlying cubical and parametric type theory. In both cases, some
kind of structures on pairs of types are represented by maps out of an interval object. In cubical type
theory, the structures are isomorphisms; in parametric type theory, they are relations. As we will see,
parametric type theory has its own analogue of the univalence axiom. However, in parametric type theory it is
key that relations are represented by \emph{affine}, not structural, maps out of the interval object. This
puts parametric type theory in especially close correspondence with the Bezem-Coquand-Huber (BCH) cubical set
model~\cite{bch}, the first constructive model of univalent type theory. Conversely, an affine path interval
does not give rise to a particularly well-behaved contentful equality, being particularly problematic for
modeling higher inductive types; the BCH model has largely been supplanted by structural cubical type theories
and models.

\subsection*{Outline}

We begin by informally reviewing cubical type theory in \cref{sec:cubical}, closely following the
presentations of Angiuli \etal~\cite{angiuli18,abcfhl,angiuli19}. In \cref{sec:parametric}, we mix in the
parametricity primitives. As we go, we compare the components of internal parametricity to their cubical
counterparts.

In \cref{sec:practice}, we put the theory to work, going through a variety of examples that display first
ordinary internal parametricity, then the regularizing effects of cubical equality, and finally the
application of parametricity to the problem of the smash product. In particular, we show how the interaction
between the parametricity primitives and inductive types can be characterized using the relational equivalence
of univalence. We also define and explore the properties of the \emph{sub-universe of bridge-discrete types},
which plays a role in internal parametricity analogous to that of the \emph{identity extension lemma} in
external parametricity. Some of our results are already valid in non-cubical parametric type theory but are
observed for the first time here.

We get precise about the theory beginning in \cref{sec:computational}, where we lay out its computational
interpretation. In \cref{sec:formal} we abstract a generalized algebraic formal type theory which has the
computational interpretation as a model, and in \cref{sec:presheaf} we describe a second model in Kan
cartesian-affine bicubical sets. We consider related work and future directions in \cref{sec:related}.

\section{Cubical type theory}\label{sec:cubical}

Cubical type theory is an extension of Martin-L\"{o}f type theory with an explicitly contentful
equality. These equalities are called \emph{paths}, as they intuitively mimic the notion of path from
topology. To wit, a path in a topological space $X$ is a function $p : \BI \to X$ from the unit interval
$\BI = [0,1]$ into $X$. Such a path connects the endpoints $p(0),p(1) \in X$. In cubical type theory, we
likewise have a type-like object, the interval ``$\BI$'', which contains two distinguished constants $0,1$. We
express paths by hypothesizing \emph{interval variables}: a path in a type $\wftype[\GG]{A}$ is a term
$\wftm[\ctxpdim{\GG}[x]]{P}{A}$ depending on an interval variable $x$. The path connects two endpoints,
$\wftm[\GG]{\usubstdim{P}{0}{x}}{A}$ and $\wftm[\GG]{\usubstdim{P}{1}{x}}{A}$, obtained by substituting the
constants $0,1$ for the interval variable. This judgmental notion of path is internalized by \emph{path
  types}.  Beyond this basic apparatus, every type in cubical type theory supports \emph{Kan operations},
called \emph{coercion} and \emph{composition}, which are used to manipulate paths. Coercion transports terms
between types that are connected by a path; composition implements operations such as transitivity and
symmetry of paths. Finally, additional machinery is required to obtain univalence, the correspondence between
paths of types and isomorphisms.

We follow Angiuli et al.'s account of cubical type theory~\cite{angiuli18,abcfhl}, known as \emph{cartesian
  cubical type theory}. Other cubical type theories and models~\cite{bch,cchm,awodey18,orton18,cavallo20} vary
in their treatment of the interval and formulation of the Kan operations. Although we commit to one theory
here for simplicity, we expect that this paper can be replayed without difficulty using any other.

To begin at the beginning, cubical type theory is---like Martin-L\"{o}f's type theories~\cite{martin-lof75,martin-lof82}---based on four judgments: \emph{$A$ is a type}, \emph{$A$ and $B$ are equal
  types}, \emph{$M$ has type $A$}, and \emph{$M$ and $N$ are equal elements of type $A$}, all relative to a
context $\GG$ of typed variables.
\begin{mathpar}
  \wftype[\GG]{A}
  \and
  \eqtype[\GG]{A}{B}
  \and
  \wftm[\GG]{M}{A}
  \and
  \eqtm[\GG]{M}{N}{A}
\end{mathpar}
A final judgment $\wfctx{\GG}$ (\emph{$\GG$ is a context}) specifies the well-formed variable contexts, which
are lists of assumptions of the form $\ctxsnoc{}[a]{A}$ (\emph{$a$ ranges over terms of type $A$}) among
others we will introduce in a moment. (We will follow standard practice in omitting the prefix $\GG \gg$ from
judgments when the context is irrelevant to the discussion.) Note that the equality judgments express an
external, contentless equality, which is distinct from the contentful path equality. The external ``exact''
equality is necessary on the judgmental level, but it need not be accessible from within the theory.

It is useful to further introduce a \emph{substitution} judgment $\wfsubst[\GG']{\Gg}{\GG}$ (with equality
counterpart $\eqsubst[\GG']{\Gg}{\Gg'}{\GG}$); a substitution is a list
$\Gg = (\substsnoc{\substsnoc{}{M_1}[a_1],\ldots}{M_n}[a_n])$ instantiating each variable in $\GG$ with a term
over the variables in $\GG'$. We write $N\Gg$ for the application of $\Gg$ to a term $N$, that is, the result
of replacing each occurrence of $a_i$ in $N$ with $M_i$. Each of the judgments above is preserved by
substitution; for example, if $\wfsubst[\GG']{\Gg}{\GG}$ and $\wftm[\GG]{M}{A}$, then $\wftm[\GG']{M\Gg}{A\Gg}$.

We think of these judgments as speaking about programs $A,B,M,N$ in some untyped language with an operational
semantics. They are \emph{behavioral specifications}: $\wftype[\GG]{A}$ means that for any instantiation of
the hypotheses $\GG$, the program $A$ computes a value that names some specification. Likewise,
$\wftm[\GG]{M}{A}$ means that $M$ computes to a value satisfying the specification computed by $A$. We use the
notation $\gg$ and $\in$ (as opposed to the typical $\vdash$ and $:$) to indicate that we are speaking about 
this computational interpretation; we will develop a purely formal counterpart for the theory in
\cref{sec:formal}. For the moment, we will be vague about the exact meaning of ``computes'' in the cubical
setting, in the interest of first giving a sense of the shape of cubical and parametric type theory. We lay
out the computational interpretation in detail in \cref{sec:computational}. Until that point, we describe the
system by presenting inference rules that will turn out to be true in the semantics; note that these are
theorems, not definitions.

\subsection{The interval}

Cubical type theory adds a new form of judgment, $\wfpdim[\GG]{r}$ (\emph{$r$ is an interval term}), and its
associated equality judgment $\eqpdim[\GG]{r}{s}$. The two endpoints are interval terms, and we can add
interval variables to the context.
\begin{mathpar}
  \inferrule
  { }
  {\wfpdim[\GG]{0}}
  \and
  \inferrule
  { }
  {\wfpdim[\GG]{1}}
  \and
  \inferrule
  {\wfctx{\GG}}
  {\wfctx{\ctxpdim{\GG}[x]}}
  \and
  \inferrule
  { }
  {\wfpdim[\ctxpdim{\GG}[x]]{x}}
\end{mathpar}
Interval variables behave just like term variables, at least in the sense that they are \emph{structural}: we
have weakening, contraction, and exchange principles, as embodied by the following substitution rules defined
for any $\wfctx{\GG}$.
\begin{mathpar}
  \inferrule[$\BI$-Weakening]
  { }
  {\wfsubst[\ctxpdim{\GG}[x]]{\substprojpdim}{\GG}}
  \and
  \inferrule[$\BI$-Contraction]
  { }
  {\wfsubst[\ctxpdim{\GG}[z]]{(\substsnoc{\substsnoc{\substid[\GG]}{z}[x]}{z}[y])}{(\ctxpdim{\ctxpdim{\GG}[x]}[y])}}
  \and
  \inferrule[$\BI$-Exchange]
  { }
  {\wfsubst[\ctxpdim{\ctxpdim{\GG}[y]}[x]]{(\substsnoc{\substsnoc{\substid[\GG]}{x}[x]}{y}[y])}{(\ctxpdim{\ctxpdim{\GG}[x]}[y])}}
\end{mathpar}
We may also exchange interval variable assumptions with term variable assumptions when it makes type sense to
do so. The contraction and exchange substitutions may be derived from the following more fundamental rule,
which allows us to extend a substitution by a path interval term.
\begin{mathpar}
  \inferrule[$\BI$-Subst]
  {\wfsubst[\GG']{\Gg}{\GG} \\
    \wfpdim[\GG']{r}}
  {\wfsubst[\GG']{(\substbdim{\Gg}{r}[x])}{(\ctxpdim{\GG}[x])}}
\end{mathpar}

Finally, cubical type theory includes one more way to extend the context: with a \emph{constraint}, an
assumption that two interval terms are (exactly) equal. These become relevant when we introduce composition
below.
\begin{mathpar}
  \inferrule
  {\wfpdim[\GG]{r} \\
    \wfpdim[\GG]{s}}
  {\wfcst[\GG]{r = s}}
  \and
  \inferrule
  {\wfcst[\GG]{\Gx}}
  {\wfctx{(\ctxcst{\GG}{\Gx})}}
  \and
  \inferrule
  {\wfpdim[\GG]{r} \\
    \wfpdim[\GG]{s}}
  {\eqpdim[\ctxcst{\GG}{r=s}]{r}{s}}
\end{mathpar}
Once again, we have weakening, exchange, and contraction for constraints.

Aside from these additions, the judgmental apparatus of cubical type theory matches ordinary Martin-L\"{o}f
type theory. We take standard type formers (functions, products, universes) for granted and proceed to the
novel components: $\typath$-types, the Kan operations, $\tyv$-types (which underlie univalence), and higher
inductive types.

\subsection[Path-types]{\texorpdfstring{$\typath$}{Path}-types}

\begin{figure}
  \begin{mathpar}
    \inferrule[Path-Form]
    {\wftype[\ctxpdim{\GG}[x]]{A} \\
      \wftm[\GG]{M_0}{\usubstdim{A}{0}{x}} \\
      \wftm[\GG]{M_1}{\usubstdim{A}{1}{x}}}
    {\wftype[\GG]{\typath{x.A}{M_0}{M_1}}}
    \and
    \inferrule[Path-Intro]
    {\wftm[\ctxpdim{\GG}[x]]{M}{A}}
    {\wftm[\GG]{\tmplam[x]{M}}{\typath{x.A}{\usubstdim{M}{0}{x}}{\usubstdim{M}{1}{x}}}}
    \and
    \inferrule[Path-Elim]
    {\wftm[\GG]{P}{\typath{x.A}{M_0}{M_1}} \\
      \wfpdim[\GG]{r}}
    {\wftm[\GG]{\tmpapp{P}{r}}{\usubstdim{A}{r}{x}}}
    \and
    \inferrule[Path-$\beta$]
    {\wftm[\ctxpdim{\GG}[x]]{M}{A}}
    {\eqtm[\GG]{\tmpapp{(\tmplam[x]{M})}{r}}{\usubstdim{M}{r}{x}}{\usubstdim{A}{r}{x}}}
    \and
    \inferrule[Path-$\partial$]
    {\wftm[\GG]{P}{\typath{x.A}{M_0}{M_1}} \\ \Ge \in \{0,1\}}
    {\eqtm[\GG]{\tmpapp{P}{\Ge}}{M_\Ge}{\usubstdim{A}{\Ge}{x}}}
    \and
    \inferrule[Path-$\eta$]
    {\wftm[\GG]{P}{\typath{x.A}{M_0}{M_1}}}
    {\eqtm[\GG]{P}{\tmplam[x]{\tmpapp{P}{x}}}{\typath{x.A}{M_0}{M_1}}}
  \end{mathpar}
  \caption{Rules for $\typath$-types}%
  \label{fig:path-types}
\end{figure}

$\typath$-types simply internalize dependence on an interval variable, much as function types internalize
dependence on a term variable. When we have a type $\wftype[\ctxpdim{}[x]]{A}$ depending on an interval
variable $x$ and elements $\wftm{M_0}{\usubstdim{A}{0}{x}}$ and $\wftm{M_1}{\usubstdim{A}{1}{x}}$ inhabiting
its endpoints, we can form the type $\typath{x.A}{M_0}{M_1}$ of \emph{paths from $M_0$ to $M_1$ over
  $x.A$}. Recall that the univalence axiom, which we will validate in due time, identifies paths between types
with isomorphisms. With that intuition in mind, we think of an element of $\typath{x.A}{M_0}{M_1}$ as a proof
that $M_0$ corresponds to $M_1$ along the isomorphism between $\usubstdim{A}{0}{x}$ and $\usubstdim{A}{1}{x}$
represented by $x.A$. In the special case that $A$ does not depend on $x$, an element of
$\typath{\_.A}{M_0}{M_1}$ is simply an identification between $M_0$ and $M_1$ in $A$. (In that case, we
generally write $\typath{A}{M_0}{M_1}$ rather than $\typath{\_.A}{M_0}{M_1}$.)

Rules for $\typath$-types are displayed in \cref{fig:path-types}. Like functions, we introduce paths by
abstraction: if $\wftm[\ctxpdim{}[x]]{M}{A}$, then $\tmplam[x]{M}$ is a path from $\usubstdim{M}{0}{x}$ to
$\usubstdim{M}{1}{x}$. Conversely, if we have a path $\wftm{P}{\typath{x.A}{M_0}{M_1}}$, we can apply it to
any interval term $r$ to get an element $\wftm{\tmpapp{P}{r}}{\usubstdim{A}{r}{x}}$. (Moreover, we have
$\tmpapp{P}{0} = M_0$ and $\tmpapp{P}{1} = M_1$.) Abstraction and application interact via the usual $\beta$-
and $\eta$-rules for function types.

Although many theorems rely on the Kan operations introduced in the next section, we can observe some basic
facts about paths already. First, we have reflexive paths given by interval variable weakening.
\begin{mathpar}
  \inferrule
  {\wftm{M}{A}}
  {\wftm{\tmplam[x]{M}}{\typath{A}{M}{M}}}
\end{mathpar}
Second, functions act on paths. Note that we also use weakening here when we apply $F$ in a context extended
with $\ctxpdim{}[x]$.
\begin{mathpar}
  \inferrule
  {\wftm{F}{\typi[a]{A}{B}} \\
    \wftm{P}{\typath{A}{M_0}{M_1}}}
  {\wftm{\tmplam[x]{F(\tmpapp{P}{x})}}{\typath{x.\usubst{B}{\tmpapp{P}{x}}{a}}{FM_0}{FM_1}}}
\end{mathpar}
Finally, we have function extensionality: functions are path-equal when they are pointwise
path-equal. Although function extensionality is a (non-trivial) consequence of univalence~\cite[\S4.9]{hott-book}, cubically it follows more directly from exchange of term and interval variables.
\begin{mathpar}
  \inferrule
  {\wftm{F_0,F_1}{\typi[a]{A}{B}} \\
    \wftm{H}{\typi[a]{A}{\typath{B}{F_0a}{F_1a}}}}
  {\wftm{\tmplam[x]{\tmlam[a]{\tmpapp{Ha}{x}}}}{\typath{\typi[a]{A}{B}}{F_0}{F_1}}}
\end{mathpar}
It is easy to see that this function is an isomorphism---its inverse simply exchanges the arguments in the
opposite order.

The preceding argument can more generally characterize $\typath{x.\typi[a]{A}{B}}{F_0}{F_1}$ when $B$ depends
on $x$, but not when $A$ does: if $A$ depends on $x$, then the type
``$\typi[a]{A}{\typath{x.B}{F_0a}{F_1a}}$'' is nonsensical. In the most general case, we can instead construct
a map taking paths between functions to functions from paths to paths: ``equal functions take equal arguments
to equal results.''

\begin{lem}\label{lem:funapp-over}
  Let $\wftype[\ctxpdim{}[x]]{A}$, $\wftype[\ctxsnoc{\ctxpdim{}[x]}[a]{A}]{B}$,
  $\wftm{F_0}{\usubstdim{(\typi[a]{A}{B})}{0}{x}}$, and $\wftm{F_1}{\usubstdim{(\typi[a]{A}{B})}{1}{x}}$ be
  given. Then we have the following principle.
  \begin{mathpar}
    \inferrule
    {\wftm{Q}{\typath{x.\typi[a]{A}{B}}{F_0}{F_1}}}
    {\wftm{\mathsf{funapp}(Q)}{\typi*[a_0]{\usubstdim{A}{0}{x}}{\typi*[a_1]{\usubstdim{A}{1}{x}}{\typi[p]{\typath{x.A}{a_0}{a_1}}{\typath{x.\usubst{B}{\tmpapp{p}{x}}{a}}{F_0a_0}{F_1a_1}}}}}}
  \end{mathpar}
\end{lem}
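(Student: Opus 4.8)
The plan is to give $\mathsf{funapp}(Q)$ by an explicit term, using only the $\typath$-calculus of \cref{fig:path-types} together with interval-variable weakening; no Kan operations are required. I would set
\[
  \mathsf{funapp}(Q) \eqdef \tmlam[a_0]{\tmlam[a_1]{\tmlam[p]{\tmplam[x]{(\tmpapp{Q}{x})\,(\tmpapp{p}{x})}}}},
\]
the idea being that, under the hypotheses $a_0 : \usubstdim{A}{0}{x}$, $a_1 : \usubstdim{A}{1}{x}$, $p : \typath{x.A}{a_0}{a_1}$ and a fresh interval variable $x$, one can apply $Q$ at $x$ to obtain a function $\tmpapp{Q}{x} : \typi[a]{A}{B}$ and apply $p$ at $x$ to obtain an argument $\tmpapp{p}{x} : A$; feeding the latter to the former produces a term depending on $x$ whose $x.A$-indexed type line is exactly $x.\usubst{B}{\tmpapp{p}{x}}{a}$.

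For the typing I would argue as follows. Since $Q$ and $p$ do not mention $x$ (the interval variable occurring in their types is bound), $\BI$-Weakening lets us regard them as living in the context extended by $\ctxpdim{}[x]$. Then Path-Elim gives $\tmpapp{p}{x} : \usubstdim{A}{x}{x} = A$ and $\tmpapp{Q}{x} : \usubstdim{(\typi[a]{A}{B})}{x}{x} = \typi[a]{A}{B}$, hence $(\tmpapp{Q}{x})\,(\tmpapp{p}{x}) : \usubst{B}{\tmpapp{p}{x}}{a}$, and Path-Intro makes $\tmplam[x]{(\tmpapp{Q}{x})\,(\tmpapp{p}{x})}$ a path over $x.\usubst{B}{\tmpapp{p}{x}}{a}$. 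It then remains to compute its endpoints: by Path-$\partial$ we have $\tmpapp{Q}{0} = F_0$ and $\tmpapp{p}{0} = a_0$, so substituting $0$ for $x$ in $(\tmpapp{Q}{x})\,(\tmpapp{p}{x})$ yields $F_0\,a_0$ (using that substitution commutes with application, together with Path-$\beta$), and symmetrically the $x = 1$ endpoint is $F_1\,a_1$. Thus the inner term has type $\typath{x.\usubst{B}{\tmpapp{p}{x}}{a}}{F_0a_0}{F_1a_1}$, and abstracting over $a_0$, $a_1$, $p$ gives $\mathsf{funapp}(Q)$ the claimed type.

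I do not expect a genuine obstacle: the whole argument is a direct term construction, and the only point requiring care is the scoping — identifying the bound interval variables of the types of $Q$ and $p$ with the fresh $x$ introduced by $\tmplam[x]{-}$, and checking that the two endpoint equations hold definitionally rather than merely up to a path. By way of contrast with the function-extensionality map stated just before the lemma, it is worth remarking that $\mathsf{funapp}$ is in general only a map and not an equivalence once $A$ genuinely depends on $x$: inverting it would require extending an element of $\usubstdim{A}{x}{x}$ to a path over $x.A$, which calls for coercion rather than the bare $\typath$-rules.
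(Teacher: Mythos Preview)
Your proposal is correct and matches the paper's proof exactly: the paper also defines $\mathsf{funapp}(Q) \eqdef \tmlam[a_0]{\tmlam[a_1]{\tmlam[p]{\tmplam[x]{(\tmpapp{Q}{x})(\tmpapp{p}{x})}}}}$ as a one-line construction. Your additional typing justification and the remark about the inverse requiring coercion are accurate elaborations of what the paper leaves implicit.
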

\begin{proof}
  $\mathsf{funapp}(Q) \eqdef \tmlam[a_0]{\tmlam[a_1]{\tmlam[p]{\tmplam[x]{(\tmpapp{Q}{x})(\tmpapp{p}{x})}}}}$.
\end{proof}

Constructing an inverse to this function will require the coercion operator introduced in the following
section.

\subsection{Kan operations: coercion and composition}\label{sec:cubical:kan}

\begin{figure}
  \begin{mathpar}
    \mprset{vskip=0.3ex}
    \inferrule[Coercion]
    {\wftype[\GG,\ofp{x}]{A} \\
      \wfpdim[\GG]{r,s} \\
      \wftm[\GG]{M}{\usubstdim{A}{r}{x}}}
    {\wftm[\GG]{\coe{x.A}{r}{s}{M}}{\usubstdim{A}{s}{x}} \\\\
      \eqtm[\GG]{\coe{x.A}{r}{r}{M}}{M}{\usubstdim{A}{r}{x}}}
    \and
    \inferrule[Homogeneous composition]
    {\wftype[\GG]{A} \\
      \wfpdim[\GG]{r,s} \\
      \wftm[\GG]{M}{A} \\\\
      (\forall i)\; \wfcst[\GG]{\Gx_i} \\
      (\forall i)\; \wftm[\ctxpdim{\ctxcst{\GG}{\Gx_i}}[x]]{N_i}{A} \\
      (\forall i)\; \eqtm[\ctxcst{\GG}{\Gx_i}]{M}{\usubstdim{N_i}{r}{x}}{A} \\
      (\forall i,j)\; \eqtm[\ctxpdim{\ctxcst{\ctxcst{\GG}{\Gx_i}}{\Gx_j}}[x]]{N_i}{N_j}{A}}
    {\wftm[\GG]{\hcomp{A}{r}{s}{M}{\sys{\xi_i}{x.N_i}}}{A} \\\\
      (\forall j)\; \eqtm[\ctxcst{\GG}{\xi_j}]{\hcomp{A}{r}{s}{M}{\sys{\xi_i}{x.N_i}}}{\usubstdim{N_j}{s}{x}}{A} \\\\
      \eqtm[\GG]{\hcomp{A}{r}{r}{M}{\sys{\xi_i}{x.N_i}}}{M}{A}}
    \and
    \inferrule[Heterogeneous composition]
    {\wftype[\ctxpdim{\GG}[x]]{A} \\
      \wfpdim[\GG]{r,s} \\
      \wftm[\GG]{M}{\usubstdim{A}{r}{x}} \\\\
      (\forall i)\; \wfcst[\GG]{\Gx_i} \\
      (\forall i)\; \wftm[\ctxpdim{\ctxcst{\GG}{\Gx_i}}[x]]{N_i}{A} \\
      (\forall i)\; \eqtm[\ctxcst{\GG}{\Gx_i}]{M}{\usubstdim{N_i}{r}{x}}{\usubstdim{A}{r}{x}} \\
      (\forall i,j)\; \eqtm[\ctxpdim{\ctxcst{\ctxcst{\GG}{\Gx_i}}{\Gx_j}}[x]]{N_i}{N_j}{A}}
    {\wftm[\GG]{\comp{x.A}{r}{s}{M}{\sys{\xi_i}{x.N_i}}}{\usubstdim{A}{s}{x}} \\\\
      (\forall j)\; \eqtm[\ctxcst{\GG}{\xi_j}]{\comp{x.A}{r}{s}{M}{\sys{\xi_i}{x.N_i}}}{\usubstdim{N_j}{s}{x}}{\usubstdim{A}{s}{x}} \\\\
      \eqtm[\GG]{\comp{x.A}{r}{r}{M}{\sys{\xi_i}{x.N_i}}}{M}{\usubstdim{A}{r}{x}}}
  \end{mathpar}
  \caption{Rules for coercion, homogeneous composition, and heterogeneous composition}%
  \label{fig:kan}
\end{figure}

The judgmental path structure of cubical type theory endows each type with a ``path'' relation. So far, this
relation is not quite a proper notion of equality. For one, while it is reflexive, it need not be symmetric or
transitive. Perhaps more importantly, we do not know that type families \emph{respect} paths in the following
sense. If we have some family $\wftype[\ctxsnoc{}[a]{A}]{B}$ and a path $\wftm{P}{\typath{A}{M_0}{M_1}}$, we
expect that for every element of $BM_0$, there is a corresponding element of $BM_1$. If we think of $B$ as a
predicate on elements of $A$, we are saying that $M_1$ should satisfy the same properties as $M_0$. In fact,
we would expect that $BM_0$ and $BM_1$ are isomorphic. At the moment, however, we only know that there is a
path $x.B(\tmpapp{P}{x})$ from $BM_0$ to $BM_1$. What we need, then, is one direction of the univalence axiom:
the ability to transform paths between types into isomorphisms. This is effected by the \emph{coercion}
operator $\coe$, which satisfies the first rule in \cref{fig:kan}.

Given a term at some index $r$ of a type path $x.A$, coercion produces an element at any other $s$. We can
show that $\coe{x.A}{r}{s}{-} \in \usubstdim{A}{r}{x} \to \usubstdim{A}{s}{x}$ is in fact an isomorphism. The
full proof relies on composition, which we have not yet introduced, but we can at least see that
$\coe{x.A}{1}{0}{-}$ is inverse to $\coe{x.A}{0}{1}{-}$ up to a path.
\begin{mathpar}
  \inferrule
  {\wftm{M}{\usubstdim{A}{0}{x}}}
  {\wftm{\tmplam[y]{\coe{x.A}{y}{0}{\coe{x.A}{0}{y}{M}}}}{\typath{\usubstdim{A}{0}{x}}{M}{\coe{x.A}{1}{0}{\coe{x.A}{0}{1}{M}}}}}
\end{mathpar}

Operationally, coercion evaluates by cases on the shape of the type path $x.A$. For example, the following
equation describes the behavior of coercion at a product type $x.\tysigma[a]{A}{B}$.
\begin{mathpar}
  \inferrule
  {\wftype[\ctxpdim{}[x]]{A} \\
    \wftype[\ctxsnoc{\ctxpdim{}[x]}[a]{A}]{B} \\
    \wftm{M}{\usubstdim{(\tysigma[a]{A}{B})}{r}{x}}}
  {\eqtm{\coe{x.\tysigma[a]{A}{B}}{r}{s}{M}}{\tmpair{\coe{x.A}{r}{s}{\tmfst{M}}}{\coe{x.\usubst{B}{\coe{x.A}{r}{x}{\tmfst{M}}}{a}}{r}{s}{\tmsnd{M}}}}{\usubstdim{(\tysigma[a]{A}{B})}{s}{x}}}
\end{mathpar}

\emph{Homogeneous composition} (which we will often just call composition) serves a more technical purpose: to
evaluate coercions along lines of the form $x.\typath{y.A}{N_0}{N_1}$. For the moment, let us assume that $A$
does not depend on $x$. In order to execute such a coercion, we must be able to adjust the endpoints of a
given path by another pair of paths. That is, given $\wftm{M}{\typath{y.A}{M_0}{M_1}}$ and lines $x.N_0$,
$x.N_1$ fitting into the following shape, we should be able to produce a new, ``adjusted'' path shown as a
dashed line below.
\[
  \begin{tikzpicture}
    \draw (-0.2 , 2.4) [->] to node [above] {\small $y$} (0.3 , 2.4) ;
    \draw (-0.2 , 2.4) [->] to node [left] {\small $x$} (-0.2 , 1.9) ;
    \node (tl) at (1.5 , 2) {$M_0$} ;
    \node (tr) at (6 , 2) {$M_1$} ;
    \node (bl) at (1.5 , 0.5) {$\bullet$} ;
    \node (br) at (6 , 0.5) {$\bullet$} ;
    \draw (tl) [->] to node [above] {$\tmpapp{M}{y}$} (tr) ;
    \draw (tl) [->] to node [left] {$N_0$} (bl) ;
    \draw (tr) [->] to node [right] {$N_1$} (br) ;
    \draw (bl) [->,dashed] to node [below] {$\exists$} (br) ;
  \end{tikzpicture}
\]

Homogeneous composition, written $\hcomp$, is a generalized form of this operation that adjusts the boundary
of a term, a boundary being specified by a sequence of constraints on interval variables. As an example, the
adjusted path above is obtained as the following composite.
\[
  \wftm[\ctxpdim{}[y]]{\hcomp{A}{0}{1}{\tmpapp{M}{y}}{\tube{y=0}{x.N_0},\tube{y=1}{x.N_1}}}{A}
\]
The general operator has the form $\hcomp{A}{r}{s}{M}{\sys{\Gx_i}{x.N_i}}$; it is characterized by the second
rule of \cref{fig:kan}. We use the notation $\sys{\Gx_i}{x.N_i}$ to denote a finite list of constraint-line
pairs $\tube{\Gx_1}{x.N_1},\ldots,\tube{\Gx_n}{x.N_n}$, implicitly quantifying over an indexing variable
$i$. Like coercion, we define homogeneous composition by case analysis of the type argument. Where the special
case involving a pair of constraints $y = 0$ and $y = 1$ on a single interval variable is enough for
\emph{coercion} in the path type, the general form becomes necessary to implement \emph{composition} in the
path type; the general form thus represents a ``strengthened induction hypothesis''.

To handle coercion along $x.\typath{y.A}{N_0}{N_1}$ when $A$ \emph{does} depend on $x$, we can combine
coercion and composition into a unified \emph{heterogeneous composition} operator, $\comp$, which coerces an
input across a type line while simultaneously adjusting by a boundary path along that line. Defined as
follows, $\comp$ satisfies the third rule shown in \cref{fig:kan}.
\begin{align*}
  {\comp{x.A}{r}{s}{M}{\sys{\xi_i}{x.N_i}} \eqdef \hcomp{\usubstdim{A}{s}{x}}{r}{s}{\coe{x.A}{r}{s}{M}}{\sys{\xi_i}{x.\coe{x.A}{x}{s}{N_i}}}}
\end{align*}
Both $\hcomp$ and $\coe$ can be recovered from $\comp$, so the latter is may be taken as primitive instead, as
in~\cite{cchm,angiuli18}. Either way, the ability to decompose $\comp$ into $\hcomp$ and $\coe$ plays a key
role in defining Kan operations for higher inductive types~\cite{coquand18,cavallo19b}.

Coercion and composition are together referred to as the \emph{Kan operations}, being inspired by the Kan
condition of algebraic topology~\cite{kan55}. For each type we wish to introduce to cubical type theory, we
must explain how the Kan operations evaluate at that type. This can be carried out for all the standard type
formers of Martin-L\"{o}f type theory (functions, products, inductive types, universes); we refer to Angiuli~\cite{angiuli19} for a thorough accounting of those results.

Using coercion, we can prove the converse to Lemma~\ref{lem:funapp-over}: if two functions take equal arguments to equal results, then they are equal as functions.

\begin{lem}\label{lem:funext-over}
  Let $\wftype[\ctxpdim{}[x]]{A}$, $\wftype[\ctxsnoc{\ctxpdim{}[x]}[a]{A}]{B}$,
  $\wftm{F_0}{\usubstdim{(\typi[a]{A}{B})}{0}{x}}$, and $\wftm{F_1}{\usubstdim{(\typi[a]{A}{B})}{1}{x}}$ be
  given. Then we have the following.
  \begin{mathpar}
    \inferrule
    {\wftm{H}{\typi*[a_0]{\usubstdim{A}{0}{x}}{\typi*[a_1]{\usubstdim{A}{1}{x}}{\typi[p]{\typath{x.A}{a_0}{a_1}}{\typath{x.\usubst{B}{\tmpapp{p}{x}}{a}}{F_0a_0}{F_1a_1}}}}}}
    {\wftm{\mathsf{funext}(H)}{\typath{x.\typi[a]{A}{B}}{F_0}{F_1}}}
  \end{mathpar}
\end{lem}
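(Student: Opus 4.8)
The plan is to write down $\mathsf{funext}(H)$ by hand, generalizing the argument-swap that proves ordinary function extensionality but compensating for the $x$-dependence of $A$ by coercing the bound argument to the two ends of the type line. Working in $\ctxpdim{\GG}[x]$ extended by a fresh variable $a:A$, I transport $a$ to either endpoint, setting $a_0 \eqdef \coe{x.A}{x}{0}{a} \in \usubstdim{A}{0}{x}$ and $a_1 \eqdef \coe{x.A}{x}{1}{a} \in \usubstdim{A}{1}{x}$ (strictly the coerced line should be relabelled as $z.\usubstdim{A}{z}{x}$ to avoid capture of the ambient $x$; I leave this implicit). These are joined by the ``diagonal'' path $p_a \eqdef \tmplam[w]{\coe{x.A}{x}{w}{a}}$, which lies in $\typath{x.A}{a_0}{a_1}$ by the \textsc{Path-$\beta$} rule. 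Feeding $a_0$, $a_1$, $p_a$ to $H$ yields $H\,a_0\,a_1\,p_a \in \typath{x.\usubst{B}{\tmpapp{p_a}{x}}{a}}{F_0 a_0}{F_1 a_1}$, and I define
\[
  \mathsf{funext}(H) \eqdef \tmplam[x]{\tmlam[a]{\tmpapp{(H\,a_0\,a_1\,p_a)}{x}}}.
\]

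The step to watch---and really the only thing going on in the proof---is why the body $\tmpapp{(H\,a_0\,a_1\,p_a)}{x}$ inhabits $B$. Applying the path $H\,a_0\,a_1\,p_a$ at the ambient dimension $x$ gives an element of $\usubst{B}{\tmpapp{p_a}{x}}{a}$, so I need $\tmpapp{p_a}{x} = a$. But $\tmpapp{p_a}{x} = \coe{x.A}{x}{x}{a}$ by \textsc{Path-$\beta$}, and this equals $a$ by the reflexivity clause $\coe{x.A}{r}{r}{M} = M$ of the \textsc{Coercion} rule: the diagonal coercion is rigged precisely so as to degenerate to the identity at the dimension at which we read it off. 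Hence $\usubst{B}{\tmpapp{p_a}{x}}{a} = \usubst{B}{a}{a} = B$, and $\tmlam[a]{\tmpapp{(H\,a_0\,a_1\,p_a)}{x}} \in \typi[a]{A}{B}$ as needed.

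It remains to verify the endpoints, $\tmpapp{\mathsf{funext}(H)}{\Ge} = F_\Ge$ for $\Ge \in \{0,1\}$. Take $\Ge = 0$: substituting $x \eqdef 0$ and reducing by \textsc{Path-$\beta$}, the term $\tmpapp{\mathsf{funext}(H)}{0}$ becomes $\tmlam[a]{\tmpapp{(H\,a\,a_1'\,p_a')}{0}}$, since $a_0$ collapses to $\coe{x.A}{0}{0}{a} = a$ (again by coercion reflexivity), with $a_1' = \coe{x.A}{0}{1}{a}$ and $p_a' = \tmplam[w]{\coe{x.A}{0}{w}{a}}$; the path $H\,a\,a_1'\,p_a'$ has left endpoint $F_0 a$, so $\tmpapp{(H\,a\,a_1'\,p_a')}{0} = F_0 a$ by the \textsc{Path-$\partial$} rule, and thus $\tmpapp{\mathsf{funext}(H)}{0} = \tmlam[a]{F_0 a} = F_0$ by $\eta$ for $\Pi$-types. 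The case $\Ge = 1$ is symmetric. No deep obstacle arises: the whole argument is a short computation whose single subtle ingredient is the definitional equality $\tmpapp{p_a}{x} = a$ above. It is worth noting, though, that $\mathsf{funext}$ is only a weak (homotopical) inverse to the map $\mathsf{funapp}$ of \cref{lem:funapp-over}---for instance $\coe{x.A}{x}{0}{\tmpapp{p}{x}}$ need not be definitionally equal to $\tmpapp{p}{0}$---so together they witness an isomorphism only up to paths.
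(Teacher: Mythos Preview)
Your construction is exactly the paper's: the definition $\mathsf{funext}(H) \eqdef \tmplam[x]{\tmlam[a]{\tmpapp{(H\,a_0\,a_1\,p_a)}{x}}}$ with $a_\Ge = \coe{x.A}{x}{\Ge}{a}$ and $p_a = \tmplam[w]{\coe{x.A}{x}{w}{a}}$ matches the paper's one-line proof term verbatim (modulo the final $@x$, which the paper appears to elide). Your expanded type-checking---in particular the key observation that $\tmpapp{p_a}{x} = \coe{x.A}{x}{x}{a} = a$ by coercion reflexivity---is correct and is precisely the reasoning the paper leaves implicit.
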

\begin{proof}
  $\mathsf{funext}(H) \eqdef
    \tmplam[x]{\tmlam[a]{H(\coe{x.A}{x}{0}{a})(\coe{x.A}{x}{1}{a})(\tmplam[y]{\coe{x.A}{x}{y}{a}})}}$.
\end{proof}

Essentially, given an interval variable $\ctxpdim{}[x]$ and an element $a$ of $A$ (at index $x$), we can
extend the point $a$ to a path over $x.A$ by coercion.

Coercion and composition also give us an analogue of the Martin-L\"{o}f identity type elimination principle
(often called ``J'') for paths.

\begin{lem}\label{lem:j}
  Let $\wftype{A}$ and $\wftm{M}{A}$ be given. Suppose we are given the following:
  \begin{itemize}[label=$\triangleright$]
  \item $\wftype[\ctxsnoc{\ctxsnoc{}[a]{A}}[p]{\typath{A}{M}{a}}]{C}$,
  \item $\wftm{N}{\usubst{C}{M,\tmplam{\_}{M}}{a,p}}$,
  \item $\wftm{M'}{A}$ and $\wftm{P}{\typath{A}{M}{M'}}$.
  \end{itemize}
  Then there is some $\wftm{\tmj{a.p.C}{N}{P}}{\usubst{C}{M',P}{a,p}}$.
\end{lem}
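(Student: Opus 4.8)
The plan is to carry out based path induction by reducing to the contractibility of the path‑singleton. Write $S \eqdef \tysigma[a]{A}{\typath{A}{M}{a}}$ for the singleton type at $M$. I would first construct a path $\wftm{c}{\typath{S}{\tmpair{M}{\tmplam[\_]{M}}}{\tmpair{M'}{P}}}$ exhibiting the reflexivity pair as a centre of contraction for $S$, and then put
\[
  \tmj{a.p.C}{N}{P} \eqdef \coe{y.\usubst{C}{\tmfst{\tmpapp{c}{y}},\tmsnd{\tmpapp{c}{y}}}{a,p}}{0}{1}{N}.
\]
By substitution the type line $y.\usubst{C}{\tmfst{\tmpapp{c}{y}},\tmsnd{\tmpapp{c}{y}}}{a,p}$ is a well‑formed family, so by the \rulename{Coercion} rule of \cref{fig:kan} this term is well‑typed as soon as the two endpoints of $c$ reduce correctly: we need $\eqtm{\tmpapp{c}{0}}{\tmpair{M}{\tmplam[\_]{M}}}{S}$, which makes the source type of the coercion exactly $\usubst{C}{M,\tmplam[\_]{M}}{a,p}$, where $N$ lives, and $\eqtm{\tmpapp{c}{1}}{\tmpair{M'}{P}}{S}$, which makes its target exactly $\usubst{C}{M',P}{a,p}$, as required.

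Everything then hinges on $c$. Its first component is the unproblematic $\tmplam[y]{\tmpapp{P}{y}}$, whose $y=0$ and $y=1$ faces are $M$ and $M'$ on the nose by \rulename{Path-$\partial$} (\cref{fig:path-types}). Its second component must be a path lying over $y.\typath{A}{M}{\tmpapp{P}{y}}$ from $\tmplam[\_]{M}$ to $P$, that is, a square in interval variables $y,z$ that is constant at $M$ on $z=0$ and on $y=0$, equal to $\tmpapp{P}{y}$ on $z=1$, and equal to $\tmpapp{P}{z}$ on $y=1$ --- a ``$\wedge$‑connection'' on the path $P$. I would build this square by homogeneous composition. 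The first term one writes down, say $\hcomp{A}{1}{z}{\tmpapp{P}{y}}{\tube{y=0}{\_.M},\tube{z=1}{\_.\tmpapp{P}{y}}}$, makes the $y=0$ and $z=1$ faces strict but leaves the $z=0$ and $y=1$ faces as genuine composites rather than $M$ and $\tmpapp{P}{z}$; so I would instead nest a second $\hcomp$ to ``close off'' the remaining faces and obtain an honest connection square with all four sides definitionally correct.

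The main obstacle is precisely this step. Cartesian cubical type theory has no primitive connection operations, so no single homogeneous composition pins down all four faces of the connection square on the nose --- one gets three for free and the fourth only up to a path --- and since the outer coercion above is only well‑typed when $\tmpapp{c}{1}$ reduces to $\tmpair{M'}{P}$ \emph{exactly}, one is forced either into a nested composition or into building a ``raw'' contraction path and then correcting its endpoint by a further coercion along a family $y.\usubst{C}{M',\tmpapp{\rho}{y}}{a,p}$ for a suitable path $\rho$ between $P$ and the composite actually produced; in either case there is a modest pile of $\hcomp$‑boundary coherence side conditions to discharge. By contrast, once $c$ is in hand the rest is routine substitution and endpoint bookkeeping. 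Note finally that the statement asks only for the \emph{element} $\tmj{a.p.C}{N}{P}$: the expected computation rule $\eqtm{\tmj{a.p.C}{N}{\tmplam[\_]{M}}}{N}{\usubst{C}{M,\tmplam[\_]{M}}{a,p}}$ holds for this definition only up to a path, not definitionally, and is not claimed here.
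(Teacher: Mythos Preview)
Your plan is correct and conceptually identical to the paper's: both coerce $N$ along the type line induced by a contraction of the path-singleton. The implementation differs, and the paper's is simpler. You fix the first component of the contraction $c$ to be $\tmpapp{P}{y}$ on the nose, which forces the second component to be a full $\wedge$-connection on $P$ with all four faces strict; as you correctly diagnose, that requires nested compositions or a post-hoc correction in cartesian cubical type theory.

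The paper avoids this by not fixing the first component in advance. It takes a single composition
\[
  Q \eqdef \hcomp{A}{0}{y}{\tmpapp{P}{0}}{\tube{x=0}{\_.\tmpapp{P}{0}},\tube{x=1}{y.\tmpapp{P}{y}}}
\]
with only \emph{three} prescribed faces, and coerces along $x.\usubst{C}{\usubstdim{Q}{1}{y},\tmplam[y]{Q}}{a,p}$. The first component $\usubstdim{Q}{1}{y}$ is just the open face of the $\hcomp$---not literally $\tmpapp{P}{x}$---but the tube equations already force it to be $M$ at $x=0$ and $M'$ at $x=1$, and the second component $\tmplam[y]{Q}$ reduces to $\tmplam[\_]{M}$ at $x=0$ and, by \rulename{Path-$\eta$}, to $P$ at $x=1$. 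So one $\hcomp$ and one $\coe$ suffice, with no connection square needed. The moral: let the composition determine both components of the contraction simultaneously rather than committing to the first and then fighting to match the second.
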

\begin{proof}
  Define an auxiliary $\wftm[\ctxpdim{\ctxpdim{}[x]}[y]]{Q}{A}$ as follows.
  \[
    Q \eqdef \hcomp{A}{0}{y}{\tmpapp{P}{0}}{\tube{x=0}{\_.\tmpapp{P}{0}},\tube{x=1}{y.\tmpapp{P}{y}}}
  \]
  Set $\tmj{a.p.C}{N}{P} \eqdef \coe{x.\usubst{C}{\usubstdim{Q}{1}{y},\tmplam[y]{Q}}{a,p}}{0}{1}{N}$.
\end{proof}

This is slightly weaker than the elimination principle enjoyed by Martin-L\"{o}f's elimination principle, as
it is not the case that $\eqtm{\tmj{a.p.C}{N}{\tmplam[\_]{M}}}{N}{\usubst{C}{M,\tmplam[\_]{M}}{a,p}}$ in
general; this equation may be shown to hold up to a path, but does not hold up to exact equality. One may
separately introduce identity types to cubical type theory that do satisfy this principle, either via a
special construction~\cite{cchm,abcfhl} or as particular indexed inductive types~\cite{cavallo19b}, and in
this case one has $\tyid{A}{M}{M'} \simeq \typath{A}{M}{M'}$. By univalence, this isomorphism implies that
path and identity types satisfy the same theorems; in particular, it justifies our citing theorems about
identity types in homotopy type theory as theorems about path types going forward. Of course, these theorems
are often more easily proven in cubical type theory by reasoning directly with paths.

\subsection[V-types and univalence]{\texorpdfstring{$\tyv$}{V}-types and univalence}\label{sec:cubical:v}

\begin{figure}
  \begin{mathpar}
    \inferrule[V-Form]
    {\wftype[\ctxpeq{\GG}{r}{0}]{A} \\
      \wftype[\GG]{B} \\
      \wftm[\ctxpeq{\GG}{r}{0}]{I}{\tyiso{A}{B}}}
    {\wftype[\GG]{\tyv{r}{A}{B}{I}}}
    \and
    \inferrule[V-Form-$\partial_0$]
    {\wftype[\GG]{A} \\
      \wftype[\GG]{B} \\
      \wftm{I}{\tyiso{A}{B}}}
    {\eqtype[\GG]{\tyv{0}{A}{B}{I}}{A}}
    \and
    \inferrule[V-Form-$\partial_1$]
    {\wftype[\GG]{B}}
    {\eqtype[\GG]{\tyv{1}{A}{B}{I}}{B}}
    \and
    \inferrule[V-Intro]
    {\wftm[\ctxpeq{\GG}{r}{0}]{M}{A} \\
      \wftm[\GG]{N}{B} \\
      \eqtm[\ctxpeq{\GG}{r}{0}]{\tmfst{I}(M)}{N}{B}}
    {\wftm[\GG]{\tmvin{r}{M}{N}}{\tyv{r}{A}{B}{I}}}
    \and
    \inferrule[V-Intro-$\partial_0$]
    {\wftm[\GG]{M}{A} \\
      \wftm[\GG]{N}{B} \\
      \eqtm[\GG]{\tmfst{I}(M)}{N}{B}}
    {\eqtm[\GG]{\tmvin{0}{M}{N}}{M}{A}}
    \and
    \inferrule[V-Intro-$\partial_1$]
    {\wftm[\GG]{N}{B}}
    {\eqtm[\GG]{\tmvin{1}{M}{N}}{N}{B}}
    \and
    \inferrule[V-Elim]
    {\wftm[\GG]{P}{\tyv{r}{A}{B}{I}}}
    {\wftm[\GG]{\tmvproj{r}{P}{I}}{B}}
    \and
    \inferrule[V-Elim-$\partial_0$]
    {\wftm[\GG]{P}{A} \\
      \wftm{I}{\tyiso{A}{B}}}
    {\eqtm[\GG]{\tmvproj{0}{P}{I}}{\tmfst{I}(P)}{B}}
    \and
    \inferrule[V-Elim-$\partial_1$]
    {\wftm[\GG]{P}{B}}
    {\eqtm[\GG]{\tmvproj{1}{P}{I}}{P}{B}}
  \end{mathpar}
  \caption{Rules for $\tyv$-types. See~\cite{angiuli19} for $\beta$- and $\eta$-rules.}%
  \label{fig:v}
\end{figure}

The Kan operations account for one direction of the univalence axiom: the mapping from paths between types to
isomorphisms. The inverse is defined using $\tyv$-types, which produce paths in the universe from
isomorphisms.%
\footnote{%
  Some formulations of cubical type theory instead use \emph{$\tyglue$-types}, which have $\tyv$-types as a
  special case. The points we make here about $\tyv$-types apply equally well to $\tyglue$-types.
}

First, let us take the opportunity to define isomorphism precisely.

\begin{defi}
  Let a function $\wftm{F}{\tyarr{A}{B}}$ be given. The types $\tylinv{A}{B}{F}$ and $\tyrinv{A}{B}{F}$ of
  left and right inverses to $F$ are defined as follows.
  \begin{align*}
    \tylinv{A}{B}{F} &\eqdef \tysigma[g]{\tyarr{B}{A}}{(\typi[a]{A}{\typath{A}{g(Fa)}{a}})} \\
    \tyrinv{A}{B}{F} &\eqdef \tysigma[g]{\tyarr{B}{A}}{(\typi[b]{B}{\typath{B}{F(gb)}{b}})}
  \end{align*}
  We say $F$ is an isomorphism when it is equipped with a left and right inverse.
  \begin{align*}
    \tyisiso{A}{B}{F} &\eqdef \typrod{\tylinv{A}{B}{F}}{\tyrinv{A}{B}{F}}
  \end{align*}
  The type of isomorphisms between $A$ and $B$ is then
  $\tyiso{A}{B} \eqdef \tysigma[f]{\tyarr{A}{B}}{\tyisiso{A}{B}{f}}$.
\end{defi}

Isomorphisms are frequently known as \emph{equivalences} in the literature on univalent type theory. There are
several isomorphic formulations of the type $\tyiso{A}{B}$; we refer to~\cite[Chapter 4]{hott-book} for more
details. (Our definition is there called a \emph{bi-invertible map}). A key property of $\tyisiso{A}{B}{F}$ is
that it is a proposition in the following sense~\cite[Theorem 4.3.2]{hott-book}.

\begin{defi}\label{def:proposition}
  $\wftype{A}$ is a \emph{proposition} if any two elements of $A$ are equal up to a path, as captured by the
  following type.
  \[
    \tyisprop{A} \eqdef \typi*[a]{A}{\typi[b]{A}{\typath{A}{a}{b}}}
  \]
\end{defi}

While the $\tyv$-type is used principally to convert isomorphisms to paths, it is a bit more general: it takes
a path and an isomorphism and composes them to produce a new path. That is, if we have a path of types $B$ in
a direction $x$ and an isomorphism $I$ between some $A$ and $\usubstdim{B}{0}{x}$, their $\tyv$-type fits into
the following (``$\tyv$-shaped'') diagram.
\[
  \begin{tikzcd}[column sep=4em,row sep=1.7em]
    A \ar[phantom]{d}[left=0.5em,font=\normalsize]{I}[sloped,font=\normalsize,xshift=0.1ex]{\simeq} \ar[dashed]{dr}[font=\normalsize]{\tyv{x}{A}{B}{I}} \\
    B_0 \ar{r}[below,font=\normalsize]{B} & B_1 \\[-1.8em]
    {x\to}
  \end{tikzcd}
\]

Rules for $\tyv$-types are shown in \cref{fig:v}. We convert isomorphisms to paths in the universe by applying
$\tyv$ with a degenerate path.
\begin{mathpar}
  \inferrule
  {\wftm{A}{\tyuniv} \\
    \wftm{B}{\tyuniv} \\
    \wftm{I}{\tyiso{A}{B}}}
  {\wftm{\tmua{A}{B}{I} \eqdef \tmplam[x]{\tyv{x}{A}{B}{I}}}{\typath{\tyuniv}{A}{B}}}
\end{mathpar}

Here, $x$ does not appear in $B$, so we are composing the isomorphism $I$ with the reflexive path $\_.B$. This
reflexive path corresponds to the identity isomorphism on $B$, so when we pre-compose with $I$ we simply get a
path corresponding to $I$.

We will not be using $\tyv$-types directly in the future, only the univalence axiom that they enable. Rather,
we introduce them here in order to make a comparison with their parametric equivalent in
\cref{sec:parametric:gel}. For that purpose, let us give some intuition as to why $\tyv$ is formulated as it
is. Univalence involves a ``dimension shift'': it takes a point in the type of isomorphisms and produces a
path in the universe, which is an element one dimension higher. However, we cannot impose in the typing rule
for $\tyv{x}{A}{B}{I}$ that $A,B,I$ live ``one dimension lower,'' \ie, are degenerate in $x$, because this 
property is not stable under substitution. For example, $\tmzmod{M}{x}$ may be degenerate in some $y$, but
$\usubstdim{\tmzmod{M}{x}}{y}{x}$ is certainly not degenerate in $\usubstdim{y}{y}{x}$. All aspects of type
theory should be stable under substitution, so this is a non-starter. Instead, we structure $\tyv{r}$ in such
a way that it does not involve a dimension shift; both the input and the output vary in the direction $r$.

\subsection{Higher inductive types}

Finally, cubical type theory can include a variety of \emph{higher inductive types}. These can be seen as a
mutual generalization of inductive types and quotients; they are inductive definitions that permit \emph{path
  constructors} in addition to ordinary constructors.

It is beyond the scope of this work to give a comprehensive account of higher inductive types in cartesian
cubical type theory; for that, we refer to~\cite{cavallo19b}. We will instead go by way of example, expanding
on the type $\tyzmod$ of integers \emph{mod} 2 specified in the introduction.
\[
  \begin{array}{l}
    \dataheading{\tyzmod} \\
    \niceconstr{\tmzin}[\ctxsnoc{}[n]{\tyz}]{\tyzmod} \\
    \niceconstr{\tmzmod}[\ctxpdim{\ctxsnoc{}[n]{\tyz}}[x]]{\tyzmod}{\tube{x=0}{\tmzin{n}} \mid \tube{x=1}{\tmzin{n+2}}}
  \end{array}
\]

The $\tmzmod$ constructor exemplifies the format of a path constructor: it takes one or more interval
variables as arguments, and it has a specified boundary which can refer to its arguments and previous
construtors. This specification indicates the following introduction and boundary rules for $\tmzin$ and
$\tmzmod$.
\begin{mathpar}
  \inferrule
  {\wftm[\GG]{N}{\tyz}}
  {\wftm[\GG]{\tmzin{N}}{\tyzmod}}
  \and
  \inferrule
  {\wftm[\GG]{N}{\tyz} \\
    \wfpdim[\GG]{r}}
  {\wftm[\GG]{\tmzmod{N}{r}}{\tyzmod}}
  \\
  \inferrule
  {\wftm[\GG]{N}{\tyz}}
  {\eqtm[\GG]{\tmzmod{N}{0}}{\tmzin{N}}{\tyzmod}}
  \and
  \inferrule
  {\wftm[\GG]{N}{\tyz}}
  {\eqtm[\GG]{\tmzmod{N}{1}}{\tmzin{N + 2}}{\tyzmod}}
\end{mathpar}
The eliminator for $\tyzmod$ naturally takes clauses to handle the $\tmzin$ and $\tmzmod$ cases. The $\tmzmod$
case is required to cohere with the $\tmzin$ case on its boundary, which ensures that every function out of
$\tyzmod$ takes $\tmzin{n}$ and $\tmzin{n+2}$ to path-equal results.
\begin{mathpar}
  \inferrule
  {\wftype[\ctxsnoc{\GG}[a]{\tyzmod}]{C} \\
    \wftm[\GG]{M}{\tyzmod} \\
    \wftm[\ctxsnoc{\GG}[n]{\tyz}]{Q_{\tmzin}}{\usubst{C}{\tmzin{n}}{a}} \\
    \wftm[\ctxpdim{\ctxsnoc{\GG}[n]{\tyz}}[x]]{Q_{\tmzmod}}{\usubst{C}{\tmzmod{n}{x}}{a}} \\\\
    \eqtm[\ctxsnoc{\GG}[n]{\tyz}]{\usubstdim{Q_{\tmzmod}}{0}{x}}{Q_{\tmzin}}{\usubst{C}{\tmzin{n}}{a}} \\
    \eqtm[\ctxsnoc{\GG}[n]{\tyz}]{\usubstdim{Q_{\tmzmod}}{1}{x}}{\usubst{Q_{\tmzin}}{n+2}{n}}{\usubst{C}{\tmzin{n+2}}{a}}}
  {\wftm[\GG]{\tmzmodelim{a.C}{M}{n.Q_{\tmzin}}{n.x.Q_{\tmzmod}}}{\usubst{C}{M}{a}}}
\end{mathpar}
When applied to a constructor, the eliminator steps accordingly as shown below.
\begin{mathpar}
  \eqtm{\tmzmodelim{a.C}{\tmzin{N}}{n.Q_{\tmzin}}{n.x.Q_{\tmzmod}}}{\usubst{Q_{\tmzin}}{N}{n}}{\usubst{C}{\tmzin{N}}{a}}
  \and
  \eqtm{\tmzmodelim{a.C}{\tmzmod{N}{r}}{n.Q_{\tmzin}}{n.x.Q_{\tmzmod}}}{\usubstdim{\usubst{Q_{\tmzmod}}{N}{n}}{r}{x}}{\usubst{C}{\tmzmod{N}{r}}{a}}
\end{mathpar}

\section{Parametric type theory}\label{sec:parametric}

We now proceed to add parametricity primitives to our cubical type theory. We follow the blueprint of
Bernardy, Coquand, and Moulin (BCM)~\cite{bernardy15}, which is a substantial simplification of Bernardy and
Moulin's original parametric theory~\cite{bernardy12}. The BCM parametric type theory has the same basic shape
as cubical type theory: relatedness is represented by maps out of an interval object $\BFI$. We henceforth
refer to $\BI$ as the \emph{path interval} and $\BFI$ as the \emph{bridge interval}; we call maps out of
$\BFI$ \emph{bridges}, following~\cite{nuyts17}. (As a general rule, we use boldface to distinguish bridge
constructs from their path equivalents.) The connection between internal parametricity and cubical type theory
has never been a secret; Bernardy and Moulin already remark on the similarity in~\cite{bernardy12}, and later
iterations of their work resemble cubical type theory even more strongly.

We go a bit further and compare the two in detail over the course of this section. First, there is the obvious
difference: parametric type theory has no analogues of coercion and composition. More subtle is the difference
between the two intervals $\BI$ and $\BFI$: the path interval behaves structurally, but the bridge interval is
\emph{affine}. This has two essential effects on the theory. First, it enables a ``function extensionality''
principle analogous to Lemma~\ref{lem:funext-over} that does not rely on coercion. Second, it means that we
can avoid the $\tyv$-shape of $\tyv$-types, instead supporting a type former ($\tygel$) that directly converts
relations to bridges.

On a more mundane level, we present the parametricity elements using a notation more similar to that of
cubical type theory. For a translation to Bernardy \etal's (substantially different) notation, see
Figure~\ref{fig:translation} on page~\pageref{fig:translation}.

\subsection{The bridge interval}\label{sec:parametric:interval}

Recall our intuition for a term $\wftm[\ctxpdim{}[x]]{M}{A}$: the path $x.A$ stands for an isomorphism
$\usubstdim{A}{0}{x} \simeq \usubstdim{A}{1}{x}$ via univalence, and $x.M$ is a proof that
$\usubstdim{M}{0}{x}$ corresponds to $\usubstdim{M}{1}{x}$ across this isomorphism. Likewise, a \emph{bridge}
of types $\wftype[\ctxbdim{}[\bmx]]{A}$ stands for a binary relation on $\usubstdim{A}{\bm0}{\bmx}$ and
$\usubstdim{A}{\bm1}{\bmx}$, and a term $\wftm[\ctxbdim{}[\bmx]]{M}{A}$ is a proof that
$\usubstdim{M}{\bm0}{\bmx}$ and $\usubstdim{M}{\bm1}{\bmx}$ stand in this relation.

We start with a judgment $\wfbdim[\GG]{\bmr}$. Like the path interval, it is populated by two endpoint
$\bm0$ and $\bm1$, and we can suppose bridge interval variables.
\begin{mathpar}
  \inferrule
  { }
  {\wfbdim[\GG]{\bm0}}
  \and
  \inferrule
  { }
  {\wfbdim[\GG]{\bm1}}
  \and
  \inferrule
  {\wfctx{\GG}}
  {\wfctx{\ctxbdim{\GG}[x]}}
  \and
  \inferrule
  { }
  {\wfbdim[\ctxbdim{\GG}[x]]{\bmx}}
\end{mathpar}
Unlike path variables, however, we will only have weakening and exchange for the bridge interval: the
contraction principle fails. The bridge interval is thus substructural, in particular
\emph{affine}.

The lack of contraction means that we cannot always apply a bridge variable substitution
$\usubstdim{-}{\bmy}{\bmx}$ to a term $M$: if $M$ already mentions $\bmy$, this amounts to contracting $\bmy$
and $\bmx$. What we have is \emph{fresh substitution}: we can substitute a variable $\bmy$ for $\bmx$ in $M$
when $\bmy$ does not occur in $M$ (\ie, is \emph{apart} from $M$). To formulate fresh substitution for open
terms, we define the following \emph{context restriction} operation, roughly following Cheney's approach to
nominal type theory~\cite{cheney12}. Intuitively, given a context $\GG$ and interval term $\bmr$ in that
context, $\ctxres{\GG}{\bmr}$ is the part of $\GG$ guaranteed to be apart from $\bmr$: when $\bmr$ is a
variable $\bmx$, it includes all other bridge variables, all path variables, constraints that do not involve
$\bmr$, and those term variables that are introduced before $\bmr$. The constants $\bm0$ and $\bm1$ are
considered to be apart from everything.  That is, we define $\ctxres{\GG}{\bmr} \eqdef \GG$ when
$\eqbdim[\GG]{\bmr}{\bm\Ge}$ for some $\bm\Ge \in \{\bm0,\bm1\}$ and as follows otherwise.
\begin{align*}
  \ctxres{(\ctxpdim{\GG}[y])}{\bmx} &\eqdef \ctxpdim{\ctxres{\GG}{\bmx}}[y] \\
  \ctxres{(\ctxsnoc{\GG}[a]{A})}{\bmx} &\eqdef \ctxres{\GG}{\bmx} \\
  \ctxres{(\ctxbdim{\GG}[\bmy])}{\bmx} &\eqdef
  \left\{
    \begin{array}{ll}
      \GG & \text{if $\bmx = \bmy$} \\
      \ctxbdim{\ctxres{\GG}{\bmx}}[\bmy] & \text{if $\bmx \neq \bmy$}
    \end{array}
  \right. \\
  \ctxres{(\ctxcst{\GG}{\Gx})}{\bmx} &\eqdef%
  \left\{
    \begin{array}{ll}
      \ctxres{\GG}{\bmx} & \text{if $\bmx$ occurs in $\Gx$} \\
      \ctxcst{\ctxres{\GG}{\bmx}}{\Gx} & \text{otherwise}
    \end{array}
  \right.
\end{align*}
We then have the following rule for extending a substitution by a bridge interval term.
\begin{mathpar}
  \inferrule[$\BFI$-Subst]
  {\wfbdim[\GG']{\bmr} \\
    \wfsubst[\ctxres{\GG'}{\bmr}]{\Gg}{\GG}}
  {\wfsubst[\GG']{(\substbdim{\Gg}{\bmr}[\bmx])}{\GG}}
\end{mathpar}
The restriction in the premises prevents us from deriving, in particular, the following contraction or
``diagonal'' substitution, which attempts to substitute the same bridge variable $\bmx$ for two distinct
variables $\bmy$ and $\bmz$.
\[
  \wfsubst[\ctxbdim{}[x]]{(\substbdim{\substbdim{}{\bmx}[\bmy]}{\bmx}[\bmz])}{(\ctxbdim{\ctxbdim{}[y]}[z])} \bad
\]

When working with a context of the form $(\ctxbdim{\GG}[x],\GG')$, we therefore think of the variables in
$\GG$ as being apart from $\bmx$: we are disallowed from substituting a term that mentions $\bmx$ for a
variable in $\GG$: in a substitution. On the other hand, we \emph{can} substitute terms that mention $\bmx$
for variables in $\GG'$. In accordance with this intuition, we can exchange term variables past bridge
variables in one direction but not the other, as witnessed by the following substitution.
\begin{mathpar}
  \wfsubst[\ctxbdim{\ctxsnoc{}[a]{A}}[x]]{(\substsnoc{\substpdim{}{\bmx}[\bmx]}{a}[a])}{(\ctxsnoc{\ctxbdim{}[x]}[a]{A})}
\end{mathpar}
In the domain of this substitution, $\ctxsnoc{}[a]{A}$ ranges over fewer terms: only those elements of $A$
that are apart from $\bmx$.

In keeping with the lack of contraction, we allow constraints only to identify bridge variables with
constants, not with other variables.
\begin{mathpar}
  \inferrule[$\BFI$-Constraint]
  {\wfbdim[\GG]{\bmr} \\ \Ge \in \{0,1\}}
  {\wfcst[\GG]{\bmr = \bm\Ge}}
\end{mathpar}

We note that affine variables are also central to nominal sets~\cite{pitts13}, where they are used to
represent variable names in syntax. The BCH model of univalent type theory in cubical sets~\cite{bch,bezem19}
is also based on an affine interval (and has been presented in a nominal style by Pitts~\cite{pitts14}). We
say more about the BCH model in \cref{sec:parametric:bch}.

\subsection[Bridge-types]{\texorpdfstring{$\tybridge$}{Bridge}-types}\label{sec:parametric:bridge-types}

\begin{figure}
  \begin{mathpar}
    \inferrule[Bridge-Form]
    {\wftype[\ctxbdim{\GG}[x]]{A} \\
      \wftm[\GG]{M_0}{\usubstdim{A}{\bm0}{\bmx}} \\
      \wftm[\GG]{M_1}{\usubstdim{A}{\bm1}{\bmx}}}
    {\wftype[\GG]{\tybridge{\bmx.A}{M_0}{M_1}}}
    \and
    \inferrule[Bridge-Intro]
    {\wftm[\ctxbdim{\GG}[x]]{M}{A}}
    {\wftm[\GG]{\tmblam[x]{M}}{\tybridge{\bmx.A}{\usubstdim{M}{\bm0}{\bmx}}{\usubstdim{M}{\bm1}{\bmx}}}}
    \and
    \inferrule[Bridge-Elim]
    {\wfbdim[\GG]{\bmr} \\
      \wftm[\ctxres{\GG}{\bmr}]{P}{\tybridge{\bmx.A}{M_0}{M_1}}}
    {\wftm[\GG]{\tmbapp{P}{\bmr}}{\usubstdim{A}{\bmr}{\bmx}}}
    \and
    \inferrule[Bridge-$\beta$]
    {\wfbdim[\GG]{\bmr} \\
      \wftm[\ctxbdim{\ctxres{\GG}{\bmr}}[x]]{M}{A}}
    {\eqtm[\GG]{\tmbapp{(\tmblam[x]{M})}{\bmr}}{\usubstdim{M}{\bmr}{\bmx}}{\usubstdim{A}{\bmr}{\bmx}}}
    \and
    \inferrule[Bridge-$\partial$]
    {\wftm[\GG]{P}{\tybridge{\bmx.A}{M_0}{M_1}} \\ \Ge \in \{0,1\}}
    {\eqtm[\GG]{\tmbapp{P}{\bm\Ge}}{M_\Ge}{\usubstdim{A}{\bm\Ge}{\bmx}}}
    \and
    \inferrule[Bridge-$\eta$]
    {\wftm[\GG]{P}{\tybridge{\bmx.A}{M_0}{M_1}}}
    {\eqtm[\GG]{P}{\tmblam[x]{\tmbapp{P}{\bmx}}}{\tybridge{\bmx.A}{M_0}{M_1}}}
  \end{mathpar}
  \caption{Rules for $\tybridge$-types}%
  \label{fig:bridge-types}
\end{figure}

We define $\tybridge$-types exactly as we define $\typath$-types: elements of $\tybridge{\bmx.A}{M_0}{M_1}$
are elements of $A$ in an abstracted bridge variable $\bmx$ that agree with $M_0$ and $M_1$ on their
endpoints. We give rules for $\tybridge$-types in \cref{fig:bridge-types}. The only difference is that a
bridge can only be applied to a fresh variable, in keeping with the judgmental structure: $\tmbapp{P}{\bmr}$
makes sense when $\bmr$ is apart from $P$.

\subsection[The extent operator]{The \texorpdfstring{$\tmextent$}{extent} operator}\label{sec:parametric:extent}

\begin{figure}
  \centering
  \begin{mathpar}
    \mprset{vskip=0.1em}
    \inferrule[Extent]
    {\wfbdim[\GG]{\bmr} \\
      \wftype[\ctxbdim{\ctxres{\GG}{\bmr}}[x]]{A} \\
      \wftype[\ctxsnoc{\ctxbdim{\ctxres{\GG}{\bmr}}[x]}[a]{A}]{B} \\
      \wftm[\GG]{M}{\usubstdim{A}{\bmr}{\bmx}} \\
      \wftm[\ctxsnoc{\ctxres{\GG}{\bmr}}[a_0]{\usubstdim{A}{\bm0}{\bmx}}]{N_0}{\usubst{\usubstdim{B}{\bm0}{\bmx}}{a_0}{a}} \\
      \wftm[\ctxsnoc{\ctxres{\GG}{\bmr}}[a_1]{\usubstdim{A}{\bm1}{\bmx}}]{N_1}{\usubst{\usubstdim{B}{\bm1}{\bmx}}{a_1}{a}} \\
      \wftm[\ctxsnoc{\ctxsnoc{\ctxsnoc{\ctxres{\GG}{\bmr}}[a_0]{\usubstdim{A}{\bm0}{\bmx}}}[a_1]{\usubstdim{A}{\bm1}{\bmx}}}[\overline{a}]{\tybridge{\bmx.A}{a_0}{a_1}}]{\overline{N}}{\tybridge{\bmx.\usubst{B}{\tmbapp{\overline{a}}{\bmx}}{a}}{N_0}{N_1}}
    }
    {\wftm[\GG]{\tmextent{\bmr}{M}{a_0.N_0}{a_1.N_1}{a_0.a_1.\overline{a}.\overline{N}}}{\usubst{\usubstdim{B}{\bmr}{\bmx}}{M}{a}}}
    \and
    \inferrule[Extent-$\partial$]
    {\cdots \\
      \Ge \in \{0,1\} \\
      \wftm[\GG]{M}{\usubstdim{A}{\bm\Ge}{\bmx}}}
    {\eqtm[\GG]{\tmextent{\bm\Ge}{M}}{\usubst{N_\Ge}{M}{a_\Ge}}{\usubst{\usubstdim{B}{\bm\Ge}{\bmx}}{M}{a}}}
    \and
    \inferrule[Extent-$\beta$]
    {\cdots \\
      \wftm[\ctxbdim{\ctxres{\GG}{\bmr}}[x]]{M}{A}}
    {\eqtm[\GG]{\tmextent{\bmr}{\usubstdim{M}{\bmr}{\bmx}}}{\tmbapp{\usubst{\usubst{\usubst{\overline{N}}{\usubstdim{M}{\bm0}{\bmx}}{a_0}}{\usubstdim{M}{\bm1}{\bmx}}{a_1}}{\tmblam[x]{M}}{\overline{a}}}{\bmr}}{\usubst{\usubstdim{B}{\bmr}{\bmx}}{M}{a}}}
  \end{mathpar}
  \caption{Rules for the $\tmextent$ operator. The elided premises in the second and third rules match those
    of the first rule.}%
  \label{fig:extent}
\end{figure}

As we have mentioned, the first reason for using affine variables is connected to function extensionality. If
we follow the standard relational model of type theory---more generally, the standard definition of a logical
relation at function type---we expect the following isomorphism, a bridge equivalent of Lemmas~\ref{lem:funapp-over} and~\ref{lem:funext-over}.
\begin{gather*}
  \tybridge{\bmx.\typi[a]{A}{B}}{F_0}{F_1} \\
  \simeq \\
  \typi*[a_0]{\usubstdim{A}{\bm0}{\bmx}}{\typi*[a_1]{\usubstdim{A}{\bm1}{\bmx}}{\typi[p]{\tybridge{\bmx.A}{a_0}{a_1}}{\tybridge{\bmx.\usubst{B}{\tmbapp{p}{\bmx}}{a}}{F_0a_0}{F_1a_1}}}}
\end{gather*}
To go from bottom to top, we can repeat the proof of Lemma~\ref{lem:funapp-over} without issue. On the other
hand, the proof of Lemma~\ref{lem:funext-over} relies on the presence of $\coe$, which has no equivalent in
parametric type theory. Instead, we will introduce a new operator to validate this principle, $\tmextent$,
which relies on the substructurality of the bridge interval.

Rules for $\tmextent$ are displayed in \cref{fig:extent}. The operator is essentially a fully applied version
of the principle we are looking for.

\begin{lem}\label{lem:bridge-funext}
  Let $\wftype[\ctxpdim{}[x]]{A}$, $\wftype[\ctxsnoc{\ctxpdim{}[x]}[a]{A}]{B}$,
  $\wftm{F_0}{\usubstdim{(\typi[a]{A}{B})}{\bm0}{\bmx}}$, and
  $\wftm{F_1}{\usubstdim{(\typi[a]{A}{B})}{\bm1}{\bmx}}$ be given. Then we have the following.
  \begin{mathpar}
    \inferrule
    {\wftm{H}{\typi*[a_0]{\usubstdim{A}{\bm0}{\bmx}}{\typi*[a_1]{\usubstdim{A}{\bm1}{\bmx}}{\typi[p]{\tybridge{\bmx.A}{a_0}{a_1}}{\tybridge{\bmx.\usubst{B}{\tmbapp{p}{\bmx}}{a}}{F_0a_0}{F_1a_1}}}}}}
    {\wftm{\tmbridgefunext{H}}{\tybridge{\bmx.\typi[a]{A}{B}}{F_0}{F_1}}}
  \end{mathpar}
\end{lem}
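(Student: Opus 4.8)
The plan is to build the bridge directly. By \rulename{Bridge-Intro}, to inhabit $\tybridge{\bmx.\typi[a]{A}{B}}{F_0}{F_1}$ it suffices to produce, in the context extended by a fresh $\ofb{x}$, a term of type $\typi[a]{A}{B}$ whose $\bm0$- and $\bm1$-faces are $F_0$ and $F_1$. Abstracting over $a:A$, the remaining problem is: in the context extended by $\ofb{x}$ and then $a:A$, produce an element of $B$. Here $a$ is a variable sitting after $\bmx$, so it may be instantiated by terms mentioning $\bmx$; this is exactly the situation the $\tmextent$ operator of \cref{fig:extent} is designed for, letting us compute with such a term by splitting on whether $\bmx$ is at an endpoint or the term itself arises from a bridge.

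Concretely, I would instantiate \rulename{Extent} at $\bmr\eqdef\bmx$, with motives $A$ (in $\bmx$) and $B$ (in $\bmx$ and $a$), principal argument $M\eqdef a$ (which inhabits $\usubstdim{A}{\bmx}{\bmx}=A$), endpoint branches $a_0.F_0a_0$ and $a_1.F_1a_1$, and bridge branch $a_0.a_1.\overline{a}.Ha_0a_1\overline{a}$; that is,
\[
  \tmbridgefunext{H} \eqdef \tmblam[x]{\tmlam[a]{\tmextent{\bmx}{a}{a_0.F_0a_0}{a_1.F_1a_1}{a_0.a_1.\overline{a}.Ha_0a_1\overline{a}}}}.
\]
Its premises are met: the context restriction by $\bmx$ appearing in \rulename{Extent} collapses, in this instance, to the ambient context of the lemma (it discards $a$, introduced after $\bmx$, and $\bmx$ itself), over which $F_0$, $F_1$, and $H$ all live; so the branch $F_\Ge a_\Ge$ has the demanded type $\usubst{\usubstdim{B}{\bm\Ge}{\bmx}}{a_\Ge}{a}$, and $Ha_0a_1\overline{a}$ has type $\tybridge{\bmx.\usubst{B}{\tmbapp{\overline{a}}{\bmx}}{a}}{F_0a_0}{F_1a_1}$, which is exactly what \rulename{Extent} demands of its bridge branch $\overline{N}$ with $N_0=F_0a_0$, $N_1=F_1a_1$. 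The conclusion then delivers a term of type $\usubst{\usubstdim{B}{\bmx}{\bmx}}{a}{a}=B$, as required.

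It remains to verify the faces. For $\Ge\in\{0,1\}$, \rulename{Extent-$\partial$} makes the $\bm\Ge$-face of $\tmbridgefunext{H}$ equal to $\tmlam[a]{F_\Ge a}$, hence equal to $F_\Ge$ by the $\eta$-rule for functions --- exactly the boundary that \rulename{Bridge-Intro} requires for the type $\tybridge{\bmx.\typi[a]{A}{B}}{F_0}{F_1}$.

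I do not expect a genuine obstacle: the whole content of the lemma is the recognition that $\tmextent$ is tailored to this problem (it is introduced precisely to supply a $\coe$-free counterpart of \cref{lem:funext-over}), after which everything is a direct check. The only mild bookkeeping is confirming the substructural side-conditions in the premises of \rulename{Extent} --- that $a$, introduced after $\bmx$, is a legitimate principal argument, and that the three branches are formed in the part of the context apart from $\bmx$ --- which is routine given the substitution discipline of \cref{sec:parametric:interval}. The more interesting follow-up, that $\tmbridgefunext$ together with the easy direction (the bridge analogue of \cref{lem:funapp-over}, proved as there with no extra machinery) assembles into the isomorphism displayed just before the lemma, is a separate matter, to be settled afterwards using \rulename{Extent-$\beta$} and \rulename{Bridge-$\eta$}.
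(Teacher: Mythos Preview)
Your proposal is correct and matches the paper's proof exactly: the paper simply defines $\tmbridgefunext{H} \eqdef \tmblam[x]{\tmlam[a]{\tmextent{\bmx}{a}{a_0.F_0a_0}{a_1.F_1a_1}{a_0.a_1.\overline{a}.Ha_0a_1\overline{a}}}}$ with no further commentary. Your additional verification of the \rulename{Extent} premises and the boundary conditions (via \rulename{Extent-$\partial$} and function $\eta$) is accurate and more detailed than what the paper provides.
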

\begin{proof}
  $\tmbridgefunext{H} \eqdef \tmblam[x]{\tmlam[a]{\tmextent{\bmx}{a}{a_0.F_0a_0}{a_1.F_1a_1}{a_0.a_1.\overline{a}.Ha_0a_1\overline{a}}}}$.
\end{proof}

As shown in the rule \rulename{Extent-$\beta$}, $\tmextent{\bmr}$ evaluates by \emph{capturing} the
occurrences of $\bmr$ in its principal argument $M$. That is,
$\tmextent{\bmx}{M}{a_0.F_0a_0}{a_1.F_1a_1}{a_0.a_1.\overline{a}.Ha_0a_1\overline{a}}$ evaluates by passing
$\usubstdim{M}{\bm0}{\bmx}$, $\usubstdim{M}{\bm1}{\bmx}$, and $\tmblam[x]{M}$ to $H$. That this is
possible depends on affinity because $\tmblam[x]{-}$ does not necessarily commute with diagonal
substitutions. Specifically, if we have some term $M(\bmx,\bmy)$ that depends on two variables, we can get
different results by abstracting before or after substitution as follows.
\[
  \begin{tikzcd}
    M(\bmx,\bmy) \ar[Mapsto]{d}[left]{\tmblam[x]{-}} \ar[Mapsto]{rr}{\usubstdim{}{\bmy}{\bmx}} & & M(\bmy,\bmy) \ar[Mapsto]{d}{\tmblam[x]{-}} \\
    \tmblam[x]{M(\bmx,\bmy)} \ar[Mapsto]{r}[below]{\usubstdim{}{\bmy}{\bmx}} & \tmblam[x]{M(\bmx,\bmy)} \ar[phantom]{r}{\neq} & \tmblam[x]{M(\bmy,\bmy)}
  \end{tikzcd}
\]
We call the operator $\tmextent$ because $\tmextent{\bmr}{M}$ reveals the extent of the term $M$ in the direction
$\bmr$: either $\bmr$ is a constant, in which case $M$ is simply a point, or $\bmr$ is a variable
$\bmx$, in which case $M$ is a point on a line $\tmblam[x]{M}$ in that direction.

The conditions under which \rulename{Extent-$\beta$} applies are somewhat subtle. In short, the requirement is
that $M$ not depend on any term variables that are not apart from $\bmx$. For example, $\tmextent{\bmx}{a}$ can
be reduced only when $a$ appears prior to $\bmx$ in the context. Once again, this relates to the commutativity
of substitutions and capture, in this case the difference between $\usubst{(\tmblam[x]{a})}{Q(\bmx)}{a}$ and
$\tmblam[x]{(\usubst{a}{Q(\bmx)}{a})}$. Note, however, that an $\tmextent$ term containing no term variables
always reduces, so this issue is invisible to the closed operational semantics; it is merely a matter of the
degree to which we can extend the closed reduction rule to an equality for open terms.

We can show that $\tmbridgefunext$ is in fact an isomorphism, with inverse given by the bridge equivalent of
Lemma~\ref{lem:funapp-over}. One inverse condition is \rulename{Extent-$\beta$}, while the other is an
``$\eta$-principle for $\tmextent$'' that can be proven up to path equality using $\tmextent$ itself, much as
dependent elimination for inductive types gives such weak $\eta$-principles.

\begin{prop}\label{prop:bridge-funext-iso}
  Let $\wftype[\ctxpdim{}[x]]{A}$, $\wftype[\ctxsnoc{\ctxpdim{}[x]}[a]{A}]{B}$,
  $\wftm{F_0}{\usubstdim{(\typi[a]{A}{B})}{\bm0}{\bmx}}$, and
  $\wftm{F_1}{\usubstdim{(\typi[a]{A}{B})}{\bm1}{\bmx}}$ be given. Then we have the following.
  \begin{gather*}
    \tybridge{\bmx.\typi[a]{A}{B}}{F_0}{F_1} \\
    \simeq \\
    \typi*[a_0]{\usubstdim{A}{\bm0}{\bmx}}{\typi*[a_1]{\usubstdim{A}{\bm1}{\bmx}}{\typi[p]{\tybridge{\bmx.A}{a_0}{a_1}}{\tybridge{\bmx.\usubst{B}{\tmbapp{p}{\bmx}}{a}}{F_0a_0}{F_1a_1}}}}
  \end{gather*}
\end{prop}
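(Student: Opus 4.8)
The plan is to exhibit the isomorphism by the pair of maps $\tmbridgefunext$ of \cref{lem:bridge-funext} and, in the reverse direction, the bridge transcription of $\mathsf{funapp}$ from \cref{lem:funapp-over}, namely $K \eqdef \tmlam[Q]{\tmlam[a_0]{\tmlam[a_1]{\tmlam[p]{\tmblam[x]{(\tmbapp{Q}{\bmx})(\tmbapp{p}{\bmx})}}}}}$. An isomorphism carries a left and a right inverse, so I would put $K$ in both roles and discharge the two inverse conditions; I expect one of them to hold strictly and the other only up to a path.

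One inverse condition, $K(\tmbridgefunext{H}) = H$, holds on the nose. Unfolding with function $\beta$ and \rulename{Bridge-$\beta$}, $K(\tmbridgefunext{H})\,a_0\,a_1\,p$ reduces to $\tmblam[x]{\tmextent{\bmx}{\tmbapp{p}{\bmx}}{a_0.F_0a_0}{a_1.F_1a_1}{a_0.a_1.\overline{a}.Ha_0a_1\overline{a}}}$. The principal argument $\tmbapp{p}{\bmx}$ mentions no term variable that fails to be apart from $\bmx$ --- the variable $p$ precedes the freshly bound $\bmx$ --- so \rulename{Extent-$\beta$} applies with $M \eqdef \tmbapp{p}{\bmx}$; by \rulename{Bridge-$\partial$} we have $\usubstdim{M}{\bm0}{\bmx} = a_0$ and $\usubstdim{M}{\bm1}{\bmx} = a_1$, and by \rulename{Bridge-$\eta$} we have $\tmblam[x]{M} = p$, so the body rewrites to $\tmbapp{(Ha_0a_1p)}{\bmx}$. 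One more \rulename{Bridge-$\eta$} and three uses of function $\eta$ give $K(\tmbridgefunext{H}) = H$; this condition is therefore witnessed by reflexive paths.

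The real work is the other condition, $\tmbridgefunext{K(Q)}$ path-equal to $Q$. Here \rulename{Extent-$\beta$} does \emph{not} fire: the left-hand side reduces to $\tmblam[x]{\tmlam[a]{\tmextent{\bmx}{a}{a_0.F_0a_0}{a_1.F_1a_1}{a_0.a_1.\overline{a}.\tmblam[y]{(\tmbapp{Q}{\bmy})(\tmbapp{\overline{a}}{\bmy})}}}}$, whose principal argument $a$ is a term variable introduced \emph{after} $\bmx$. Imitating the way dependent eliminators prove their own weak $\eta$-principles, I would run $\tmextent$ a second time. In a context with the bridge variable $\bmx$ followed by a variable $a : A$, I take the motive to be the path type $\typath{\_.B}{\tmextent{\bmx}{a}}{(\tmbapp{Q}{\bmx})a}$ --- its left endpoint the extent term already appearing in $\tmbridgefunext{K(Q)}$ --- and build an element of it by $\tmextent{\bmx}{a}$ with these clauses: in the $\bm0$- and $\bm1$-clauses the reflexive paths $\tmplam[\_]{F_0a_0}$ and $\tmplam[\_]{F_1a_1}$, which are legitimate because \rulename{Extent-$\partial$} and \rulename{Bridge-$\partial$} collapse both endpoints of the path type to $F_0a_0$, resp.\ $F_1a_1$; and in the $\overline{a}$-clause the reflexive bridge (in a fresh bridge variable) on the reflexive path on $(\tmbapp{Q}{\bmx})(\tmbapp{\overline{a}}{\bmx})$. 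That last clause typechecks precisely because there the principal argument has become $\tmbapp{\overline{a}}{\bmx}$ with $\overline{a}$ preceding the bound bridge variable, so \rulename{Extent-$\beta$} \emph{does} fire and equates $\tmextent{\bmx}{\tmbapp{\overline{a}}{\bmx}}$ with $(\tmbapp{Q}{\bmx})(\tmbapp{\overline{a}}{\bmx})$, degenerating the required path type. Abstracting the resulting pointwise path over $\bmx$, over $a$, and over a fresh path variable produces a term of $\tybridge{\bmx.\typi[a]{A}{B}}{F_0}{F_1}$ for each value of that variable; by \rulename{Bridge-$\eta$} and function $\eta$ it equals $\tmbridgefunext{K(Q)}$ at endpoint $0$ and $Q$ at endpoint $1$, and its bridge endpoints stay $F_0$ and $F_1$ throughout because the $\bm0$- and $\bm1$-clauses were chosen reflexive. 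This is the path we need.

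The main obstacle is exactly the asymmetry of \rulename{Extent-$\beta$} exploited in the previous paragraph: the rule must \emph{fail} at the top level, where the principal argument is the term variable $a$ coming after $\bmx$, yet \emph{succeed} inside the $\overline{a}$-clause, where the principal argument is $\tmbapp{\overline{a}}{\bmx}$ with $\overline{a}$ coming before $\bmx$ --- and the whole second $\tmextent$ is set up so that this is precisely what forces the auxiliary path types to be degenerate. Everything else is routine bookkeeping, modulo the usual mild care that $A$ and $B$ vary in the bridge variable, so the bridge- and path-types in play are genuinely ``over'' the relevant lines.
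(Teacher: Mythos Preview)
Your proposal is correct and matches the paper's approach exactly: the paper states (without spelling out details) that the inverse is the bridge analogue of $\mathsf{funapp}$, that one inverse condition is \rulename{Extent-$\beta$}, and that the other is a weak $\eta$-principle for $\tmextent$ proven using $\tmextent$ itself in the style of dependent eliminators. You have fleshed out precisely this sketch, correctly identifying the key point that \rulename{Extent-$\beta$} fails on the term variable $a$ bound after $\bmx$ but fires on $\tmbapp{\overline{a}}{\bmx}$ inside the bridge clause.
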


We can also show that the function extensionality principle induces a corresponding principle for bridges in
isomorphism types. We leave the proof to the reader; one can prove it using $\tmextent$ directly, but it also
follows formally from Proposition~\ref{prop:bridge-funext-iso} and the correspondence between bridges over
path types and paths over bridge types.

\begin{prop}\label{prop:bridge-isoext}
  Let $\wftype[\ctxpdim{}[x]]{A,B}$, $\wftm{I_0}{\usubstdim{(A \simeq B)}{\bm0}{\bmx}}$, and
  $\wftm{I_1}{\usubstdim{(A \simeq B)}{\bm1}{\bmx}}$ be given. Then we have the following.
  \begin{mathpar}
    \inferrule
    {\wftm{H}{\typi*[a_0]{\usubstdim{A}{\bm0}{\bmx}}{\typi[a_1]{\usubstdim{A}{\bm1}{\bmx}}{\tybridge{\bmx.A}{a_0}{a_1} \simeq \tybridge{\bmx.B}{\tmfst{I_0}(a_0)}{\tmfst{I_1}(a_1)}}}}}
    {\wftm{\tmbridgeisoext{H}}{\tybridge{\bmx.A \simeq B}{I_0}{I_1}}}
  \end{mathpar}
\end{prop}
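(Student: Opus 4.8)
The plan is to derive this from Proposition~\ref{prop:bridge-funext-iso} together with the standard characterizations of bridges in $\Sigma$-types and in $\typath$-types, as the surrounding text suggests; inlining the proof of Proposition~\ref{prop:bridge-funext-iso} then recovers the more hands-on $\tmextent$-based argument. First I would unfold $\tyiso{A}{B}$ (the type $A \simeq B$) according to its definition into a telescope of $\Sigma$-types built from two $\to$-types --- the underlying function $f : \tyarr{A}{B}$ and, for each of the two one-sided inverses, a $g : \tyarr{B}{A}$ --- together with two families of $\typath$-types, namely the homotopies $\typi[a]{A}{\typath{A}{g(fa)}{a}}$ and $\typi[b]{B}{\typath{B}{f(gb)}{b}}$. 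Since the goal is to produce an element of $\tybridge{\bmx.\tyiso{A}{B}}{I_0}{I_1}$, I would peel this telescope apart one binder at a time and translate each layer into ``relational'' data over the bridge-lines $\bmx.A$ and $\bmx.B$.

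The reduction rests on three ingredients. \textbf{(i)} A bridge $\tybridge{\bmx.\tysigma[a]{A}{B}}{M_0}{M_1}$ is the same as a dependent pair of a bridge $p$ in $\tybridge{\bmx.A}{\tmfst{M_0}}{\tmfst{M_1}}$ together with a bridge in $\tybridge{\bmx.\usubst{B}{\tmbapp{p}{\bmx}}{a}}{\tmsnd{M_0}}{\tmsnd{M_1}}$; this is immediate, since $\tmfst$, $\tmsnd$, pairing, and bridge abstraction and application all commute, and there is no capture subtlety at a $\Sigma$-type. \textbf{(ii)} Proposition~\ref{prop:bridge-funext-iso} recasts a bridge in a $\Pi$-type as a map sending bridges to bridges --- used here in the dependent form in which it is stated, since the codomains of the homotopy families vary over the base point. \textbf{(iii)} The correspondence between bridges over path types and paths over bridge types, $\tybridge{\bmx.\typath{y.A}{M_0}{M_1}}{P_0}{P_1} \simeq \typath{y.\tybridge{\bmx.A}{M_0}{M_1}}{Q_0}{Q_1}$, which holds because the structural path variable $y$ and the affine bridge variable $\bmx$ may be exchanged in either order. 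Composing (i)--(iii) along the decomposition of $\tyiso{A}{B}$ rewrites $\tybridge{\bmx.\tyiso{A}{B}}{I_0}{I_1}$ into precisely the type that assigns to each $a_0, a_1$ an isomorphism $\tybridge{\bmx.A}{a_0}{a_1} \simeq \tybridge{\bmx.B}{\tmfst{I_0}(a_0)}{\tmfst{I_1}(a_1)}$ --- that is, into the type of $H$. I would then define $\tmbridgeisoext{H}$ to be the image of $H$ under this chain of equivalences: the forward maps of $H a_0 a_1$ supply the $f$-component via (ii), their inverses the two $g$-components, and their homotopies the two path-components via (ii) followed by (iii). Alternatively, one may avoid Proposition~\ref{prop:bridge-funext-iso} altogether and build the bridge directly with \rulename{Bridge-Intro}, applying $\tmbridgefunext$ from Lemma~\ref{lem:bridge-funext} to each component with the matching component of $H$ as argument and threading the homotopies through the bridge/path exchange.

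The step I expect to cost the most bookkeeping is (iii) and its behaviour at the endpoints: I must check that commuting $\tmblam[x]{-}$ past $\tmplam[y]{-}$ respects the $\bm0,\bm1$ and $0,1$ boundaries, so that the element of $\tyiso{A}{B}$ reassembled by (i)--(iii) genuinely restricts to $I_0$ at $\bm0$ and to $I_1$ at $\bm1$. This coherence is forced, however, by the \rulename{Bridge-$\partial$} and \rulename{Path-$\partial$} equations attached to each structural equivalence in turn, so it should require no essentially new argument. (The construction in fact exhibits an isomorphism between the type of $H$ and $\tybridge{\bmx.\tyiso{A}{B}}{I_0}{I_1}$, although only the forward map is asserted in the statement.)
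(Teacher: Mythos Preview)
Your approach matches the paper's own: the paper leaves the proof to the reader, noting only that it ``follows formally from Proposition~\ref{prop:bridge-funext-iso} and the correspondence between bridges over path types and paths over bridge types,'' and you use exactly these ingredients together with the obvious $\Sigma$-decomposition.

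There is, however, a real wrinkle in your execution that you gloss over. When you apply (ii) to the \emph{inverse} components of $\tyiso{A}{B}$, the resulting data is indexed by $b_0 : \usubstdim{B}{\bm0}{\bmx}$ and $b_1 : \usubstdim{B}{\bm1}{\bmx}$, and asks for bridges in $\tybridge{\bmx.A}{g_{0}(b_0)}{g_{1}(b_1)}$ --- where $g_\Ge$ is the relevant one-sided inverse of $\tmfst{I_\Ge}$. The inverses of $H a_0 a_1$, on the other hand, are maps $\tybridge{\bmx.B}{\tmfst{I_0}(a_0)}{\tmfst{I_1}(a_1)} \to \tybridge{\bmx.A}{a_0}{a_1}$, indexed by $a_0,a_1$. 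These do not line up ``precisely''; to pass between them you must reindex along the isomorphisms $\tmfst{I_\Ge}$ and transport along the inverse homotopies (so that $g_\Ge(\tmfst{I_\Ge}(a_\Ge))$ becomes $a_\Ge$, or dually). This uses coercion over paths to adjust bridge endpoints and genuinely depends on $I_0,I_1$ being isomorphisms, not just functions. The same reindexing is needed to make the homotopy components match via (iii). None of this is hard, and your overall strategy goes through, but the sentence ``rewrites \ldots\ into precisely the type of $H$'' and the claim that ``their inverses [supply] the two $g$-components'' hide exactly this step. Once you account for it, the endpoint coherence you flag does fall out of the boundary rules as you say, and your parenthetical observation that the construction is in fact an equivalence is correct.
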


\subsection[Gel-types and relativity]{\texorpdfstring{$\tygel$}{Gel}-types and relativity}\label{sec:parametric:gel}

Finally, we come to the equivalent of univalence in parametric type theory, which we call \emph{relativity}:
the correspondence between bridges of types and relations. One direction of the correspondence is given by
$\tybridge$-types: given a bridge of types $\wftype[\ctxbdim{}[x]]{A}$, we have a relation
$\tybridge{\bmx.A}{-}{-}$ on $\usubstdim{A}{\bm0}{\bmx}$ and $\usubstdim{A}{\bm1}{\bmx}$ (which we
henceforth simply write as $\tybridge{\bmx.A}$). As with $\tyv$-types for univalence, the inverse will be
effected by introducing a new type constructor, which we call the \emph{$\tygel$-type}. These resemble the
$\tyg$-types of the BCH model, but apply to r\textsf{el}ations rather than isomorphisms, hence the name.

\begin{figure}
  \centering
    \begin{mathpar}
      \inferrule[Gel-Form]
      {\wfbdim[\GG]{\bmr} \\
        \wftype[\ctxres{\GG}{\bmr}]{A_0} \\
        \wftype[\ctxres{\GG}{\bmr}]{A_1} \\
        \wftype[\ctxres{\GG}{\bmr}, a_0 : A_0, a_1 : A_1]{R}
      }
      {\wftype[\GG]{\tygel{\bmr}{A_0}{A_1}{a_0.a_1.R}}}
      \and
      \inferrule[Gel-Intro]
      {\wftm[\ctxres{\GG}{\bmr}]{M_0}{A_0} \\
        \wftm[\ctxres{\GG}{\bmr}]{M_1}{A_1} \\
        \wftm[\ctxres{\GG}{\bmr}]{P}{\usubst{R}{M_0,M_1}{a_0,a_1}}}
      {\wftm[\GG]{\tmgel{\bmr}{M_0}{M_1}{P}}{\tygel{\bmr}{A_0}{A_1}{a_0.a_1.R}}}
      \and
      \inferrule[Gel-Form-$\partial$]
      {\Ge \in \{0,1\} \\
        \wftype[\GG]{A_\Ge}}
      {\eqtype[\GG]{\tygel{\bm\Ge}{A_0}{A_1}{a_0.a_1.R}}{A_\Ge}}
      \and
      \inferrule[Gel-Intro-$\partial$]
      {\Ge \in \{0,1\} \\
        \wftm[\GG]{M_\Ge}{A_\Ge}}
      {\eqtm[\GG]{\tmgel{\bm\Ge}{M_0}{M_1}{P}}{M_\Ge}{A_\Ge}}
      \and
      \inferrule[Gel-Elim]
      {\wftm[\ctxbdim{\GG}[x]]{Q}{\tygel{\bmx}{A_0}{A_1}{R}}}
      {\wftm[\GG]{\tmungel{\bmx.Q}}{\usubst{R}{\usubstdim{Q}{\bm0}{\bmx},\usubstdim{Q}{\bm1}{\bmx}}{a_0,a_1}}}
      \and
      \inferrule[Gel-$\beta$]
      {\wftm[\GG]{P}{\usubst{R}{M_0,M_1}{a_0,a_1}}}
      {\eqtm[\GG]{\tmungel{\bmx.\tmgel{\bmx}{M_0}{M_1}{P}}}{P}{\usubst{R}{M_0,M_1}{a_0,a_1}}}
      \and
      \inferrule[Gel-$\eta$]
      {\wfbdim[\GG]{\bmr} \\
        \wftype[\ctxres{\GG}{\bmr}]{A_0} \\
        \wftype[\ctxres{\GG}{\bmr}]{A_1} \\
        \wftype[\ctxres{\GG}{\bmr}, a_0 : A_0, a_1 : A_1]{R} \\
        \wftm[\ctxbdim{\ctxres{\GG}{\bmr}}[x]]{Q}{\tygel{\bmx}{A_0}{A_1}{a_0.a_1.R}}}
      {\eqtm[\GG]{\usubstdim{Q}{\bmr}{\bmx}}{\tmgel{\bmr}{\usubstdim{Q}{\bm0}{\bmx}}{\usubstdim{Q}{\bm1}{\bmx}}{\tmungel{\bmx.Q}}}{\tygel{\bmr}{A_0}{A_1}{a_0.a_1.R}}}
    \end{mathpar}
    \caption{Rules for $\tygel$-types.}%
  \label{fig:gel}
\end{figure}

We provide rules for $\tygel$-types in \cref{fig:gel}. Unlike the $\tyv$-type, the $\tygel$-type directly
converts relations to bridges of types: for any relation $\wftm[a_0 : A_0, a_1 : A_1]{R}{\tyuniv}$, we have
$\wftm{\tmblam[x]{\tygel{\bmx}{A_0}{A_1}{a_0.a_1.R}}}{\tybridge{\tyuniv}{A_0}{A_1}}$. The introduction rule turns
a witness for the relation $\wftm[\GG]{P}{\usubst{R}{M_0,M_1}{a_0,a_1}}$ into a bridge
$\wftm{\tmblam[x]{\tmgel{\bmx}{M_0}{M_1}{P}}}{\tybridge{\bmx.\tygel{\bmx}{A_0}{A_1}{a_0.a_1.R}}{M_0}{M_1}}$ over
the corresponding $\tygel$-type, while the elimination rule conversely turns such a bridge into a witness.
When we have a relation in the form $\wftm{R}{A_0 \times A_1 \to \tyuniv}$, we will abbreviate
$\tygel{\bmr}{A_0}{A_1}{a_0.a_1.R\tmpair{a_0}{a_1}}$ as $\tygel{\bmr}{A_0}{A_1}{R}$.

The problem of shifting dimensions in $\tyv$-types, described in \cref{sec:cubical:v}, is no longer an issue
when we have affine interval variables; we can express degeneracy in $\bmr$ using the context restriction
$\ctxres{-}{\bmr}$. This is fortunate, as the trick for deriving univalence from $\tyv$-types would not apply
here. For univalence, we rely on the fact that the constant path $\tmplam[\_]{B}$ corresponds to the identity
isomorphism on $B$; thus we can transform isomorphisms $A \simeq B$ into paths by composing with
$\tmplam[\_]{B}$ in a $\tyv$-type. On the other hand, the constant bridge $\tmblam[\_]{A}$ does \emph{not}
necessarily correspond to the identity relation (\ie, the path relation $\typath{B}$); rather, it corresponds
to the bridge relation $\tybridge{B}$. In particular, $\tmblam[\_]{\tyuniv}$ will correspond to
$\tmlam[\tmpair{A}{B}]{(A \times B \to \tyuniv)}$, not $\tmlam[\tmpair{A}{B}]{(A \simeq B)}$. Thus, a
$\tyv$-like type would only give us bridges for those relations that factor through the bridge relation on one
endpoint---more generally, through some bridge $\bmx.B$ we already have in hand.

We only mean in the above to give some intuition for the difference between the affine and structural
situation, not for example to prove beyond a shadow of a doubt that no $\tygel$-like type can exist
structurally. However, we note that in the bisimplicial set semantics of Riehl and Shulman's directed type
theory~\cite{riehl17}, a similar setting, an issue of dimension shift does indeed prevent the existence of a
universe where arrows correspond to relations~\cite{riehl18}.

We now proceed to prove the relativity principle.

\begin{thm}\label{thm:relativity}
  For any $\wftm{A_0,A_1}{\tyuniv}$,
  $\wftm{\tmlam[C]{\tybridge{\bmx.\tmbapp{C}{\bmx}}}}{\tybridge{\tyuniv}{A_0}{A_1} \to (A_0 \times A_1 \to \tyuniv)}$ is an
  isomorphism.
\end{thm}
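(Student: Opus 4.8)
The plan is to exhibit the candidate inverse
$g \eqdef \lambda R.\, \tmblam[x]{\tygel{\bmx}{A_0}{A_1}{R}}$ (using the abbreviation that treats $R : A_0 \times A_1 \to \tyuniv$ as a relation), and to show that it is a two-sided inverse to $f \eqdef \lambda C.\, \lambda\langle a_0,a_1\rangle.\,\tybridge{\bmx.\tmbapp{C}{\bmx}}(a_0,a_1)$. Since the type of isomorphisms is by definition bi-invertibility, it then suffices to construct, for every $R$, a path $f(g(R)) = R$ in $A_0 \times A_1 \to \tyuniv$, and for every $C$, a path $g(f(C)) = C$ in $\tybridge{\tyuniv}{A_0}{A_1}$. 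Both halves are driven by the $\beta$- and $\eta$-rules of \cref{fig:gel,fig:bridge-types,fig:extent} together with univalence.

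For the $f \circ g$ direction, I would fix $a_0 : A_0$ and $a_1 : A_1$ and show that $\tybridge{\bmx.\tygel{\bmx}{A_0}{A_1}{R}}(a_0,a_1)$ is isomorphic to $R\langle a_0,a_1\rangle$. The forward map sends a bridge $Q$ to $\tmungel{\bmx.\tmbapp{Q}{\bmx}}$, which lands in $R\langle a_0,a_1\rangle$ by \rulename{Bridge-$\partial$} and \rulename{Gel-Form-$\partial$}; the backward map sends $P : R\langle a_0,a_1\rangle$ to $\tmblam[x]{\tmgel{\bmx}{a_0}{a_1}{P}}$, which has the right endpoints by \rulename{Gel-Intro-$\partial$}. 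One round-trip is the identity by \rulename{Bridge-$\beta$} and \rulename{Gel-$\beta$}, and the other by rewriting with \rulename{Gel-$\eta$} and then \rulename{Bridge-$\eta$} — so this is in fact a strict isomorphism. Running univalence on it gives a path $\typath{\tyuniv}{f(g(R))\langle a_0,a_1\rangle}{R\langle a_0,a_1\rangle}$, and function extensionality over $A_0 \times A_1$ assembles these pointwise paths into the desired path $f(g(R)) = R$.

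For the $g \circ f$ direction — the hard one — I must path-connect $g(f(C)) = \tmblam[x]{\tygel{\bmx}{A_0}{A_1}{\tybridge{\bmy.\tmbapp{C}{\bmy}}}}$ to $C = \tmblam[x]{\tmbapp{C}{\bmx}}$. Working with a fresh $\ofb{x}$, I would build an isomorphism $\tygel{\bmx}{A_0}{A_1}{\tybridge{\bmy.\tmbapp{C}{\bmy}}} \simeq \tmbapp{C}{\bmx}$. The map out of $\tmbapp{C}{\bmx}$ is defined on an element $m$ by $\tmextent{\bmx}{m}{a_0.a_0}{a_1.a_1}{a_0.a_1.\overline{a}.\tmblam[x]{\tmgel{\bmx}{a_0}{a_1}{\overline{a}}}}$; the map back is defined by a second use of the extent operator whose bridge branch returns $\tmbapp{\tmungel{\bmx.\tmbapp{\overline{a}}{\bmx}}}{\bmx}$. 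By \rulename{Extent-$\partial$} both maps reduce to the identity at $\bmx = \bm0$ and $\bmx = \bm1$, and the two round-trip composites hold up to a path using the path-level $\eta$-principle for $\tmextent$ (cited after \cref{prop:bridge-funext-iso}) together with \rulename{Gel-$\eta$}. Applying univalence to this isomorphism uniformly in the fresh $\bmx$ produces a bridge of paths in $\tyuniv$ — equivalently, by the interchange of path and bridge dimensions, a path of bridges in $\tyuniv$ from $g(f(C))$ to $C$ — except that its endpoints at $\bmx = \bm0,\bm1$ are $\tmua$ of an isomorphism whose underlying map is the identity, hence only path-equal, not equal, to the reflexive paths $\tmplam[\_]{A_0}$ and $\tmplam[\_]{A_1}$. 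A final correction by a composition in the universe, using that $\tyisiso{A}{B}{F}$ is a proposition and that $\tmua$ of the identity isomorphism is path-equal to reflexivity, then yields the path with exactly the required endpoints.

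I expect this last step to be the main obstacle: passing from the pointwise isomorphism to a genuine path of bridges whose endpoints sit on the nose at $g(f(C))$ and $C$ forces some bookkeeping with compositions in the universe and with the weak $\eta$-principles for $\tmextent$ and dependent $\tygel$-elimination. Everything else — both round-trip checks and the entire $f \circ g$ half — is essentially forced by the $\beta$/$\eta$ rules, so no genuinely new idea is needed there.
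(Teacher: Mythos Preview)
Your proposal is correct, and the $f \circ g$ half is essentially identical to the paper's. For the $g \circ f$ half you take a genuinely different route. You build the isomorphism $\tygel{\bmx}{A_0}{A_1}{\tybridge{\bmy.\tmbapp{C}{\bmy}}} \simeq \tmbapp{C}{\bmx}$ directly at each fresh $\bmx$ via two uses of $\tmextent$ and its path-level $\eta$-principle, then push through $\mathsf{ua}$ and patch the endpoints with a composition. The paper instead flips the square, applies univalence once to turn the inner $\typath{\tyuniv}{-}{-}$ into an isomorphism type, and then invokes \cref{prop:bridge-isoext} to reduce the goal to
\[
\typi*[a_0]{A_0}{\typi[a_1]{A_1}{\tybridge{\bmx.\tygel{\bmx}{A_0}{A_1}{\tybridge{\bmy.\tmbapp{C}{\bmy}}}}{a_0}{a_1} \simeq \tybridge{\bmx.\tmbapp{C}{\bmx}}{a_0}{a_1}}},
\]
which is exactly the left-inverse condition already established in the first half (with $R \eqdef \tybridge{\bmx.\tmbapp{C}{\bmx}}$). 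The paper's route is shorter and avoids your endpoint-correction bookkeeping because \cref{prop:bridge-isoext} already packages the boundary behavior; your route is self-contained but effectively re-derives a specialized instance of \cref{prop:bridge-isoext} by hand, together with the extra $\hcomp$ you anticipate.
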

\begin{proof}
  As candidate inverse, we of course take $\tmlam[R]{\tmblam[x]\tygel{\bmx}{A_0}{A_1}{R}}$.

  First we show that this is a left inverse, \ie, that the following holds.
  \[
    \typi[R]{A_0 \times A_1 \to \tyuniv}{\typath{A_0 \times A_1 \to \tyuniv}{\tybridge{\bmx.\tygel{\bmx}{A_0}{A_1}{R}}}{R}}
  \]
  Let $R : A_0 \times A_1 \to \tyuniv$ be given. We need to construct a path in $A_0 \times A_1 \to \tyuniv$,
  so we apply function extensionality and univalence. Then for every $a_0 : A_0$ and $a_1 : A$, we need an
  isomorphism $\tybridge{\bmx.\tygel{\bmx}{A_0}{A_1}{R}}{a_0}{a_1} \simeq R\tmpair{a_0}{a_1}$. This isomorphism
  is implemented exactly by the introduction and elimination forms of the $\tygel$-type, and the inverse
  conditions hold (up to exact equality) by \rulename{Gel-$\beta$} and \rulename{Gel-$\eta$}.

  Now we show it is also a right inverse.
  \[
    \typi[C]{\tybridge{\tyuniv}{A_0}{A_1}}{\typath{\tybridge{\tyuniv}{A_0}{A_1}}{\tmblam[x]{\tygel{\bmx}{A_0}{A_1}{\tybridge{\bmx.\tmbapp{C}{\bmx}}}}}{C}}
  \]
  Let $C : \tybridge{\tyuniv}{A_0}{A_1}$ be given. We are asked to provide a square with the following
  boundary.
  \[
    \begin{tikzpicture}
      \def\height{2}
      \def\width{6}
      \def\awidth{1.5}
      \def\aheight{0.5}
      \def\alength{0.5}
      \def\pad{1}

      \draw (-\awidth, \height+\aheight) [->,thick] to node [above] {\small $\bmx$} (-\awidth+\alength, \height+\aheight) ;
      \draw (-\awidth, \height+\aheight) [->] to node [left] {\small $y$} (-\awidth, \height+\aheight-\alength) ;
      \node (tl) at (0, \height) {$A_0$} ;
      \node (tr) at (\width, \height) {$A_1$} ;
      \node (bl) at (0, 0) {$A_0$} ;
      \node (br) at (\width, 0) {$A_1$} ;
      \draw (tl) [->,thick] to node [above] {$\tygel{\bmx}{A_0}{A_1}{\tybridge{\bmx.\tmbapp{C}{\bmx}}}$} (tr) ;
      \draw (bl) [->,thick] to node [below] {$\tmbapp{C}{\bmx}$} (br) ;
      \draw (tl) [->] to node [left] {$A_0$} (bl) ;
      \draw (tr) [->] to node [right] {$A_1$} (br) ;
      \node at (-\awidth-\pad, 0) {} ;
      \node at (\width+\awidth+\pad, 0) {} ;
    \end{tikzpicture}
  \]
  By ``flipping'' this square---\ie, using the correspondence between bridges of paths and paths of bridges
  given by exchange of variables---it suffices to show the following.
  \[
    \tybridge{\bmx.\typath{\tyuniv}{\tygel{\bmx}{A_0}{A_1}{\tybridge{\bmx.\tmbapp{C}{\bmx}}}}{\tmbapp{C}{\bmx}}}{\tmplam[\_]{A_0}}{\tmplam[\_]{A_1}}
  \]
  Now we apply univalence, converting the path type in the universe to a type of isomorphisms. Here we use the
  fact that the constant paths $\tmplam[\_]{A_\Ge}$ correspond to identity isomorphisms $\tmidiso{A_\Ge}$
  across univalence. This reduces our goal to the following.
  \[
    \tybridge{\bmx.\tygel{\bmx}{A_0}{A_1}{\tybridge{\bmx.\tmbapp{C}{\bmx}}} \simeq \tmbapp{C}{\bmx}}{\tmidiso{A_0}}{\tmidiso{A_1}}
  \]
  Finally we apply Proposition~\ref{prop:bridge-isoext}, reducing the goal once more.
  \[
    \typi*[a_0]{A_0}{\typi[a_1]{A_1}{\tybridge{\tygel{\bmx}{A_0}{A_1}{\tybridge{\bmx.\tmbapp{C}{\bmx}}}}{a_0}{a_1} \simeq \tybridge{\bmx.\tmbapp{C}{\bmx}}{a_0}{a_1}}}
  \]
  This is a consequence of the left inverse condition we have already proven.
\end{proof}

Note that the proof of relativity relies on univalence; not surprising, since it is an isomorphism between
types that involve the universe. (It also relies directly on function extensionality, both for paths and
bridges.) In~\cite{bernardy15}, which does not include univalence, relativity is instead ensured by imposing
stronger equations on $\tygel$-types---precisely the equations $\tybridge{\bmx.\tygel{\bmx}{A_0}{A_1}{R}} = R$
and $C = \tmblam[x]{\tygel{\bmx}{A_0}{A_1}{\tybridge{\bmx.\tmbapp{C}{\bmx}}}}$ required for the proof. (These
equations are there named \rulename{Pair-Pred} and \rulename{Surj-Typ}.) These equations make it more
difficult to construct a presheaf model, as we discuss further in \cref{sec:presheaf}.

\subsection{Using affine variables for paths}\label{sec:parametric:bch}

Before we dive into using parametric cubical type theory, let us take one more moment to reflect on structural
and substructural interval variables. We have seen why affinity is important for parametric type theory, but
is structurality important for cubical type theory? The Bezem-Coquand-Huber model gives a partial negative
answer: there is a model of univalent type theory in presheaves on the affine cube category~\cite{bch,bezem19}. While no one has attempted to design a type theory based on this model, it is plausible
that it could be done.

Unfortunately, affine interval variables create problems for modeling higher inductive types. Consider, for
example, the following extremely simple type, which has a single path constructor with no fixed
boundary.
\[
  \begin{array}{l}
    \dataheading{\tyi} \\
    \niceconstr{\tmiin}[\ofp{x}]{\tyi} \\
  \end{array}
\]
This specification generates the following elimination principle and computation rule, which essentially says
that maps out of $\tyi$ correspond to terms in a context extended with an interval variable.
\begin{mathpar}
  \inferrule
  {\wftype[\ctxsnoc{\GG}[a]{\tyi}]{C} \\
    \wftm[\GG]{M}{\tyi} \\
    \wftm[\ctxpdim{\GG}[x]]{Q_{\tmiin}}{\usubst{C}{\tmiin{x}}{a}}}
  {\wftm[\GG]{\tmielim{a.C}{M}{x.Q_{\tmiin}}}{\usubst{C}{M}{a}}}
  \and
  \inferrule
  {\wftype[\ctxsnoc{\GG}[a]{\tyi}]{C} \\
    \wfpdim[\GG]{r} \\
    \wftm[\ctxpdim{\GG}[x]]{Q_{\tmiin}}{\usubst{C}{\tmiin{x}}{a}}}
  {\eqtm[\GG]{\tmielim{a.C}{\tmiin{r}}{x.Q_{\tmiin}}}{\usubstdim{Q_{\tmiin}}{r}{x}}{\usubst{C}{\tmiin{r}}{a}}}
\end{mathpar}
The issue is in the computation rule, which applies the interval substitution $\usubstdim{-}{r}{x}$ to
$Q_{\tmiin}$. If our interval is affine, then this substitution will be nonsensical if $Q_{\tmiin}$ already
mentions $r$. Moreover, it is not clear how to restrict the premises of $\tmielim$ to ensure the substitution
is sensible without ending up with an insufficiently powerful principle. On a more conceptual level, the
$\tyi$ type is suspicious in an affine system in that \emph{structural} maps out of $\tyi$ correspond to
\emph{affine} maps out of the interval.

The problem of higher inductive types is one reason why research in cubical type theory and models has shifted
from substructural to structural interval variables. There is also the fact that structural variables are
simply easier to work with. Still, the BCH model does have some intriguing advantages; for one, univalence can
be implemented in $\tygel$-like rather $\tyv$-like fashion, and the former admits simpler implementations of
coercion and composition.

\section{Applying internal parametricity}\label{sec:practice}

Now that we have laid out what we need of parametric cubical type theory, we can get started proving
theorems. We will begin with a classic application of parametricity: relating inductive types to their Church
encodings, in this case booleans.

\subsection{Booleans}\label{sec:practice:bool}

The \emph{Church booleans} are the polymorphic binary operators, the elements of the type
$\tychurchbool \eqdef \typi[A]{\tyuniv}{\tyarr{A}{\tyarr{A}{A}}}$. Clearly this type has at least two
elements, $\tmlam[A]{\tmlam[t]{\tmlam[\_]{t}}}$ and $\tmlam[A]{\tmlam[\_]{\tmlam[f]{f}}}$. It is a classical
consequence of parametricity that these are the \emph{only} two elements of $\tychurchbool$. Using internal
parametricity, we can prove that $\tychurchbool$ is indeed isomorphic to the standard type of booleans
($\tybool$).

\begin{thm}
  $\tybool \simeq \tychurchbool$.
\end{thm}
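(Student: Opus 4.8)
The plan is to construct an explicit isomorphism. Define $\Phi : \tyarr{\tybool}{\tychurchbool}$ by boolean elimination, sending $\tmtrue \mapsto \tmlam[A]{\tmlam[t]{\tmlam[\_]{t}}}$ and $\tmfalse \mapsto \tmlam[A]{\tmlam[\_]{\tmlam[f]{f}}}$, and define $\Psi : \tyarr{\tychurchbool}{\tybool}$ by $\Psi\,F \eqdef F\,\tybool\,\tmtrue\,\tmfalse$. Producing an element of $\tyiso{\tybool}{\tychurchbool}$ then amounts to exhibiting $\Psi$ as both a left and a right inverse of $\Phi$, i.e.\ checking that $\Psi\circ\Phi$ and $\Phi\circ\Psi$ are identities up to paths.

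The composite $\Psi\circ\Phi$ is immediate: I would prove $\typi[b]{\tybool}{\typath{\tybool}{\Psi(\Phi\,b)}{b}}$ by boolean elimination, and each clause is inhabited by a reflexive path, since $\Psi(\Phi\,\tmtrue) = (\tmlam[A]{\tmlam[t]{\tmlam[\_]{t}}})\,\tybool\,\tmtrue\,\tmfalse = \tmtrue$ and likewise $\Psi(\Phi\,\tmfalse) = \tmfalse$ by $\beta$-reduction. Booleans carry no path constructors, so there is no coherence obligation.

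The interesting direction is $\Phi\circ\Psi$, and this is where internal parametricity enters. Fix $F : \tychurchbool$. By iterated function extensionality (Lemma~\ref{lem:funext-over}) it suffices, for every $A : \tyuniv$ and $t,f : A$, to give a path $\typath{A}{\Phi(\Psi\,F)\,A\,t\,f}{F\,A\,t\,f}$. Introduce the canonical map $h : \tyarr{\tybool}{A}$ with $h\,\tmtrue \eqdef t$ and $h\,\tmfalse \eqdef f$ by boolean elimination. A second boolean elimination, with reflexive clauses — legitimate because $h$ and $\Phi$ satisfy their computation rules, so both sides reduce to $t$, resp.\ $f$, at the constructors — yields a path $\typath{A}{\Phi\,b\,A\,t\,f}{h\,b}$ for all $b : \tybool$, hence in particular a path from $\Phi(\Psi\,F)\,A\,t\,f$ to $h\,(F\,\tybool\,\tmtrue\,\tmfalse)$. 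It remains to connect $h\,(F\,\tybool\,\tmtrue\,\tmfalse)$ to $F\,A\,t\,f$, which I would do by transporting $F$ along a bridge encoding of the graph of $h$. Consider the bridge of types $\tmblam[x]{\tygel{\bmx}{\tybool}{A}{b.a.\typath{A}{h\,b}{a}}}$, which by \rulename{Gel-Form-$\partial$} has endpoints $\tybool$ and $A$. The reflexive paths $\tmplam[\_]{t}$ and $\tmplam[\_]{f}$ witness $\typath{A}{h\,\tmtrue}{t}$ and $\typath{A}{h\,\tmfalse}{f}$, so by \rulename{Gel-Intro} and \rulename{Gel-Intro-$\partial$} the terms $\tmgel{\bmx}{\tmtrue}{t}{\tmplam[\_]{t}}$ and $\tmgel{\bmx}{\tmfalse}{f}{\tmplam[\_]{f}}$ are bridge elements of this $\tygel$-type running from $\tmtrue$ to $t$ and from $\tmfalse$ to $f$. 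Applying $F$ pointwise along these bridges produces
\[
  \tmblam[x]{F\,(\tygel{\bmx}{\tybool}{A}{b.a.\typath{A}{h\,b}{a}})\,(\tmgel{\bmx}{\tmtrue}{t}{\tmplam[\_]{t}})\,(\tmgel{\bmx}{\tmfalse}{f}{\tmplam[\_]{f}})}
\]
which by \rulename{Gel-Form-$\partial$} and \rulename{Gel-Intro-$\partial$} runs from $F\,\tybool\,\tmtrue\,\tmfalse$ (at $\bm0$) to $F\,A\,t\,f$ (at $\bm1$); applying $\tmungel$ and using \rulename{Gel-Elim} yields an element of the relation at those endpoints, that is, exactly a path $\typath{A}{h\,(F\,\tybool\,\tmtrue\,\tmfalse)}{F\,A\,t\,f}$. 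Concatenating the two paths gives the required identification, and bundling $\Phi$, $\Psi$, and the two inverse proofs gives the isomorphism $\tyiso{\tybool}{\tychurchbool}$.

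I expect the only genuinely non-routine move to be the last one: recognizing that $\tygel$-types turn the graph of the canonical eliminator $h : \tyarr{\tybool}{A}$ into a bridge of types, and that running $F$ along this bridge and then applying $\tmungel$ extracts precisely the desired path — this is the internalized form of the classical parametricity argument behind Fact~\ref{fact:polymorphic-identity}. The remaining ingredients (the function extensionality assembly, the two boolean inductions, and the $\beta$/$\partial$-computations for $\tygel$ and $\tybridge$) are routine bookkeeping.
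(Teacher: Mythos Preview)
Your proposal is correct and follows essentially the same approach as the paper. The paper's relation $R\tmpair{b}{a} \eqdef \typath{A}{\Phi\,b\,A\,t\,f}{a}$ is exactly the graph of your $h$ (since $\Phi\,b\,A\,t\,f$ and $h\,b$ are both the boolean eliminator sending $\tmtrue \mapsto t$, $\tmfalse \mapsto f$), so your intermediate step identifying $\Phi\,b\,A\,t\,f$ with $h\,b$ and the subsequent concatenation are a cosmetic unpacking of what the paper does in one stroke.
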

\begin{proof}
  It is easy to define functions $\wftm{F}{\tybool \to \tychurchbool}$ and
  $\wftm{G}{\tychurchbool \to \tybool}$ in either direction.
  \[
    F \eqdef \tmlam[b]{\tmlam[A]{\tmlam[t]{\tmlam[f]{\tmif{\_.A}{b}{t}{f}}}}}
    \qquad
    G \eqdef \tmlam[k]{k(\tybool)(\tmtrue)(\tmfalse)}
  \]
  Moreover, it is easy to check by case-analysis that $G(Fb)$ is path-equal to $b$ for any $b : \tybool$.

  We use parametricity to prove the other inverse condition. Let some $\ctxsnoc{}[k]{\tychurchbool}$ along
  with $\ctxsnoc{\ctxsnoc{\ctxsnoc{}[A]{\tyuniv}}[t]{A}}[f]{A}$ be given. We intend to show that $F(Gk)Atf$ is
  path-equal to $kAtf$. We define a relation $\wftm{R}{\tyarr{\typrod{\tybool}{A}}{\tyuniv}}$ as follows.
  \[
    R\tmpair{b}{a} \eqdef \typath{A}{FbAtf}{a}
  \]
  That is, $R$ is the graph of $\tmlam[b]{FbAtf}$. Abstracting a bridge interval variable $\bmx$, we can
  apply $k$ at the $\tygel$-type corresponding to $R$.
  \[
    \wftm{k(\tygel{\bmx}{\tybool}{A}{R})}{\tygel{\bmx}{\tybool}{A}{R} \to \tygel{\bmx}{\tybool}{A}{R} \to \tygel{\bmx}{\tybool}{A}{R}}
  \]
  We see that $\tmtrue$ and $t$ are related by $R$: we have
  $\wftm{\tmplam[\_]{t}}{R\tmpair{\tmtrue}{t}}$. Likewise, we have
  $\wftm{\tmplam[\_]{f}}{R\tmpair{\tmfalse}{f}}$. We apply $k$ at the two $\tmgel$ terms corresponding to
  these witnesses of the relation.
  \[
    \wftm{k(\tygel{\bmx}{\tybool}{A}{R})(\tmgel{\bmx}{\tmtrue}{t}{\tmplam[\_]{t}})(\tmgel{\bmx}{\tmfalse}{f}{\tmplam[\_]{f}})}{\tygel{\bmx}{\tybool}{A}{R}}
  \]
  If we substitute $\bm0$ for $\bmx$, each $\tygel$ and $\tmgel$ term reduces to its first term argument,
  leaving $k(\tybool)(\tmtrue)(\tmfalse)$, which is $Gk$. Likewise, if we substitute $\bm1$, we get
  $kAtf$. When we bind $\bmx$ and project the relation witness from this term, we therefore wind up with the
  following.
  \[
    \wftm{\tmungel{\bmx.k(\tygel{\bmx}{\tybool}{A}{R})(\tmgel{\bmx}{\tmtrue}{t}{\tmplam[\_]{t}})(\tmgel{\bmx}{\tmfalse}{f}{\tmplam[\_]{f}})}}{R\tmpair{Gk}{kAtf}}
  \]
  By definition of $R$, this is exactly our goal: a path from $F(Gk)Atf$ to $kAtf$. By function
  extensionality, we get a term in $\typath{\tychurchbool}{F(Gk)}{k}$.
\end{proof}

This argument follows the shape of a classical parametricity proof: we define a relation, apply a function to
related arguments (here represented by $\tmgel$ terms), and conclude that the outputs are also related (via
$\tmungel$). We can apply similar arguments to characterize other Church encodings. For example, we can show
that the type $\typi[A]{\tyuniv}{A \to (A \to A) \to A}$ is isomorphic to the natural numbers; in that case,
we would also use $\tmextent$ to construct a bridge in the function type.

Note that because the system is predicative, it does not appear possible to simply \emph{define} inductive
types using Church encodings. In the absence of a primitive boolean type in $\tyuniv$, $\tychurchbool$ can
only eliminate into small types (that is, types in the universe $\tyuniv$). When there \emph{is} a primitive
boolean type, however, $\tychurchbool$ inherits its properties: we can define functions from $\tychurchbool$
into large type by induction by factoring through the map $\tychurchbool \to \tybool$.

The picture gets more complex when we consider Church encodings that are parameterized over ``external''
types, such as the following encoding of the coproduct.
\[
  A + B \mathrel{\overset{?}{\simeq}} \typi[C]{\tyuniv}{(A \to C) \to (B \to C) \to C}
\]
A classical proof would rely on the \emph{identity extension lemma}~\cite{reynolds83}, which implies in
particular that the relational interpretation of a closed type ($A$ or $B$ here) is the identity
relation. This is not the case in BCM-style internal parametricity. In particular, the principle fails for the
universe: the types $\tybridge{\tyuniv}{A}{B}$ and $\typath{\tyuniv}{A}{B}$ are not the same, as one is
isomorphic to $A \times B \to \tyuniv$ and the other is isomorphic to $A \simeq B$.

If we focus our attention on small types, we will see that any concrete type $A$ we can think of will satisfy
$\tybridge{A}{a}{b} \simeq \typath{A}{a}{b}$ for all $a,b : A$; however, there is no way to prove for an
arbitrary $A$. We say that types that do satisfy this principle are \emph{bridge-discrete}. We can show that
the universe of bridge-discreteness types is well-behaved and closed under most type formers.

\subsection{Bridge-discrete types}\label{sec:practice:discrete}

In any type, we have a canonical map from paths to bridges induced by coercion. A type is
\emph{bridge-discrete} when this map is an isomorphism.

\begin{defi}
  For $\wftype{A}$ and $\wftm{M,N}{A}$, define $\wftm{\tmloosen{A}}{\typath{A}{M}{N} \to \tybridge{A}{M}{N}}$
  by $\tmloosen{A} \eqdef \tmlam[p]{\coe{x.\tybridge{A}{\tmpapp{p}{0}}{\tmpapp{p}{x}}}{0}{1}{\tmblam[\_]{\tmpapp{p}{0}}}}$.
\end{defi}

\begin{rem}\label{rem:loosen-refl}
  For any $\wftm{M}{A}$, $\tmloosen{A}$ takes the reflexive path on $M$ to the reflexive bridge on $A$:
  we have $\wftm{\tmplam[y]{\coe{x.\tybridge{A}{M}{M}}{y}{1}{\tmblam[\_]{M}}}}{\typath{\tybridge{A}{M}{M}}{\tmloosen{A}{\tmplam[\_]{M}}}{\tmblam{\_}{M}}}$.
\end{rem}

\begin{defi}
  Given $\wftype{A}$, define $\wftype{\tyisbdisc{A}}$ as follows.
  \[
    \tyisbdisc{A} \eqdef \typi*[a]{A}{\typi[b]{A}{\tyisiso{\typath{A}{a}{b}}{\tybridge{A}{a}{b}}{\tmloosen{A}}}}
  \]
\end{defi}

As we mentioned in \cref{sec:parametric:gel}, the type $\tyisiso$ is always a proposition~\cite[Theorem
4.3.2]{hott-book}; any two proofs of $\tyisiso$ are connected by a path. A function type with propositional
codomain is again a proposition~\cite[Example 2.6.2]{hott-book}, so $\tyisbdisc{A}$ is a proposition. We
define the \emph{universe of bridge-discrete types} as $\tybdisc \eqdef \tysigma[A]{\tyuniv}{\tyisbdisc{A}}$.

Before continuing, we recall some standard results from univalent type theory. The proofs we reference are
conducted using Martin-L\"{o}f identity types, but can be readily adapted to cubical path types by way of
Lemma~\ref{lem:j}.

\begin{prop}\label{prop:path-retract}
  Let $\wftype{A}$ and let $\wftype[\ctxsnoc{\ctxsnoc{}[a]{A}}[b]{A}]{R}$ be a relation on $A$. Suppose we
  have a family of maps with right inverses:
  \begin{itemize}[label=$\triangleright$]
  \item $\wftm{F}{\typi*[a]{A}{\typi[b]{A}{\tyarr{R\tmpair{a}{b}}{\typath{A}{a}{b}}}}}$,
  \item $\wftm{G}{\typi*[a]{A}{\typi[b]{A}{\tyrinv{R\tmpair{a}{b}}{\typath{A}{a}{b}}{Fab}}}}$.
  \end{itemize}
  The $Fab$ is an isomorphism for all $a, b : A$.
\end{prop}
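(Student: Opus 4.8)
The plan is to reduce the statement to two standard facts of univalent type theory — a retract of a contractible type is contractible, and a fiberwise transformation is a fiberwise equivalence exactly when its induced map on total spaces is an equivalence — together with the contractibility of based path spaces. To set this up, fix $a : A$ and read the hypothesis as providing, for each $b$, a retraction exhibiting $R\tmpair{a}{b}$ as a retract of $\typath{A}{a}{b}$ with section $F_{ab}$: $G_{ab}$ supplies a map $g_{ab} : \typath{A}{a}{b} \to R\tmpair{a}{b}$ and a pointwise path exhibiting the appropriate composite as an identity, fiberwise in $b$.

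Next I would pass to total spaces over the second argument. The fiberwise map $b \mapsto F_{ab}$ induces $\overline{F} \eqdef \mathsf{total}(F_{a,-}) : \tysigma[b]{A}{R\tmpair{a}{b}} \to \tysigma[b]{A}{\typath{A}{a}{b}}$, and the retraction data induces $\mathsf{total}(g_{a,-})$ going the other way; since $\mathsf{total}(-)$ respects composition and identities up to homotopy (using the standard characterization of paths in $\Sigma$-types), this shows $\tysigma[b]{A}{R\tmpair{a}{b}}$ is a retract of $\tysigma[b]{A}{\typath{A}{a}{b}}$. The latter is the based path space at $a$, which is contractible — this follows from \cref{lem:j} in the usual way — so a retract of it is contractible, and hence $\tysigma[b]{A}{R\tmpair{a}{b}}$ is contractible as well.

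Now $\overline{F}$ is a map between contractible types and is therefore an equivalence, and a fiberwise transformation is a fiberwise equivalence precisely when its total map is an equivalence (a standard fact; see \cite[\S4.7]{hott-book}). Consequently $F_{ab}$ is an equivalence for every $b$, and since $a : A$ was arbitrary, $F_{ab}$ is an equivalence — in particular, being equipped with a two-sided inverse up to path, an isomorphism in the sense used here — for all $a,b : A$.

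The argument is essentially routine given these cited results; the only points requiring a little care are verifying that $\mathsf{total}(-)$ transports the retraction data correctly (the $\Sigma$-type path coherences) and the appeal to the fiberwise-versus-total equivalence theorem, neither of which hides any real difficulty. It is worth remarking why the naive direct attempt fails: one would like, given $r : R\tmpair{a}{b}$, to do path induction (\cref{lem:j}) on $F_{ab}(r) : \typath{A}{a}{b}$ and reduce to the reflexivity case, but the motive would then have to mention $r$ while $b$ itself varies with the path — which is impossible — and passing to total spaces is exactly the maneuver that circumvents this obstruction.
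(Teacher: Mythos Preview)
Your argument is correct and is the standard proof of this fact in univalent type theory. The paper itself does not give a proof but simply cites \cite[Corollary 1.2.6]{rijke18}; your write-up is essentially the argument one finds there (pass to total spaces, use contractibility of the based path space and closure of contractibility under retracts, then invoke the fiberwise-versus-total equivalence criterion).
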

\begin{proof}\cite[Corollary 1.2.6]{rijke18}.
\end{proof}

\begin{prop}\label{prop:fiberwise-iso}
  Let $\wftype{A}$, $\wftype[\ctxsnoc{}[a]{A}]{B_0,B_1}$, and $\wftm{F}{\typi[a]{A}{B_0 \to B_1}}$ be
  given. Then $\wftm{\tmlam[\tmpair{a}{b}]{\tmpair{a}{Fab}}}{(\tysigma[a]{A}{B_0}) \to \tysigma[a]{A}{B_1}}$
  is an isomorphism if and only if $Fa$ is an isomorphism for all $a : A$.
\end{prop}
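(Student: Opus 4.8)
This is the standard fact relating fiberwise and total isomorphisms, \cf~\cite[\S4.7]{hott-book}; the only adjustment for our setting is that here an ``isomorphism'' is a bi-invertible map. Since $\tyisiso$ is a proposition, being an isomorphism is interchangeable with any of the usual formulations of equivalence, and in particular I would freely use that $\wftm{G}{\tyarr{X}{Y}}$ is an isomorphism exactly when each homotopy fiber $\mathsf{fib}_G(y) \eqdef \tysigma[x]{X}{\typath{Y}{Gx}{y}}$ is contractible (\cite[\S4.4]{hott-book}, translated to path types via \cref{lem:j}). Throughout, write $\mathsf{total}(F) \eqdef \tmlam[\tmpair{a}{b}]{\tmpair{a}{Fab}}$.

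For the ``if'' direction, suppose each $Fa$ is an isomorphism. First I would note that in a bi-invertible map a left inverse is automatically a two-sided inverse up to path --- a left inverse $g$ and a right inverse $h$ of $f$ satisfy $g(y) = g(f(h(y))) = h(y)$ pointwise --- so each $Fa$ has an honest inverse $(Fa)^{-1}$ equipped with pointwise paths on both sides. Then $\tmlam[\tmpair{a}{b}]{\tmpair{a}{(Fa)^{-1}b}}$ is an inverse to $\mathsf{total}(F)$: each of the two round-trip conditions follows by lifting the relevant pointwise path for $Fa$ to a path in the $\Sigma$-type over the reflexive path on $a$.

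For the ``only if'' direction, suppose $\mathsf{total}(F)$ is an isomorphism and fix $a : A$ and $b' : B_1a$. The plan is to compute $\mathsf{fib}_{\mathsf{total}(F)}(\tmpair{a}{b'})$. Unfolding the definition and rewriting the path $\typath{\tysigma[a]{A}{B_1}}{\tmpair{a'}{Fa'b}}{\tmpair{a}{b'}}$ via the characterization of equality in $\Sigma$-types (\cite[Theorem~2.7.2]{hott-book}) as a pair consisting of a path $q : \typath{A}{a'}{a}$ together with a path from $q_*(Fa'b)$ to $b'$ over it, and then contracting the resulting singleton $\tysigma[a']{A}{\typath{A}{a'}{a}}$ (\cite[Lemma~3.11.8]{hott-book}), one obtains an isomorphism $\mathsf{fib}_{\mathsf{total}(F)}(\tmpair{a}{b'}) \simeq \tysigma[b]{B_0a}{\typath{B_1a}{Fab}{b'}} = \mathsf{fib}_{Fa}(b')$. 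Since $\mathsf{total}(F)$ is an isomorphism the left-hand side is contractible, hence so is $\mathsf{fib}_{Fa}(b')$ for every $b'$, so $Fa$ is an isomorphism. The only real work is the $\Sigma$-type path bookkeeping --- the transport along $q$ and the singleton contraction --- together with checking that the displayed isomorphism is sufficiently natural in $b'$ for contractibility to transfer; all of this is routine once the cited lemmas from~\cite{hott-book} are read through \cref{lem:j}.
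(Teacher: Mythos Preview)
Your argument is correct. The paper does not give a proof at all but simply cites \cite[Theorem 4.7.7]{hott-book}; your proposal is essentially an unpacking of that reference (the fiber computation $\mathsf{fib}_{\mathsf{total}(F)}(\tmpair{a}{b'}) \simeq \mathsf{fib}_{Fa}(b')$ is Theorem 4.7.6 there), with a minor variation in the ``if'' direction where you build an explicit inverse rather than reusing the fiber equivalence. One small remark: the parenthetical about naturality in $b'$ is unnecessary, since contractibility transfers across any isomorphism without further hypotheses.
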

\begin{proof}\cite[Theorem 4.7.7]{hott-book}.
\end{proof}

\begin{defi}
  A type is \emph{contractible} if it is a proposition and inhabited.
\end{defi}

\begin{prop}\label{prop:contractible-iso}
  Any function between contractible types is an isomorphism.
\end{prop}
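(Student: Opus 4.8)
The plan is to simply unfold the definition of isomorphism as a bi-invertible map and exhibit the required data by hand; the statement is essentially immediate. Write $a_0 : A$ and $p_A : \tyisprop{A}$ for the witnesses that $A$ is contractible, and $b_0 : B$, $p_B : \tyisprop{B}$ for those of $B$, and let $F : \tyarr{A}{B}$ be given. I claim that the constant map $G \eqdef \tmlam[\_]{a_0} : \tyarr{B}{A}$ serves as both a left and a right inverse to $F$.

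For the left inverse, the plan is to take $\tmpair{G}{\tmlam[a]{p_A\,a_0\,a}} : \tylinv{A}{B}{F}$: for each $a : A$ we must produce a path in $\typath{A}{G(Fa)}{a}$, and since $G(Fa)$ evaluates to $a_0$ this is exactly $p_A\,a_0\,a$. For the right inverse, I take $\tmpair{G}{\tmlam[b]{p_B\,(F a_0)\,b}} : \tyrinv{A}{B}{F}$: for each $b : B$ we must produce a path in $\typath{B}{F(Gb)}{b}$, and since $F(Gb)$ evaluates to $F a_0$ this is $p_B\,(F a_0)\,b$. Pairing the two yields the desired element of $\tyisiso{A}{B}{F}$.

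There is no genuine obstacle here; the only points worth noting are that a single constant function works simultaneously as a one-sided inverse on both sides, and that contractibility of $B$ enters only through the propositionality of $B$ — the basepoint $b_0$ is never used. In fact the same argument shows slightly more: any function out of a contractible type has a left inverse, and any function into a contractible type has a right inverse.
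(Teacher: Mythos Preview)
Your argument is correct and matches the paper's approach: the paper simply records this as ``an elementary consequence of the definition,'' and what you have written is exactly the obvious unfolding of that remark. Your additional observation that only the propositionality of $B$ (not its basepoint) is used is accurate and harmless.
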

\begin{proof}
  This is an elementary consequence of the definition.
\end{proof}

\begin{prop}\label{prop:singleton-contractibility}
  For any $\wftype{A}$ and $\wftm{M}{A}$, the type $\tysigma[a]{A}{\typath{A}{M}{a}}$ is contractible.
\end{prop}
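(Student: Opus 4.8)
The plan is to verify the two halves of contractibility for $T \eqdef \tysigma[a]{A}{\typath{A}{M}{a}}$ --- that $T$ is inhabited and that it is a proposition --- treating the second by a based-path-induction argument built on the eliminator $\mathsf{J}$ of \cref{lem:j} (the cubical rendition of the standard homotopy-type-theoretic fact that based path spaces are contractible, \cite[\S3.11]{hott-book}).

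Inhabitedness is witnessed by $\tmpair{M}{\tmplam[\_]{M}}$, pairing $M$ with its reflexive path. For propositionality, I would first construct an auxiliary ``contraction'' map $\mathsf{ctr}$ of type $\typi[u]{T}{\typath{T}{\tmpair{M}{\tmplam[\_]{M}}}{u}}$, sending each $u$ to a path connecting the canonical element to it. Given $u$, decompose it as $\tmpair{a}{p}$ using $\tmfst$ and $\tmsnd$, so that $\wftm{a}{A}$ and $\wftm{p}{\typath{A}{M}{a}}$, and apply $\mathsf{J}$ to $p$ with motive
\[
  C \eqdef \typath{\tysigma[a']{A}{\typath{A}{M}{a'}}}{\tmpair{M}{\tmplam[\_]{M}}}{\tmpair{a}{p}}
\]
over the context $a : A,\ p : \typath{A}{M}{a}$. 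The base case $\usubst{C}{M,\tmplam[\_]{M}}{a,p}$ is exactly the type of paths from $\tmpair{M}{\tmplam[\_]{M}}$ to itself, inhabited by $\tmplam[\_]{\tmpair{M}{\tmplam[\_]{M}}}$, so I can set $\mathsf{ctr}(\tmpair{a}{p}) \eqdef \tmj{a.p.C}{\tmplam[\_]{\tmpair{M}{\tmplam[\_]{M}}}}{p}$, which by \cref{lem:j} has type $\usubst{C}{a,p}{a,p}$, hence (using $\Sigma$-$\eta$) $\typath{T}{\tmpair{M}{\tmplam[\_]{M}}}{u}$ as needed. Only the existence of this term is used; the fact that $\mathsf{J}$ validates its reflexivity computation rule merely up to a path (noted after \cref{lem:j}) plays no role here.

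To conclude $\tyisprop{T}$, given $u,v \in T$ I would concatenate $\mathsf{ctr}(u)$ and $\mathsf{ctr}(v)$ via a single homogeneous composition:
\[
  \tmplam[y]{\hcomp{T}{0}{1}{\tmpair{M}{\tmplam[\_]{M}}}{\tube{y=0}{z.\tmpapp{\mathsf{ctr}(u)}{z}},\tube{y=1}{z.\tmpapp{\mathsf{ctr}(v)}{z}}}} \in \typath{T}{u}{v}.
\]
This is well-formed because each tube face agrees with the base $\tmpair{M}{\tmplam[\_]{M}}$ at $z = 0$, the two faces are vacuously compatible (the constraints $y = 0$ and $y = 1$ are never simultaneously in force), and on the $y = 0$ and $y = 1$ faces the composite reduces to $\tmpapp{\mathsf{ctr}(u)}{1} = u$ and $\tmpapp{\mathsf{ctr}(v)}{1} = v$ respectively. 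This supplies the body of $\tyisprop{T}$, completing the proof.

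I do not expect a genuine obstacle here --- the argument is as elementary as its neighbours. The only points calling for a little care are purely notational: choosing the $\mathsf{J}$-motive so that its bound variables $a,p$ do not collide with the $\tysigma$-binder, and keeping in mind that cubically the transitivity and symmetry of paths needed to glue $\mathsf{ctr}(u)$ and $\mathsf{ctr}(v)$ together are already absorbed into the $\hcomp$ above rather than invoked as separate lemmas. If one wished to avoid $\mathsf{J}$, one could instead define $\mathsf{ctr}(\tmpair{a}{p}) \eqdef \tmplam[y]{\tmpair{\tmpapp{p}{y}}{\tmplam[x]{Q}}}$ for a ``connection'' square $\wftm[\ctxpdim{\ctxpdim{}[y]}[x]]{Q}{A}$ with $\usubstdim{Q}{0}{x} = M$, $\usubstdim{Q}{1}{x} = \tmpapp{p}{y}$, $\usubstdim{Q}{0}{y} = M$, and $\usubstdim{Q}{1}{y} = \tmpapp{p}{x}$, itself filled by an $\hcomp$; but the $\mathsf{J}$ route is shorter and follows the excerpt's stated policy of importing such facts from homotopy type theory through \cref{lem:j}.
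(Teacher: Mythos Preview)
Your proof is correct and follows exactly the approach the paper indicates: the paper's own ``proof'' is merely the citation \cite[Lemma 3.11.8]{hott-book}, relying on the remark preceding these propositions that such identity-type results carry over to cubical path types via \cref{lem:j}. Your argument is precisely that adaptation spelled out in full, so there is nothing to correct and no genuine divergence from the paper.
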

\begin{proof}\cite[Lemma 3.11.8]{hott-book}.
\end{proof}

Taken together, these results give us a convenient method for showing that a type is bridge-discrete without
reference to $\tmloosen{A}$.

\begin{lem}
  Suppose we have a family of maps with right inverses:
  \begin{itemize}[label=$\triangleright$]
  \item $\wftm{F}{\typi*[a]{A}{\typi[b]{A}{\tyarr{\tybridge{A}{a}{b}}{\typath{A}{a}{b}}}}}$,
  \item $\wftm{G}{\typi*[a]{A}{\typi[b]{A}{\tyrinv{\tybridge{A}{a}{b}}{\typath{A}{a}{b}}{Fab}}}}$.
  \end{itemize}
  Then $A$ is bridge-discrete. In particular, if $\tybridge{A}{a}{b}$ and $\typath{A}{a}{b}$ are isomorphic
  for all $a,b : A$, then $A$ is bridge-discrete.
\end{lem}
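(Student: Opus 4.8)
The plan is to deduce that the canonical map $\tmloosen{A}$ is an isomorphism at every pair of endpoints by reducing to the contractibility results recalled just above. First I would apply \cref{prop:path-retract} with the relation $R\tmpair{a}{b} \eqdef \tybridge{A}{a}{b}$: the hypotheses of the lemma furnish precisely a family $F$ of maps $\tybridge{A}{a}{b} \to \typath{A}{a}{b}$ together with a family $G$ of right inverses, so \cref{prop:path-retract} tells us that each $F_{ab}$ is an isomorphism. This does not yet finish the proof, and in fact the crux of the argument is that $\tyisbdisc{A}$ is phrased in terms of the \emph{particular} map $\tmloosen{A}$, which a priori bears no relation to the ad hoc $F$; merely knowing $\tybridge{A}{a}{b} \simeq \typath{A}{a}{b}$ abstractly tells us nothing about $\tmloosen{A}$.

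To close that gap, fix some $M : A$ and consider, for each $b : A$, the composite $\varphi_b \eqdef \tmlam[p]{F_{Mb}(\tmloosen{A}(p))} : \typath{A}{M}{b} \to \typath{A}{M}{b}$. The point is that this is a fibered endomorphism over the \emph{based} path space: its induced map on total spaces $\tysigma[b]{A}{\typath{A}{M}{b}} \to \tysigma[b]{A}{\typath{A}{M}{b}}$ is an endomap of a type that is contractible by \cref{prop:singleton-contractibility}, hence an isomorphism by \cref{prop:contractible-iso}, and therefore $\varphi_b$ is an isomorphism for every $b$ by \cref{prop:fiberwise-iso}. Since $\varphi_b = F_{Mb} \circ \tmloosen{A}$ with both $\varphi_b$ and $F_{Mb}$ isomorphisms, the two-out-of-three property of isomorphisms gives that $\tmloosen{A} : \typath{A}{M}{b} \to \tybridge{A}{M}{b}$ is an isomorphism — concretely, $\tmloosen{A}$ is pointwise path-equal to $F_{Mb}^{-1} \circ \varphi_b$, a composite of isomorphisms, and being an isomorphism is invariant under pointwise path-equality of functions via function extensionality. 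As $M$ and $b$ were arbitrary, $A$ is bridge-discrete.

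Finally, for the ``in particular'' clause: an isomorphism $\tybridge{A}{a}{b} \simeq \typath{A}{a}{b}$ supplies as its underlying function a map $\tybridge{A}{a}{b} \to \typath{A}{a}{b}$ equipped with a right inverse, namely the right-inverse half of its $\tyisiso$-witness, so the hypotheses of the first part are satisfied. I expect the only genuine subtlety to be the one flagged above — arranging things so that the contractibility lemmas bear on $\tmloosen{A}$ itself rather than on the unrelated map $F$ coming out of \cref{prop:path-retract} — which is handled by the postcompose-then-cancel maneuver with $F_{Mb}$ in the second paragraph.
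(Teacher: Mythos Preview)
Your proof is correct and uses the same ingredients as the paper's: \cref{prop:path-retract} to upgrade $F$ to an isomorphism, then singleton contractibility and \cref{prop:fiberwise-iso} to control $\tmloosen{A}$. The only difference is in the last maneuver: the paper transports contractibility along $F$ to conclude that $\tysigma[b]{A}{\tybridge{A}{a}{b}}$ is contractible, then observes that the total map induced by $\tmloosen{A}$ is a map between contractible types and hence an isomorphism, finishing with \cref{prop:fiberwise-iso}; you instead compose with $F$ first to get a fiberwise endomorphism of the path singleton, deduce it is an isomorphism, and cancel $F$ by two-out-of-three. Both routes are equally short and rely on the same facts; your version trades an extra appeal to \cref{prop:contractible-iso} for an appeal to two-out-of-three, which the paper does not state explicitly but is standard.
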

\begin{proof}
  By Proposition~\ref{prop:path-retract}, $Fab$ is an isomorphism for all $a,b : A$. By Proposition~\ref{prop:fiberwise-iso}, we conclude that $\tysigma[b]{A}{\tybridge{A}{a}{b}}$ and
  $\tysigma[b]{A}{\typath{A}{a}{b}}$ are isomorphic for all $a : A$. The latter is contractible by
  Proposition~\ref{prop:singleton-contractibility}, so the former is contractible as well. Thus
  $\wftm{\tmlam[\tmpair{b}{p}]{\tmpair{b}{\tmloosen{A}{p}}}}{(\tysigma[b]{A}{\typath{A}{a}{b}}) \to
    \tysigma[b]{A}{\tybridge{A}{a}{b}}}$ is an isomorphism for all $b : A$, so $A$ is bridge-discrete by
  Proposition~\ref{prop:fiberwise-iso}.
\end{proof}

\begin{lem}\label{lem:bridge-discrete-over}
  Let $\wftype{A}$ and $\wftype[\ctxsnoc{}[a]{A}]{B}$ be given. If $B$ is bridge-discrete for all
  $\ctxsnoc{}[a]{A}$, then we have the following isomorphism for all $a_0, a_1 : A$, $t : \usubst{B}{a_0}{a}$,
  $t' : \usubst{B}{a_1}{a}$, and $p : \typath{A}{a_0}{a_1}$.
  \[
    \typath{x.\usubst{B}{\tmpapp{p}{x}}{a}}{t}{t'}
    \simeq
    \tybridge{\bmx.\usubst{B}{\tmbapp{\tmloosen{A}{p}}{\bmx}}{a}}{t}{t'}
  \]
\end{lem}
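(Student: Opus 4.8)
The plan is to prove the isomorphism by path induction on $p$, via the weak elimination principle of \cref{lem:j}. Fix $a_0 : A$ and $t : \usubst{B}{a_0}{a}$, and define a type family over $a_1 : A$ and $p : \typath{A}{a_0}{a_1}$ by
\[
  C(a_1,p) \eqdef \typi[t']{\usubst{B}{a_1}{a}}{\left(\typath{x.\usubst{B}{\tmpapp{p}{x}}{a}}{t}{t'} \simeq \tybridge{\bmx.\usubst{B}{\tmbapp{\tmloosen{A}{p}}{\bmx}}{a}}{t}{t'}\right)}.
\]
This is well-formed: since $\tmloosen{A}{p} : \tybridge{A}{a_0}{a_1}$, evaluating the endpoints of the bridge family by \rulename{Bridge-$\partial$} shows that $t$ and $t'$ have the right types to form the displayed bridge type. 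By \cref{lem:j} it suffices to construct an inhabitant $N_0$ of $C(a_0,\tmplam[\_]{a_0})$; then $\tmj{a_1.p.C}{N_0}{p}$, applied to $t'$, yields the asserted isomorphism for arbitrary $a_1$, $t'$, and $p$.

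It remains to build $N_0$, \ie, to treat the degenerate case. The left-hand type is literally $\typath{\usubst{B}{a_0}{a}}{t}{t'}$, since $\tmpapp{(\tmplam[\_]{a_0})}{x} = a_0$ by \rulename{Path-$\beta$}. For the right-hand type one cannot simply observe that the bridge family has become constant: $\tmloosen{A}{\tmplam[\_]{a_0}}$ is a coercion in a bridge type and is only \emph{path}-equal to the constant bridge $\tmblam[\_]{a_0}$, not definitionally equal — this is \cref{rem:loosen-refl}, and it reflects the fact that cartesian cubical coercion in a degenerate family does not reduce away. So let $r : \typath{\tybridge{A}{a_0}{a_0}}{\tmloosen{A}{\tmplam[\_]{a_0}}}{\tmblam[\_]{a_0}}$ be the path from \cref{rem:loosen-refl}. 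Coercing along the path of types $z.\tybridge{\bmx.\usubst{B}{\tmbapp{(\tmpapp{r}{z})}{\bmx}}{a}}{t}{t'}$, which is an isomorphism because coercion along a path always is (\cref{sec:cubical:kan}), identifies $\tybridge{\bmx.\usubst{B}{\tmbapp{\tmloosen{A}{\tmplam[\_]{a_0}}}{\bmx}}{a}}{t}{t'}$ with $\tybridge{\bmx.\usubst{B}{\tmbapp{(\tmblam[\_]{a_0})}{\bmx}}{a}}{t}{t'}$. Since $\tmbapp{(\tmblam[\_]{a_0})}{\bmx} = a_0$ by \rulename{Bridge-$\beta$}, the latter is just $\tybridge{\usubst{B}{a_0}{a}}{t}{t'}$, a bridge type over the constant family, which by the hypothesis instantiated at $a_0$ — bridge-discreteness of $\usubst{B}{a_0}{a}$ — is isomorphic to $\typath{\usubst{B}{a_0}{a}}{t}{t'}$ through $\tmloosen{\usubst{B}{a_0}{a}}$. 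Composing this with the inverse of the transport isomorphism gives $N_0$.

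The only delicate point is this handling of the reflexive case: because $\tmloosen{A}$ preserves reflexivity only up to a path rather than on the nose, we must interpose the transport coming from \cref{rem:loosen-refl} instead of relying on a definitional collapse of the bridge family. The remaining ingredients — composition of isomorphisms, the fact that coercion along a path is an isomorphism, and the endpoint reductions of $\typath$- and $\tybridge$-application — are all routine.
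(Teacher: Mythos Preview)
Your proof is correct and follows exactly the paper's approach: path induction via \cref{lem:j} to reduce to the reflexive case, then \cref{rem:loosen-refl} plus bridge-discreteness of $\usubst{B}{a_0}{a}$. You have simply spelled out the transport along the path from \cref{rem:loosen-refl} and the reason it is needed, which the paper leaves implicit.
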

\begin{proof}
  By Lemma~\ref{lem:j}, it suffices to prove the theorem when $a_1$ is $a_0$ and $p$ is $\tmplam{\_}{a_0}$. In
  that case it follows from Remark~\ref{rem:loosen-refl} and the assumption that $B$ is bridge-discrete.
\end{proof}

\begin{thm}\label{thm:sigma-bridge-discrete}
  Given $\wftype{A}$ and $\wftype[\ctxsnoc{}[a]{A}]{B}$, if $A$ is bridge-discrete and $B$ is bridge-discrete
  for all $\ctxsnoc{}[a]{A}$, then $\tysigma[a]{A}{B}$ is bridge-discrete.
\end{thm}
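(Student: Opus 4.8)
The plan is to invoke the characterization of bridge-discreteness from the lemma above: a type is bridge-discrete as soon as each of its bridge types is isomorphic to the corresponding path type. Applying this with that lemma's type taken to be $\tysigma[a]{A}{B}$, it suffices to produce, for all $w_0, w_1 : \tysigma[a]{A}{B}$, an isomorphism $\tybridge{\tysigma[a]{A}{B}}{w_0}{w_1} \simeq \typath{\tysigma[a]{A}{B}}{w_0}{w_1}$; by $\Sigma$-induction (or, with $\Sigma$-$\eta$, directly) we may assume $w_i \eqdef \tmpair{a_i}{t_i}$.

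I would obtain this isomorphism as a composite of three. First, the standard componentwise characterization of identifications in a dependent sum (\cf\ \cite[Theorem~2.7.2]{hott-book}) gives
\[
  \typath{\tysigma[a]{A}{B}}{\tmpair{a_0}{t_0}}{\tmpair{a_1}{t_1}}
  \simeq
  \tysigma[p]{\typath{A}{a_0}{a_1}}{\typath{x.\usubst{B}{\tmpapp{p}{x}}{a}}{t_0}{t_1}}.
\]
The very same construction---pairing and projection together with the $\beta$- and $\eta$-laws---applies unchanged to $\tybridge$-types, whose rules (\cref{fig:bridge-types}) parallel those of $\typath$-types, yielding
\[
  \tybridge{\tysigma[a]{A}{B}}{\tmpair{a_0}{t_0}}{\tmpair{a_1}{t_1}}
  \simeq
  \tysigma[q]{\tybridge{A}{a_0}{a_1}}{\tybridge{\bmx.\usubst{B}{\tmbapp{q}{\bmx}}{a}}{t_0}{t_1}}.
\]
It then remains to connect the two right-hand $\Sigma$-types. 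On the base, $A$ is bridge-discrete, so $\tmloosen{A}$ is an isomorphism $\typath{A}{a_0}{a_1} \simeq \tybridge{A}{a_0}{a_1}$; on the fibers, \cref{lem:bridge-discrete-over} (applicable because $B$ is bridge-discrete over $A$) supplies, for each $p : \typath{A}{a_0}{a_1}$, an isomorphism $\typath{x.\usubst{B}{\tmpapp{p}{x}}{a}}{t_0}{t_1} \simeq \tybridge{\bmx.\usubst{B}{\tmbapp{\tmloosen{A}{p}}{\bmx}}{a}}{t_0}{t_1}$. A base isomorphism together with a fiberwise family of isomorphisms over it induces an isomorphism of total types---a standard fact about dependent sums, obtained by combining \cref{prop:fiberwise-iso} with reindexing the base of a $\Sigma$-type along an isomorphism (\cf\ \cite{hott-book})---and this yields precisely the isomorphism of $\Sigma$-types we need. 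Composing the three isomorphisms gives $\tybridge{\tysigma[a]{A}{B}}{w_0}{w_1} \simeq \typath{\tysigma[a]{A}{B}}{w_0}{w_1}$, and we conclude by the lemma above.

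The step that most warrants care is the $\tybridge$ analogue of the $\Sigma$-path characterization: one must check that sending a bridge in $\tysigma[a]{A}{B}$ to its pair of component bridges, and back, is well-defined and an inverse pair up to exact equality. This is unproblematic---the pairing and projection terms in play never substitute one bridge variable for another, so affinity of $\BFI$ is never exercised, and $\tybridge$ obeys the same $\beta$- and $\eta$-laws as $\typath$---but it is the one place where the substructural interval might, a priori, have gotten in the way.
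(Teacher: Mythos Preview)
Your proposal is correct and follows essentially the same approach as the paper: decompose paths and bridges in $\tysigma[a]{A}{B}$ componentwise, use bridge-discreteness of $A$ (via $\tmloosen{A}$) on the base, invoke \cref{lem:bridge-discrete-over} on the fibers, and assemble into an isomorphism of total spaces. The paper presents these steps in the same order with the same ingredients; your added remark about why affinity of $\BFI$ does not obstruct the componentwise bridge characterization is a nice clarification but not a point of divergence.
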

\begin{proof}
  Given $t,t' : \tysigma[a]{A}{B}$, we can characterize paths between $t$ and $t'$ as pairs of paths between
  their components.
  \begin{align*}
    \typath{\tysigma[a]{A}{B}}{t}{t'} \simeq \tysigma[p]{\typath{A}{\tmfst{t}}{\tmfst{t'}}}{\typath{x.\usubst{B}{\tmpapp{p}{x}}{a}}{\tmsnd{t}}{\tmsnd{t'}}}
  \end{align*}
  In the forward direction we have
  $\tmlam[p]{\tmpair{\tmplam[x]{\tmfst{\tmpapp{p}{x}}}}{\tmplam[x]{\tmsnd{\tmpapp{p}{x}}}}}$, and in the
  reverse we have $\tmlam[\tmpair{q_0}{q_1}]{\tmplam[x]{\tmpair{\tmpapp{q_0}{x}}{\tmpapp{q_1}{x}}}}$; these
  are clearly inverses. We can repeat the proof to obtain an analogous characterization of bridges in
  $\tysigma[a]{A}{B}$.
  \begin{align*}
    \tybridge{\tysigma[a]{A}{B}}{t}{t'} \simeq \tysigma[p]{\tybridge{A}{\tmfst{t}}{\tmfst{t'}}}{\tybridge{\bmx.\usubst{B}{\tmbapp{p}{\bmx}}{a}}{\tmsnd{t}}{\tmsnd{t'}}}
  \end{align*}
  By assumption, we know that $\typath{A}{\tmfst{t}}{\tmfst{t'}}$ and $\tybridge{A}{\tmfst{t}}{\tmfst{t'}}$
  are isomorphic via $\tmloosen{A}$. To show that the product types are isomorphic, it then suffices to show
  the second component types are isomorphic over $\tmloosen{A}$, \ie, that the following holds for all
  $p : \typath{A}{\tmfst{t}}{\tmfst{t'}}$.
  \[
    \typath{x.\usubst{B}{\tmpapp{p}{x}}{a}}{\tmsnd{t}}{\tmsnd{t'}}
    \simeq
    \tybridge{\bmx.\usubst{B}{\tmbapp{\tmloosen{A}{p}}{\bmx}}{a}}{\tmsnd{t}}{\tmsnd{t'}}
  \]
  This is immediate by Lemma~\ref{lem:bridge-discrete-over}.
\end{proof}

\begin{thm}
  Given $\wftype{A}$ and $\wftype[\ctxsnoc{}[a]{A}]{B}$, if $A$ is bridge-discrete and $B$ is bridge-discrete
  for all $\ctxsnoc{}[a]{A}$, then $\typi[a]{A}{B}$ is bridge-discrete.
\end{thm}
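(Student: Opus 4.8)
The plan is to follow the template of the proof of Theorem~\ref{thm:sigma-bridge-discrete}: fix $F_0, F_1 : \typi[a]{A}{B}$, characterize $\typath{\typi[a]{A}{B}}{F_0}{F_1}$ and $\tybridge{\typi[a]{A}{B}}{F_0}{F_1}$ extensionally, show these two characterizations are isomorphic, and conclude using the criterion from the lemma immediately preceding Lemma~\ref{lem:bridge-discrete-over} --- that to prove a type bridge-discrete it suffices to exhibit \emph{some} isomorphism between its path and bridge types, not necessarily $\tmloosen{}$. Ordinary function extensionality gives $\typath{\typi[a]{A}{B}}{F_0}{F_1} \simeq \typi[a]{A}{\typath{B}{F_0a}{F_1a}}$, while Proposition~\ref{prop:bridge-funext-iso} (read with its ``$A$'' instantiated to our $A$, which simply does not mention the ambient bridge variable) gives
\[
  \tybridge{\typi[a]{A}{B}}{F_0}{F_1} \simeq \typi*[a_0]{A}{\typi*[a_1]{A}{\typi[p]{\tybridge{A}{a_0}{a_1}}{\tybridge{\bmx.\usubst{B}{\tmbapp{p}{\bmx}}{a}}{F_0a_0}{F_1a_1}}}} .
\]
So it remains to transform the right-hand side of the second isomorphism into that of the first.

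I would do this in three moves. First, since $A$ is bridge-discrete, $\tmloosen{A} : \typath{A}{a_0}{a_1} \to \tybridge{A}{a_0}{a_1}$ is an isomorphism, so precomposition reindexes the inner $\typi[p]{\tybridge{A}{a_0}{a_1}}{-}$: we may replace $p$ by $\tmloosen{A}{q}$ for $q : \typath{A}{a_0}{a_1}$. Second, because $B$ is fiberwise bridge-discrete, Lemma~\ref{lem:bridge-discrete-over} applies directly --- its conclusion is precisely a bridge over a $\tmloosen{A}$-image --- turning $\tybridge{\bmx.\usubst{B}{\tmbapp{\tmloosen{A}{q}}{\bmx}}{a}}{F_0a_0}{F_1a_1}$ into $\typath{x.\usubst{B}{\tmpapp{q}{x}}{a}}{F_0a_0}{F_1a_1}$, so the right-hand side becomes
\[
  \typi*[a_0]{A}{\typi*[a_1]{A}{\typi[q]{\typath{A}{a_0}{a_1}}{\typath{x.\usubst{B}{\tmpapp{q}{x}}{a}}{F_0a_0}{F_1a_1}}}} .
\]
Third, the pair $(a_1, q)$ ranges over the singleton $\tysigma[a_1]{A}{\typath{A}{a_0}{a_1}}$, which is contractible by Proposition~\ref{prop:singleton-contractibility}; a dependent function out of a contractible type is isomorphic --- via evaluation at the center, with inverse supplied by $\mathsf{J}$ (Lemma~\ref{lem:j}) --- to its value there, which is $\typath{\usubst{B}{a_0}{a}}{F_0a_0}{F_1a_0}$, since $\tmpapp{(\tmplam[\_]{a_0})}{x}$ reduces to $a_0$. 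Thus the right-hand side collapses to $\typi[a_0]{A}{\typath{\usubst{B}{a_0}{a}}{F_0a_0}{F_1a_0}}$, which is exactly the function-extensionality description of $\typath{\typi[a]{A}{B}}{F_0}{F_1}$; chaining the isomorphisms and invoking the bridge-discreteness criterion finishes the proof.

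I expect the main obstacle to be bookkeeping rather than anything conceptually hard: keeping the two reindexings coherent --- first along $\tmloosen{A}$ in the domain, then along the singleton contraction --- so that the composite genuinely returns to the function-extensionality form of the path type, and pinning down at the right generality the fact that a $\Pi$-type over a contractible base is its value at the center. As in Theorem~\ref{thm:sigma-bridge-discrete}, the one substantive ingredient, Lemma~\ref{lem:bridge-discrete-over}, is available off the shelf (it itself reduces to Lemma~\ref{lem:j} and Remark~\ref{rem:loosen-refl}), and the rest is routine transport of isomorphisms along Proposition~\ref{prop:fiberwise-iso}-style reasoning. I would also double-check the degenerate instantiation of Proposition~\ref{prop:bridge-funext-iso}, where its ``$A$'' does not in fact vary over the ambient interval variable.
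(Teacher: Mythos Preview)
Your proposal is correct, but it takes a small detour relative to the paper's intended argument. The paper's proof is the one-line ``Analogous to Theorem~\ref{thm:sigma-bridge-discrete}, using Lemmas~\ref{lem:funext-over} and~\ref{lem:bridge-funext}.'' Unpacking that analogy: for $\Sigma$-types one characterized paths and bridges each as a pair (a component in $A$ and a component over it in $B$), then matched the first components via bridge-discreteness of $A$ and the second via Lemma~\ref{lem:bridge-discrete-over}. For $\Pi$-types the same shape works if one uses the \emph{heterogeneous} function extensionality on the path side, namely the isomorphism assembled from Lemmas~\ref{lem:funapp-over} and~\ref{lem:funext-over},
\[
\typath{\typi[a]{A}{B}}{F_0}{F_1}\;\simeq\;\typi*[a_0]{A}{\typi*[a_1]{A}{\typi[q]{\typath{A}{a_0}{a_1}}{\typath{x.\usubst{B}{\tmpapp{q}{x}}{a}}{F_0a_0}{F_1a_1}}}},
\]
which already has the same three-quantifier shape as Proposition~\ref{prop:bridge-funext-iso}. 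One then reindexes the bridge side along $\tmloosen{A}$ and applies Lemma~\ref{lem:bridge-discrete-over} to the codomain, finishing exactly as in the $\Sigma$ case.

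You instead characterize the path side with the \emph{simple} (one-quantifier) function extensionality, so the two sides have different shapes, and you spend an extra step --- the singleton-contractibility collapse of $(a_1,q)$ --- to bring the bridge side down to one quantifier. That step is perfectly sound (and your use of Lemma~\ref{lem:bridge-discrete-over} and the reindexing along $\tmloosen{A}$ is exactly right), but it is avoidable: if you start from Lemma~\ref{lem:funext-over} on the path side, the two characterizations line up on the nose and no contraction is needed. In short, your argument is valid; the paper's route is just the more symmetric one, staying closer to the $\Sigma$ template it cites.
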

\begin{proof}
  Analogous to Theorem~\ref{thm:sigma-bridge-discrete}, using Lemmas~\ref{lem:funext-over} and~\ref{lem:bridge-funext}.
\end{proof}

\begin{thm}
  If $\wftype{A}$ is bridge-discrete, then $\typath{A}{a}{b}$ is bridge-discrete for all $a,b : A$.
\end{thm}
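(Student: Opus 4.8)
The plan is to appeal to the criterion established earlier---that $A$ is bridge-discrete as soon as $\typath{A}{a}{b}$ and $\tybridge{A}{a}{b}$ are isomorphic for all $a,b : A$---so that it suffices to produce an isomorphism $\typath{\typath{A}{a}{b}}{p}{q} \simeq \tybridge{\typath{A}{a}{b}}{p}{q}$ for arbitrary $p, q : \typath{A}{a}{b}$. The strategy is to rewrite both sides as dependent path types over the \emph{path} interval by ``flipping squares'', and then transport one family of types onto the other using the bridge-discreteness of $A$.

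First I would unfold the two sides by exchanging interval variables. A bridge between $p$ and $q$ in $\typath{A}{a}{b}$ is a square in $A$ whose two $\BI$-sides are $p$ and $q$ and whose two $\BFI$-sides are reflexive bridges on $a$ and $b$; reading such a square with the $\BI$-variable outermost gives
\[
  \tybridge{\typath{A}{a}{b}}{p}{q} \simeq \typath{y.\tybridge{A}{\tmpapp{p}{y}}{\tmpapp{q}{y}}}{\tmblam[\_]{a}}{\tmblam[\_]{b}},
\]
and, exchanging the two $\BI$-variables in the same way,
\[
  \typath{\typath{A}{a}{b}}{p}{q} \simeq \typath{y.\typath{A}{\tmpapp{p}{y}}{\tmpapp{q}{y}}}{\tmplam[\_]{a}}{\tmplam[\_]{b}}.
\]
Both exchanges are instances of the correspondence between bridges of paths and paths of bridges used in the proof of \cref{thm:relativity}; for the second we additionally use that the path interval is structural.

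Now, because $A$ is bridge-discrete, $\tmloosen{A}$ is an isomorphism $\typath{A}{\tmpapp{p}{y}}{\tmpapp{q}{y}} \simeq \tybridge{A}{\tmpapp{p}{y}}{\tmpapp{q}{y}}$ for every $y : \BI$, i.e.\ a fiberwise isomorphism of type families over $y : \BI$. By univalence and function extensionality this yields a path between the two families over $y$, and coercing along it gives
\[
  \typath{y.\typath{A}{\tmpapp{p}{y}}{\tmpapp{q}{y}}}{\tmplam[\_]{a}}{\tmplam[\_]{b}} \simeq \typath{y.\tybridge{A}{\tmpapp{p}{y}}{\tmpapp{q}{y}}}{\tmloosen{A}{\tmplam[\_]{a}}}{\tmloosen{A}{\tmplam[\_]{b}}},
\]
the new endpoints being the images of $\tmplam[\_]{a}$ and $\tmplam[\_]{b}$ under $\tmloosen{A}$ at $y = 0$ and $y = 1$. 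By \cref{rem:loosen-refl} these images are path-equal to the reflexive bridges $\tmblam[\_]{a}$ and $\tmblam[\_]{b}$, so transporting the endpoints along those paths identifies the right-hand type with $\typath{y.\tybridge{A}{\tmpapp{p}{y}}{\tmpapp{q}{y}}}{\tmblam[\_]{a}}{\tmblam[\_]{b}}$. Composing the resulting chain of isomorphisms with the two unfoldings above yields $\typath{\typath{A}{a}{b}}{p}{q} \simeq \tybridge{\typath{A}{a}{b}}{p}{q}$, and the criterion lemma finishes the proof.

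The only real work is the middle step: lifting the pointwise isomorphism $\tmloosen{A}$ over $y : \BI$ to an isomorphism of dependent path types while keeping track of how the endpoints transform. This is a routine univalence-and-coercion manipulation---one could alternatively reduce to the reflexive case via \cref{lem:j}, just as in the proof of \cref{lem:bridge-discrete-over}---but it is where the bridge-discreteness hypothesis on $A$ is actually used, applied at the points $\tmpapp{p}{y}$ and $\tmpapp{q}{y}$; the surrounding bookkeeping (the square flips and the endpoint cleanup via \cref{rem:loosen-refl}) is purely formal.
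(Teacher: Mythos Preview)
Your proposal is correct and follows essentially the same route as the paper: flip the two squares to rewrite both sides as dependent paths over the path interval, apply the fiberwise isomorphism $\tmloosen{A}$ (using bridge-discreteness of $A$) to pass from $\typath{A}{-}{-}$ to $\tybridge{A}{-}{-}$, and clean up the endpoints with \cref{rem:loosen-refl}. The paper presents exactly this chain of four isomorphisms; your extra commentary on how to lift the pointwise isomorphism (via univalence/funext or via \cref{lem:j}) just spells out what the paper leaves implicit.
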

\begin{proof}
  Given $p,q : \typath{A}{a}{b}$, We have the following chain of isomorphisms.
  \begin{align*}
    \typath{\typath{A}{a}{b}}{p}{q}
    &\simeq \typath{x.\typath{A}{\tmpapp{p}{x}}{\tmpapp{q}{x}}}{\tmplam{\_}{a}}{\tmplam{\_}{b}} \\
    &\simeq \typath{x.\tybridge{A}{\tmpapp{p}{x}}{\tmpapp{q}{x}}}{\tmloosen{A}{\tmplam{\_}{a}}}{\tmloosen{A}{\tmplam{\_}{b}}} \\
    &\simeq \typath{x.\tybridge{A}{\tmpapp{p}{x}}{\tmpapp{q}{x}}}{\tmblam{\_}{a}}{\tmblam{\_}{b}} \\
    &\simeq \tybridge{\typath{A}{a}{b}}{p}{q}
  \end{align*}
  The first step is by reordering interval abstractions, the second by Remark~\ref{rem:loosen-refl}, the third
  by assumption that $A$ is bridge-discrete, and the fourth by reordering abstractions again.
\end{proof}

\begin{cor}
  If $\wftype{A}$ is bridge-discrete, then $\tybridge{A}{a}{b}$ is bridge-discrete for all $a,b : A$.
\end{cor}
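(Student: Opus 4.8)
The plan is to deduce the corollary directly from the preceding theorem by transporting the property of being bridge-discrete across an isomorphism supplied by the hypothesis. The theorem just above tells us that $\typath{A}{a}{b}$ is bridge-discrete for all $a, b : A$; since $A$ is itself assumed bridge-discrete, the map $\tmloosen{A}$ is an isomorphism $\typath{A}{a}{b} \simeq \tybridge{A}{a}{b}$, and two isomorphic types ought to satisfy the same properties. In particular, $\tybridge{A}{a}{b}$ should inherit bridge-discreteness from $\typath{A}{a}{b}$, and making this transfer precise via univalence is all that is needed.

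Concretely, I would fix $a, b : A$ and proceed in three steps. First, apply the preceding theorem to obtain a witness of $\tyisbdisc{\typath{A}{a}{b}}$. Second, use the hypothesis that $A$ is bridge-discrete to package $\tmloosen{A}$ together with its inverse data into an element of $\tyiso{\typath{A}{a}{b}}{\tybridge{A}{a}{b}}$, and apply univalence to turn this isomorphism into a path $P : \typath{\tyuniv}{\typath{A}{a}{b}}{\tybridge{A}{a}{b}}$ in the universe. Third, coerce the witness of $\tyisbdisc{\typath{A}{a}{b}}$ along the path-interval line $x.\tyisbdisc{\tmpapp{P}{x}}$, i.e. form $\coe{x.\tyisbdisc{\tmpapp{P}{x}}}{0}{1}{-}$; since $\tmpapp{P}{0}$ is $\typath{A}{a}{b}$ and $\tmpapp{P}{1}$ is $\tybridge{A}{a}{b}$, this produces the desired witness of $\tyisbdisc{\tybridge{A}{a}{b}}$.

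I do not expect a genuine obstacle here. The only point that must be checked is that $\tyisbdisc{-}$ is a bona fide type family indexed by $\tyuniv$, so that the coercion $\coe{x.\tyisbdisc{\tmpapp{P}{x}}}{0}{1}{-}$ is well-formed; this is immediate from the definition, since $\tyisbdisc{X}$ is assembled from $\typath$, $\tybridge$, $\tyisiso$, and $\tmloosen$, each of which makes sense for an arbitrary type $X$. As an alternative one could instead imitate the proof of the preceding theorem, building a chain of isomorphisms $\typath{\tybridge{A}{a}{b}}{p}{q} \simeq \dots \simeq \tybridge{\tybridge{A}{a}{b}}{p}{q}$ for $p, q : \tybridge{A}{a}{b}$ using exchange of bridge variables and the bridge-discreteness of $A$; but the transport argument is shorter and avoids manipulating bridges over bridge types.
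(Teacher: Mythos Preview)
Your proposal is correct and matches the paper's intended argument: the corollary is stated without proof in the paper, the implicit reasoning being exactly your transport-along-univalence step from the preceding theorem together with the isomorphism $\typath{A}{a}{b} \simeq \tybridge{A}{a}{b}$ supplied by bridge-discreteness of $A$.
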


\begin{thm}
  $\tybool$ is bridge-discrete.
\end{thm}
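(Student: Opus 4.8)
The plan is to invoke the criterion established just before \cref{lem:bridge-discrete-over}: $\tybool$ is bridge-discrete as soon as we have, for all $a,b : \tybool$, a map $g_{a,b} : \tybridge{\tybool}{a}{b} \to \typath{\tybool}{a}{b}$ equipped with a right inverse. I would take $\tmloosen{\tybool}$ itself for the right inverse, so that the only real content is the construction of $g_{a,b}$; the remaining obligation $g_{a,b}(\tmloosen{\tybool}\,p) = p$ for $p : \typath{\tybool}{a}{b}$ then comes for free, since $\tybool$ is a set~\cite[\S3.1]{hott-book} and hence $\typath{\tybool}{a}{b}$ is a proposition for every $a,b$.

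To construct $g_{a,b}$ I would run a Reynolds-style parametricity argument through the Church encoding. Recall from the isomorphism $\tybool \simeq \tychurchbool$ proved above the maps $F \eqdef \tmlam[b]{\tmlam[A]{\tmlam[t]{\tmlam[f]{\tmif{\_.A}{b}{t}{f}}}}}$ and $G \eqdef \tmlam[k]{k(\tybool)(\tmtrue)(\tmfalse)}$, with $G(F\,c)$ path-equal to $c$ for all $c : \tybool$. Given a bridge $q : \tybridge{\bmx.\tybool}{a}{b}$, apply $F$ pointwise to obtain $\tmblam[x]{F(\tmbapp{q}{\bmx})} : \tybridge{\bmx.\tychurchbool}{F\,a}{F\,b}$. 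Now take the \emph{path relation} on $\tybool$, packaged as the type-bridge $C \eqdef \tmblam[x]{\tygel{\bmx}{\tybool}{\tybool}{a_0.a_1.\typath{\tybool}{a_0}{a_1}}} : \tybridge{\tyuniv}{\tybool}{\tybool}$, together with the two relation-witness bridges $\tmblam[x]{\tmgel{\bmx}{\tmtrue}{\tmtrue}{\tmplam[\_]{\tmtrue}}}$ and $\tmblam[x]{\tmgel{\bmx}{\tmfalse}{\tmfalse}{\tmplam[\_]{\tmfalse}}}$, which are well formed by \rulename{Gel-Intro}. Feeding these to $\tmblam[x]{F(\tmbapp{q}{\bmx})}$ by three applications of the bridge analogue of \cref{lem:funapp-over} — a construction that, as noted in \cref{sec:parametric:extent}, needs no coercion- or $\tmextent$-like primitive — produces a bridge over $\bmx.\tygel{\bmx}{\tybool}{\tybool}{a_0.a_1.\typath{\tybool}{a_0}{a_1}}$ whose endpoints are $F\,a\,(\tybool)\,\tmtrue\,\tmfalse = G(F\,a)$ and $F\,b\,(\tybool)\,\tmtrue\,\tmfalse = G(F\,b)$, hence path-equal to $a$ and $b$. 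Adjusting those endpoints along the corresponding paths (by coercion) and then applying \rulename{Gel-Elim} yields an element of $\typath{\tybool}{a}{b}$; this is $g_{a,b}(q)$.

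With $g_{a,b}$ and $\tmloosen{\tybool}$ in hand, the criterion closes the argument. I expect the main obstacle to be the middle step: carefully threading the bridge $\tmblam[x]{F(\tmbapp{q}{\bmx})}$ through the iterated bridge-application — tracking the $\tygel$-typed argument and the evolving endpoints — and identifying the final endpoints with $a$ and $b$, which is where the $\tybool$-$\eta$ / $G \circ F = \id$ calculation enters. Conceptually this step just records that a bridge between Church booleans, instantiated with the path relation on $\tybool$, decodes to a path between the corresponding booleans; everything else is bookkeeping and direct appeals to the cited results.
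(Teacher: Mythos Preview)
Your proposal is correct. The construction of $g_{a,b}$ is in substance the paper's $\tmtighten$: once you unfold the Church encoding, applying $F(\tmbapp{q}{\bmx})$ to $\tygel{\bmx}{\tybool}{\tybool}{\typath{\tybool}}$ and the two $\tmgel$ witnesses reduces to the $\tmif$-term that defines the paper's $\tmtighten[\bmx]{\tmbapp{q}{\bmx}}$. Routing through $\tychurchbool$ is a detour and forces an endpoint adjustment (since $G(Fa)$ is only path-equal to $a$, whereas the paper case-splits on $a,b$ and gets definitional endpoints), but nothing essential changes.

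Where you genuinely diverge is in the inverse check. You verify $g_{a,b}(\tmloosen{\tybool}\,p) = p$ and dispatch it immediately by observing that $\typath{\tybool}{a}{b}$ is a proposition. The paper instead proves the opposite composite, $\tmloosen{\tybool}(\tmtighten\,q) = q$ for all bridges $q$, and this takes real work: an auxiliary family $\tmloosentighten[\bmx]$ built by boolean induction, computing through \rulename{Extent-$\beta$}, \rulename{Gel-$\beta$}, and the reflexivity behaviour of $\tmloosen$. Your shortcut is cleaner for $\tybool$ but leans essentially on its being a set; the paper's argument is the one that generalizes, and indeed the paper notes just after the proof that this same pattern is what one reuses to characterize the bridge types of arbitrary (higher) inductive types, where no such set-level shortcut is available.
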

\begin{proof}
  We must define a right inverse to
  $\wftm{\tmloosen{\tybool}}{\typath{\tybool}{b}{b'} \to \tybridge{\tybool}{b}{b'}}$ for every
  $b,b' : \tybool$. For simplicity, we prove the case where $b = \tmtrue$ and $b' = \tmfalse$; the other cases
  follow by the same argument. In this case, we first need a function of the following type.
  \[
    \wftm{\tmtighten}{\tybridge{\tybool}{\tmtrue}{\tmfalse} \to \typath{\tybool}{\tmtrue}{\tmfalse}}
  \]

  We make use of the type $\tygel{\bmx}{\tybool}{\tybool}{\typath{\tybool}}$, the bridge from $\tybool$ to
  $\tybool$ corresponding to the path relation. This type has two canonical elements given by reflexivity at
  $\tmtrue$ and $\tmfalse$.
  \[
    \tmtrue_{\bmx} \eqdef \tmgel{\bmx}{\tmtrue}{\tmtrue}{\tmplam[\_]{\tmtrue}}
    \qquad\qquad \tmfalse_{\bmx} \eqdef \tmgel{\bmx}{\tmfalse}{\tmfalse}{\tmplam[\_]{\tmfalse}}
  \]
  Given $\ctxbdim{}[x]$, we define an auxiliary function
  $\wftm{\tmtighten[\bmx]}{\tybool \to \tygel{\bmx}{\tybool}{\tybool}{\typath{\tybool}}}$ sending each
  $b : \tybool$ to the corresponding such element.
  \[
    \tmtighten[\bmx] \eqdef \tmlam[b]{\tmif{\_.\tygel{\bmx}{\tybool}{\tybool}{\typath{\tybool}}}{b}{\tmtrue_{\bmx}}{\tmfalse_{\bmx}}}
  \]
  We then define $\tmtighten \eqdef \tmlam[q]{\tmungel{\bmx.\tmtighten[\bmx]{\tmbapp{q}{\bmx}}}}$, applying
  $\tmtighten[\bmx]$ pointwise to the input bridge.

  To equate $\tmloosen{\tybool}(\tmtighten{q})$ with $q$, we need a term as follows.
  \[
    \wftm{\tmloosentighten}{\typi[q]{\tybridge{\tybool}{\tmtrue}{\tmfalse}}{\typath{\tybridge{\tybool}{\tmtrue}{\tmfalse}}{\tmloosen{\tybool}{\tmtighten{q}}}{q}}}
  \]
  We again begin by defining an auxiliary function $\tmloosentighten[\bmx]$ of the following type.
  \[
    \wftm{\tmloosentighten[\bmx]}{\typi[b]{\tybool}{\typath{\tybool}{\tmbapp{(\tmbridgefunext{\tmloosen{\tybool} \circ \tmtighten}}{\bmx})(b)}{b}}} 
  \]
  We define $\tmloosentighten[\bmx](b)$ by induction on $b$. When $b$ is $\tmtrue$, we have the following
  chain of equalities.
  \begin{align*}
    (\tmbapp{\tmbridgefunext{\tmloosen{\tybool} \circ \tmtighten}}{\bmx})(\tmtrue)
    &= \tmbapp{\tmloosen{\tybool}{\tmungel{\bmx.\tmtighten[\bmx]{\tmtrue}}}}{\bmx} \\
    &= \tmbapp{\tmloosen{\tybool}{\tmungel{\bmx.\tmtrue_{\bmx}}}}{\bmx} \\
    &= \tmbapp{\tmloosen{\tybool}{\tmplam[\_]{\tmtrue}}}{\bmx}
  \end{align*}
  The first equation is \rulename{Extent-$\beta$}, the second is by definition of $\tmtighten[\bmx]$, and the
  third is \rulename{Gel-$\beta$}. Finally, $\tmbapp{\tmloosen{\tybool}{\tmplam[\_]{\tmtrue}}}{\bmx}$ is
  path-equal to $\tmtrue$ by Remark~\ref{rem:loosen-refl}. The $\tmfalse$ case follows by the same argument.
  Note that both $\wftm{\tmloosentighten[\bm\Ge](\tmtrue)}{\typath{\tybool}{\tmtrue}{\tmtrue}}$ and
  $\wftm{\tmloosentighten[\bm\Ge](\tmfalse)}{\typath{\tybool}{\tmfalse}{\tmfalse}}$ are reflexive paths for
  $\Ge \in \{0,1\}$.

  Given $q : \typath{\tybool}{\tmtrue}{\tmfalse}$, we see that the pointwise application
  $\tmpapp{\tmloosentighten[\bmx](\tmbapp{q}{\bmx})}{y}$ fills the following square.
  \[
    \begin{tikzpicture}
      \def\height{2}
      \def\width{9}
      \def\awidth{1.5}
      \def\aheight{0.5}
      \def\alength{0.5}
      \def\pad{1}

      \draw (-\awidth, \height+\aheight) [->,thick] to node [above] {\small $\bmx$} (-\awidth+\alength, \height+\aheight) ;
      \draw (-\awidth, \height+\aheight) [->] to node [left] {\small $y$} (-\awidth, \height+\aheight-\alength) ;
      \node (tl) at (0, \height) {$\tmtrue$} ;
      \node (tr) at (\width, \height) {$\tmfalse$} ;
      \node (bl) at (0, 0) {$\tmtrue$} ;
      \node (br) at (\width, 0) {$\tmfalse$} ;
      \draw (tl) [->,thick] to node [above] {$\tmbapp{(\tmbridgefunext{\tmloosen{\tybool} \circ \tmtighten}}{\bmx})(\tmbapp{q}{\bmx})$} (tr) ;
      \draw (bl) [->,thick] to node [below] {$\tmbapp{q}{\bmx}$} (br) ;
      \draw (tl) [->] to node [left] {$\tmtrue$} (bl) ;
      \draw (tr) [->] to node [right] {$\tmfalse$} (br) ;
      \node at (0.5*\width, 0.5*\height) {$\tmpapp{\tmloosentighten[\bmx](\tmbapp{q}{\bmx})}{y}$} ;
      \node at (-\awidth-\pad, 0) {} ;
      \node at (\width+\awidth+\pad, 0) {} ;
    \end{tikzpicture}
  \]
  By \rulename{Extent-$\beta$}, the top of this square is equal to
  $\tmbapp{\tmloosen{\tybool}(\tmtighten{q})}{\bmx}$. We may therefore define
  $\tmloosentighten \eqdef
  \tmlam[q]{\tmplam[y]{\tmblam[x]{\tmpapp{\tmloosentighten[\bmx]{\tmbapp{q}{\bmx}}}{y}}}}$.
\end{proof}

The pattern of argument we used for $\tybool$ generalizes to characterize the bridge types of other inductive
types, and in particular to show that inductive types preserve bridge-discreteness. (We will see something
like it again in \cref{sec:practice:smash}.) The fact that relativity is used (via $\tygel$-types) in these
proofs is an interesting parallel to the use of univalence to characterize the path types of higher inductive
types (\eg,~\cite[\S8.1]{hott-book}).

The bridge-discrete types are even closed under $\tygel$-types, which means that we can also carry out
parametricity arguments in $\tybdisc$. For example, we can show that the Church encoding
$\typi[A]{\tybdisc}{\tyarr{\tmfst{A}}{\tyarr{\tmfst{A}}{\tmfst{A}}}}$ is also isomorphic to $\tybool$.

\begin{thm}
  Let $\wftype{A_0,A_1}$ and $\wftype[\ctxsnoc{\ctxsnoc{}[a_0]{A_0}}[a_1]{A_1}]{R}$ be given. If $A_0$ and
  $A_1$ are bridge-discrete and $Ra_0a_1$ is bridge-discrete for all $a_0,a_1$, then
  $\tygel{\bmx}{A_0}{A_1}{a_0.a_1.R}$ is bridge-discrete for all $\ctxbdim{}[x]$.
\end{thm}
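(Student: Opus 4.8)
The plan is to imitate the proof of Theorem~\ref{thm:sigma-bridge-discrete}: characterize the path and bridge types of $\tygel{\bmx}{A_0}{A_1}{a_0.a_1.R}$ and build an isomorphism between them. By the criterion established just before Lemma~\ref{lem:bridge-discrete-over}, it is enough to produce, for every $\ctxbdim{}[x]$ and all $p, q : \tygel{\bmx}{A_0}{A_1}{R}$, an isomorphism $\tybridge{\tygel{\bmx}{A_0}{A_1}{R}}{p}{q} \simeq \typath{\tygel{\bmx}{A_0}{A_1}{R}}{p}{q}$.

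First I would establish the two characterizations. Write $p_\Ge \eqdef \usubstdim{p}{\bm\Ge}{\bmx}$ and $\bar p \eqdef \tmungel{\bmx.p}$, and likewise for $q$. Adjoining a fresh path (resp.\ bridge) variable and applying \rulename{Gel-$\eta$} to the body of each such path or bridge, I claim one obtains isomorphisms
\[
  \typath{\tygel{\bmx}{A_0}{A_1}{R}}{p}{q}
  \simeq
  \tysigma[\rho_0]{\typath{A_0}{p_0}{q_0}}{\tysigma[\rho_1]{\typath{A_1}{p_1}{q_1}}{\typath{y.\usubst{R}{\tmpapp{\rho_0}{y},\tmpapp{\rho_1}{y}}{a_0,a_1}}{\bar p}{\bar q}}}
\]
\[
  \tybridge{\tygel{\bmx}{A_0}{A_1}{R}}{p}{q}
  \simeq
  \tysigma[\beta_0]{\tybridge{A_0}{p_0}{q_0}}{\tysigma[\beta_1]{\tybridge{A_1}{p_1}{q_1}}{\tybridge{\bmy.\usubst{R}{\tmbapp{\beta_0}{\bmy},\tmbapp{\beta_1}{\bmy}}{a_0,a_1}}{\bar p}{\bar q}}}
\]
just as for $\tysigma$-types: since $\tygel{\bm\Ge}{\cdots} = A_\Ge$, the two endpoint restrictions of a path (or bridge) in the $\tygel$-type form a path (or bridge) in $A_0$ and one in $A_1$; the $\tmungel$ of the whole is the remaining component, a path (resp.\ bridge) in $R$ over those; and \rulename{Gel-$\beta$}, \rulename{Gel-$\eta$} (with path and bridge $\beta\eta$) show this assignment is a bijection. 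I expect this step to be the main source of friction: one must check that \rulename{Gel-$\eta$} still applies in the context extended by the new dimension variable --- it does, since $\ctxres{-}{\bmx}$ retains all path variables and all bridge variables distinct from $\bmx$ --- and that $\tmungel$ commutes with the new substitution.

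It then remains to match the three layers of the $\tysigma$-types. The outer two layers are handled exactly as in Theorem~\ref{thm:sigma-bridge-discrete} (using Proposition~\ref{prop:fiberwise-iso} and that $A_0$, $A_1$ are bridge-discrete via $\tmloosen{A_0}$, $\tmloosen{A_1}$); this reduces the goal to the isomorphism
\[
  \typath{y.\usubst{R}{\tmpapp{\rho_0}{y},\tmpapp{\rho_1}{y}}{a_0,a_1}}{\bar p}{\bar q}
  \simeq
  \tybridge{\bmy.\usubst{R}{\tmbapp{\tmloosen{A_0}{\rho_0}}{\bmy},\tmbapp{\tmloosen{A_1}{\rho_1}}{\bmy}}{a_0,a_1}}{\bar p}{\bar q}
\]
for all $\rho_0 : \typath{A_0}{p_0}{q_0}$ and $\rho_1 : \typath{A_1}{p_1}{q_1}$. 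This is Lemma~\ref{lem:bridge-discrete-over} applied with base $\typrod{A_0}{A_1}$, the bridge-discrete family $a_0.a_1.R$, and the path $\tmplam[y]{\tmpair{\tmpapp{\rho_0}{y}}{\tmpapp{\rho_1}{y}}}$ --- provided one also checks that $\tmloosen{\typrod{A_0}{A_1}}$ agrees, up to a path, with the pairing of $\tmloosen{A_0}$ and $\tmloosen{A_1}$, which holds because both bridge types and coercion act componentwise on products. Alternatively one can avoid products by proving the evident two-variable generalization of Lemma~\ref{lem:bridge-discrete-over} directly: generalize over the endpoints and apply Lemma~\ref{lem:j} to $\rho_0$ and then $\rho_1$, reducing to the reflexive case, where the claim follows from Remark~\ref{rem:loosen-refl} and the bridge-discreteness of $\usubst{R}{p_0,p_1}{a_0,a_1}$. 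Unlike relativity (Theorem~\ref{thm:relativity}), this argument uses no univalence.
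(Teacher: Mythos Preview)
Your $\Sigma$-type characterizations of $\typath{G_{\bmx}}{p}{q}$ and $\tybridge{G_{\bmx}}{p}{q}$ (with $G_{\bmx} \eqdef \tygel{\bmx}{A_0}{A_1}{a_0.a_1.R}$) are not well-formed as written. Both $\bar p \eqdef \tmungel{\bmx.p}$ and the application of \rulename{Gel-$\eta$} require their subject to live in a context apart from $\bmx$, but to inhabit $\tyisbdisc{G_{\bmx}} = \typi*[p]{G_{\bmx}}{\typi[q]{G_{\bmx}}{\ldots}}$ you must take $p,q$ as term variables introduced \emph{after} $\bmx$. Context restriction $\ctxres{-}{\bmx}$ drops all such term variables (recall the clause $\ctxres{(\ctxsnoc{\GG}[a]{A})}{\bmx} \eqdef \ctxres{\GG}{\bmx}$ from \cref{sec:parametric:interval}); you correctly note that it retains the new dimension variable, but overlook that it discards $p$, $q$, and hence also $\rho$ and $\beta$. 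Thus neither $\tmungel{\bmx.p}$ nor the invocation of \rulename{Gel-$\eta$} on $\tmpapp{\rho}{y}$ or $\tmbapp{\beta}{\bmy}$ typechecks, and your right-hand $\Sigma$-types cannot even be stated in the context where the isomorphism is needed.

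The paper's proof addresses exactly this obstruction: it first applies $\tmextent$ at $\bmx$ to each of $g,g'$, replacing them by terms of the form $\tmbapp{q}{\bmx}$ with $q$ apart from $\bmx$; only then do endpoint projection, $\tmungel$, and \rulename{Gel-$\eta$} become available. From there the paper does not pursue your $\Sigma$-decomposition but instead uses Proposition~\ref{prop:bridge-isoext} to recast the goal as an isomorphism between bridge types, eliminates the path data with Lemma~\ref{lem:j}, flips binders, and finishes by observing that $\tybridge{\bmx.G_{\bmx}}{a_0}{a_1}$ is isomorphic to $R$ via relativity (hence univalence). If you insert the missing $\tmextent$ step, your componentwise approach may well go through and would then indeed avoid univalence; as written, however, the argument does not get off the ground.
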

\begin{proof}
  Abbreviate $G_{\bmx} \eqdef \tygel{\bmx}{A_0}{A_1}{a_0.a_1.R}$. We show
  $\typath{G_{\bmx}}{g}{g'} \simeq \tybridge{G_{\bmx}}{g}{g'}$ for all $\ctxbdim{}[x]$ and
  $\wftm{g,g'}{G_{\bmx}}$. Note that when $\bmx$ is an endpoint, this holds by the assumptions that $A_0$ and
  $A_1$ are bridge-discrete.

  We apply $\tmextent$ at $\bmx$, first with $g$ and then with $g'$. It then remains to show that for all
  $a_0,a_0' : A_0$, $a_1,a_1' : A_1$, $q : \tybridge{\bmx.G_{\bmx}}{a_0}{a_1}$,
  $q' : \tybridge{\bmx.G_{\bmx}}{a_0'}{a_1'}$, and $\ctxbdim{}[x]$, we have
  $\typath{G_{\bmx}}{\tmbapp{q}{\bmx}}{\tmbapp{q'}{\bmx}} \simeq
  \tybridge{G_{\bmx}}{\tmbapp{q}{\bmx}}{\tmbapp{q'}{\bmx}}$ agreeing with the $\tmloosen{A}$ isomorphism when
  $\bmx = \bm0$ and $\tmloosen{B}$ isomorphism when $\bmx = \bm1$.  By Proposition~\ref{prop:bridge-isoext},
  it is enough to give an isomorphism
  \[
    \tybridge{\bmx.\typath{G_{\bmx}}{\tmbapp{q}{\bmx}}{\tmbapp{q'}{\bmx}}}{p_0}{p_1} \simeq
    \tybridge{\bmx.\tybridge{G_{\bmx}}{\tmbapp{q}{\bmx}}{\tmbapp{q'}{\bmx}}}{\tmloosen{A_0}{p_0}}{\tmloosen{A_1}{p_1}}
  \]
  for every $p _0 : \typath{A_0}{a_0}{a_0'}$ and $p_1 : \typath{A_1}{a_1}{a_1'}$. By identity elimination (Lemma~\ref{lem:j}), we may assume that $p_0$ and $p_1$ are reflexive paths, in which case (with the help of Remark~\ref{rem:loosen-refl}) we need to show the following for all $q,q' : \tybridge{\bmx.G_{\bmx}}{a_0}{a_1}$.
  \[
    \tybridge{\bmx.\typath{G_{\bmx}}{\tmbapp{q}{\bmx}}{\tmbapp{q'}{\bmx}}}{\tmplam[\_]{a_0}}{\tmplam[\_]{a_1}} \simeq
    \tybridge{\bmx.\tybridge{G_{\bmx}}{\tmbapp{q}{\bmx}}{\tmbapp{q'}{\bmx}}}{\tmblam[\_]{a_0}}{\tmblam[\_]{a_1}}
  \]
  Now we flip the binders on either side, leaving us to prove the following.
  \[
    \typath{\tybridge{\bmx.G_{\bmx}}{a_0}{a_1}}{q}{q'} \simeq
    \tybridge{\tybridge{\bmx.G_{\bmx}}{a_0}{a_1}}{q}{q'}
  \]
  In other words, we need to show that $\tybridge{\bmx.G_{\bmx}}{a_0}{a_1}$ is bridge-discrete; this type is
  isomorphic to $R$ by relativity, so we are finished by assumption.
\end{proof}

\subsection{The law of the excluded middle}\label{sec:practice:lem}

As a corollary to the bridge-discreteness of $\tybool$, we can refute the law of the excluded middle for
propositions. First, let us introduce a few variations on the excluded middle.
\begin{align*}
  \tyleminfty &\eqdef \typi[A]{\tyuniv}{\tysigma[b]{\tybool}{\tmif{\_.\tyuniv}{b}{A}{\tynot{A}}}} \\
  \tylem &\eqdef \typi[A]{\tyuniv}{\tyisprop{A} \to \tysigma[b]{\tybool}{\tmif{\_.\tyuniv}{b}{A}{\tynot{A}}}} \\
  \tywlem &\eqdef \typi[A]{\tyuniv}{\tysigma[b]{\tybool}{\tmif{\_.\tyuniv}{b}{\tynot{A}}{\tynot{\tynot{A}}}}}
\end{align*}

The \emph{unrestricted excluded middle}, $\tyleminfty$, is already refuted by univalence~\cite[Corollary
4.2.7]{hott-book}. In short, we can obtain a contradiction by examining the action of $\tyleminfty$ on the
negation isomorphism $\wftm{\tmboolnot}{\tybool \simeq \tybool}$ between $\tybool$ and itself. In univalent
type theory, it is therefore customary to restrict the law to propositions (Definition~\ref{def:proposition}).
The \emph{excluded middle for propositions}, $\tylem$, is validated in the simplicial model of univalent type
theory~\cite{kapulkin20}.

In parametric type theory, however, even this law is refuted. In fact, we can contradict the \emph{weak
  excluded middle}, $\tywlem$, which applies only to negated types. It follows from function extensionality
that negated types are always propositions, so we have $\tylem \to \tywlem$.

\begin{lem}\label{lem:to-bridge-discrete}
  If $\wftype{A}$ is bridge-discrete, then any function $\wftm{F}{\tyuniv \to A}$ is constant.
\end{lem}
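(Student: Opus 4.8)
The plan is the standard internal‑parametricity recipe: manufacture, using a $\tygel$-type, a bridge in $\tyuniv$ connecting \emph{any} two types, apply $F$ to it, and then use bridge-discreteness of $A$ to convert the resulting bridge into a path. Here I read ``$F$ is constant'' as the statement that there is some $\wftm{a}{A}$ with $\wftm{F}{\typath{\tyarr{\tyuniv}{A}}{F}{\tmlam[\_]{a}}}$.

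Concretely, I would fix $F$ and arbitrary $\wftm{X,Y}{\tyuniv}$ and form, in the context extended by a fresh $\ctxbdim{}[x]$, the type $\tygel{\bmx}{X}{Y}{a_0.a_1.\tybool}$ — that is, the $\tygel$-type on the constant relation. This is well-formed by \rulename{Gel-Form} (\cref{fig:gel}), and small exactly as in the proof of \cref{thm:relativity}, since $X$ and $Y$ lie in the ambient context and so are untouched by the restriction $\ctxres{-}{\bmx}$. By \rulename{Gel-Form-$\partial$}, substituting $\bm0$ (resp.\ $\bm1$) for $\bmx$ returns $X$ (resp.\ $Y$), so $\tmblam[x]{\tygel{\bmx}{X}{Y}{a_0.a_1.\tybool}}$ is an element of $\tybridge{\tyuniv}{X}{Y}$; note that neither the introduction form $\tmgel$ nor any inhabitant of the relation is needed — only the boundary behaviour of $\tygel$ matters. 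Applying $F$ under the bridge abstraction (moving $F$ past $\ctxbdim{}[x]$ by bridge-interval weakening) produces $\tmblam[x]{F(\tygel{\bmx}{X}{Y}{a_0.a_1.\tybool})}$, whose endpoints compute to $FX$ and $FY$; that is, an element of $\tybridge{A}{FX}{FY}$. Now $A$ being bridge-discrete means that $\tmloosen{A}$, instantiated at $FX$ and $FY$, is an isomorphism $\typath{A}{FX}{FY} \simeq \tybridge{A}{FX}{FY}$, so its inverse sends the bridge above to a path $\wftm{q_{X,Y}}{\typath{A}{FX}{FY}}$. Finally, taking $a \eqdef F\tybool$, the family $\tmlam[X]{q_{X,\tybool}}$ inhabits $\typi[X]{\tyuniv}{\typath{A}{FX}{a}}$, so (ordinary, non-dependent) function extensionality — which cubically is just exchange of variables, cf.~\cref{lem:funext-over} — yields $\typath{\tyarr{\tyuniv}{A}}{F}{\tmlam[\_]{a}}$, as required.

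Since every ingredient is already at hand, I do not anticipate a genuine obstacle; the interest lies in two conceptual points. The first is recognizing that the relation plugged into $\tygel$ is irrelevant and that no $\tmgel$-witness is required: it is purely the fact, from \rulename{Gel-Form-$\partial$}, that the $\tygel$-type restricts to $X$ and to $Y$ at the two endpoints that provides a bridge between otherwise unrelated types. The second — and the reason ``bridge-discrete'' is precisely the right hypothesis — is that the argument runs $\tmloosen{A}$ \emph{backwards}, from bridges to paths, which is legitimate only because bridge-discreteness asserts the \emph{invertibility} of $\tmloosen{A}$, not merely the existence of the map; this direction genuinely fails for general $A$, e.g.\ for $A = \tyuniv$, where $\tybridge{\tyuniv}{X}{Y}$ and $\typath{\tyuniv}{X}{Y}$ are not the same.
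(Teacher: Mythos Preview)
Your proposal is correct and follows essentially the same approach as the paper: form a $\tygel$-type to get a bridge in $\tyuniv$ between arbitrary types, push it through $F$, and invert $\tmloosen{A}$. The only cosmetic differences are that the paper uses the empty relation $\bot$ rather than $\tybool$ (as you note, the choice is irrelevant), and the paper stops at producing a path $\typath{A}{FB_0}{FB_1}$ for arbitrary $B_0,B_1$ rather than packaging this into a path to a literal constant function via function extensionality.
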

\begin{proof}
  For any pair of types $B_0,B_1$, we can apply $F$ at the empty relation between them.
  \[
    \wftm{\tmblam[x]{F(\tygel{\bmx}{B_0}{B_1}{\_.\_.\bot})}}{\tybridge{A}{FB_0}{FB_1}}
  \]
  When $A$ is bridge-discrete, this induces a path between $FB_0$ and $FB_1$.
\end{proof}

\begin{thm}
  $\tynot{\tywlem}$.
\end{thm}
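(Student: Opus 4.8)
The plan is to apply \cref{lem:to-bridge-discrete} to the first component of a hypothetical witness of $\tywlem$ and then force a contradiction by evaluating at two well-chosen test types. Suppose, toward a contradiction, that $w : \tywlem$. Post-composing with the first projection gives a function $\tmlam[A]{\tmfst{w(A)}} : \tyarr{\tyuniv}{\tybool}$. Since $\tybool$ is bridge-discrete (established above), \cref{lem:to-bridge-discrete} shows this map is constant; in particular we obtain a path $c : \typath{\tybool}{\tmfst{w(\tybot)}}{\tmfst{w(\tybool)}}$.

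It then remains to rule out every possibility for the booleans $\tmfst{w(\tybot)}$ and $\tmfst{w(\tybool)}$, which I would do by case analysis on both (using the $\tybool$-eliminator, which is legitimate since the goal $\tybot$ does not depend on them). If $\tmfst{w(\tybot)}$ is $\tmfalse$, then $\tmsnd{w(\tybot)}$ has type $\tynot{\tynot{\tybot}}$, and feeding it $\tmlam[z]{z} : \tynot{\tybot}$ produces an element of $\tybot$. If instead $\tmfst{w(\tybool)}$ is $\tmtrue$, then $\tmsnd{w(\tybool)}$ has type $\tynot{\tybool}$, and feeding it $\tmtrue : \tybool$ produces an element of $\tybot$. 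The only remaining case is $\tmfst{w(\tybot)} = \tmtrue$ together with $\tmfst{w(\tybool)} = \tmfalse$; but then $c$ specializes to an element of $\typath{\tybool}{\tmtrue}{\tmfalse}$, which is absurd since $\tybool$ is a set and $\tmtrue,\tmfalse$ are not path-equal. Collecting the cases yields $\tybot$, hence $\tynot{\tywlem}$.

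The only place the bridge-discreteness of $\tybool$ is used is the constancy of $\tmlam[A]{\tmfst{w(A)}}$, and this is essential: no single test type suffices, since refuting the $\tmtrue$-branch of $\tywlem\,A$ needs $A$ inhabited while refuting the $\tmfalse$-branch needs $A$ empty, and it is precisely constancy that lets us transport the verdict at $\tybot$ to the verdict at $\tybool$. Beyond the results of this excerpt, the argument uses only the standard fact that $\tybool$ is a set (equivalently, that $\typath{\tybool}{\tmtrue}{\tmfalse}$ is empty). I do not anticipate any real obstacle here: once \cref{lem:to-bridge-discrete} and the bridge-discreteness of $\tybool$ are in hand, the proof is a short finite case analysis, and all the conceptual content already lives in those two results.
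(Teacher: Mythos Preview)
Your proposal is correct and follows essentially the same approach as the paper: use bridge-discreteness of $\tybool$ and \cref{lem:to-bridge-discrete} to show $\tmfst \circ w$ is constant, then evaluate at an empty and an inhabited type to force a contradiction. The only cosmetic difference is your choice of $\tybool$ as the inhabited test type where the paper uses $\top$, and that you spell out the $2\times 2$ case split explicitly while the paper simply observes that $\tmfst(w\bot)$ must be $\tmtrue$ and $\tmfst(w\top)$ must be $\tmfalse$.
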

\begin{proof}
  Suppose we have $\wftm{w}{\tywlem}$. By Lemma~\ref{lem:to-bridge-discrete}, we know that $\tmfst \circ w$ is
  constant, so $\tmfst{w\top}$ and $\tmfst{w \bot}$ are equal. We obtain a contradiction by case analysis;
  clearly $\tmfst{w\top}$ must be $\tmfalse$ and $\tmfst{w\bot}$ must be $\tmtrue$.
\end{proof}

For a deeper exploration of the relationship between parametricity and the excluded middle, we refer to Booij,
Escard{\'{o}}, Lumsdaine, and Shulman~\cite{booij16}.

\subsection{The smash product}\label{sec:practice:smash}

Now we come to our motivating example: proving coherence laws for the smash product. In this section, we adopt
some conventions for dealing with pointed types, elements of $\tyunivptd \eqdef \tysigma[A]{\tyuniv}{A}$. We
give pointed types names like $A_*,B_*,\ldots$ and write $A,B,\ldots$ and $a_0,b_0,\ldots$ for their first and
second components respectively. Given two pointed types $A_*,B_*$, the type of basepoint-preserving functions
between them is defined as $A_* \to B_* \eqdef \tysigma[f]{A \to B}{\typath{B}{fa_0}{b_0}}$. The identity
function is a basepoint-preserving function $\wftm{\tmpair{\tmlam[a]{a}}{\tmplam[\_]{a_0}}}{A_* \to A_*}$, and
there is a unique pointed constant function $\wftm{\tmpair{\tmlam[\_]{b_0}}{\tmplam[\_]{b_0}}}{A_* \to B_*}$
between any pair of pointed types. The type of pointed functions can itself be made a pointed type
$A_* \to_* B_*$ by taking the pointed constant function as basepoint, but we will not need this here. As with
types, we write $f_*$ for basepoint-preserving functions, $f$ for the underlying function, and $f_0$ for the
proof that it preserves the basepoint. Finally, we write $\tybool_*$ for the booleans with basepoint
$\tmtrue$.

The underlying type of the smash product is given by the following higher inductive type.
\[
  \begin{array}{l}
    \dataheading{A_* \wedge B_*} \\
    \niceconstr{\tmsmpair{a : A}{b : B}}{A_* \wedge B_*} \\
    \niceconstr{\tmsmbasel}{A_* \wedge B_*} \\
    \niceconstr{\tmsmbaser}{A_* \wedge B_*} \\
    \niceconstr{\tmsmgluel}[\ctxpdim{\ctxsnoc{}[b]{B}}[x]]{A_* \wedge B_*}{\tube{x=0}{\tmsmbasel} \mid \tube{x=1}{\tmsmpair{a_0}{b}}} \\
    \niceconstr{\tmsmgluer}[\ctxpdim{\ctxsnoc{}[a]{A}}[x]]{A_* \wedge B_*}{\tube{x=0}{\tmsmbaser} \mid \tube{x=1}{\tmsmpair{a}{b_0}}}
  \end{array}
\]
In words, $A_* \wedge B_*$ is the ordinary product $A \times B$ quotiented by the relation collapsing together
all elements of the form $\tmpair{a_0}{b}$ or $\tmpair{a}{b_0}$. Elements of the former form are identified
with a new ``hub'' point $\tmsmbasel$, while elements of the latter are identified with a separate point
$\tmsmbaser$, producing a shape shown in \cref{fig:smash}. We write $A_* \wedge_* B_*$ for the smash product
viewed as a pointed type with basepoint $\tmsmpair{a_0}{b_0}$.

\begin{figure}
  \centering
  \[
  \begin{tikzpicture}
    \def\height{1}
    \def\width{3}
    \def\slant{1}
    \def\aoffset{0.3}
    \def\boffset{0.5}

    \coordinate (bl) at (0, 0);
    \coordinate (bm) at (\width, 0);
    \coordinate (br) at (2*\width, 0);
    \coordinate (ml) at (\slant, \height);
    \coordinate (mm) at (\width+\slant, \height);
    \coordinate (mr) at (2*\width+\slant, \height);
    \coordinate (tl) at (2*\slant, 2*\height);
    \coordinate (tm) at (\width+2*\slant, 2*\height);
    \coordinate (tr) at (2*\width+2*\slant, 2*\height);
    \coordinate (top) at (\width+\slant, 3*\height);
    \coordinate (bot) at (\width+\slant, -\height);

    \draw[dashed,|-|] ($ (bl) + (-\aoffset,\slant*\aoffset) $) -- node[sloped,left,pos=0] {$A$} node[sloped,above,pos=1] {$a_0$} ($ (ml) + (-\aoffset,\slant*\aoffset) $);
    \draw[dashed,-|] ($ (ml) + (-\aoffset,\slant*\aoffset) $) -- ($ (tl) + (-\aoffset,\slant*\aoffset) $);

    \draw[dashed,|-|] ($ (tl) + (0,\boffset) $) -- node[above, pos=1] {$b_0$} ($ (tm) + (0,\boffset) $);
    \draw[dashed,-|] ($ (tm) + (0,\boffset) $) -- node[right,pos=1] {$B$} ($ (tr) + (0,\boffset) $);

    \fill[draw=black,fill=lightgray,thick,line join=bevel,fill opacity=0.8] (bm) -- (tm) -- (bot) -- (bm);
    \fill[draw=black,fill=lightgray,thick,fill opacity=0.8] (bl) -- (br) -- (tr) -- (tl) -- (bl);
    \fill[draw=black,fill=lightgray,thick,fill opacity=0.8] (bm) -- (tm);
    \fill[draw=black,fill=lightgray,thick,line join=bevel,fill opacity=0.8] (ml) -- (mr) -- (top) -- (ml);

    \node[circle,fill=black,inner sep=2pt,label={$~~\tmsmbasel$}] at (top) {};
    \node[circle,fill=black,inner sep=2pt,label=below:{$~~\tmsmbaser$}] at (bot) {};

    \coordinate (fun) at (0.4*\width+\slant,\height);
    \node[circle,fill=black,inner sep=2pt,label=below:{$\tmsmpair{a_0}{b}$}] at (fun) {};
    \draw[thick,->-=0.5] (top) -- node[right, pos=0.65] {$\scriptstyle \tmsmgluel{b}{-}$} (fun);

  \end{tikzpicture}
\]
\caption{The smash product of $\tmpair{A}{a_0}$ and $\tmpair{B}{b_0}$}%
\label{fig:smash}
\end{figure}

We will begin by focusing on the following theorem.

\begin{thm}\label{thm:smash-binary}
  Any family of pointed functions $\typi[A_*,B_*]{\tyunivptd}{(A_* \wedge_* B_* \to A_* \wedge_* B_*)}$ is
  either the polymorphic identity or the polymorphic constant pointed function, up to a path.
\end{thm}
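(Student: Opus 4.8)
The plan is to use parametricity, in the shape of $\tygel$-types, to force $F$ to be natural in its type arguments, then reduce the whole question to the single pointed type $\tybool_*$ via the pointed maps that pick out an element, and finally propagate the resulting dichotomy by induction on the smash product. The conceptual heart is clean; the cost is the smash-induction bookkeeping, which is where I expect the real obstacle to lie.

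\emph{Extracting naturality.} Following the template of the boolean characterization above, I would show that for pointed functions $f_* : A_* \to A'_*$ and $g_* : B_* \to B'_*$ the square $F(A'_*,B'_*) \circ (f_* \wedge_* g_*) \simeq (f_* \wedge_* g_*) \circ F(A_*,B_*)$ holds up to a path, where $f_* \wedge_* g_*$ is the functorial action of the smash product. This is obtained by applying $F$ at the bridge of pointed types whose underlying bridge is $\tmblam[x]{\tygel{\bmx}{A}{A'}{a.a'.\typath{A'}{fa}{a'}}}$ — the $\tygel$-type of the graph of $f$ — pointed by $\tmgel{\bmx}{a_0}{a'_0}{f_0}$ using that $f_*$ preserves basepoints, and likewise for $g$; feeding in a bridge over these graph relations connecting $w$ to $(f_*\wedge_* g_*)(w)$ and then applying $\tmungel$ produces the path, exactly as relativity (\cref{thm:relativity}) lets us read off the graph relatedness. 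Building the bridge over a general $w$ requires an auxiliary induction, but for $w$ of the form $\tmsmpair{a}{b}$, $\tmsmbasel$, or $\tmsmbaser$ it is assembled directly from the corresponding $\tmgel$-terms, and the $\tmsmgluel$/$\tmsmgluer$ cases follow by functoriality of $\wedge_*$ on glue paths. We in fact only need the instance where $f_* = \chi_a$ and $g_* = \chi_b$, writing $\chi_a : \tybool_* \to A_*$ for the pointed map with $\chi_a(\tmtrue) = a_0$, $\chi_a(\tmfalse) = a$.

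\emph{Reduction and propagation.} Since $\tmsmpair{a}{b} = (\chi_a \wedge_* \chi_b)(\tmsmpair{\tmfalse}{\tmfalse})$ definitionally, naturality gives $F(A_*,B_*)(\tmsmpair{a}{b}) \simeq (\chi_a \wedge_* \chi_b)\bigl(F(\tybool_*,\tybool_*)(\tmsmpair{\tmfalse}{\tmfalse})\bigr)$, so the behaviour of $F$ is controlled by the single element $F(\tybool_*,\tybool_*)(\tmsmpair{\tmfalse}{\tmfalse}) \in \tybool_* \wedge_* \tybool_*$. Using that $\tybool_* \wedge_* \tybool_*$ is equivalent to $\tybool_*$ — in particular a set with exactly the basepoint $\tmsmpair{\tmtrue}{\tmtrue}$ and the point $\tmsmpair{\tmfalse}{\tmfalse}$ — I case-split: either this element is path-equal to $\tmsmpair{\tmfalse}{\tmfalse}$, or it is path-equal to the basepoint. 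In the first case I prove $F(A_*,B_*) \simeq \mathrm{id}$ as a pointed function, for every $A_*,B_*$, by induction on $A_* \wedge_* B_*$: on $\tmsmpair{a}{b}$ the path is the naturality equation composed with $(\chi_a \wedge_* \chi_b)$ applied to the case hypothesis (landing in $\tmsmpair{a}{b}$); on $\tmsmbasel$, $\tmsmbaser$ similarly; on the glue constructors one supplies a coherence square, and the basepoint condition is covered since $\tmsmpair{a_0}{b_0}$ is a pair element. In the second case the same induction yields $F(A_*,B_*) \simeq$ the constant pointed function. Combining with the case split gives the stated disjunction.

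\emph{The main obstacle.} The parametricity step itself is short — it is the boolean argument replayed with graphs of $\chi_a, \chi_b$ — so the difficulty is entirely the smash-side bookkeeping: setting up functoriality of $\wedge_*$ on pointed maps and glue paths, identifying $\tybool_* \wedge_* \tybool_*$ with $\tybool_*$, and, above all, supplying the two-dimensional coherence cells for the $\tmsmgluel$ and $\tmsmgluer$ cases — both when building the naturality bridge over a general $w$ and when running the propagation induction — together with the compatibility between the two basepoint-preservation proofs when upgrading to a path of pointed functions. This is precisely the path manipulation that obstructs the hand-written approaches; the point of the parametric method is that this bounded amount of bookkeeping does not grow appreciably when one passes to the $n$-ary and higher-coherence versions of the statement.
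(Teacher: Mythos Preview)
Your overall strategy---use $\tygel$-types on graphs of the ``pick'' maps $\chi_a : \tybool_* \to A_*$ to reduce the behavior of $F$ on $\tmsmpair{a}{b}$ to its value at $\tmsmpair{\tmfalse}{\tmfalse}$ in $\tybool_* \wedge \tybool_*$, then case-split using $\tybool_* \wedge \tybool_* \simeq \tybool_*$---is exactly the paper's core idea (packaged there as a Graph Lemma and a Workhorse Lemma). Where you diverge is in the propagation step, and this is worth pointing out because the paper's maneuver here is the thing that makes the argument scale.

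You propose to finish by an induction on $A_* \wedge B_*$ showing $F(A_*,B_*)(w) \simeq w$ (or $\simeq$ basepoint), supplying coherence squares at $\tmsmgluel$ and $\tmsmgluer$ by hand; you correctly flag this as the main obstacle. The paper sidesteps it. Having established that the type $T \eqdef \typi[A_*,B_*]{\tyunivptd}{A \to B \to A_* \wedge B_*}$ has exactly two elements, one notes that $T$ is therefore a \emph{set}. Now the action of $F$ on $\tmsmgluel$ can be repackaged as a \emph{path in $T$}, namely $\tmplam[y]{\tmlam[A_*,B_*,\_,b]{F(A_*,B_*)(\tmsmgluel{b}{y})}}$, and likewise for $\tmsmgluer$ and for the basepoint-preservation proof of $F$. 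Since $T$ is a set, any such path is path-equal to whatever path you need it to be---no coherence square has to be constructed. So instead of an induction with two-dimensional obligations, one simply invokes ``paths in a set are unique'' three times. Your route would work, but the coherence squares you would build are precisely what the paper's trick makes unnecessary; and it is this trick, not the parametricity itself, that keeps the $n$-ary generalization from blowing up.
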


In an effort to show we have nothing up our sleeves, we will avoid sweeping gory details---that is, coherence
proofs---under the rug. However, we encourage the reader to focus on the broad strokes of the argument, and as
such we will be less diligent about \emph{explaining} the gory details.

The relations we use in the following will all be graphs of functions. As such, we introduce the following
shorthand notation.

\begin{defi}
  Given $f : A \to B$, write $\tygr{r}{A}{B}{f} \eqdef \tygel{\bmr}{A}{B}{a.b.\typath{B}{fa}{b}}$. Given
  $f_* : A_* \to B_*$, define
  $\tygr*{r}{A_*}{B_*}{f_*} \eqdef \tmpair{\tygr{r}{A}{B}{f}}{\tmgel{\bmr}{a_0}{b_0}{f_0}} \in \tyunivptd$.
\end{defi}

We prove a \emph{graph lemma} (Lemma~\ref{lem:smash-graph-lemma}) that relates the smash product of
$\tygr*$-types with the action of the smash product on their underlying functions.  First, the following two
technical definitions will be handy for concisely filling coherence conditions.

\begin{defi}[Concatenation by inverse]
  Let $\wftm{M}{A}$, $\wfpdim{r}$, and $\wftm[\ctxpdim{}[x]]{N}{A}$ with
  $\eqtm[\ctxcst{}{r=1}]{M}{\usubstdim{N}{1}{x}}{A}$ be given. For any $\wfpdim{s}$, define
  $\wftm{\tmconcinv{A}{r}{s}{M}{x.N}}{A}$ as follows.
  \[
    \tmconcinv{A}{r}{s}{M}{x.N} \eqdef \hcomp{A}{1}{s}{M}{\tube{r=0}{\_.M},\tube{r=1}{x.N}}
  \]
  \[
    \begin{tikzpicture}
      \def\height{2}
      \def\width{6}
      \def\awidth{1.5}
      \def\aheight{0.5}
      \def\alength{0.5}
      \def\pad{1}

      \draw (-\awidth, \height+\aheight) [->] to node [above] {\small $r$} (-\awidth+\alength, \height+\aheight) ;
      \draw (-\awidth, \height+\aheight) [->] to node [left] {\small $s$} (-\awidth, \height+\aheight-\alength) ;
      \node (tl) at (0, \height) {$\bullet$} ;
      \node (tr) at (\width, \height) {$\bullet$} ;
      \node (bl) at (0, 0) {$\bullet$} ;
      \node (br) at (\width, 0) {$\bullet$} ;
      \draw (tl) [->,dashed] to node [above] {} (tr) ;
      \draw (bl) [->] to node [below] {$M$} (br) ;
      \draw (tl) [->] to node [left] {$M$} (bl) ;
      \draw (tr) [->] to node [right] {$\usubst{N}{s}{x}$} (br) ;
      \node at (0.5*\width, 0.5*\height) {$\tmconcinv{A}{r}{s}{M}{x.N}$} ;
      \node at (-\awidth-\pad, 0) {} ;
      \node at (\width+\awidth+\pad, 0) {} ;
    \end{tikzpicture}
  \]
  The term $\tmconcinv{A}{r}{0}{M}{x.N}$ is the result of concatenating $M$ (as a path in direction $r$) with
  the inverse of $x.N$; we need the general form $\tmconcinv{A}{r}{s}{M}{x.N}$ to relate the composite to
  other terms.
\end{defi}

\begin{lem}[Join connection]
  For any $\wftm{P}{\typath{A}{M}{N}}$, we have a term as follows.
  \[
    \wftm{\tmorcnx{A}{P}}{\typath{x.\typath{A}{\tmpapp{P}{x}}{N}}{P}{\tmplam[\_]{N}}}
  \]
  \[
    \begin{tikzpicture}
      \def\height{2}
      \def\width{6}
      \def\awidth{1.5}
      \def\aheight{0.5}
      \def\alength{0.5}
      \def\pad{1}

      \draw (-\awidth, \height+\aheight) [->] to node [above] {\small $x$} (-\awidth+\alength, \height+\aheight) ;
      \draw (-\awidth, \height+\aheight) [->] to node [left] {\small $y$} (-\awidth, \height+\aheight-\alength) ;
      \node (tl) at (0, \height) {$\bullet$} ;
      \node (tr) at (\width, \height) {$\bullet$} ;
      \node (bl) at (0, 0) {$\bullet$} ;
      \node (br) at (\width, 0) {$\bullet$} ;
      \draw (tl) [->] to node [above] {$\tmpapp{P}{x}$} (tr) ;
      \draw (bl) [->] to node [below] {$N$} (br) ;
      \draw (tl) [->] to node [left] {$\tmpapp{P}{y}$} (bl) ;
      \draw (tr) [->] to node [right] {$N$} (br) ;
      \node at (0.5*\width, 0.5*\height) {$\tmpapp{\tmpapp{\tmorcnx{A}{P}}{x}}{y}$} ;
      \node at (-\awidth-\pad, 0) {} ;
      \node at (\width+\awidth+\pad, 0) {} ;
    \end{tikzpicture}
  \]
\end{lem}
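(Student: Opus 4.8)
The plan is to skip any explicit composition gymnastics and produce $\tmorcnx{A}{P}$ directly by path induction on $P$. The point is that the entire square drawn in the statement is already pinned down by the \emph{type} $\typath{x.\typath{A}{\tmpapp{P}{x}}{N}}{P}{\tmplam[\_]{N}}$: all four boundary faces are forced by the endpoints of the two nested $\typath$-formers (using \rulename{Path-$\partial$}: $\tmpapp{P}{0} = M$, $\tmpapp{P}{1} = N$). So it suffices to exhibit \emph{any} inhabitant of this type; we never use a computation rule for $\tmorcnx{A}{P}$, so the weakness of cubical path induction (\cref{lem:j}) is harmless here.

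Concretely, I would apply \cref{lem:j} with base point $M$, target $N$, the given $\wftm{P}{\typath{A}{M}{N}}$, and the motive $\wftype[\ctxsnoc{\ctxsnoc{}[a]{A}}[p]{\typath{A}{M}{a}}]{C}$ defined by $C \eqdef \typath{x.\typath{A}{\tmpapp{p}{x}}{a}}{p}{\tmplam[\_]{a}}$. First one checks that $C$ is well-formed in that context: the family $x.\typath{A}{\tmpapp{p}{x}}{a}$ has fiber $\typath{A}{M}{a}$ at $x = 0$ (inhabited by $p$) and $\typath{A}{a}{a}$ at $x = 1$ (inhabited by $\tmplam[\_]{a}$), so the outer $\typath$ typechecks. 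The base case required by \cref{lem:j} is an element of $\usubst{C}{M,\tmplam[\_]{M}}{a,p}$, which unfolds to $\typath{x.\typath{A}{M}{M}}{\tmplam[\_]{M}}{\tmplam[\_]{M}}$ and is inhabited by $\tmplam[\_]{\tmplam[\_]{M}}$. Then \cref{lem:j} delivers an element of $\usubst{C}{N,P}{a,p} = \typath{x.\typath{A}{\tmpapp{P}{x}}{N}}{P}{\tmplam[\_]{N}}$, and we set $\tmorcnx{A}{P} \eqdef \tmj{a.p.C}{\tmplam[\_]{\tmplam[\_]{M}}}{P}$.

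I do not expect a genuine obstacle: the only thing to verify is that the two $\typath$-endpoints of the motive $C$ match $p$ and $\tmplam[\_]{a}$ on the nose, which is immediate. The place where a subtlety \emph{would} appear is if one insisted on giving $\tmorcnx{A}{P}$ as an explicit $\hcomp$ matching the picture literally: a strict $\vee$-connection is not obtainable from a single homogeneous composition over the cartesian interval (any naive attempt leaves one of the four faces holding only up to a path), so one would have to write out the three-dimensional composite that unfolding \cref{lem:j} produces. Routing through $\mathsf{J}$ avoids that bookkeeping entirely, and an explicit composite can always be recovered later (e.g.\ to streamline coherence fillers in \cref{sec:practice:smash}) by expanding the $\mathsf{J}$-term.
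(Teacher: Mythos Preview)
Your proposal is correct and matches the paper's proof essentially verbatim: the paper also invokes \cref{lem:j} to reduce to the case where $P$ is the constant path $\tmplam[\_]{M}$ and then supplies $\tmplam[\_]{\tmplam[\_]{M}}$ as the filler. Your additional remarks about the difficulty of a direct $\hcomp$ construction are accurate but go beyond what the paper records.
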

\begin{proof}
  By Lemma~\ref{lem:j}, it suffices to construct a term when $P$ is a constant path
  $\wftm{\tmplam[\_]{M}}{\typath{A}{M}{M}}$, in which case we have
  $\wftm{\tmplam[\_]{\tmplam[\_]{M}}}{\typath{\typath{A}{M}{M}}{\tmplam[\_]{M}}{\tmplam[\_]{M}}}$.
\end{proof}

The smash product has a functorial action on pointed functions, which we define as follows.

\begin{defi}
  Given $f_* : A_* \to C_*$ and $g_* : B_* \to D_*$, we inductively define a map
  $f_* \wedge g_* \in A_* \wedge B_* \to C_* \wedge D_*$ as follows.
  \[
    \begin{array}{lcl}
      (f_* \wedge g_*)(\tmsmpair{a}{b}) &\eqdef& \tmsmpair{fa}{gb} \\
      (f_* \wedge g_*)(\tmsmbasel) &\eqdef& \tmsmbasel \\
      (f_* \wedge g_*)(\tmsmbaser) &\eqdef& \tmsmbaser \\
      (f_* \wedge g_*)(\tmsmgluel{b}{y}) &\eqdef& \tmconcinv{C_* \wedge D_*}{y}{0}{\tmsmgluel{gb}{y}}{z.\tmsmpair{\tmpapp{f_0}{z}}{gb}} \\
      (f_* \wedge g_*)(\tmsmgluer{a}{y}) &\eqdef& \tmconcinv{C_* \wedge D_*}{y}{0}{\tmsmgluer{y}{fa}}{z.\tmsmpair{fa}{\tmpapp{g_0}{z}}}
    \end{array}
  \]
\end{defi}

We now prove the graph lemma: that there is a map from the smash product of two $\tygr*$-types to the
$\tygr$-type corresponding to the smash of their underlying functions. We expect that this map is in fact an
isomorphism and that a similar principle holds for $\tygel$-types more generally, but such results are not
necessary here.

\begin{lem}[Graph Lemma for $\wedge$]%
  \label{lem:smash-graph-lemma}
  For any $\wfbdim{\bmr}$, there is a map
  \[
    \tmsmashgr{r} \in \tygr*{\bmr}{A_*}{C_*}{f_*} \wedge \tygr*{\bmr}{B_*}{D_*}{g_*} \to \tygr{\bmr}{A_* \wedge B_*}{C_* \wedge D_*}{f_* \wedge g_*}
  \]
  equal to the identity function on $A_* \wedge_* B_*$ when $\bmr = \bm0$ and on $C_* \wedge_* D_*$ when
  $\bmr = \bm1$.
\end{lem}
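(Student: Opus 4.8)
The plan is to build $\tmsmashgr{r}$ by recursion on the smash product $\tygr*{\bmr}{A_*}{C_*}{f_*} \wedge \tygr*{\bmr}{B_*}{D_*}{g_*}$, with motive the $\tygel$-type $E \eqdef \tygr{\bmr}{A_* \wedge B_*}{C_* \wedge D_*}{f_* \wedge g_*} = \tygel{\bmr}{A_* \wedge B_*}{C_* \wedge D_*}{u.v.\typath{C_* \wedge D_*}{(f_* \wedge g_*)(u)}{v}}$. By \rulename{Gel-Form-$\partial$} the motive reduces to $A_* \wedge B_*$ at $\bmr = \bm0$ and to $C_* \wedge D_*$ at $\bmr = \bm1$, and by \rulename{Gel-Intro-$\partial$} a term $\tmgel{\bm\Ge}{M_0}{M_1}{P}$ equals $M_\Ge$; consequently, if each recursor clause is arranged so that its $\bm0$- and $\bm1$-boundaries apply the matching smash constructor, the endpoint equations $\tmsmashgr{0} = \id_{A_* \wedge B_*}$ and $\tmsmashgr{1} = \id_{C_* \wedge D_*}$ hold by computation. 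So five clauses must be supplied, one per constructor.

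The hub clauses are immediate: $\tmsmashgr{r}(\tmsmbasel) \eqdef \tmgel{\bmr}{\tmsmbasel}{\tmsmbasel}{\tmplam[\_]{\tmsmbasel}}$ and $\tmsmashgr{r}(\tmsmbaser) \eqdef \tmgel{\bmr}{\tmsmbaser}{\tmsmbaser}{\tmplam[\_]{\tmsmbaser}}$, which typecheck because $f_* \wedge g_*$ fixes $\tmsmbasel$ and $\tmsmbaser$ definitionally, so reflexivity inhabits the required path types. For a pair $\tmsmpair{p}{q}$ with $p \in \tygr{\bmr}{A}{C}{f}$ and $q \in \tygr{\bmr}{B}{D}{g}$ — terms that may depend on $\bmr$ — we run $\tmextent$ at $\bmr$ on the tuple $\tmpair{p}{q}$: this exposes the boundary points $\tmpair{p^0}{q^0}$ and $\tmpair{p^1}{q^1}$ and, after passing the bridge argument through $\tmungel$ coordinatewise, the path witnesses $\typath{C}{fp^0}{p^1}$ and $\typath{D}{gq^0}{q^1}$ underlying $p$ and $q$. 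Pairing the latter two pointwise produces a path $\typath{C_* \wedge D_*}{\tmsmpair{fp^0}{gq^0}}{\tmsmpair{p^1}{q^1}}$, and since $(f_* \wedge g_*)(\tmsmpair{p^0}{q^0}) = \tmsmpair{fp^0}{gq^0}$ the bridge case of $\tmextent$ may be taken to be the corresponding $\tmgel$-term at $\bmr$ over $\tmsmpair{p^0}{q^0}$ and $\tmsmpair{p^1}{q^1}$; the endpoint equalities then follow from \rulename{Extent-$\partial$}.

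The substance is in the two spoke clauses, and $\tmsmgluel$ is representative. Here we must give, for $q \in \tygr{\bmr}{B}{D}{g}$ and $x : \BI$, an element of $E$ whose $x = 0$ boundary is $\tmsmashgr{r}(\tmsmbasel)$ and whose $x = 1$ boundary is $\tmsmashgr{r}(\tmsmpair{p_0}{q})$, where $p_0$ is the basepoint of $\tygr*{\bmr}{A_*}{C_*}{f_*}$. Opening $q$ with $\tmextent$ at $\bmr$ as before and writing $\eta$ for the resulting path $\typath{D}{gq^0}{q^1}$, the task reduces to producing a $\tmgel$-term at $\bmr$ over the points $\tmsmgluel{q^0}{x}$ and $\tmsmgluel{q^1}{x}$ whose path witness connects $(f_* \wedge g_*)(\tmsmgluel{q^0}{x})$ — which, by the definition of $f_* \wedge g_*$, unfolds to $\tmconcinv{C_* \wedge D_*}{x}{0}{\tmsmgluel{gq^0}{x}}{z.\tmsmpair{\tmpapp{f_0}{z}}{gq^0}}$ — to $\tmsmgluel{q^1}{x}$, this family of paths restricting at $x = 0$ to the reflexive path on $\tmsmbasel$ and at $x = 1$ to the pointwise pairing of $f_0$ with $\eta$, so as to agree strictly with the hub and pair clauses. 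The filler is assembled directly, without appeal to the identity elimination of \cref{lem:j}: the general form $\tmconcinv{C_* \wedge D_*}{x}{s}{-}{-}$ (varying the second superscript) supplies a path straightening the concatenation back to $\tmsmgluel{gq^0}{x}$, the join connection $\tmorcnx{C_* \wedge D_*}{-}$ handles the coherence in the auxiliary interval direction, and a homogeneous composition glues these against the action of $\tmsmgluel{-}{x}$ on $\eta$ along the prescribed boundary; the $\tmsmgluer$ clause is symmetric, with the roles of $f$ and $g$, and of $f_0$ and $g_0$, interchanged. The main obstacle is precisely this: producing fillers for the spoke clauses that meet the required boundary on the nose, which is the bookkeeping the surrounding text declines to display. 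Once the five clauses are in place, well-typedness of the recursor and the two endpoint equations follow as indicated.
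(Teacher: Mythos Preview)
Your approach matches the paper's: induction on the smash product, $\tmextent$ to open the dimension-dependent arguments, $\tmgel$ to package outputs, and the same ingredients (the general form of concatenation-by-inverse, the join connection, and an $\hcomp$) for the spoke filler. The paper spells out the spoke filler as an explicit four-faced $\hcomp$, but your sketch names the right pieces.

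There is, however, one gap. You assert that once each clause reduces to the matching constructor at $\bmr = \bm0$ and $\bmr = \bm1$, the endpoint equations hold \emph{by computation}. They do not: the smash product, like any higher inductive type here, has no judgmental $\eta$-rule, so a function defined by the smash recursor whose clauses each return the corresponding constructor is only the $\eta$-\emph{expansion} of the identity, not judgmentally $\tmlam[s]{s}$. The paper notes this explicitly and repairs it in two steps: a trivial induction on $A_* \wedge B_*$ (resp.\ $C_* \wedge D_*$) shows the $\eta$-expansion is path-equal to identity, and then a final $\hcomp$---with tube faces at the bridge constraints $\bmr = \bm0$ and $\bmr = \bm1$, adjusting along those paths---yields a map that is \emph{exactly} identity at each endpoint. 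Without this last correction the boundary claimed in the lemma statement is not met.
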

\begin{proof}
  We define the map by induction on the smash product in the domain.
  \begin{itemize}[label=$\triangleright$]
  \item Case $\tmsmpair{m}{n}$: We test whether $\bmr$ is a constant or variable using $\tmextent$. In the
    constant cases, we return $\tmsmpair{m}{n}$. In the case $\bmr$ is a variable $\bmx$, we learn that
    $m$ and $n$ are the instantiation at $\bmx$ of bridges over their types; by \rulename{Gel-$\eta$}, they
    are of the form $m = \tmgel{\bmx}{a}{c}{p}$ and $n = \tmgel{\bmx}{b}{d}{q}$. We return
    $\tmgel{\bmx}{\tmsmpair{a}{b}}{\tmsmpair{c}{d}}{\tmplam[z]{\tmsmpair{\tmpapp{p}{y}}{\tmpapp{q}{y}}}}$.
  \item Case $\tmsmbasel$: We return $\tmgel{\bmr}{\tmsmbasel}{\tmsmbasel}{\tmplam[\_]{\tmsmbasel}}$.
  \item Case $\tmsmbaser$: Symmetric to $\tmsmbasel$.
  \item Case $\tmsmgluel{n}{y}$: We test whether $\bmr$ is a constant or variable using $\tmextent$. In the
    constant cases, we return $\tmsmgluel{n}{y}$. In the case $\bmr$ is a variable $\bmx$, we learn that
    $n$ is the instantiation at $\bmx$ of a bridge; by \rulename{Gel-$\eta$}, it is of the form
    $n = \tmgel{\bmx}{b}{d}{q}$. We return $\tmgel{\bmx}{\tmsmgluel{b}{y}}{\tmsmgluel{d}{y}}{\tmplam[z]{\cdots}}$,
    where $\cdots$ is the following composite.
    \[
      \bighcomp{C_* \wedge D_*}{1}{0}{\tmsmgluel{\tmpapp{q}{z}}{y}}{
        \begin{array}{lcl}
          \arraytube{y=0}{\_.\tmsmbasel} \\
          \arraytube{y=1}{w.\tmsmpair{\tmpapp{\tmpapp{\tmorcnx{A}{f_0}}{z}}{w}}{\tmpapp{q}{z}}} \\
          \arraytube{z=0}{w.\tmconcinv{C_* \wedge D_*}{y}{w}{\tmsmgluel{gb}{y}}{z.\tmsmpair{\tmpapp{f_0}{z}}{gb}}} \\
          \arraytube{z=1}{\_.\tmsmgluel{d}{y}}
        \end{array}
      }
    \]
  \item Case $\tmsmgluer{m}{y}$: Symmetric to $\tmsmgluel{n}{y}$.
  \end{itemize}
  When $\bmr$ is a constant, the resulting function simplifies to the $\eta$-expansion of the identity
  function on $A_* \wedge B_*$. By a simple induction on $A_* \wedge B_*$, the $\eta$-expansion is path-equal
  to the identity function. We may therefore apply an $\hcomp$ to adjust the boundary and obtain a function
  that is exactly the identity when $\bmr = \bm0$ or $\bmr = \bm1$.
\end{proof}

Finally, we use the fact that $\tybool_* \wedge \tybool_*$ is isomorphic to $\tybool_*$. This is a consequence
of more general facts---that $\tybool_*$ is a unit for the smash product, or alternatively that
$(1 + X) \wedge (1 + Y) \simeq 1 + (X \times Y)$ when we take $1$ for each basepoint---but we prove the
special case directly for simplicity's sake. The importance of $\tybool_*$ arises from the fact that elements
of a pointed type $X_*$ are in correspondence with pointed maps $\tybool_* \to X_*$. As such, we can use
naturality conditions with respect to functions $\tybool_* \to X_*$ to ``probe'' the behavior of a function
polymorphic in pointed types, as we will see in Lemma~\ref{lem:smash-workhorse}.

\begin{lem}[Smash of booleans]\label{lem:smash-bool}
  $\tybool_* \wedge \tybool_*$ is isomorphic to $\tybool_*$; in particular, any element of
  $\tybool_* \wedge \tybool_*$ is path-equal to either $\tmsmpair{\tmtrue}{\tmtrue}$ or
  $\tmsmpair{\tmfalse}{\tmfalse}$.
\end{lem}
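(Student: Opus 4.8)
The plan is to construct mutually inverse, basepoint-preserving maps $\phi : \tybool_* \wedge \tybool_* \to \tybool$ and $\psi : \tybool \to \tybool_* \wedge \tybool_*$; the ``in particular'' then falls out of the inverse conditions, since for any $w$ we may case-analyze $\phi(w)$ and transport along the homotopy $\psi \circ \phi \sim \id$ to get a path from $w$ to $\psi(\tmtrue) = \tmsmpair{\tmtrue}{\tmtrue}$ or from $w$ to $\psi(\tmfalse) = \tmsmpair{\tmfalse}{\tmfalse}$.

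The map $\psi$ is immediate: $\psi(\tmtrue) \eqdef \tmsmpair{\tmtrue}{\tmtrue}$ and $\psi(\tmfalse) \eqdef \tmsmpair{\tmfalse}{\tmfalse}$. For $\phi$ I would induct on $\tybool_* \wedge \tybool_*$: send $\tmsmpair{b}{b'}$ to $\tmif{\_.\tybool}{b}{\tmtrue}{b'}$ (which is $\tmfalse$ exactly when $b$ and $b'$ are both $\tmfalse$), send both hub points $\tmsmbasel$ and $\tmsmbaser$ to $\tmtrue$, and send both path constructors $\tmsmgluel{b}{x}$, $\tmsmgluer{a}{x}$ to the reflexive path on $\tmtrue$. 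The $\tmsmgluel$ clause is coherent on the nose, since $\tmif{\_.\tybool}{\tmtrue}{\tmtrue}{b} = \tmtrue$; the $\tmsmgluer$ clause needs a path from $\tmtrue$ to $\phi(\tmsmpair{a}{\tmtrue}) = \tmif{\_.\tybool}{a}{\tmtrue}{\tmtrue}$, which is not an exact equality for variable $a$ but is supplied by a one-line induction on $a$. The composite $\phi \circ \psi$ is the identity on both constructors of $\tybool$ up to exact equality, hence path-equal to $\id_{\tybool}$ by a trivial induction on $\tybool$.

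The content is in showing $\psi \circ \phi \sim \id$, which I would prove by induction on $w : \tybool_* \wedge \tybool_*$, building for each $w$ a path from $\psi(\phi(w))$ to $w$. For $\tmsmpair{b}{b'}$ one splits on $b$ and $b'$: when $b = \tmtrue$ the target is a path from $\tmsmpair{\tmtrue}{\tmtrue}$ to $\tmsmpair{\tmtrue}{b'}$, obtained by concatenating the reverse of $\tmplam[x]{\tmsmgluel{\tmtrue}{x}}$ with $\tmplam[x]{\tmsmgluel{b'}{x}}$, i.e.\ routing through $\tmsmbasel$; when $b = \tmfalse$ and $b' = \tmtrue$ the target is a path from $\tmsmpair{\tmtrue}{\tmtrue}$ to $\tmsmpair{\tmfalse}{\tmtrue}$, handled symmetrically through $\tmsmbaser$ with $\tmsmgluer{\tmtrue}{x}$ and $\tmsmgluer{\tmfalse}{x}$; and when $b = b' = \tmfalse$ the target is reflexivity. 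For $\tmsmbasel$ and $\tmsmbaser$ one takes the reverses of $\tmplam[x]{\tmsmgluel{\tmtrue}{x}}$ and $\tmplam[x]{\tmsmgluer{\tmtrue}{x}}$ respectively, chosen precisely so that the path-constructor clauses can restrict to them.

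The main obstacle is those two path-constructor clauses. For $\tmsmgluel{b}{x}$ one must fill a square with constant top edge $\tmsmpair{\tmtrue}{\tmtrue} = \psi(\phi(\tmsmgluel{b}{x}))$, bottom edge $\tmplam[x]{\tmsmgluel{b}{x}}$, left edge ($x = 0$) the chosen reverse of $\tmplam[x]{\tmsmgluel{\tmtrue}{x}}$, and right edge ($x = 1$) the $\tmsmpair{\tmtrue}{b}$-clause, which is by construction that same reverse followed by $\tmplam[x]{\tmsmgluel{b}{x}}$. This is exactly a filler of the shape produced by a homogeneous composition --- an instance of the join-connection construction --- and the $\tmsmgluer{a}{x}$ case is symmetric. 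As always with iterated induction over higher inductive types, the genuinely fiddly part is checking that every chosen filler restricts correctly on the shared boundaries; but there is no conceptual difficulty, and the bookkeeping is far lighter here than in the graph lemma, Lemma~\ref{lem:smash-graph-lemma}.
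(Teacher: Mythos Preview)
Your proof is correct and follows essentially the same approach as the paper: explicit maps in both directions, one round-trip immediate, the other by smash-product induction with $\hcomp$ fillers for the path-constructor cases. The paper's map $G : \tybool_* \wedge \tybool_* \to \tybool$ is defined by a full case split (sending $\tmsmpair{\tmfalse}{\tmfalse}$ to $\tmfalse$ and \emph{every} other constructor, including both glue families, to $\tmtrue$ on the nose), which sidesteps your induction-on-$a$ wrinkle in the $\tmsmgluer$ clause; otherwise the arguments are nearly identical.
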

\begin{proof}
  In one direction, we define $\wftm{F}{\tybool \to \tybool_* \wedge \tybool_*}$ to send $\tmtrue$ to
  $\tmsmpair{\tmtrue}{\tmtrue}$ and $\tmfalse$ to $\tmsmpair{\tmfalse}{\tmfalse}$. In the other, we define
  $\wftm{G}{\tybool_* \wedge \tybool_* \to \tybool}$ to send $\tmsmpair{\tmfalse}{\tmfalse}$ to $\tmfalse$ and
  all other constructors to $\tmtrue$. Clearly $G \circ F$ is the identity. For the other inverse condition,
  we show $\typi[s]{\tybool_* \wedge \tybool_*}{\typath{\tybool_* \wedge \tybool_*}{s}{F(Gs)}}$ by smash
    product induction as follows.
  \begin{itemize}[label=$\triangleright$]
  \item Case $\tmsmpair{\tmtrue}{\tmtrue}$: Reflexivity.
  \item Case $\tmsmpair{\tmtrue}{\tmfalse}$: \\
    $\tmplam[y]{\hcomp{\tybool_*\wedge\tybool_*}{0}{1}{\tmsmgluel{\tmtrue}{y}}{\tube{y=0}{x.\tmsmgluel{\tmfalse}{x}},\tube{y=1}{\_.\tmsmpair{\tmtrue}{\tmtrue}}}}$.
  \item Case $\tmsmpair{\tmfalse}{\tmfalse}$: Reflexivity.
  \item Case $\tmsmbasel$: $\tmplam[y]{\tmsmgluel{\tmtrue}{y}}$.
  \item Case $\tmsmgluel{\tmtrue}{x}$: $\tmpapp{\tmorcnx{\tybool_* \wedge \tybool_*}{\tmplam[y]{\tmsmgluel{\tmtrue}{y}}}}{x}$.
  \item Case $\tmsmgluel{\tmfalse}{x}$: \\
    $\tmplam[y]{\hcomp{\tybool_*\wedge\tybool_*}{0}{x}{\tmsmgluel{\tmtrue}{y}}{\tube{y=0}{x.\tmsmgluel{\tmfalse}{x}},\tube{y=1}{\_.\tmsmpair{\tmtrue}{\tmtrue}}}}$.
  \end{itemize}
  The cases for $\tmsmpair{\tmtrue}{\tmfalse}$, $\tmsmbaser$, and $\tmsmgluer$ are obtained by taking the
  cases for $\tmsmpair{\tmfalse}{\tmtrue}$, $\tmsmbasel$, and $\tmsmgluel$ respectively and replacing
  $\tmsmgluel$ with $\tmsmgluer$ everywhere.
\end{proof}

The following result, which characterizes terms
$\wftm{F}{\typi[A_*,B_*]{\tyunivptd}{A \to B \to A_* \wedge B_*}}$, is the linchpin of the argument; all uses
of internal parametricity in the final results factor through this lemma. As we only use internal
parametricity with relations that are graphs of functions, this result may also be cast as a corollary of the
\emph{naturality} of such terms, a special case of parametricity. In particular, we use the following
naturality square for $a : A$ and $b : B$, where $\wftm{{[c]}_*}{\tybool_* \to C_*}$ is the pointed function
sending $\tmtrue$ to $c_0$ and $\tmfalse$ to $c$.

\[
  \begin{tikzcd}[column sep=5em, row sep=3em]
    \tybool \times \tybool \ar{r}{F\tybool_*\tybool_*} \ar{d}[left]{[a] \times [b]} & \tybool_* \wedge \tybool_* \ar{d}{{[a]}_* \wedge {[b]}_*} \\
    A \times B \ar{r}[below]{FA_*B_*} & A_* \wedge B_*
  \end{tikzcd}
\]

\begin{lem}[Workhorse lemma]\label{lem:smash-workhorse}
  Let $\wftm{F}{\typi[A_*,B_*]{\tyunivptd}{A \to B \to A_* \wedge B_*}}$. Then $F$ is path equal to one of the
  following.
  \begin{itemize}[label=$\triangleright$]
  \item $\tmlam[\_]{\tmlam[\_]{\tmlam[a]{\tmlam[b]{\tmsmpair{a}{b}}}}}$.
  \item $\tmlam[A_*]{\tmlam[B_*]{\tmlam[\_]{\tmlam[\_]{\tmsmpair{a_0}{b_0}}}}}$.
  \end{itemize}
\end{lem}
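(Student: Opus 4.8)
The plan is to probe $F$ at $\tybool_*,\tybool_*$, case-split on the (essentially unique) boolean this yields, and transport the verdict to arbitrary pointed types using the parametricity of $F$ and the graph lemma (Lemma~\ref{lem:smash-graph-lemma}). Write $\iota \colon \tybool_* \wedge \tybool_* \to \tybool$ and $\kappa \colon \tybool \to \tybool_* \wedge \tybool_*$ for the underlying maps of the isomorphism of Lemma~\ref{lem:smash-bool}, so that $\kappa(\tmtrue) = \tmsmpair{\tmtrue}{\tmtrue}$ and $\kappa(\tmfalse) = \tmsmpair{\tmfalse}{\tmfalse}$, and set $s_0 \eqdef F\tybool_*\tybool_*\tmfalse\tmfalse$. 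Lemma~\ref{lem:smash-bool} supplies a path $s_0 \sim \kappa(\iota(s_0))$, and the entire argument will be controlled by the boolean $\iota(s_0) \in \tybool$.

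First I would establish the instance of the naturality square displayed above that is actually needed. Fix $A_*, B_* \in \tyunivptd$, $a \in A$, $b \in B$, and recall the pointed maps $[a]_* \colon \tybool_* \to A_*$ and $[b]_* \colon \tybool_* \to B_*$, which satisfy $[a](\tmtrue) = a_0$, $[a](\tmfalse) = a$ (and likewise for $b$). Abstracting a bridge variable $\bmx$, apply $F$ at the pointed graph types $\tygr*{\bmx}{\tybool_*}{A_*}{[a]_*}$ and $\tygr*{\bmx}{\tybool_*}{B_*}{[b]_*}$ and feed the result the gel-terms $\tmgel{\bmx}{\tmfalse}{a}{\tmplam[\_]{a}}$ and $\tmgel{\bmx}{\tmfalse}{b}{\tmplam[\_]{b}}$, which witness that $\tmfalse$ is related to $a$ and to $b$. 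This produces a term of type $\tygr*{\bmx}{\tybool_*}{A_*}{[a]_*} \wedge \tygr*{\bmx}{\tybool_*}{B_*}{[b]_*}$, and applying the $\wedge$-graph map $\tmsmashgr{x}$ of Lemma~\ref{lem:smash-graph-lemma} yields a term of the gel type $\tygr{\bmx}{\tybool_* \wedge \tybool_*}{A_* \wedge B_*}{[a]_* \wedge [b]_*}$. Abstracting $\bmx$ and applying \rulename{Gel-Elim} gives a path, whose endpoints I would compute from \rulename{Gel-Form-$\partial$}, \rulename{Gel-Intro-$\partial$}, and the endpoint equations for the $\wedge$-graph map: at $\bmx = \bm0$ the graph types and gel-terms collapse to their $\tybool_*$-sides and $\tmsmashgr{0}$ is the identity on $\tybool_* \wedge \tybool_*$, giving $([a]_* \wedge [b]_*)(s_0)$; at $\bmx = \bm1$ they collapse to the $A_*,B_*$-sides and $\tmsmashgr{1}$ is the identity, giving $FA_*B_*ab$. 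We thus obtain a path $\typath{A_* \wedge B_*}{([a]_* \wedge [b]_*)(s_0)}{FA_*B_*ab}$, constructed uniformly in $A_*, B_*, a, b$.

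Next I would case-split on $\iota(s_0)$. Applying $[a]_* \wedge [b]_*$ to the path $s_0 \sim \kappa(\iota(s_0))$ and composing with the naturality path of the previous paragraph, then using the definitional computation of $[a]_* \wedge [b]_*$ on $\tmsmpair{-}{-}$ together with $[a](\tmtrue) = a_0$ and $[a](\tmfalse) = a$: in the case $\iota(s_0) = \tmfalse$ we have $([a]_* \wedge [b]_*)(\kappa(\tmfalse)) = \tmsmpair{[a](\tmfalse)}{[b](\tmfalse)} = \tmsmpair{a}{b}$, so $FA_*B_*ab \sim \tmsmpair{a}{b}$; in the case $\iota(s_0) = \tmtrue$ we have $([a]_* \wedge [b]_*)(\kappa(\tmtrue)) = \tmsmpair{[a](\tmtrue)}{[b](\tmtrue)} = \tmsmpair{a_0}{b_0}$, so $FA_*B_*ab \sim \tmsmpair{a_0}{b_0}$. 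Since these pointwise paths are produced uniformly in $A_*, B_*, a, b$, iterated function extensionality assembles them into a path from $F$ to the polymorphic pairing function $\tmlam[\_]{\tmlam[\_]{\tmlam[a]{\tmlam[b]{\tmsmpair{a}{b}}}}}$ in the first case, and to the polymorphic constant function $\tmlam[A_*]{\tmlam[B_*]{\tmlam[\_]{\tmlam[\_]{\tmsmpair{a_0}{b_0}}}}}$ in the second — precisely the two candidates in the statement.

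The main obstacle is the parametricity step: arranging the gel-terms and the graph lemma so that the two endpoints of the constructed bridge reduce on the nose to $([a]_* \wedge [b]_*)(s_0)$ and $FA_*B_*ab$, which forces careful tracking of the graph-type endpoint equations and of the endpoint behaviour of the $\wedge$-graph map. Everything after that — the boolean case analysis, the two one-line definitional computations of $[c]_* \wedge [d]_*$ on pair constructors, and the tedious but routine iterated function extensionality — is bookkeeping.
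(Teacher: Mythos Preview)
Your proposal is correct and follows essentially the same approach as the paper: probe $F$ at $\tybool_*,\tybool_*$ on the inputs $\tmfalse,\tmfalse$, use the graph types $\tygr*{\bmx}{\tybool_*}{A_*}{[a]_*}$ and $\tygr*{\bmx}{\tybool_*}{B_*}{[b]_*}$ together with the Graph Lemma and $\tmungel$ to extract a path $([a]_* \wedge [b]_*)(s_0) \sim FA_*B_*ab$, and then case-split via Lemma~\ref{lem:smash-bool}. Your exposition is slightly more explicit than the paper's (you spell out the $\iota,\kappa$ round-trip and the function-extensionality wrap-up), but the argument is the same.
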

\begin{proof}
  We show that the identity of $F$ is determined by the value of
  $F(\tybool_*)(\tybool_*)(\tmfalse)(\tmfalse)$. Let $A_* : \tyunivptd$, $B_* : \tyunivptd$, $a : A$, and
  $b : B$ be given.

  We have a function $\wftm{{[a]}_*}{\tybool_* \to A_*}$ sending $\tmtrue$ to $a_0$ and $\tmfalse$ to $a$,
  likewise $\wftm{{[b]}_*}{\tybool_* \to B_*}$ sending $\tmtrue$ to $b_0$ and $\tmfalse$ to $b$. Abstract a
  bridge variable $\ctxbdim{}[x]$. We abbreviate $G^a_* \eqdef \tygr*{\bmx}{\tybool_*}{A_*}{{[a]}_*}$ and
  $G^b_* \eqdef \tygr*{\bmx}{\tybool_*}{B_*}{{[b]}_*}$. Applying $F$ at $G^a_*$ and $G^b_*$, we have the
  following.
  \[
    \wftm{FG^a_*G^b_*(\tmgel{\bmx}{\tmfalse}{a}{\tmplam[\_]{a}})(\tmgel{\bmx}{\tmfalse}{b}{\tmplam[\_]{b}})}{G^a_* \wedge G^b_*}
  \]
  At $\bmx = \bm0$, this term is $F(\tybool_*)(\tybool_*)(\tmfalse)(\tmfalse)$, and at $\bmx = \bm1$
  it is $FA_*B_*ab$. Now we apply the Graph Lemma to obtain a term in
  $\tygr{x}{\tybool_* \wedge \tybool_*}{A_* \wedge B_*}{{[a]}_* \wedge {[b]}_*}$ with the same boundary.  Finally,
  we apply $\tmungel$ to extract a path from
  $({[a]}_* \wedge {[b]}_*)(F(\tybool_*)(\tybool_*)(\tmfalse)(\tmfalse))$ to $FA_*B_*ab$.  We therefore see that
  $F$ is the pairing function if $F(\tybool_*)(\tybool_*)(\tmfalse)(\tmfalse)$ is
  $\tmsmpair{\tmfalse}{\tmfalse}$ and the constant function if it is $\tmsmpair{\tmtrue}{\tmtrue}$; by Lemma~\ref{lem:smash-bool}, we are in one of these two cases.
\end{proof}

\begin{cor}\label{cor:smash-workhorse-set}
  $\typi[A_*,B_*]{\tyunivptd}{A \to B \to A_* \wedge B_*}$ is a \emph{set}: any pair of paths between two
  elements of the type are path-equal.
\end{cor}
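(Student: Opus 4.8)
Write $T \eqdef \typi[A_*,B_*]{\tyunivptd}{A \to B \to A_* \wedge B_*}$ for the type in question. The plan is to exhibit $T$ as a retract of $\tybool$ and then invoke two standard facts: $\tybool$ is a set (e.g.\ by Hedberg's theorem, since it has decidable path equality), and retracts of sets are sets (a special case of \cite[Theorem~7.1.4]{hott-book}). Since ``$T$ is a set'' is exactly the statement that any two paths between two elements of $T$ are path-equal, this suffices.

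The retraction data comes directly from Lemma~\ref{lem:smash-workhorse}. That lemma should be read as providing, for each $F : T$, an element of the coproduct $(\typath{T}{F}{P_0}) + (\typath{T}{F}{P_1})$, where $P_0 \eqdef \tmlam[\_]{\tmlam[\_]{\tmlam[a]{\tmlam[b]{\tmsmpair{a}{b}}}}}$ is the polymorphic pairing function and $P_1 \eqdef \tmlam[A_*]{\tmlam[B_*]{\tmlam[\_]{\tmlam[\_]{\tmsmpair{a_0}{b_0}}}}}$ the polymorphic constant function: its proof makes a constructive choice, branching on whether $F(\tybool_*)(\tybool_*)(\tmfalse)(\tmfalse)$ is path-equal to $\tmsmpair{\tmfalse}{\tmfalse}$ or to $\tmsmpair{\tmtrue}{\tmtrue}$ via Lemma~\ref{lem:smash-bool}. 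Call this map $w$. Using the coproduct eliminator I would define a classifier $c \in T \to \tybool$ sending $F$ to $\tmfalse$ when $w(F)$ is a left injection and to $\tmtrue$ when it is a right injection, and in the other direction $r \in \tybool \to T$ by $r(\tmfalse) \eqdef P_0$ and $r(\tmtrue) \eqdef P_1$.

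It then remains to produce a pointwise path (a homotopy) $\typi[F]{T}{\typath{T}{r(c(F))}{F}}$. Given $F : T$, I would case-split on $w(F)$: if it is $\mathsf{inl}(p)$ with $p : \typath{T}{F}{P_0}$, then $c(F)$ computes to $\tmfalse$ and hence $r(c(F))$ to $P_0$, so the reversal of $p$ is the required path; the right injection case is symmetric. This homotopy, with $c$ as the section and $r$ as the retraction, exhibits $T$ as a retract of $\tybool$, and the two cited facts finish the argument. (If one prefers, one can equivalently take $c(F) \eqdef G(F(\tybool_*)(\tybool_*)(\tmfalse)(\tmfalse))$ with $G$ the map from Lemma~\ref{lem:smash-bool}, at the cost of re-running the same case analysis.)

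I do not expect a serious obstacle. The only delicate point is the one highlighted above: the argument depends on using Lemma~\ref{lem:smash-workhorse} as a genuine \emph{choice} of one of the two alternatives, so that the classifier $c$ is well defined, rather than as a mere propositional disjunction. This is unproblematic because the proof of that lemma already makes the choice constructively, reading off the value of $F(\tybool_*)(\tybool_*)(\tmfalse)(\tmfalse)$; but it is worth stating explicitly since the lemma's published form does not emphasize it.
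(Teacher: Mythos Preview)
Your proposal is correct and follows essentially the same idea as the paper: the paper's one-line proof says that Lemma~\ref{lem:smash-workhorse} shows the type is isomorphic to $\tybool$, which is a set. Your retract argument is a minor (and arguably cleaner) variation on this, since it extracts only the retraction $r \circ c \sim \mathrm{id}_T$ from the lemma and thereby avoids separately verifying that the two canonical elements $P_0$ and $P_1$ are distinct; otherwise the content is the same.
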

\begin{proof}
  Lemma~\ref{lem:smash-workhorse} shows that the type is isomorphic to $\tybool$, which is a set.
\end{proof}

This is everything we need to prove the final result.

\begin{proof}[Proof of Theorem~\ref{thm:smash-binary}]
  Let $\wftm{F_*}{\typi[A_*,B_*]{\tyunivptd}{A_* \wedge_* B_* \to A_* \wedge_* B_*}}$ be given. To
  characterize $F_*$, we need to characterize its behavior on each constructor of $A_* \wedge B_*$ as well as
  the proof that it preserves the basepoint of $A_* \wedge_* B_*$.

  First, by Lemma~\ref{lem:smash-workhorse}, we know that $\tmplam[a]{\tmplam[b]{FA_*B_*(\tmsmpair{a}{b})}}$
  is either pairing or constant. The values of $FA_*B_*\tmsmbasel$ and $FA_*B_*\tmsmbaser$ must be path-equal
  to $\tmsmbasel$ and $\tmsmbaser$ respectively, as $F$ is basepoint-preserving and $\tmsmbasel$
  ($\tmsmbaser$) is connected to the basepoint by $\tmsmgluel{b_0}{-}$ ($\tmsmgluer{a_0}{-}$).

  Next, observe that we can capture the behavior of $F$ on $\tmsmgluel$ by the following term, which is a path
  in $\typi[A_*,B_*]{\tyunivptd}{A \to B \to A_* \wedge_* B_*}$ between
  $\tmlam[A_*]{\tmlam[B_*]{\tmlam[\_]{\tmlam[\_]{FA_*B_*\tmsmbasel}}}}$ and
  $\tmlam[A_*]{\tmlam[B_*]{\tmlam[\_]{\tmlam[b]{FA_*B_*(\tmsmpair{a}{b})}}}}$.
  \[
    \tmplam[y]{\tmlam[A_*]{\tmlam[B_*]{\tmlam[\_]{\tmlam[b]{FA_*B_*(\tmsmgluel{b}{y})}}}}}
  \]
  By Corollary~\ref{cor:smash-workhorse-set}, this path is path-equal to any other path in this type, in
  particular path-equal to whatever we need it to be to complete this proof. The same applies to
  $\tmsmbaser$. Finally, we can apply the same trick for the basepoint path, writing it as a path in the type
  from Corollary~\ref{cor:smash-workhorse-set} as follows.
  \[
    \tmplam[y]{\tmlam[A_*]{\tmlam[B_*]{\tmlam[\_]{\tmlam[\_]{\tmpapp{f_0A_*B_*}{y}}}}}} \qedhere
  \]
\end{proof}

Now we argue that this strategy can be used to prove the $n$-ary generalization in a uniform way. (The binary
version is in fact not very useful on its own; the direct proof of commutativity for the smash product is
uncharacteristically straightforward, because the definition of $\wedge$ is completely symmetric.)

\begin{thm}\label{thm:smash-nary}
  Any function
  $\typi[A^0_*,\ldots,A^n_*]{\tyunivptd}{(A^0_* \wedge_* \cdots \wedge_* A^n_*) \to (A^0_* \wedge_* \cdots
    \wedge_* A^n_*)}$ (associating $\wedge_*$ to the right) is either the polymorphic identity or the
  polymorphic constant pointed function.

\end{thm}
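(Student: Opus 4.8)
The plan is to replay the proof of \cref{thm:smash-binary} essentially verbatim, with the binary smash product replaced throughout by the iterated smash $W_* \eqdef A^0_* \wedge_* (A^1_* \wedge_* (\cdots \wedge_* A^n_*))$; the real content is promoting the three supporting lemmas to $n$-ary form. First I would iterate the Graph Lemma (\cref{lem:smash-graph-lemma}): by induction on $n$, combining $\tmsmashgr{r}$ with the functorial action $f_* \wedge g_*$ and threading the endpoint-identity condition of each stage into the next, one obtains for every $\wfbdim{\bmr}$ a map
\[
  \tygr*{\bmr}{A^0_*}{C^0_*}{f^0_*} \wedge \cdots \wedge \tygr*{\bmr}{A^n_*}{C^n_*}{f^n_*} \to \tygr{\bmr}{W}{W'}{f^0_* \wedge \cdots \wedge f^n_*}
\]
(with $W'$ the iterated smash of the $C^i_*$) that reduces to the identity at $\bmr = \bm0$ and $\bmr = \bm1$. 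Next I would establish the $n$-ary analogue of \cref{lem:smash-bool}, namely that $(\tybool_*)^{\wedge (n+1)}$ is isomorphic to $\tybool_*$ --- concretely, that every element is path-equal to the all-$\tmtrue$ or the all-$\tmfalse$ iterated pairing --- by iterating the argument of \cref{lem:smash-bool} (equivalently, using that $\tybool_*$ is a unit for $\wedge_*$ and that $\wedge_*$ sends pointed isomorphisms to pointed isomorphisms).

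With these in hand, the $n$-ary Workhorse Lemma goes through with the same proof as \cref{lem:smash-workhorse}: given $\wftm{F}{\typi[A^0_*,\ldots,A^n_*]{\tyunivptd}{A^0 \to \cdots \to A^n \to W}}$, abstract a bridge variable $\ofb{x}$, form for each $i$ the pointed map $\wftm{{[a^i]}_*}{\tybool_* \to A^i_*}$ sending $\tmtrue$ to the basepoint and $\tmfalse$ to $a^i$, apply $F$ at the graphs $\tygr*{\bmx}{\tybool_*}{A^i_*}{{[a^i]}_*}$ and the witnesses $\tmgel{\bmx}{\tmfalse}{a^i}{\tmplam[\_]{a^i}}$, run the result through the iterated Graph Lemma, and apply $\tmungel$; this yields a path from $({[a^0]}_* \wedge \cdots \wedge {[a^n]}_*)\bigl(F(\tybool_*)\cdots(\tmfalse)\cdots\bigr)$ to $FA^0_*\cdots A^n_*\,a^0 \cdots a^n$, and the $n$-ary smash-of-booleans result then forces $F$ to be the polymorphic iterated pairing $\tmlam[A^0_*]{\cdots\tmlam[a^n]{\tmsmpair{a^0}{\cdots}}}$ or the polymorphic constant function according to the value of $F(\tybool_*)\cdots(\tmfalse)\cdots$. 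As in \cref{cor:smash-workhorse-set}, this presents $\typi[A^0_*,\ldots,A^n_*]{\tyunivptd}{A^0 \to \cdots \to A^n \to W}$ as isomorphic to $\tybool$, hence as a set; the same probing argument, applied also to the type arguments that do not appear as term arguments, shows likewise that every ``partial'' polymorphic function type $\typi[A^0_*,\ldots,A^n_*]{\tyunivptd}{A^0 \to \cdots \to A^k \to W}$ is a set.

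Finally, for the theorem proper, let $\wftm{F_*}{\typi[A^0_*,\ldots,A^n_*]{\tyunivptd}{W_* \to W_*}}$ be given, and characterize its underlying function $F$ by a nested induction along the $n$ layers of smash product in $W$, repeating the argument of \cref{thm:smash-binary} at each layer. On the innermost, fully nested pairing constructor the relevant polymorphic function $\tmlam[A^0_*]{\cdots\tmlam[a^n]{FA^0_*\cdots A^n_*(\tmsmpair{a^0}{\cdots})}}$ lies in the $n$-ary Workhorse type, hence is the iterated pairing or the polymorphic constant; each left or right hub constructor occurring at some depth is, for fixed choices of the enclosing point arguments, connected through the surrounding glue generators to a pairing element already characterized by the previous case, which --- via function extensionality (\cref{lem:funext-over}) --- pins $F$'s action on it; and each glue constructor, as well as the basepoint path of $F_*$, is captured as a path in one of the set-valued (partial) polymorphic function types above whose endpoints have been pinned by the preceding cases, hence is path-equal to the corresponding datum of the polymorphic identity (respectively constant) pointed function. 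Assembling by the iterated higher-inductive elimination principle and function extensionality then identifies $F_*$ with one of the two stated functions. I expect the main obstacle to be organizational rather than conceptual: carrying out the iterated Graph Lemma with its endpoint bookkeeping, and managing the nested higher-inductive induction with a left/right hub pair and a left/right glue pair at every depth --- though once one observes that all higher-dimensional coherence obligations live in set-valued polymorphic function types and so vanish automatically, the construction scales uniformly in $n$.
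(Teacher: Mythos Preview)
Your plan is correct in substance and close to the paper's, but the paper's organization is tighter and sidesteps a small misidentification in yours. Rather than separately establishing that various auxiliary types are sets and then running a nested induction, the paper proves a single statement by induction on $i \le n+1$: every element of
\[
\typi[A^0_*,\ldots,A^n_*]{\tyunivptd}{A^0 \to \cdots \to A^{n-i} \to (A^{n-i+1}_* \wedge_* \cdots \wedge_* A^n_*) \to (A^0_* \wedge_* \cdots \wedge_* A^n_*)}
\]
is either iterated pairing or constant. The base case $i=0$ is precisely your $n$-ary workhorse lemma (and the paper obtains it by repeated application of the \emph{binary} Graph Lemma, so there is no need to package an iterated Graph Lemma as a separate statement); for $i>0$ one peels a single smash layer off the domain by the argument of \cref{thm:smash-binary}, using the induction hypothesis at $i-1$ both to characterize the pairing case and to supply the set-ness needed for the glue and basepoint-path cases. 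The instance $i = n+1$ is the theorem.

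The discrepancy with your outline is in the shape of the intermediate types. Your ``partial'' types $A^0 \to \cdots \to A^k \to W$ are not the ones that actually arise: the $\tmsmgluel$ constructor at depth $d$ takes an element of the remaining smash $A^{d+1}_* \wedge_* \cdots \wedge_* A^n_*$, so abstracting the glue case yields a path in a type that still carries that smash as an argument---one of the paper's indexed types, not one of yours. Your probing justification for the set-ness of $A^0 \to \cdots \to A^k \to W$ is also underspecified, since for $k < n$ there is no term argument in which to feed a $\tmgel$ witness for $A^{k+1},\ldots,A^n$. Folding the set-ness proof into the same induction on $i$, as the paper does, resolves both issues at once and makes explicit the key organizational point the paper emphasizes after the proof: the smash inductions at different levels are sequential and non-overlapping, not nested.
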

\begin{proof}
  We show by induction on $i \le n + 1$ that any
  \[
    \typi[A^0_*,\ldots,A^n_*]{\tyunivptd}{A^0 \to \cdots \to A^{n-i} \to (A^{n-i+1}_* \wedge_* \cdots \wedge_* A^n_*) \to (A^0_* \wedge_* \cdots
      \wedge_* A^n_*)}
  \]
  is either given by iterated pairing or constant. For $i = 0$, it follows from a simple $n$-ary
  generalization of the workhorse lemma (instantiating each type argument with a graph and applying the binary
  Graph Lemma repeatedly). For $i > 0$, it follows from the induction hypothesis by the same argument as in
  the proof of Theorem~\ref{thm:smash-binary}.
\end{proof}

The key here is that we are never involved in an iterated induction on smash products: for each $i$ in the
proof of Theorem~\ref{thm:smash-nary}, we have an argument by induction on one occurrence of the smash
product, but these arguments do not overlap.

\section{Computational interpretation}\label{sec:computational}

We now develop the computational interpretation underlying parametric cubical type theory, building on the
work of Allen for Martin-L\"{o}f type theory~\cite{allen87} and Angiuli \etal\ for cartesian cubical type theory~\cite{angiuli18}. We closely follow the presentation in Angiuli's thesis~\cite{angiuli19}; we will give a
reasonably complete tour through the definitions, but rely on~\cite{angiuli19} for many results that are
essentially unaffected by the addition of bridge intervals and parametricity primitives.

An interpretation in these frameworks is built from two components: a deterministic \emph{operational
  semantics} on closed untyped terms and a \emph{value type system}. The former explains the evaluation of
terms; the latter explains which closed values are names for types and which closed values are elements of
said types. Given these two components, we derive an interpretation of the open judgments---$\wftype[\GG]{A}$
and so on---by extending the value type system first to arbitrary closed terms (roughly, a term is well-typed
when it evaluates to a well-typed value) and then to open terms (an open term is well-typed when its closed
instances are well-typed).

\subsection{Interval contexts and terms}

In the above and the following, \emph{closed} refers to terms that do not contain term variables but that may
contain interval variables. It is essential to consider evaluation of terms containing interval variables in
order to accommodate the terms $\coe{x.A}{r}{s}{M}$ and $\tmungel{\bmx.N}$, which evaluate terms (here $A$ and
$N$) under interval binders. We use $\GPS$ to denote contexts consisting solely of path and bridge interval
variables.
\[
  \GPS \Coloneqq \ctxnil \mid \ctxpdim{\GPS}[x] \mid \ctxbdim{\GPS}[x]
\]
We write $\dctxsubst{\GPS'}{\Gps}{\GPS}$ for interval substitutions, which take terms $M$ in context $\GPS$ to
terms $\usubstdims{M}{\Gps}$ in context $\GPS'$. As always, path interval variables are structural and bridge
interval variables are affine; $\Gps$ cannot identify two bridge variables except by sending both to $\bm0$ or
$\bm1$.

\begin{defi}%
  \label{def:closed-interval-judgments}
  The path interval term judgment $\wfpdim<\GPS>{r}$ is defined to hold when either $r \in \{0,1\}$ or $r = x$
  where $(x : \BI) \in \GPS$; the bridge interval term judgment $\wfbdim<\GPS>{\bmr}$ is defined likewise. The
  interval substitution judgment is then inductively generated by the following rules.
  \begin{mathpar}
    \inferrule
    { }
    {\wfsubst<\GPS'>{\substnil}{\ctxnil}}
    \and
    \inferrule
    {\wfsubst<\GPS'>{\Gps}{\GPS} \\
      \wfpdim<\GPS'>{r}}
    {\wfsubst<\GPS'>{(\substpdim{\Gps}{r}[x])}{(\ctxpdim{\GPS}[x])}}
    \and
    \inferrule
    {\wfbdim<\GPS'>{\bmr} \\
      \wfsubst<\ctxres{\GPS'}{\bmr}>{\Gps}{\GPS}}
    {\wfsubst<\GPS'>{(\substbdim{\Gps}{\bmr}[\bmx])}{(\ctxbdim{\GPS}[\bmx])}}
  \end{mathpar}
  The judgment $\wfcst<\GPS>{\Gx}$ is likewise inductively generated by constraints of the form
  $\wfcst<\GPS>{(r = s)}$ and $\wfcst<\GPS>{(\bmr = \bm\Ge)}$.
\end{defi}

\begin{rem}
  We have an operator $\forall \bmx. -$ on constraints defined as follows.
  \begin{align*}
    \forall \bmx. (r = s) &\eqdef (r = s) \\
    \forall \bmx. (\bmx = \bm\Ge) &\eqdef (\bm0 = \bm1) \\
    \forall \bmx. (\bmr = \bm\Ge) &\eqdef (\bmr = \bm\Ge) \quad \text{if $\bmr \neq \bmx$}
  \end{align*}
\end{rem}

\subsection{Operational semantics}

An \emph{operational semantics}, which specifies the evaluation of closed terms, is defined by judgments
$\isval{M}$ (``$M$ is a value'') and $M \steps M'$ (``$M$ steps to $M'$'') operating on closed terms. We write
$M \msteps M'$ to mean that $M$ steps to $M'$ in zero or more steps and $M \evals V$ to mean that
$M \msteps V$ for some $\isval{V}$.

We give the defining rules for our operational semantics in \cref{fig:opsem}. We show only those rules that
involve the parametricity primitives; for everything else, we refer to~\cite[\S4.1]{angiuli19}. Although we
choose a specific operational semantics here, the interpretation goes through for any operational semantics
that extends it; we need only the presence of these rules, not the absence of others.

\begin{figure}
  \begin{mathpar}
      \inferrule
      { }
      {\isval{\tybridge{\bmx.A}{M_0}{M_1}}}
      \and
      \inferrule
      { }
      {\isval{\tmblam[x]{P}}}
      \and
      \inferrule
      {Q \steps Q'}
      {\tmbapp{Q}{\bmr} \steps \tmbapp{Q'}{\bmr}}
      \and
      \inferrule
      { }
      {\tmbapp{(\tmblam[x]{P})}{\bmr} \steps \usubstdim{P}{\bmr}{\bmx}}
      \and
      \inferrule
      { }
      {\hcomp{\tybridge{\bmx.A}{M_0}{M_1}}{r}{s}{M}{\sys{\xi_i}{y.N_i}} \steps \\\\
        \raisebox{-0.8em}{\tmblam[x]{\hcomp{A}{r}{s}{\tmbapp{M}{\bmx}}{\sys{\xi_i}{y.\tmbapp{N_i}{\bmx}},\tube{\bmx=\bm0}{\_.M_0},\tube{\bmx=\bm1}{\_.M_1}}}}}
      \and
      \inferrule
      { }
      {\coe{y.\tybridge{\bmx.A}{M_0}{M_1}}{r}{s}{Q} \steps \tmblam[x]{\comp{y.A}{r}{s}{\tmbapp{Q}{\bmx}}{\tube{\bmx=\bm0}{y.M_0},\tube{\bmx=\bm1}{y.M_1}}}}
      \and
      \inferrule
      {\Ge \in \{0,1\}}
      {\tmextent{\bm\Ge}{M}{a_0.N_0}{a_1.N_1}{a_0.a_1.\overline{a}.\overline{N}} \steps \usubst{N_\Ge}{M}{a}}
      \and
      \inferrule
      { }
      {\tmextent{\bmx}{M}{a_0.N_0}{a_1.N_1}{a_0.a_1.\overline{a}.\overline{N}} \steps \tmbapp{\usubst{\usubst{\usubst{\overline{N}}{\usubstdim{M}{\bm0}{\bmx}}{a_0}}{\usubstdim{M}{\bm1}{\bmx}}{a_1}}{\tmblam[x]{M}}{\overline{a}}}{x}}
      \and
      \inferrule
      {\Ge \in \{0,1\}}
      {\tygel{\bm\Ge}{A_0}{A_1}{a_0.a_1.R} \steps A_\Ge}
      \and
      \inferrule
      { }
      {\isval{\tygel{\bmx}{A_0}{A_1}{a_0.a_1.R}}}
      \and
      \inferrule
      {\Ge \in \{0,1\}}
      {\tmgel{\bm\Ge}{M_0}{M_1}{P} \steps M_\Ge}
      \and
      \inferrule
      { }
      {\isval{\tmgel{\bmx}{M_0}{M_1}{P}}}
      \and
      \inferrule
      {Q \steps Q'}
      {\tmungel{\bmx.Q} \steps \tmungel{\bmx.Q'}}
      \and
      \inferrule
      { }
      {\tmungel{\bmx.\tmgel{\bmx}{M_0}{M_1}{P}} \steps P}
      \and
      \mprset{vskip=0.1em}
      \inferrule
      {M_{\Ge,y} \eqdef \hcomp{A_\Ge}{r}{y}{\usubstdim{Q}{\bm\Ge}{\bmx}}{\sys{\usubstdim{\xi_i}{\bm\Ge}{\bmx}}{y.\usubstdim{Q_i}{\bm\Ge}{\bmx}}} \\
        P \eqdef \comp{y.\usubst{R}{M_{0,y},M_{1,y}}{a_0,a_1}}{r}{s}{\tmungel{\bmx.Q}}{\sys{\forall \bmx. \xi_i}{y.\tmungel{\bmx.Q_i}}}}
      {\hcomp{\tygel{\bmx}{A_0}{A_1}{a_0.a_1.R}}{r}{s}{Q}{\sys{\xi_i}{y.Q_i}} \steps
        \tmgel{\bmx}{M_{0,s}}{M_{1,s}}{P}}
      \and
      \inferrule
      {M_{\Ge,y} \eqdef \coe{y.A_\Ge}{r}{y}{\usubstdim{Q}{\bm\Ge}{\bmx}} \\
        P \eqdef \coe{y.\usubst{R}{M_{0,y},M_{1,y}}{a_0,a_1}}{r}{s}{\tmungel{\bmx.Q}}}
      {\coe{y.\tygel{\bmx}{A_0}{A_1}{a_0.a_1.R}}{r}{s}{Q} \steps \tmgel{\bmx}{M_{0,s}}{M_{1,s}}{P}}
    \end{mathpar}
  \caption{Operational semantics of parametric cubical type theory}%
  \label{fig:opsem}
\end{figure}

\subsection{Judgments from a value type system}

A \emph{value type system} specifies the values that are names for types and the values that each such type
classifies. For practical purposes, it useful to first introduce \emph{candidate value type systems} and then
impose additional conditions under which a candidate is an actual type system.

\begin{defi}
  A \emph{candidate value type system} $\Gt$ is a quaternary relation $\Gt(\GPS,V,V',\Gf)$ ranging over contexts
  $\GPS$, values $V,V'$ in context $\GPS$, and binary relations $\Gf$ on values in context $\GPS$.
\end{defi}

We read an instance $\Gt(\GPS,V,V',\Gf)$ of the relation as specifying that (1) the values $V$ and $V'$ are
equal types in context $\GPS$ and that (2) these type names stand for the relation $\Gf$: values $W$ and $W'$
are equal elements of $V$ (likewise $V'$) in context $\GPS$ when $\Gf(W,W')$ holds.

Given a candidate value type system, we derive \emph{candidate judgments} extending the defining relations to
non-value terms. In~\cite{allen87}, a term is a type (resp.\ well-typed) when it evaluates to a type value
(resp.\ well-typed value). In a setting with interval variables, it becomes necessary to require a stronger
``coherent evaluation'' condition: to be well-typed, a term must not merely evaluate to a well-typed value,
but do so in a way that interacts in a sensible way with interval substitutions. First, we define
``incoherent'' extensions of value type systems and terms to terms.

\begin{defi}
  Given a candidate value type system, we write $\syseval{\Gt}(\GPS,A,A',\Gf)$ for (possibly non-value) terms $A,A'$
  to mean that $A \evals V$ and $A' \evals V'$ for some $V,V'$ with $\Gt(\GPS,V,V',\Gf)$. Given a relation
  $\Gf$ on values, we define a relation $\releval{\Gf}$ on terms: $\releval{\Gf}(M,M')$ holds when
  $M \evals V$ and $M' \evals V'$ for some $V,V'$ with $\Gf(V,V')$.
\end{defi}

To cut down to the coherently well-behaved types and terms, we introduce a notion of \emph{$\GPS$-relation}, a
family of relations indexed by the substitutions into $\GPS$.

\begin{defi}
  A $\GPS$-relation $\Ga$ is a family of binary relations $\Ga_\Gps$, indexed by substitutions
  $\dctxsubst{\GPS'}{\Gps}{\GPS}$ into $\GPS$ and where each $\Ga_\Gps$ relates terms in context
  $\GPS'$. Given a $\GPS$-relation $\Ga$ and $\dctxsubst{\GPS'}{\Gps}{\GPS}$, we define a $\GPS'$-relation
  $\usubstdims{\Ga}{\Gps}$ by ${(\usubstdims{\Ga}{\Gps})}_{\Gps'} \eqdef \Ga_{\Gps\Gps'}$.
\end{defi}

We now define the coherent \emph{candidate judgments}: $\candeqtype{\GPS}{A}{A'}{\Ga}{\Gt}$, which asserts
that $A$ and $A'$ coherently evaluate to equal type names standing for the $\GPS$-relation $\Ga$, and
$\candeqtm{\GPS}{M}{M'}{\Ga}$, which asserts that $M$ and $M'$ coherently evaluate to values equal in $\Ga$.

\begin{defi}
  We define the candidate judgments as follows.
  \begin{itemize}[label=$\triangleright$]
  \item $\candeqtype{\GPS}{A}{A'}{\Ga}{\Gt}$ holds when for every $\dctxsubst{\GPS_1}{\Gps_1}{\GPS}$ and
    $\dctxsubst{\GPS_2}{\Gps_2}{\GPS_1}$, we have
    \begin{enumerate}
    \item $\usubstdims{A}{\Gps_1} \evals A_1$ and $\usubstdims{A'}{\Gps_1} \evals A_1'$ for some $A_1,A_1'$,
    \item there is some $\Gf$ such that $\syseval{\tau}(\GPS_2,-,-,\Gf)$ relates
      $(\usubstdims{A_1}{\Gps_2},\usubstdims{A}{\Gps_1\Gps_2})$ and its reverse,
      $(\usubstdims{A_1'}{\Gps_2},\usubstdims{A'}{\Gps_1\Gps_2})$ and its reverse, and
      $(\usubstdims{A_1}{\Gps_2}, \usubstdims{A'_1}{\Gps_2})$,
    \end{enumerate}
    and $\Ga$ is a $\GPS$-relation on values such that
    $\syseval{\Gt}(\GPS',\usubstdims{A}{\Gps},\usubstdims{A'}{\Gps},\Ga_\Gps)$ for all
    $\dctxsubst{\GPS'}{\Gps}{\GPS}$.
  \item $\candeqtm{\GPS}{M}{M'}{\Ga}$ holds when for every $\dctxsubst{\GPS_1}{\Gps_1}{\GPS}$ and
    $\dctxsubst{\GPS_2}{\Gps_2}{\GPS_1}$, we have
    \begin{enumerate}
    \item $\usubstdims{M}{\Gps_1} \evals M_1$ and $\usubstdims{M'}{\Gps_1} \evals M_1'$ for some $M_1,M_1'$,
    \item $\releval{(\Ga_{\Gps_1\Gps_2})}$ relates $(\usubstdims{M_1}{\Gps_2},\usubstdims{M}{\Gps_1\Gps_2})$ and
      its reverse, $(\usubstdims{M_1'}{\Gps_2},\usubstdims{M'}{\Gps_1\Gps_2})$ and its reverse, and
      $(\usubstdims{M_1}{\Gps_2}, \usubstdims{M'_1}{\Gps_2})$.
    \end{enumerate}
  \end{itemize}
\end{defi}

\noindent
The conditions in the definition of $\candeqtype{\GPS}{A}{A'}{\Ga}{\Gt}$, for example, ask that we have the
square shown below: whether we apply $\Gps_2$ to $A\Gps_1$ or first evaluate and then apply $\Gps_2$, we get
the same result up to the equality defined by $\syseval{\tau}$.
\[
  \begin{tikzcd}
    A\Gps_1 \ar{d}[left]{-\Gps_2} \ar[Rightarrow]{r} & A_1 \ar{d}[right]{-\Gps_2} \\
    A\Gps_1\Gps_2 \ar[phantom]{r}{\syseval{\tau}} & A_1\Gps_2
  \end{tikzcd}
\]
Note that the candidate judgments are stable under interval substitution by definition: for example, if
$\candeqtm{\GPS}{M}{M'}{\Ga}$, then $\candeqtm{\GPS'}{M\Gps}{M'\Gps}{\Ga\Gps}$ for any
$\dctxsubst{\GPS'}{\Gps}{\GPS}$.

A candidate is a value type system when the typing relation satisfies several additional conditions, which
require that each type names at most one relation, that the type and element relations are partial equivalence
relations, and that any value type is \emph{coherently} a type.

\begin{defi}
  A \emph{value type system} $\Gt$ is a candidate value type system satisfying the following.
  \begin{description}
  \item[Unicity] If $\Gt(\GPS,V,V',\Gf)$ and $\Gt(\GPS,V,V',\Gf')$, then $\Gf = \Gf'$.
  \item[PER] $\Gt(\GPS,-,-,\Gf)$ is a partial equivalence relation (PER) for all $\GPS,\Gf$.
  \item[PER-valuation] If $\Gt(\GPS,V,V',\Gf)$, then $\Gf$ is a PER\@.
  \item[Value-coherence] If $\Gt(\GPS,V,V',\Gf)$, then $\candeqtype{\GPS}{V}{V'}{\Ga}{\Gt}$ for some $\Ga$.
  \end{description}
\end{defi}

Likewise, we will require that the values related by the relations associated to types are in fact coherently
related.

\begin{defi}
  We say a $\GPS$-relation $\Ga$ is \emph{value-coherent} and write $\isvalcoh{\Ga}$ if $\Ga_\Gps(V,V')$
  implies $\candeqtm{\GPS'}{\usubstdims{V}{\Gps}}{\usubstdims{V'}{\Gps}}{\usubstdims{\Ga}{\Gps}}$ for all
  $\Gps$ and $V,V'$.
\end{defi}

Given a value type system, we obtain typing judgments first on closed and then on open terms. For types, we
also distinguish between \emph{pretypes} and \emph{types}, the latter of which are required to support Kan
operations. For the following series of definitions, we fix an ambient value type system $\Gt$.

\begin{defi}
  We define the closed judgments as follows.
  \begin{itemize}[label=$\triangleright$]
  \item $\eqpretype<\GPS>{A}{A'}$ holds when $\candeqtype{\GPS}{A}{A'}{\Ga}{\Gt}$ for some value-coherent
    $\Ga$.
  \item Presupposing $\eqpretype<\GPS>{A}{A}$, $\eqtm<\GPS>{M}{M'}{A}$ holds when
    $\candeqtype{\GPS}{A}{A}{\Ga}{\Gt}$ with $\candeqtm{\GPS}{M}{M'}{\Ga}$.
  \end{itemize}
\end{defi}

\noindent
We define $\wfpretype<\GPS>{A}$ to mean $\eqpretype<\GPS>{A}{A}$, likewise $\wftm<\GPS>{M}{A}$ to mean
$\eqtm<\GPS>{M}{M}{A}$. We will abbreviate future reflexive judgments in this fashion without comment. When we
have $\wfpretype<\GPS>{A}$, we write $\sem{A}$ for the (necessarily unique) value $\GPS$-relation assigned to
$A$ by the value type system.

We now extend the closed judgments to \emph{open judgments}, defined on terms containing arbitrary variables.
We do so by means of a \emph{context instantiation judgment} $\eqlist<\GPS>{\Gg}{\Gg'}{\GG}$, which specifies
the ways a general context $\GG$ may be instantiated by closed terms over $\GPS$.

\begin{defi}
  We define the context instantiations $\eqlist<\GPS>{\Gg}{\Gg'}{\GG}$ inductively as follows.
  \begin{itemize}[label=$\triangleright$]
  \item $\eqlist<\GPS>{\substnil}{\substnil}{\ctxnil}$.
  \item $\eqlist<\GPS>{(\substsnoc{\Gg}{M}[a])}{(\substsnoc{\Gg'}{M'}[a])}{(\ctxsnoc{\GG}[a]{A})}$ when
    $\eqlist<\GPS>{\Gg}{\Gg'}{\GG}$ and $\eqtm<\GPS>{M}{M'}{A\Gg}$.
  \item $\eqlist<\GPS>{(\substpdim{\Gg}{r}[x])}{(\substpdim{\Gg}{r}[x])}{(\ctxpdim{\GG}[x])}$ when
    $\eqlist<\GPS>{\Gg}{\Gg'}{\GG}$ and $\wfpdim<\GPS>{r}$.
  \item $\eqlist<\GPS>{(\substbdim{\Gg}{\bmr}[\bmx])}{(\substbdim{\Gg}{\bmr}[\bmx])}{(\ctxbdim{\GG}[x])}$ when
    $\wfbdim<\GPS>{\bmr}$ and $\eqlist<\ctxres{\GPS}{\bmr}>{\Gg}{\Gg'}{\GG}$.
  \item $\eqlist<\GPS>{\Gg}{\Gg'}{(\ctxcst{\GG}{\Gx})}$ when $\eqlist<\GPS>{\Gg}{\Gg'}{\GG}$ and $\Gx\Gg$ is
    true.
  \end{itemize}
\end{defi}

\noindent
The open type and term judgments are then defined to hold when their closed instantiations hold.

\begin{defi}
  We define the open judgments as follows.
  \begin{itemize}[label=$\triangleright$]
  \item $\eqpretype[\GG]{A}{A'}$ holds when
    $\eqpretype<\GPS>{\usubstlist{A}{\Gg}}{\usubstlist{A'}{\Gg'}}$ for
    all $\eqlist<\GPS>{\Gg}{\Gg'}{\GG}$.
  \item $\eqtm[\GG]{M}{M'}{A}$ holds when
    $\eqtm<\GPS>{\usubstlist{M}{\Gg}}{\usubstlist{M'}{\Gg'}}{\usubstlist{A}{\Gg}}$
    for all $\eqlist<\GPS>{\Gg}{\Gg'}{\GG}$.
  \end{itemize}
\end{defi}

\noindent
We note that, in contrast, we define the open \emph{interval} judgments without reference to the terms in the
context $\GG$. It is therefore not the case that, for example, $\eqpdim[\ctxsnoc{}[v]{\bot}]{0}{1}$; interval
judgments are prior to term judgments.

\begin{defi}
  The judgment $\wfpdim[\GG]{r}$ is defined to hold when either $r \in \{0,1\}$ or $(\ctxpdim{}[x]) \in \GG$;
  an equality $\eqpdim[\GG]{r}{s}$ is defined to hold when $\wfpdim[\GG]{r,s}$ are in the equivalence relation
  closure of the constraints appearing in $\GG$. The judgments $\eqbdim[\GG]{\bmr}{\bms}$ and
  $\eqcst[\GG]{\Gx}{\Gx'}$ are defined likewise.
\end{defi}

\begin{defi}
  We define the well-formed contexts inductively.
  \begin{itemize}[label=$\triangleright$]
  \item $\eqctx{\ctxnil}{\ctxnil}$.
  \item $\eqctx{(\ctxsnoc{\GG}[a]{A})}{(\ctxsnoc{\GG'}[a]{A'})}$ when $\eqctx{\GG}{\GG'}$ and
    $\eqpretype[\GG]{A}{A'}$.
  \item $\eqctx{(\ctxpdim{\GG}[x])}{(\ctxpdim{\GG'}[x])}$ when $\eqctx{\GG}{\GG'}$.
  \item $\eqctx{(\ctxbdim{\GG}[x])}{(\ctxbdim{\GG'}[x])}$ when $\eqctx{\GG}{\GG'}$.
  \item $\eqctx{(\ctxcst{\GG}{\Gx})}{(\ctxcst{\GG'}{\Gx})}$ when
    $\eqctx{\GG}{\GG'}$ and $\eqcst[\GG]{\Gx}{\Gx'}$.
  \end{itemize}
\end{defi}

\noindent
A pretype $A$ is a \emph{(Kan) type} when it supports the Kan operations, that is, when the operators $\coe$
and $\hcomp$ are well-typed at $A$ and satisfy the necessary equations.

\begin{defi}[Kan types]
  Presupposing $\eqpretype<\GPS>{A}{A'}$, we say $\eqtype<\GPS>{A}{A'}$ when the following conditions hold.
  \begin{itemize}[label=$\triangleright$]
  \item For any $\dctxsubst{(\ctxpdim{\GPS'}[x])}{\Gps}{\GPS}$, if $\wfpdim<\GPS'>{r,s}$ and
    $\eqtm<\GPS'>{M}{M'}{\usubstdim{A\Gps}{r}{x}}$, then
    \begin{itemize}
    \item $\eqtm<\GPS'>{\coe{x.A\Gps}{r}{s}{M}}{\coe{x.A'\Gps}{r}{s}{M'}}{\usubstdim{A\Gps}{s}{x}}$,
    \item $\eqtm<\GPS'>{\coe{x.A\Gps}{r}{r}{M}}{M}{\usubstdim{A\Gps}{r}{x}}$,
    \end{itemize}
  \item For any $\dctxsubst{\GPS'}{\Gps}{\GPS}$, if $\wfpdim<\GPS'>{r,s}$, $n \in \BN$, $\wfcst<\GPS'>{\Gx_i}$
    for all $i < n$, and
    \begin{itemize}
    \item $\eqtm<\GPS'>{M}{M'}{A\Gps}$
    \item $\eqtm<\ctxpdim{\GPS'}[x]>{N_i}{N'_j}{A\Gps}$ for all $i,j < n$,
    \item $\eqtm<\GPS'>{M}{\usubstdim{N_i}{r}{x}}{A\Gps}$ for all $i < n$,
    \end{itemize}
    then
    \begin{itemize}
    \item $\eqtm<\GPS'>{\hcomp{A\Gps}{r}{s}{M}{\sys{\Gx_i}{x.N_i}}}{\hcomp{A'\Gps}{r}{s}{M'}{\sys{\Gx_i}{x.N_i'}}}{A\Gps}$,
    \item $\eqtm<\GPS'>{\hcomp{A\Gps}{r}{s}{M}{\sys{\Gx_i}{x.N_i}}}{\usubstdim{N_i}{s}{x}}{A\Gps}$ if $\Gx_i$ is true,
    \item $\eqtm<\GPS'>{\hcomp{A\Gps}{r}{r}{M}{\sys{\Gx_i}{x.N_i}}}{M}{A\Gps}$.
    \end{itemize}
  \end{itemize}
\end{defi}

\noindent
The extension of the type judgment to open terms is defined as for the pretype judgment: $\eqtype[\GG]{A}{A'}$
holds when $\eqtype<\GPS>{\usubstlist{A}{\Gg}}{\usubstlist{A'}{\Gg'}}$ for all
$\eqlist<\GPS>{\Gg}{\Gg'}{\GG}$.

We may also define the open substitution judgment following the pattern of the
instantiation judgment.

\begin{defi}%
  \label{def:substitutions}
  We define the substitutions $\eqsubst[\GG]{\Gg}{\Gg'}{\GG}$ inductively as follows.
  \begin{itemize}[label=$\triangleright$]
  \item $\eqsubst[\GG]{\substnil}{\substnil}{\ctxnil}$.
  \item $\eqsubst[\GG]{(\substsnoc{\Gg}{M}[a])}{(\substsnoc{\Gg'}{M'}[a])}{(\ctxsnoc{\GG}[a]{A})}$ when
    $\eqsubst[\GG]{\Gg}{\Gg'}{\GG}$ and $\eqtm[\GG]{M}{M'}{A\Gg}$.
  \item $\eqsubst[\GG]{(\substpdim{\Gg}{r}[x])}{(\substpdim{\Gg}{r'}[x])}{(\ctxpdim{\GG}[x])}$ when
    $\eqsubst[\GG]{\Gg}{\Gg'}{\GG}$ and $\eqpdim[\GG]{r}{r'}$.
  \item $\eqsubst[\GG]{(\substbdim{\Gg}{\bmr}[\bmx])}{(\substbdim{\Gg}{\bmr'}[\bmx])}{(\ctxbdim{\GG}[x])}$ when
    $\eqbdim[\GG]{\bmr}{\bmr'}$ and $\eqsubst<\ctxres{\GG}{\bmr}>{\Gg}{\Gg'}{\GG}$.
  \item $\eqsubst[\GG]{\Gg}{\Gg'}{(\ctxcst{\GG}{\Gx})}$ when $\eqsubst[\GG]{\Gg}{\Gg'}{\GG}$ and $\Gx\Gg$ is
    true.
  \end{itemize}
\end{defi}

\noindent
Now that we have laid out the extrapolation of open judgments from a value type system, it remains to
construct a particular type system that will validate the inference rules we presented in
\cref{sec:cubical,sec:parametric}.

\subsection{Constructing a value type system}%
\label{sec:computational:construct}

We obtain a value type system by a fixed-point construction, first defining the least candidate value type system closed
under our desired type formers and then showing that it constitutes a value type system. To start, we define the
pieces corresponding to each type former. Relative to~\cite{angiuli19}, the novelties here are the
$\tybridge$- and $\tygel$-types.
\[
  \begin{array}{lcl}
    \sysbridge{\Gt} \eqdef \\
    \;\{(\GPS,\tybridge{\bmx.A}{M_0}{M_1},\tybridge{\bmx.A'}{M'_0}{M'_1},\Gf) \mid \\
    \;\;\; \exists \Ga. \ \candeqtype{\ctxbdim{\GPS}[x]}{A}{A'}{\Ga}{\Gt} \land \isvalcoh{\Ga} \\
    \;\;\; { } \land (\forall \Ge \in \{0,1\}.\ \candeqtm{\GPS}{M_\Ge}{M'_\Ge}{\usubstdim{\Ga}{\bm\Ge}{\bmx}}) \\
    \;\;\; { } \land \Gf = \{(\tmblam[x]{M},\tmblam[x]{M'}) \mid \candeqtm{\ctxbdim{\GPS}[x]}{M}{M'}{\Ga} \land \forall \Ge.\ \candeqtm{\GPS}{\usubstdim{M}{\bm\Ge}{\bmx}}{M_\Ge}{\usubstdim{\Ga}{\bm\Ge}{\bmx}} \} \} \\
    ~\\
    \sysgel{\Gt} \eqdef \\
    \;\{(\GPS,\tygel{\bmx}{A_0}{A_1}{a_0.a_1.R},\tygel{\bmx}{A'_0}{A'_1}{a_0.a_1.R'},\Gf) \mid \\
    \;\;\; \exists \Ga^0,\Ga^1,\Gb^{(-,-,-,-,-)}. \\
    \;\;\; (\forall \Ge.\ \candeqtype{\ctxres{\GPS}{\bmx}}{A_\Ge}{A_\Ge'}{\Ga}{\Gt} \land \isvalcoh{\Ga^\Ge}) \\
    \;\;\; { } \land (\forall \dctxsubst{\GPS'}{\Gps}{(\ctxres{\GPS}{\bmx})}.\ \forall M_0,M_1,M_0',M_1'.\ (\forall \Ge.\ \Ga^\Ge_\Gps(M_\Ge,M_\Ge')) \implies { } \\
    \;\;\; \quad\;\;  \candeqtype{\GPS'}{\usubst{R}{M_0,M_1}{a_0,a_1}}{\usubst{R'}{M'_0,M'_1}{a_0,a_1}}{\Gb^{(\Gps,M_0,M_1,M_0',M_1')}}{\Gt} \\
    \;\;\; \quad\;\; { } \land \isvalcoh{\Gb^{(\Gps,M_0,M_1,M_0',M_1')}})\\
    \;\;\; { } \land \Gf = \{(\tmgel{\bmx}{M_0}{M_1}{P},\tmgel{\bmx}{M'_0}{M'_1}{P'}) \mid \\
    \;\;\; \quad \quad \quad\quad  \forall \Ge. (\candeqtm{\ctxres{\GPS}{\bmx}}{M_\Ge}{M'_\Ge}{\Ga^\Ge}) \land \candeqtm{\ctxres{\GPS}{\bmx}}{P}{P'}{\Gb^{(\id,M_0,M_1,M_0',M_1')}} \} \}
  \end{array}
\]
Next, we have an operator on candidate value type systems that applies one level of type formers.
\[
  \begin{array}{lcl}
    K(\Gt) &\eqdef& \sysbridge{\Gt} \cup \sysgel{\Gt} \cup \cdots
  \end{array}
\]
Finally, we obtain a least fixed-point $\Gt_0$ of this operator by the Knaster-Tarski fixed-point theorem~\cite[2.35]{davey02}. It is tedious but straightforward to check that this candidate value type system is in
fact a value type system~\cite[Lemma 4.8]{angiuli19}. To construct a value type system with a universe, we can
repeat the fixed-point construction with the addition of a type $\tyuniv$ interpreted by the relation $\Gt_0$,
producing a new type system $\Gt_1$ that is closed under the same type formers as $\Gt_0$ but also contains
$\Gt_0$ as a universe. This can be repeated further to produce a hierarchy of value type systems
$\Gt_0 \subseteq \Gt_1 \subseteq \Gt_2 \subseteq \cdots$ each containing its predecessors as universes; for
our purposes, a single universe is sufficient.

As an immediate consequence of the way the typing judgments are defined, we
have a \emph{canonicity} theorem: any closed well-typed term is guaranteed to evaluate to a value of that
type. In particular, any closed term of natural number type evaluates to a numeral.

\subsection{Building up inference rules}%
\label{sec:computational-rules}

With a value type system in hand, it remains to verify that the judgments are closed under the inference rules
introduced in \cref{sec:cubical,sec:parametric}. We go through the typing rules for $\tygel$-types in
detail. The rules for $\tybridge$-types are simpler to verify, as the reduction rules are all ``cubically
stable'': they do not depend on the status of any interval term. (In comparison, $\tmgel{\bmr}{M_0}{M_1}{P}$
may be a value or step depending on whether $\bmr$ is a variable or constant.) The rules for $\tmextent$ do
involve unstable transitions, but require no ideas that are not present in the proofs for $\tygel$-types; in
particular, the $\hcomp$ reduction for $\tygel$ involves $\tmextent$-like variable capture. The reader may see~\cite{cavallo19a} for complete proofs of these results.

We rely on the following five lemmas to work with the candidate judgments. These are rephrasings of Lemmas
A.2, A.3, and A.5 from~\cite{chtt-iv}; each follows straightforwardly by unfolding definitions.

\begin{lem}[Coherent type value]\label{lem:coherent-type-value}
  Suppose $A,A'$ are terms. If for every $\dctxsubst{\GPS'}{\GPS}{\GPS}$, either
  $\Gt(\GPS',A\Gps,A'\Gps,\Ga_\Gps)$ or $\candeqtype{\GPS'}{A\Gps}{A'\Gps}{\Ga\Gps}{\Gt}$, then
  $\candeqtype{\GPS}{A}{A'}{\Ga}{\Gt}$.
\end{lem}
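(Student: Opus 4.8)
The plan is to unfold the definition of the candidate judgment $\candeqtype{\GPS}{A}{A'}{\Ga}{\Gt}$ and verify each of its clauses by hand, at every point dispatching on the dichotomy supplied by the hypothesis at whichever interval substitution is currently in play. Recall that $\candeqtype{\GPS}{A}{A'}{\Ga}{\Gt}$ decomposes into three demands: for every $\dctxsubst{\GPS_1}{\Gps_1}{\GPS}$, the terms $A\Gps_1$ and $A'\Gps_1$ must evaluate to values; for every such $\Gps_1$ together with $\dctxsubst{\GPS_2}{\Gps_2}{\GPS_1}$, there must be a single relation $\Gf$ with $\syseval{\Gt}(\GPS_2,-,-,\Gf)$ relating the four pairs built from the $\Gps_2$-images of those values and of $A\Gps_1\Gps_2$, $A'\Gps_1\Gps_2$; and $\Ga$ must be a $\GPS$-relation on values with $\syseval{\Gt}(\GPS',A\Gps,A'\Gps,\Ga_\Gps)$ for all $\dctxsubst{\GPS'}{\Gps}{\GPS}$. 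Since $\Ga$ is assumed to be a $\GPS$-relation on values, the only side condition on $\Ga$ that is not immediate is this last coherence requirement.

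I would dispatch the evaluation clause and the $\Ga$-clause together, as both are routine. Fix any $\Gps$ into $\GPS$ and apply the hypothesis. If $\Gt(\GPS',A\Gps,A'\Gps,\Ga_\Gps)$, then $A\Gps$ and $A'\Gps$ are already values, so each evaluates to itself in zero steps and $\syseval{\Gt}(\GPS',A\Gps,A'\Gps,\Ga_\Gps)$ holds by definition. Otherwise $\candeqtype{\GPS'}{A\Gps}{A'\Gps}{\Ga\Gps}{\Gt}$ holds; instantiating its own evaluation clause at the identity gives that $A\Gps$, $A'\Gps$ evaluate to values, and instantiating its own $\Ga$-clause at the identity gives $\syseval{\Gt}(\GPS',A\Gps,A'\Gps,(\Ga\Gps)_{\id})$, which is exactly $\syseval{\Gt}(\GPS',A\Gps,A'\Gps,\Ga_\Gps)$ because $(\Ga\Gps)_{\id} = \Ga_\Gps$ by the definition of restriction of a $\GPS$-relation.

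The substantive clause is the coherence square. Given $\Gps_1$ and $\Gps_2$, let $A_1,A_1'$ be the values of $A\Gps_1,A'\Gps_1$ supplied by the clause just proven, and apply the hypothesis both at $\Gps_1$ and at the composite $\Gps_1\Gps_2$. When $\Gt(\GPS_1,A\Gps_1,A'\Gps_1,\Ga_{\Gps_1})$ holds, $A_1$ and $A_1'$ are literally $A\Gps_1$ and $A'\Gps_1$, so $A_1\Gps_2 = A\Gps_1\Gps_2$ and $A_1'\Gps_2 = A'\Gps_1\Gps_2$ and the square degenerates; I then use the hypothesis at $\Gps_1\Gps_2$ (in the value branch, take $\Gf = \Ga_{\Gps_1\Gps_2}$ and reflexivize/symmetrize using the \textbf{PER} axiom of the value type system; in the candidate branch, feed its own square clause the identities and use determinism of $\steps$ together with \textbf{PER}) to produce the common $\Gf$. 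When instead $\candeqtype{\GPS_1}{A\Gps_1}{A'\Gps_1}{\Ga\Gps_1}{\Gt}$ holds, its own square clause instantiated at $(\id,\Gps_2)$ already yields the desired $\Gf$, the value it names for $A\Gps_1$ agreeing with $A_1$ by determinism of evaluation; no appeal to $\Gps_1\Gps_2$ is even needed.

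The main obstacle is purely bookkeeping: one must check that a \emph{single} relation $\Gf$ serves all four pairs simultaneously across every combination of the value/candidate cases at $\Gps_1$ and at $\Gps_1\Gps_2$, and it is here that \textbf{Unicity} and the \textbf{PER} axioms of $\Gt$, plus determinism of $\steps$, do the reconciling. There is no new idea beyond unfolding and chasing the definitions, which is why the corresponding lemma of~\cite{chtt-iv} is quoted as following straightforwardly.
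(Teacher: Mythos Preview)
Your proposal is correct and matches the paper's approach: the paper gives no proof beyond ``each follows straightforwardly by unfolding definitions'' and a citation to~\cite{chtt-iv}, and what you have written is precisely that unfolding carried out in detail. The case split on the hypothesis at the relevant substitution, the use of determinism of evaluation to reconcile values obtained at different instantiations, and the appeal to the PER axiom to obtain the reflexive instances in the degenerate branch are all exactly what is needed; your invocation of \textbf{Unicity} is harmless but, as you may have noticed, not strictly required, since each branch of the case analysis already produces a single $\Gf$ serving all five pairs.
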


\begin{lem}[Coherent term value]\label{lem:coherent-term-value}
  Suppose $\candwftype{\GPS}{A}{\Ga}{\Gt}$ and $M,M'$ are terms. If for every $\dctxsubst{\GPS'}{\GPS}{\GPS}$,
  either $\Ga_\Gps(M\Gps,M'\Gps)$ or $\candeqtm{\GPS'}{M\Gps}{M'\Gps}{\Ga\Gps}$, then
  $\candeqtm{\GPS}{M}{M'}{\Ga}$.
\end{lem}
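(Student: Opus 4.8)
The plan is to unwind the definition of the candidate judgment $\candeqtm{\GPS}{M}{M'}{\Ga}$ and verify its two clauses by invoking the hypothesis at two carefully chosen interval substitutions. Fix $\dctxsubst{\GPS_1}{\Gps_1}{\GPS}$ and $\dctxsubst{\GPS_2}{\Gps_2}{\GPS_1}$; I must produce values $M_1,M_1'$ with $\usubstdims{M}{\Gps_1} \evals M_1$ and $\usubstdims{M'}{\Gps_1} \evals M_1'$ and show that $\releval{\Ga_{\Gps_1\Gps_2}}$ relates $(\usubstdims{M_1}{\Gps_2},\usubstdims{M}{\Gps_1\Gps_2})$ and its reverse, the analogous pair for $M_1',M'$, and $(\usubstdims{M_1}{\Gps_2},\usubstdims{M_1'}{\Gps_2})$. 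Two small facts are used throughout: since $\candwftype{\GPS}{A}{\Ga}{\Gt}$ unfolds in particular to $\syseval{\Gt}(\GPS',\usubstdims{A}{\Gps},\usubstdims{A}{\Gps},\Ga_\Gps)$ and $\Gt$ satisfies PER-valuation, each $\Ga_\Gps$ is a PER on values; and therefore, evaluation being deterministic, each $\releval{\Ga_\Gps}$ is a PER on terms. (Note that value-coherence of $\Ga$, which is not presupposed here, is never invoked.)

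First I would apply the hypothesis at $\Gps \eqdef \Gps_1$. If the disjunct $\candeqtm{\GPS_1}{\usubstdims{M}{\Gps_1}}{\usubstdims{M'}{\Gps_1}}{\usubstdims{\Ga}{\Gps_1}}$ is the one that holds, I simply instantiate that judgment at the pair $(\substid[\GPS_1],\Gps_2)$; since $(\usubstdims{\Ga}{\Gps_1})_{\Gps_2} = \Ga_{\Gps_1\Gps_2}$, the resulting statements are literally the two clauses I need, and this branch is finished. Otherwise $\Ga_{\Gps_1}(\usubstdims{M}{\Gps_1},\usubstdims{M'}{\Gps_1})$ holds; because $\Ga$ takes values, $\usubstdims{M}{\Gps_1}$ and $\usubstdims{M'}{\Gps_1}$ are already values, so I put $M_1 \eqdef \usubstdims{M}{\Gps_1}$ and $M_1' \eqdef \usubstdims{M'}{\Gps_1}$, which discharges clause (1). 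Then $\usubstdims{M_1}{\Gps_2}$ and $\usubstdims{M}{\Gps_1\Gps_2}$ are the \emph{same} term, as are $\usubstdims{M_1'}{\Gps_2}$ and $\usubstdims{M'}{\Gps_1\Gps_2}$, so the four ``endpoint'' relations collapse to $\releval{\Ga_{\Gps_1\Gps_2}}(\usubstdims{M}{\Gps_1\Gps_2},\usubstdims{M}{\Gps_1\Gps_2})$ and the same for $M'$, while the last required relation is $\releval{\Ga_{\Gps_1\Gps_2}}(\usubstdims{M}{\Gps_1\Gps_2},\usubstdims{M'}{\Gps_1\Gps_2})$.

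To get these three facts, I would apply the hypothesis a second time, now at $\Gps \eqdef \Gps_1\Gps_2$. If $\Ga_{\Gps_1\Gps_2}(\usubstdims{M}{\Gps_1\Gps_2},\usubstdims{M'}{\Gps_1\Gps_2})$ holds, then both sides are values related by the PER $\Ga_{\Gps_1\Gps_2}$, so reflexivity of that PER on its domain together with the relation itself give all three $\releval{\Ga_{\Gps_1\Gps_2}}$-facts at once. If instead $\candeqtm{\GPS_2}{\usubstdims{M}{\Gps_1\Gps_2}}{\usubstdims{M'}{\Gps_1\Gps_2}}{\usubstdims{\Ga}{\Gps_1\Gps_2}}$ holds, I instantiate it at $(\substid,\substid)$ to obtain $\releval{\Ga_{\Gps_1\Gps_2}}$-relatedness of $(N_1,\usubstdims{M}{\Gps_1\Gps_2})$, its reverse, $(N_1',\usubstdims{M'}{\Gps_1\Gps_2})$, its reverse, and $(N_1,N_1')$, where $N_1,N_1'$ are the values of $\usubstdims{M}{\Gps_1\Gps_2},\usubstdims{M'}{\Gps_1\Gps_2}$; transitivity of the PER $\releval{\Ga_{\Gps_1\Gps_2}}$ then yields the two reflexive facts and the mixed one. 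This completes clause (2) in every case.

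I do not anticipate a real obstacle — the text itself flags this as following by unfolding definitions — but the two points that need care are that the hypothesis's disjunction is resolved \emph{per substitution}, so it must be applied both at $\Gps_1$ and at $\Gps_1\Gps_2$ (possibly landing in different disjuncts each time), and that one must observe that the term-level relation $\releval{\Ga_\Gps}$ inherits the partial-equivalence structure of $\Ga_\Gps$, so that the reflexive instances can be stitched together from the asymmetric ``coherent evaluation'' instances.
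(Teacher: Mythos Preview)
Your argument is correct and is precisely the ``unfolding definitions'' that the paper gestures at: the paper gives no detailed proof of this lemma, merely citing it as a rephrasing of results in \cite{chtt-iv} that follow by unwinding the candidate judgment. Your case split on the disjunction first at $\Gps_1$ and then at $\Gps_1\Gps_2$, together with the observation that $\releval{\Ga_\Gps}$ inherits PER structure from $\Ga_\Gps$ (legitimate here since the ambient $\Gt$ is by this point a value type system satisfying PER-valuation), is exactly the intended verification.
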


\begin{lem}[Coherent type expansion]\label{lem:coherent-type-expansion}
  Suppose $A$ is a term and ${(A_\Gps)}_{\dctxsubst{\GPS'}{\Gps}{\GPS}}$ is a family of terms such that
  $A\Gps \msteps A_\Gps$ and $\candeqtype{\GPS'}{A_\Gps}{A_\id\Gps}{\Ga\Gps}{\Gt}$ for all
  $\dctxsubst{\GPS'}{\Gps}{\GPS}$. Then $\candeqtype{\GPS}{A}{A_\id}{\Ga}{\Gt}$.
\end{lem}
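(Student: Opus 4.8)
The plan is to unfold the definition of $\candeqtype{\GPS}{A}{A_\id}{\Ga}{\Gt}$ and verify its clauses one by one, transferring the hypothesised facts about the reducts $A_\Gps$ to $A$ itself by determinism of the operational semantics, and patching up the one spot where a substitution fails to commute with reduction using the value type system axioms. First I would fix $\dctxsubst{\GPS_1}{\Gps_1}{\GPS}$ and $\dctxsubst{\GPS_2}{\Gps_2}{\GPS_1}$ and read off from the hypothesis (at $\Gps_1$ and at $\Gps_1\Gps_2$) that $A\Gps_1 \msteps A_{\Gps_1}$ and $A\Gps_1\Gps_2 \msteps A_{\Gps_1\Gps_2}$, with $\candeqtype{\GPS_1}{A_{\Gps_1}}{A_\id\Gps_1}{\Ga\Gps_1}{\Gt}$ and $\candeqtype{\GPS_2}{A_{\Gps_1\Gps_2}}{A_\id\Gps_1\Gps_2}{\Ga(\Gps_1\Gps_2)}{\Gt}$. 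Instantiating the coherence of $A_{\Gps_1}$ at $(\id,\id)$ gives values $B_1,B_1'$ with $A_{\Gps_1}\evals B_1$ and $A_\id\Gps_1 \evals B_1'$; since reduction is deterministic, $A\Gps_1 \msteps A_{\Gps_1}$ yields $A\Gps_1 \evals B_1$, which discharges clause (1) with the values $B_1,B_1'$. Instantiating the same coherence at $(\id,\Gps_2)$ supplies a relation $\Gf$ for which $\syseval{\Gt}(\GPS_2,-,-,\Gf)$ relates $(B_1\Gps_2, A_{\Gps_1}\Gps_2)$ and its reverse, $(B_1'\Gps_2, A_\id\Gps_1\Gps_2)$ and its reverse, and $(B_1\Gps_2, B_1'\Gps_2)$ — which is everything clause (2) demands, except that clause (2) wants $A\Gps_1\Gps_2$ in place of $A_{\Gps_1}\Gps_2$ in the first pair and its reverse.

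Closing that gap is the delicate step, and I would do it with the value type system axioms. Write $V_2$ for the value of $A_{\Gps_1}\Gps_2$, $V_3$ for the value of $A_\id\Gps_1\Gps_2$, and $V_1$ for the value of $B_1\Gps_2$ (all exist by the clauses just cited), and note $A\Gps_1\Gps_2 \msteps A_{\Gps_1\Gps_2} \evals W$ for some $W$, so $W$ is the value of $A\Gps_1\Gps_2$ by determinism. The third clause of the coherence of $A_{\Gps_1}$ taken at $\Gps_2$ gives $\Gt(\GPS_2, V_2, V_3, \Ga_{\Gps_1\Gps_2})$, while the third clause of the coherence of $A_{\Gps_1\Gps_2}$ at $\id$ gives $\Gt(\GPS_2, W, V_3, \Ga_{\Gps_1\Gps_2})$ — the relation names agree because $(\Ga\Gps_1)_{\Gps_2} = \Ga_{\Gps_1\Gps_2} = (\Ga(\Gps_1\Gps_2))_{\id}$. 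By the \textbf{PER} axiom (symmetry and transitivity of $\Gt(\GPS_2,-,-,\Ga_{\Gps_1\Gps_2})$) we get $\Gt(\GPS_2, V_2, W, \Ga_{\Gps_1\Gps_2})$, and since $\Gt(\GPS_2, V_1, V_2, \Gf)$ also yields $\Gt(\GPS_2, V_2, V_2, \Gf)$ by PER, the \textbf{Unicity} axiom forces $\Gf = \Ga_{\Gps_1\Gps_2}$ (both are partial equivalence relations containing the pair $(V_2,V_2)$). Composing $\Gt(\GPS_2, V_1, V_2, \Gf)$ with $\Gt(\GPS_2, V_2, W, \Gf)$ gives $\Gt(\GPS_2, V_1, W, \Gf)$, so $\syseval{\Gt}(\GPS_2, B_1\Gps_2, A\Gps_1\Gps_2, \Gf)$ and, by symmetry, its reverse, completing clause (2). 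The last clause — that $\Ga$ is a value $\GPS$-relation with $\syseval{\Gt}(\GPS', A\Gps, A_\id\Gps, \Ga_\Gps)$ for every $\dctxsubst{\GPS'}{\Gps}{\GPS}$ — follows from the same determinism argument applied to the third clause of the coherence of each $A_\Gps$, since $A\Gps \msteps A_\Gps$ and $A_\Gps$ evaluates.

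The hard part is exactly this reconciliation: because reduction does not commute with interval substitution — for instance $\tmgel{\bmx}{M_0}{M_1}{P}$ is a value while $\tmgel{\bm0}{M_0}{M_1}{P}$ reduces — one cannot assume that $A_{\Gps_1}\Gps_2$ and $A\Gps_1\Gps_2$ evaluate to literally the same value, and must instead route the comparison through the common relation $\Ga_{\Gps_1\Gps_2}$ using \textbf{PER}, \textbf{PER-valuation}, and \textbf{Unicity}. Everything else is bookkeeping of interval substitutions and unfolding of the candidate judgment, running parallel to the corresponding coherent-expansion lemmas in~\cite{chtt-iv} and~\cite{angiuli19}.
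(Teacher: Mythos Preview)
Your proposal is correct and matches the paper's approach: the paper does not spell out a proof here, saying only that the lemma is a rephrasing of results from~\cite{chtt-iv} and ``follows straightforwardly by unfolding definitions,'' and your unfolding is exactly the intended one---including the identification of the one nontrivial step (reconciling $A_{\Gps_1}\Gps_2$ with $A\Gps_1\Gps_2$ via \textbf{PER} and \textbf{Unicity}), which is implicit in the cited appendix. One minor wording note: your parenthetical ``both are partial equivalence relations containing the pair $(V_2,V_2)$'' slightly misstates what Unicity says---the point is that $\Gt(\GPS_2,V_2,V_2,\Gf)$ and $\Gt(\GPS_2,V_2,V_2,\Ga_{\Gps_1\Gps_2})$ both hold, so Unicity (which says $\Gt$ assigns at most one relation to any value pair) forces $\Gf = \Ga_{\Gps_1\Gps_2}$---but your actual derivation of those two facts is correct.
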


\begin{lem}[Coherent term expansion]\label{lem:coherent-term-expansion}
  Suppose $\candwftype{\GPS}{A}{\Ga}{\Gt}$, $M$ is a term, and ${(M_\Gps)}_{\dctxsubst{\GPS'}{\Gps}{\GPS}}$ is a
  family of terms such that $M\Gps \msteps M_\Gps$ and $\candeqtm{\GPS'}{M_\Gps}{M_\id\Gps}{\Ga\Gps}$ for all
  $\dctxsubst{\GPS'}{\Gps}{\GPS}$. Then $\candeqtm{\GPS'}{M}{M_\id}{\Ga}$.
\end{lem}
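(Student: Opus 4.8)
The plan is to prove the lemma by directly unfolding the definition of the candidate judgment $\candeqtm{\GPS}{M}{M_\id}{\Ga}$ and discharging its two clauses from the hypotheses, using only two elementary facts about the operational semantics: (i) the step relation is stable under interval substitution (if $P \msteps P'$ then $P\Gps \msteps P'\Gps$ for any $\dctxsubst{\GPS'}{\Gps}{\GPS}$), and (ii) $\releval{\Gf}$ is insensitive to stepping (if $P \msteps P'$, then $\releval{\Gf}(P,Q)$ iff $\releval{\Gf}(P',Q)$, and symmetrically in the second argument). Fact (i) is the one genuinely operational ingredient; I would either prove it by a routine induction on the step derivation or simply cite it as already implicit in the framework of~\cite{angiuli19}. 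The rest is bookkeeping, and the structure mirrors the proofs of Lemmas~\ref{lem:coherent-term-value} and~\ref{lem:coherent-type-expansion}.

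Concretely, I would fix arbitrary $\dctxsubst{\GPS_1}{\Gps_1}{\GPS}$ and $\dctxsubst{\GPS_2}{\Gps_2}{\GPS_1}$. Instantiating the hypothesis at $\Gps_1$ gives $M\Gps_1 \msteps M_{\Gps_1}$ and $\candeqtm{\GPS_1}{M_{\Gps_1}}{M_\id\Gps_1}{\Ga\Gps_1}$. Unfolding this latter candidate judgment at the pair $(\id_{\GPS_1}, \Gps_2)$ yields values $V,W$ with $M_{\Gps_1} \evals V$ and $M_\id\Gps_1 \evals W$, together with the fact that $\releval{(\Ga_{\Gps_1\Gps_2})}$ relates $(V\Gps_2, M_{\Gps_1}\Gps_2)$ and its reverse, $(W\Gps_2, M_\id\Gps_1\Gps_2)$ and its reverse, and $(V\Gps_2, W\Gps_2)$ — here using the reindexing identity ${(\Ga\Gps_1)}_{\Gps_2} = \Ga_{\Gps_1\Gps_2}$. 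Now for clause~(1) of the target I set $M_1 := V$ and $N_1 := W$: since $M\Gps_1 \msteps M_{\Gps_1} \evals V$ we get $M\Gps_1 \evals V$, and $M_\id\Gps_1 \evals W$ is immediate. For clause~(2), the two pairs mentioning $M_\id$ and the pair $(V\Gps_2, W\Gps_2)$ are exactly what the unfolded hypothesis supplies; the remaining pairs $(M_1\Gps_2, M\Gps_1\Gps_2) = (V\Gps_2, M\Gps_1\Gps_2)$ and its reverse follow from $(V\Gps_2, M_{\Gps_1}\Gps_2)$ and its reverse by observing $M\Gps_1\Gps_2 \msteps M_{\Gps_1}\Gps_2$ (fact (i)) and applying fact (ii) to $\releval{(\Ga_{\Gps_1\Gps_2})}$.

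The argument has no real obstacle; the only point requiring care is the interplay between the two-level substitution structure $(\Gps_1,\Gps_2)$ of the candidate judgments and the fact that the per-$\Gps$ hypothesis must be available at \emph{every} $\Gps$, not merely at the identity — this is precisely why the statement quantifies the family $(M_\Gps)_\Gps$ and its coherence condition over all $\Gps$. I expect the write-up to be short; indeed, after establishing fact~(i) once, the proof is a two-paragraph unfolding essentially identical in shape to that of Lemma~\ref{lem:coherent-type-expansion}, with ``type value'' replaced by ``element value'' throughout.
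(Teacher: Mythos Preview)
Your argument has a genuine gap: fact~(i) is false in this operational semantics. The reduction rules for $\tmextent$ (and likewise $\tygel$, $\tmgel$) are \emph{cubically unstable}---the paper says so explicitly in \cref{sec:computational-rules}, calling these ``unstable transitions.'' Concretely, with $\bmx$ a variable, $\tmextent{\bmx}{M}{a_0.N_0}{a_1.N_1}{a_0.a_1.\overline{a}.\overline{N}}$ steps to $\tmbapp{\overline{N}[\ldots,\tmblam[x]{M}/\overline{a}]}{\bmx}$; after substituting $\bm0$ for $\bmx$, the left side becomes $\tmextent{\bm0}{M[\bm0/\bmx]}{\ldots}$, which steps by the \emph{endpoint} rule to $N_0[M[\bm0/\bmx]/a_0]$, while the right side becomes $\tmbapp{\overline{N}[\ldots,\tmblam[x]{M}/\overline{a}]}{\bm0}$. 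These are not on a common reduction path in general, so $P \msteps P'$ does not imply $P\Gps \msteps P'\Gps$. Your proposed induction on the step derivation would fail at exactly this case, and the property cannot be cited from~\cite{angiuli19} either: that framework is built to accommodate unstable transitions (there, the $\tyv$-type operations play the analogous role).

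The whole point of the lemma's shape---quantifying the family $(M_\Gps)_\Gps$ over \emph{all} $\Gps$ and requiring the coherence $\candeqtm{\GPS'}{M_\Gps}{M_\id\Gps}{\Ga\Gps}$---is to compensate for the absence of fact~(i). The fix to your argument is: rather than trying to reach $M_{\Gps_1}\Gps_2$ from $M\Gps_1\Gps_2$ by substituting into a reduction, invoke the hypothesis at the \emph{composite} $\Gps_1\Gps_2$. This yields $M\Gps_1\Gps_2 \msteps M_{\Gps_1\Gps_2}$ and $\candeqtm{\GPS_2}{M_{\Gps_1\Gps_2}}{M_\id\Gps_1\Gps_2}{\Ga\Gps_1\Gps_2}$. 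Unfolding that at $(\id,\id)$ gives $M\Gps_1\Gps_2 \evals U$ with $\Ga_{\Gps_1\Gps_2}(U,W_2)$, where $W_2$ is the value of $M_\id\Gps_1\Gps_2$. Combining with what you already extracted at $\Gps_1$ (instances $(\id,\Gps_2)$ and $(\Gps_2,\id)$) gives a chain $V^* \sim V_2 \sim W_2 \sim U$ in $\Ga_{\Gps_1\Gps_2}$; since $\Gt$ is a value type system, each $\Ga_\Gps$ is a PER, so symmetry and transitivity close the chain to $\Ga_{\Gps_1\Gps_2}(V^*,U)$, i.e.\ $\releval{(\Ga_{\Gps_1\Gps_2})}(V\Gps_2, M\Gps_1\Gps_2)$ as required. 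The reverse pair is obtained the same way.
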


\begin{lem}[Evaluation]\label{lem:evaluation}
  Suppose $\eqtm<\GPS>{M}{M'}{A}$. Then $M \evals V$ and $M' \evals V'$ with $\eqtm<\GPS>{M = V}{V' = M'}{A}$.
\end{lem}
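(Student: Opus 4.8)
The plan is to peel back the definitions of the closed judgments, extract the values named by $M$ and $M'$, and then reverify the candidate term relation with those values substituted for $M$ and $M'$.

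Since $\eqtm<\GPS>{M}{M'}{A}$ holds, we have $\wfpretype<\GPS>{A}$, hence a (necessarily unique) value $\GPS$-relation $\Ga \eqdef \sem{A}$ with $\candeqtype{\GPS}{A}{A}{\Ga}{\Gt}$ and $\candeqtm{\GPS}{M}{M'}{\Ga}$. Instantiating the definition of $\candeqtm{\GPS}{M}{M'}{\Ga}$ at $\Gps_1 \eqdef \Gps_2 \eqdef \substid$ produces values $V,V'$ with $M \evals V$ and $M' \evals V'$, which is the first half of the conclusion. Moreover, because $\Gt$ is a value type system, each $\Ga_\Gps$ is a PER, and a routine check on the definition shows that $\candeqtm{\GPS}{-}{-}{\Ga}$ — and therefore the closed equality $\eqtm<\GPS>{-}{-}{A}$ — is then also a PER; in particular $\candeqtm{\GPS}{M}{M}{\Ga}$ and $\candeqtm{\GPS}{M'}{M'}{\Ga}$ hold.

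Next I would establish $\candeqtm{\GPS}{M}{V}{\Ga}$ (and, symmetrically, $\candeqtm{\GPS}{M'}{V'}{\Ga}$) straight from the definition. The key observation is that evaluation commutes with interval substitution: from $M \msteps V$ one gets $\usubstdims{M}{\Gps} \msteps \usubstdims{V}{\Gps}$ for every $\dctxsubst{\GPS'}{\Gps}{\GPS}$, so by determinism of the operational semantics $\usubstdims{M}{\Gps}$ and $\usubstdims{V}{\Gps}$ have the same value. Consequently, for any $\dctxsubst{\GPS_1}{\Gps_1}{\GPS}$ and $\dctxsubst{\GPS_2}{\Gps_2}{\GPS_1}$, the evaluation clause and the three $\releval{(\Ga_{\Gps_1\Gps_2})}$-clauses required of $\candeqtm{\GPS}{M}{V}{\Ga}$ coincide verbatim with those of $\candeqtm{\GPS}{M}{M}{\Ga}$ — the value of $\usubstdims{M}{\Gps_1}$ plays the roles of both ``$M_1$'' and ``$M_1'$'', since $\usubstdims{V}{\Gps_1}$ evaluates to it — and these hold by the previous paragraph. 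Thus $\eqtm<\GPS>{M}{V}{A}$, and likewise $\eqtm<\GPS>{M'}{V'}{A}$. Combining with the hypothesis $\eqtm<\GPS>{M}{M'}{A}$ and the symmetry/transitivity of the PER $\eqtm<\GPS>{-}{-}{A}$ yields $\eqtm<\GPS>{V}{V'}{A}$, hence the full chain $\eqtm<\GPS>{M = V}{V' = M'}{A}$. (Alternatively, $\candeqtm{\GPS}{M}{V}{\Ga}$ can be obtained by feeding the family of values of $\usubstdims{M}{\Gps}$ into Lemma~\ref{lem:coherent-term-expansion}.)

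The only substantive content is the claim that evaluation commutes with interval substitution. For the ordinary computation and congruence rules this is immediate, but it must be confirmed uniformly over all the type formers, including the ``unstable'' transitions for $\tygel$- and $\tmextent$-terms, where a value can become a redex once a constant is substituted for a dimension variable, and where the affine restriction $\ctxres{-}{\bmr}$ controls which substitutions are even legal. This is precisely the coherence bookkeeping already carried out for the analogous lemma in~\cite{angiuli19,chtt-iv}; I would cite it rather than redo it, so that the remainder really is ``by unfolding definitions.''
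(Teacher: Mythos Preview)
Your argument hinges on the claim that $M \msteps V$ implies $M\Gps \msteps V\Gps$, but this is false for the present operational semantics. Consider a single step of the $\tmextent$ variable rule: $\tmextent{\bmx}{N}{a_0.N_0}{a_1.N_1}{a_0.a_1.\overline{a}.\overline{N}} \steps \tmbapp{\overline{N}[\ldots]}{\bmx}$. Under $\Gps = (\substbdim{}{\bm0}[\bmx])$ the left side becomes an $\tmextent{\bm0}{\ldots}$ redex and steps, by the \emph{endpoint} rule, to $\usubst{N_0}{\usubstdim{N}{\bm0}{\bmx}}{a_0}$, while the right side becomes $\tmbapp{\overline{N}[\ldots]}{\bm0}$, a syntactically unrelated term; by determinism the former cannot reach the latter. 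They agree only \emph{semantically}, via the typing of $\overline{N}$ as a bridge with endpoint $N_0$. The unstable transitions are exactly those for which substitute-then-step and step-then-substitute diverge operationally; this is the whole reason the definition of $\candeqtm$ carries its two-substitution coherence clause. What the references you cite prove is this very lemma, by a different route---they do not establish the operational commutation you are invoking, so the citation is circular.

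The paper's proof (deferred to~\cite{chtt-iv}) reads the needed relations directly off the definition. Instantiating $\candeqtm{\GPS}{M}{M'}{\Ga}$ at $\Gps_1 = \id$ already yields $\releval{\Ga_\Gps}(V\Gps,M\Gps)$ for every $\Gps$; the remaining clauses of $\candeqtm{\GPS}{M}{V}{\Ga}$ at a general pair $\Gps_1,\Gps_2$ follow by combining the original data for $M$, the PER structure of each $\releval{\Ga_\Gps}$, and \emph{value-coherence} of $\Ga$, which promotes the value-level relation between the values of $V\Gps_1$ and $M\Gps_1$ to a full candidate judgment that can then be instantiated at $\Gps_2$. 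No stepwise commutation of evaluation with substitution is used.
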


We now check the rules for $\tygel$-types as presented in \cref{fig:gel}. We prove that each rule holds when
the ambient context is an arbitrary interval context $\GPS$. The open rules---for an arbitrary context
$\GG$---then follow mechanically, as the open type and term judgments are defined by their closed
instantiations.

It is convenient to prove the boundary reduction equations for a type or term former \emph{before} the general
introduction rule; for example, we show first $\eqpretype{\tygel{\bm\Ge}{A_0}{A_1}{a_0.a_1.R}}{A_\Ge}$ and
then $\wfpretype{\tygel{\bmr}{A_0}{A_1}{a_0.a_1.R}}$.

\begin{rul}[\rulename{Gel-Form-$\partial$}]\label{rule:gel-formation-boundary}
  For any $\Ge \in \{0,1\}$, $\wfpretype<\GPS>{A_\Ge}$, and terms $A_{1-\Ge}$, $R$, we have
  $\eqpretype<\GPS>{\tygel{\bm\Ge}{A_0}{A_1}{a_0.a_1.R}}{A_\Ge}$.
\end{rul}
\begin{proof}
  By Lemma~\ref{lem:coherent-type-expansion}, taking $A_\Gps \eqdef A_\Ge\Gps$: we have
  $\tygel{\bm\Ge}{A_0}{A_1}{a_0.a_1.R}\Gps \steps A_\Gps$ and
  $\candeqtype{\GPS'}{A_\Ge\Gps}{A_\Ge\Gps}{\sem{A_\Ge}\Gps}{\Gt}$ for all $\Gps$.
\end{proof}

As described above, this ``closed'' principle implies the open rule. Given $\wfctx{\GG}$ and
$\wfpretype[\GG]{A_\Ge}$, we have by definition that $\eqpretype<\GPS>{A_\Ge\Gg}{A_\Ge\Gg'}$ for all
$\eqlist<\GPS>{\Gg}{\Gg'}{\GG}$. Thus $\eqpretype<\GPS>{\tygel{\bm\Ge}{A_0}{A_1}{a_0.a_1.R}\Gg}{A_\Ge\Gg'}$
for all such instantiations by the rule just proven, which means that
$\eqpretype[\GG]{\tygel{\bm\Ge}{A_0}{A_1}{a_0.a_1.R}}{A_\Ge}$.

The following lemma gets us part of the way to the formation rule. We also need that the relation for
$\tygel$-types is value-coherent and supports the Kan operations; we will return to these later.

\begin{lem}[Gel formation candidate]\label{lem:gel-formation-candidate}
  If we have $\wfbdim<\GPS>{\bmr}$, $\eqpretype<\ctxres{\GPS}{\bmr}>{A_\Ge}{A'_\Ge}$ for $\Ge \in \{0,1\}$, and
  $\eqpretype[\ctxsnoc{\ctxsnoc{\ctxres{\GPS}{\bmr}}[a_0]{A_0}}[a_1]{A_1}]{R}{R'}$, then
  $\candeqtype{\GPS}{\tygel{\bmr}{A_0}{A_1}{a_0.a_1.R}}{\tygel{\bmr}{A'_0}{A'_1}{a_0.a_1.R'}}{\Gg}{\Gt}$
  with $\Gg$ defined on $\dctxsubst{\GPS'}{\Gps}{\GPS}$ as follows.
  \[
    \Gg_\Gps \eqdef \left\{
      \begin{array}{ll}
        \{(\tmgel{\bmx}{M_0}{M_1}{P},\tmgel{\bmx}{M'_0}{M'_1}{P'}) \mid \\
        \;\;\forall \Ge. (\eqtm<\ctxres{\GPS'}{\bmx}>{M_\Ge}{M'_\Ge}{A\Gps}) \\
        \;\; { } \land \eqtm<\ctxres{\GPS'}{\bmx}>{P}{P'}{\usubst{R}{M_0,M_1}{a_0,a_1}} \} , &\text{if $\bmr\Gps = \bmx$} \\
        \Ga^\Ge\Gps, &\text{if $\bmr\Gps = \bm\Ge \in \{\bm0,\bm1\}$}
      \end{array}
      \right.
  \]
\end{lem}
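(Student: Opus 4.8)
The plan is to verify the candidate type judgment via \cref{lem:coherent-type-value}, so that it suffices to show, for each $\dctxsubst{\GPS'}{\Gps}{\GPS}$, that either $\Gt$ relates $\tygel{\bmr}{A_0}{A_1}{a_0.a_1.R}\Gps$ and $\tygel{\bmr}{A'_0}{A'_1}{a_0.a_1.R'}\Gps$ with the relation $\Gg_\Gps$, or the candidate judgment $\candeqtype{\GPS'}{\tygel{\bmr}{A_0}{A_1}{a_0.a_1.R}\Gps}{\tygel{\bmr}{A'_0}{A'_1}{a_0.a_1.R'}\Gps}{\Gg\Gps}{\Gt}$ holds. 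Before invoking that lemma I would check that $\Gg$ is a well-defined value $\GPS$-relation: each $\Gg_\Gps$ depends only on whether $\bmr\Gps$ is a constant or a bridge variable and on the value relations $\Ga^\Ge$ and $\Gb^{(-,\ldots)}$ attached by the ambient type system to $A_\Ge$ and $R$, and functoriality in $\Gps$ is immediate from the analogous property of the candidate judgments. The argument then branches on the value of $\bmr\Gps$.

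If $\bmr\Gps = \bm\Ge$ for some $\Ge \in \{0,1\}$, then $\ctxres{\GPS'}{\bm\Ge} = \GPS'$, so $A_\Ge\Gps$ is a term over $\GPS'$ and, by \rulename{Gel-Form-$\partial$} (\cref{rule:gel-formation-boundary}), $\tygel{\bmr}{A_0}{A_1}{a_0.a_1.R}\Gps \steps A_\Ge\Gps$, and likewise on the primed side. Combining \cref{lem:coherent-type-expansion} with the hypothesis $\eqpretype<\ctxres{\GPS}{\bmr}>{A_\Ge}{A'_\Ge}$ — which, like all the candidate judgments, is stable under interval substitution — yields $\candeqtype{\GPS'}{A_\Ge\Gps}{A'_\Ge\Gps}{\sem{A_\Ge}\Gps}{\Gt}$, and $\sem{A_\Ge}\Gps = \Ga^\Ge\Gps = \Gg_\Gps$, as required. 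If instead $\bmr\Gps = \bmx$ for a bridge variable $\bmx$, then $\tygel{\bmr}{A_0}{A_1}{a_0.a_1.R}\Gps$ is already a value, of the form $\tygel{\bmx}{A_0\Gps^\dagger}{A_1\Gps^\dagger}{a_0.a_1.R\Gps^\dagger}$ for the induced substitution $\dctxsubst{\ctxres{\GPS'}{\bmx}}{\Gps^\dagger}{\ctxres{\GPS}{\bmr}}$. Since the ambient type system is closed under $\sysgel{\Gt}$ by the fixed-point construction of \cref{sec:computational:construct}, it then suffices to exhibit the witnesses $\Ga^0,\Ga^1$ and $\Gb^{(-,-,-,-,-)}$ demanded by $\sysgel{\Gt}$ and to check that the value relation it induces equals $\Gg_\Gps$. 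The $\Ga^\Ge$ arise from unfolding $\eqpretype<\ctxres{\GPS}{\bmr}>{A_\Ge}{A'_\Ge}$ (transported along $\Gps^\dagger$) into a $\candeqtype$ judgment with value-coherent relation; the $\Gb^{(\Gps',M_0,M_1,M'_0,M'_1)}$ arise from the hypothesis on $R$ by instantiating the two term variables $a_0,a_1$ with $\Ga^\Ge$-related closed terms $M_\Ge,M'_\Ge$ and reading off the assigned relation together with its value-coherence. Matching the $\Gf$ prescribed by $\sysgel{\Gt}$ against the first clause defining $\Gg_\Gps$ shows they coincide.

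The main obstacle is the bookkeeping around context restriction and its interaction with interval substitution. One must check that $\Gps^\dagger$ is a legitimate substitution $\dctxsubst{\ctxres{\GPS'}{\bmx}}{\Gps^\dagger}{\ctxres{\GPS}{\bmr}}$ — which is exactly the content of \rulename{$\BFI$-Subst} — that stability under substitution of the hypotheses on $A_\Ge$ and $R$ transports them along $\Gps^\dagger$, and that the two clauses defining $\Gg_\Gps$ glue consistently as $\Gps$ varies, in particular that when a further substitution collapses $\bmx$ to an endpoint the variable clause reduces to the endpoint clause. Everything else is routine unfolding of the candidate judgments. Note that this lemma establishes only that the $\tygel$-type is a candidate type; that $\Gg$ is value-coherent and that it supports the Kan operators $\coe$ and $\hcomp$ (whose reduction rules appear in \cref{fig:opsem}) are separate obligations, handled afterwards.
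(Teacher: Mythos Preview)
Your proposal is correct and follows essentially the same approach as the paper: invoke \cref{lem:coherent-type-value}, branch on whether $\bmr\Gps$ is a variable or a constant, and in the variable case appeal to the $\sysgel{\Gt}$ clause of the value type system while in the constant case appeal to \rulename{Gel-Form-$\partial$}. The paper's proof is considerably terser---it simply says ``by definition of the value type system'' for the variable case and chains the $\sim$ relation through $A_\Ge\Gps$ and $A'_\Ge\Gps$ for the constant case---but you have correctly unpacked what those appeals amount to, including the bookkeeping around the restricted substitution $\Gps^\dagger$ and the witnesses required by $\sysgel{\Gt}$.
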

\begin{proof}
  By Lemma~\ref{lem:coherent-type-value}. For every $\dctxsubst{\GPS'}{\Gps}{\GPS}$, either
  $\bmr\Gps = \bmx$ for some $\bmx$, in which case we have
  $\Gt(\GPS',\tygel{\bmr}{A_0}{A_1}{a_0.a_1.R}\Gps,\tygel{\bmr}{A_0'}{A_1'}{a_0.a_1.R'}\Gps,\Gg_\Gps)$ by
  definition of the value type system, or $\bmr\Gps = \bm\Ge \in \{\bm0,\bm1\}$, in which case we have
  $\tygel{\bmr}{A_0}{A_1}{a_0.a_1.R}\Gps \sim A_\Ge\Gps \sim A_\Ge'\Gps \sim
  \tygel{\bmr}{A'_0}{A'_1}{a_0.a_1.R'}\Gps$ by way of \rulename{Gel-Form-$\partial$}.
\end{proof}

\begin{rul}[\rulename{Gel-Intro-$\partial$}]\label{rule:gel-introduction-boundary}
  For any $\Ge \in \{0,1\}$, $\wfpretype<\GPS>{A_\Ge}$, and $\wftm<\GPS>{M_\Ge}{A_\Ge}$, and terms
  $M_{1-\Ge}$, $P$, we have $\eqtm<\GPS>{\tmgel{\bm\Ge}{M_0}{M_1}{P}}{M_\Ge}{A_\Ge}$.
\end{rul}
\begin{proof}
  By Lemma~\ref{lem:coherent-term-expansion}, taking $M_\Gps \eqdef M_\Ge\Gps$.
\end{proof}

\begin{rul}[\rulename{Gel-Intro}]\label{rule:gel-introduction}
  If we have $\wfbdim<\GPS>{\bmr}$, $\eqtm<\ctxres{\GPS}{\bmr}>{M_\Ge}{M'_\Ge}{A_\Ge}$ for $\Ge \in \{0,1\}$,
  $\eqpretype[\ctxsnoc{\ctxsnoc{\ctxres{\GPS}{\bmr}}[a_0]{A_0}}[a_1]{A_1}]{R}{R'}$, and
  $\eqtm<\ctxres{\GPS}{\bmr}>{P}{P'}{\usubst{R}{M_0,M_1}{a_0,a_1}}$, then
  $\candeqtm{\GPS}{\tmgel{\bmr}{M_0}{M_1}{P}}{\tmgel{\bmr}{M'_0}{M'_1}{P'}}{\Gg}$ for $\Gg$ as in the statement of
  Lemma~\ref{lem:gel-formation-candidate}.
\end{rul}
\begin{proof}
  By Lemma~\ref{lem:coherent-term-value}, proceeding as in Lemma~\ref{lem:gel-formation-candidate} by cases on
  $\bmr\Gps$ for each $\Gps$: we use the definition of $\Gg$ when $\bmr\Gps$ is a variable and
  \rulename{Gel-Intro-$\partial$} when $\bmr\Gps$ is a constant.
\end{proof}

\begin{lem}[Gel formation pretype]\label{lem:gel-formation-pretype}
  If we have $\wfbdim<\GPS>{\bmr}$, $\eqpretype<\ctxres{\GPS}{\bmr}>{A_\Ge}{A'_\Ge}$ for $\Ge \in \{0,1\}$, and
  $\eqpretype[\ctxsnoc{\ctxsnoc{\ctxres{\GPS}{\bmr}}[a_0]{A_0}}[a_1]{A_1}]{R}{R'}$, then
  $\eqpretype<\GPS>{\tygel{\bmr}{A_0}{A_1}{a_0.a_1.R}}{\tygel{\bmr}{A'_0}{A'_1}{a_0.a_1.R'}}$.
\end{lem}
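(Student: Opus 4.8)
The plan is to reduce everything to \cref{lem:gel-formation-candidate}. Unfolding the definition of $\eqpretype<\GPS>{-}{-}$, it suffices to produce a value-coherent $\GPS$-relation $\Gg$ together with a proof of $\candeqtype{\GPS}{\tygel{\bmr}{A_0}{A_1}{a_0.a_1.R}}{\tygel{\bmr}{A'_0}{A'_1}{a_0.a_1.R'}}{\Gg}{\Gt}$. Both ingredients are essentially in hand: the hypotheses $\eqpretype<\ctxres{\GPS}{\bmr}>{A_\Ge}{A'_\Ge}$ unfold to candidate type equalities witnessed by value-coherent relations $\Ga^0,\Ga^1$, and $\eqpretype{R}{R'}$ (over the extended context) similarly supplies a value-coherent family $\Gb$, so the premises of \cref{lem:gel-formation-candidate} are met, yielding the candidate type equality for the explicit $\Gg$ displayed there. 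The sole remaining obligation is $\isvalcoh{\Gg}$.

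To verify value-coherence I would take $\Gg_\Gps(V,V')$ and an arbitrary $\dctxsubst{\GPS''}{\Gps'}{\GPS'}$ and show $\candeqtm{\GPS''}{V\Gps'}{V'\Gps'}{\Gg_{\Gps\Gps'}}$, by case analysis on $\bmr\Gps$ matching the definition of $\Gg$. If $\bmr\Gps = \bm\Ge$ is a constant, then $\Gg_\Gps = \Ga^\Ge\Gps$ and the claim is exactly the value-coherence of $\Ga^\Ge$ (transported along $\Gps$ and the restriction structure). If $\bmr\Gps = \bmx$ is a variable, then $V = \tmgel{\bmx}{M_0}{M_1}{P}$ and $V' = \tmgel{\bmx}{M'_0}{M'_1}{P'}$, and $\Gg_\Gps$ records coherent equalities of $M_0,M_1$ in $A_0,A_1$ and of $P,P'$ in the appropriate instance of $R$, all over $\ctxres{\GPS'}{\bmx}$. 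Here I would invoke \cref{lem:coherent-term-value} for the candidate type $\tygel{\bmx}{A_0}{A_1}{a_0.a_1.R}\Gps\Gps'$, reducing the goal to checking, for each further substitution $\Gps''$, that the instance $(V\Gps'\Gps'',V'\Gps'\Gps'')$ is either directly or coherently related in $\Gg_{\Gps\Gps'\Gps''}$. When $\bmx\Gps'\Gps''$ is again a variable this is immediate, since $\Gg_{\Gps\Gps'\Gps''}$ is again the $\tmgel$-relation and its side conditions, being instances of the coherent equality judgment $\eqtm$, are stable under substitution. When $\bmx\Gps'\Gps'' = \bm\Ge$ is a constant we have $V\Gps'\Gps'' \steps M_\Ge\Gps'\Gps''$ and $\Gg_{\Gps\Gps'\Gps''} = \Ga^\Ge\Gps'\Gps''$, so combining the recorded equality (restricted along the now constant-valued substitution) with \cref{lem:coherent-term-expansion} and the value-coherence of $\Ga^\Ge$ closes this subcase.

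I expect the main obstacle to be precisely this endpoint-degeneration subcase: reconciling a $\tmgel$-value typed over $\GPS$ with the ambient restriction $\ctxres{\GPS}{\bmr}$ once a substitution forces $\bmr$ to a constant. The essential bookkeeping is the standard observation that any substitution evaluating $\bmr$ to $\bm0$ or $\bm1$ factors, up to the restriction operation, through $\ctxres{-}{\bmr}$, which is what lets the recorded equalities over $\ctxres{\GPS'}{\bmx}$ be instantiated; the rest is routine unfolding, entirely parallel to the argument for $\tybridge$-types and to \cref{lem:gel-formation-candidate} itself. Finally, exactly as for \cref{rule:gel-formation-boundary}, the open version of the statement for an arbitrary context $\GG$ follows mechanically from this closed one by instantiating along all $\eqlist<\GPS>{\Gg}{\Gg'}{\GG}$.
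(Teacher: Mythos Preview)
Your proof is correct and follows the same overall shape as the paper's: invoke \cref{lem:gel-formation-candidate} for the candidate type equality, then establish value-coherence of the resulting relation $\Gg$. The difference is only in how value-coherence is discharged. The paper's proof is a single sentence: it cites \rulename{Gel-Intro} (Rule~\ref{rule:gel-introduction}), already proven immediately before this lemma. In the variable case $\bmr\Gps = \bmx$, the side conditions recorded in $\Gg_\Gps$ are precisely the hypotheses of \rulename{Gel-Intro} instantiated at the substituted context, and its conclusion is exactly the required $\candeqtm{\GPS'}{V}{V'}{\Gg\Gps}$; in the constant case, \rulename{Gel-Intro}'s own proof already defers to \rulename{Gel-Intro-$\partial$} and the value-coherence of $\Ga^\Ge$. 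Your detailed argument---invoking \cref{lem:coherent-term-value}, splitting on whether a further substitution sends $\bmx$ to a constant, and closing the endpoint-degeneration subcase with \cref{lem:coherent-term-expansion}---is essentially re-deriving the content of \rulename{Gel-Intro} inline. So nothing is wrong, but you are redoing work that the paper has already packaged: you could replace your entire variable-case paragraph with a one-line appeal to \rulename{Gel-Intro}.
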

\begin{proof}
  A combination of Lemma~\ref{lem:gel-formation-candidate} and \rulename{Gel-Intro}, the latter of
  which shows that the relation for $\tygel$ is value-coherent.
\end{proof}

\begin{rul}[\rulename{Gel-$\beta$}]\label{rule:gel-beta}
  If $\wftm<\ctxbdim{\GPS}[x]>{P}{\usubst{R}{M_0,M_1}{a_0,a_1}}$, then
  \[\eqtm<\GPS>{\tmungel{\bmx.\tmgel{\bmx}{M_0}{M_1}{P}}}{P}{\usubst{R}{M_0,M_1}{a_0,a_1}}.\]
\end{rul}
\begin{proof}
  By Lemma~\ref{lem:coherent-term-expansion}: we have
  $\tmungel{\bmx.\tmgel{\bmx}{M_0}{M_1}{P}}\Gps \steps P\Gps$ for all $\Gps$.
\end{proof}

\begin{rul}[\rulename{Gel-Elim}]\label{rule:gel-elimination}
  If $\wfpretype<\GPS>{A_\Ge}$ for $\Ge \in \{0,1\}$,
  $\wfpretype[\ctxsnoc{\ctxsnoc{\GPS}[a_0]{A_0}}[a_1]{A_1}]{R}$, and
  $\eqtm<\ctxbdim{\GPS}[x]>{Q}{Q'}{\tygel{\bmx}{A_0}{A_1}{R}}$, then we have the following.
  \[
    \eqtm<\GPS>{\tmungel{\bmx.Q}}{\tmungel{\bmx.Q'}}{\usubst{R}{\usubstdim{Q}{\bm0}{\bmx},\usubstdim{Q}{\bm1}{\bmx}}{a_0,a_1}}
  \]
\end{rul}
\begin{proof}
  For every $\dctxsubst{\GPS'}{\Gps}{\GPS}$, we have by Lemma~\ref{lem:evaluation} that $Q\Gps \evals Q_\Gps$
  and $Q'\Gps \evals Q'_\Gps$ for some
  $\eqtm<\ctxbdim{\GPS'}[x]>{Q\Gps = Q_\Gps}{Q'_\Gps =
    Q'\Gps}{\tygel{\bmx}{A_0\Gps}{A_1\Gps}{a_0.a_1.R\Gps}}$. By definition of the relation for $\tygel$-types, we
  have $Q_\Gps = \tmgel{\bmx}{M_{0,\Gps}}{M_{1,\Gps}}{P_\Gps}$ and
  $Q_\Gps' = \tmgel{\bmx}{M_{0,\Gps}'}{M_{1,\Gps}'}{P_\Gps'}$ for some terms such that
  $\eqtm<\GPS'>{P_\Gps}{P'_\Gps}{\usubst{R\Gps}{M_{0,\Gps},M_{1,\Gps}}{a_0,a_1}}$. By
  \rulename{Gel-Intro-$\partial$} and functionality of $R$, it follows that also
  $\eqtm<\GPS'>{P_\Gps}{P'_\Gps}{\usubst{R\Gps}{\usubstdim{Q}{\bm0}{\bmx}\Gps,\usubstdim{Q}{\bm1}{\bmx}\Gps}{a_0,a_1}}$. We
  have $\tmungel{\bmx.Q}\Gps\msteps P_\Gps$ for each $\Gps$, thus
  $\eqtm<\GPS>{\tmungel{\bmx.Q}}{P_\id}{\usubst{R}{\usubstdim{Q}{\bm0}{\bmx},\usubstdim{Q}{\bm1}{\bmx}}{a_0,a_1}}$
  by Lemma~\ref{lem:coherent-term-expansion}; likewise,
  $\eqtm<\GPS>{\tmungel{\bmx.Q'}}{P'_\id}{\usubst{R}{\usubstdim{Q}{\bm0}{\bmx},\usubstdim{Q}{\bm1}{\bmx}}{a_0,a_1}}$. We
  conclude by transitivity that
  $\eqtm<\GPS>{\tmungel{\bmx.Q} = P_\id}{P'_\id =
    \tmungel{\bmx.Q'}}{\usubst{R}{\usubstdim{Q}{\bm0}{\bmx},\usubstdim{Q}{\bm1}{\bmx}}{a_0,a_1}}$.
\end{proof}

\begin{rul}[\rulename{Gel-$\eta$}]\label{rule:gel-eta}
  If $\wfpretype<\ctxres{\GPS}{\bmr}>{A_\Ge}$ for $\Ge \in \{0,1\}$,
  $\wfpretype[\ctxsnoc{\ctxsnoc{\ctxres{\GPS}{\bmr}}[a_0]{A_0}}[a_1]{A_1}]{R}$, and
  $\wftm<\ctxbdim{\ctxres{\GPS}{\bmr}}[x]>{Q}{\tygel{\bmx}{A_0}{A_1}{a_0.a_1.R}}$, then we have the following.
  \[
    \eqtm<\GPS>{\usubstdim{Q}{\bmr}{\bmx}}{\tmgel{\bmr}{\usubstdim{Q}{\bm0}{\bmx}}{\usubstdim{Q}{\bm1}{\bmx}}{\tmungel{\bmx.Q}}}{\tygel{\bmr}{A_0}{A_1}{a_0.a_1.R}}
  \]
\end{rul}
\begin{proof}
  By Lemma~\ref{lem:evaluation}, we have
  $\eqtm<\ctxbdim{\ctxres{\GPS}{\bmr}}[x]>{Q}{V}{\tygel{\bmx}{A_0}{A_1}{a_0.a_1.R}}$ for some $Q \evals V$. By
  definition of the relation for $\tygel$-types, we know $V = \tmgel{\bmx}{M_0}{M_1}{P}$ for some suitably-typed
  $M_0$, $M_1$, and $P$. By \rulename{Gel-Intro-$\partial$}, \rulename{Gel-$\beta$}, and \rulename{Gel-Intro},
  we conclude the following.
  \[
    \eqtm<\ctxbdim{\ctxres{\GPS}{\bmr}}[x]>{V}{\tmgel{\bmx}{\usubstdim{V}{\bm0}{\bmx}}{\usubstdim{V}{\bm1}{\bmx}}{\tmungel{\bmx.V}}}{\tygel{\bmx}{A_0}{A_1}{a_0.a_1.R}}
  \]
  We can replace $V$ with $Q$ everywhere in this equation using \rulename{Gel-Intro} and
  \rulename{Gel-Elim}. Substituting $\bmr$ for $\bmx$ then gives the result.
\end{proof}

It only remains to show that $\tygel$-types support the Kan operations. We will go through the proof for
$\hcomp$; the proof for $\coe$ has an identical structure. We will begin by proving reduction lemmas for the
constant and variable cases.

\begin{lem}\label{lem:gel-hcom-boundary}
  Let $\wftype<\GPS>{A_\Ge}$ for some $\Ge \in \{0,1\}$. If $\wfpdim<\GPS>{r,s}$, $n \in \BN$,
  $\wfcst<\GPS>{\Gx_i}$, $\wftm<\GPS>{Q}{A_\Ge}$, $\eqtm<\ctxpdim{\GPS}[y]>{Q_i}{Q_j}{A_\Ge}$ for all
  $i,j < n$, and $\eqtm<\GPS>{Q}{\usubstdim{Q_i}{r}{y}}{A_\Ge}$ for all $i < n$, then
  $\eqtm<\GPS>{\hcomp{\tygel{\bm\Ge}{A_0}{A_1}{a_0.a_1.R}}{r}{s}{Q}{\sys{\xi_i}{y.Q_i}}}{\hcomp{A_\Ge}{r}{s}{Q}{\sys{\xi_i}{y.Q_i}}}{A_\Ge}$.
\end{lem}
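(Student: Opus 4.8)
The plan is to exploit the fact that at a constant endpoint a $\tygel$-type simply \emph{is} the corresponding factor, so an $\hcomp$ at $\tygel{\bm\Ge}{A_0}{A_1}{a_0.a_1.R}$ cannot behave differently from an $\hcomp$ at $A_\Ge$, and then to make this precise with the coherent term expansion lemma (Lemma~\ref{lem:coherent-term-expansion}), mirroring the proofs of \rulename{Gel-Form-$\partial$} and \rulename{Gel-Intro-$\partial$}. The operational observation is that, by \cref{fig:opsem}, $\tygel{\bm\Ge}{A_0}{A_1}{a_0.a_1.R}$ is not a value but steps to $A_\Ge$; since $\hcomp$ evaluates its type argument before dispatching on its head (this is visible from the form of the $\tybridge$- and $\tygel$-reduction rules of \cref{fig:opsem}, whose redexes all carry value types), it follows that
\[
  \hcomp{\tygel{\bm\Ge}{A_0}{A_1}{a_0.a_1.R}}{r}{s}{Q}{\sys{\xi_i}{y.Q_i}} \steps \hcomp{A_\Ge}{r}{s}{Q}{\sys{\xi_i}{y.Q_i}},
\]
and this single step is preserved by every interval substitution $\Gps$, since $\bm\Ge\Gps = \bm\Ge$.

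In order, I would then: (1) note that the right-hand side is well-typed — by hypothesis $A_\Ge$ is a Kan type ($\wftype<\GPS>{A_\Ge}$), and $Q$, the $Q_i$, and the $\Gx_i$ satisfy exactly the premises of the homogeneous composition rule of \cref{fig:kan} at $A_\Ge$, so $\wftm<\GPS>{\hcomp{A_\Ge}{r}{s}{Q}{\sys{\xi_i}{y.Q_i}}}{A_\Ge}$, hence $\candeqtm{\GPS}{\hcomp{A_\Ge}{r}{s}{Q}{\sys{\xi_i}{y.Q_i}}}{\hcomp{A_\Ge}{r}{s}{Q}{\sys{\xi_i}{y.Q_i}}}{\sem{A_\Ge}}$; (2) apply Lemma~\ref{lem:coherent-term-expansion} at the type $A_\Ge$ with the constant family $M_\Gps \eqdef \hcomp{A_\Ge}{r}{s}{Q}{\sys{\xi_i}{y.Q_i}}\Gps$ — the required reduction $M\Gps \msteps M_\Gps$ is the one step displayed above, and $\candeqtm{\GPS'}{M_\Gps}{M_\id\Gps}{\sem{A_\Ge}\Gps}$ is immediate because $M_\Gps$ literally \emph{is} $M_\id\Gps$ and step (1) (being stable under $\Gps$) shows this term is coherently typed — concluding $\eqtm<\GPS>{\hcomp{\tygel{\bm\Ge}{A_0}{A_1}{a_0.a_1.R}}{r}{s}{Q}{\sys{\xi_i}{y.Q_i}}}{\hcomp{A_\Ge}{r}{s}{Q}{\sys{\xi_i}{y.Q_i}}}{A_\Ge}$; (3) the open version over an arbitrary $\GG$ then follows automatically, as the open judgments are defined by their closed instantiations.

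The only point needing care — and the closest thing to an obstacle — is the order in which $\hcomp$ fires its reductions relative to its degenerate cases: under some $\Gps$ we may have $r\Gps = s\Gps$ or $\Gx_i\Gps$ true, and if the corresponding generic reductions of $\hcomp$ take precedence over type evaluation, then the left-hand $\hcomp$ term under $\Gps$ steps straight to $Q\Gps$ (resp.\ to $\usubstdim{Q_i}{s}{y}$ under $\Gps$) without literally passing through $\hcomp{A_\Ge}{r}{s}{Q}{\sys{\xi_i}{y.Q_i}}\Gps$. This is handled by instead choosing the family $M_\Gps$ by cases on $\Gps$ (the reduct of $\hcomp{A_\Ge}{r}{s}{Q}{\sys{\xi_i}{y.Q_i}}\Gps$ in the degenerate cases, that term itself otherwise) and closing the remaining gap with the boundary equations for $\hcomp$ at the Kan type $A_\Ge$, namely $\hcomp{A_\Ge}{r}{r}{M}{\sys{\xi_i}{y.N_i}} = M$ and $\hcomp{A_\Ge}{r}{s}{M}{\sys{\xi_i}{y.N_i}} = \usubstdim{N_i}{s}{y}$ when $\Gx_i$ holds. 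If $\hcomp$ instead evaluates its type before testing those conditions, this wrinkle disappears and the one-step argument above suffices as written; in either case the mathematical content is just that $\tygel$ at an endpoint is $A_\Ge$, so $\hcomp$ cannot tell the difference.
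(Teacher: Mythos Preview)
Your proposal is correct and matches the paper's proof: apply Lemma~\ref{lem:coherent-term-expansion}, noting that every substitution instance of the left-hand side steps (via type evaluation in $\hcomp$) to the corresponding instance of the right-hand side, which is well-typed because $A_\Ge$ is Kan. Your caveat about reduction order is unnecessary here---in this operational semantics $\hcomp$ evaluates its type argument before anything else, and the degenerate-case equations for $\hcomp$ are derived typing equalities rather than competing operational rules---but the extra care does no harm.
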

\begin{proof}
  By Lemma~\ref{lem:coherent-term-expansion}: every substitution instance of the left-hand side steps to the
  corresponding instance of the right-hand side, which is well-typed because $A_\Ge$ is Kan.
\end{proof}

\begin{lem}\label{lem:gel-hcom-beta}
  Let $\wftype<\GPS>{A_\Ge}$ for $\Ge \in \{0,1\}$ and
  $\wftype[\ctxsnoc{\ctxsnoc{\GPS}[a_0]{A_0}}[a_1]{A_1}]{R}$. Abbreviate
  $G \eqdef \tygel{\bmx}{A_0}{A_1}{a_0.a_1.R}$. For any $\wfpdim<\ctxbdim{\GPS}[x]>{r,s}$, $n \in \BN$,
  $\wfcst<\ctxbdim{\GPS}[x]>{\Gx_i}$, $\wftm<\ctxbdim{\GPS}[x]>{Q}{G}$,
  $\eqtm<\ctxpdim{\ctxbdim{\GPS}[x]}[y]>{Q_i}{Q_j}{G}$ for all $i,j < n$, and
  $\eqtm<\ctxbdim{\GPS}[x]>{Q}{\usubstdim{Q_i}{r}{y}}{G}$ for all $i < n$, we have
  $\eqtm<\ctxbdim{\GPS}[x]>{\hcomp{G}{r}{s}{Q}{\sys{\xi_i}{y.Q_i}}}{\tmgel{\bmx}{M_{0,s}}{M_{1,s}}{P}}{G}$
  where $M_{\Ge,-}$ and $P$ are defined as follows.
  \begin{align*}
    M_{\Ge,y} &\eqdef
  \hcomp{A_\Ge}{r}{y}{\usubstdim{Q}{\bm\Ge}{\bmx}}{\sys{\usubstdim{\xi_i}{\bm\Ge}{\bmx}}{y.\usubstdim{Q_i}{\bm\Ge}{\bmx}}} \\
    P &\eqdef \comp{y.\usubst{R}{M_{0,y},M_{1,y}}{a_0,a_1}}{r}{s}{\tmungel{\bmx.Q}}{\sys{\forall \bmx. \xi_i}{y.\tmungel{\bmx.Q_i}}}
  \end{align*}
\end{lem}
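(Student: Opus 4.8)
The plan is to apply Lemma~\ref{lem:coherent-term-expansion} with the putative value $M_\id \eqdef \tmgel{\bmx}{M_{0,s}}{M_{1,s}}{P}$. Since $\bmx$ is a variable in the ambient interval context $\ctxbdim{\GPS}[x]$, the operational semantics of \cref{fig:opsem} gives directly that $\hcomp{G}{r}{s}{Q}{\sys{\xi_i}{y.Q_i}}$ steps to $\tmgel{\bmx}{M_{0,s}}{M_{1,s}}{P}$, and the same holds after any interval substitution $\Gps$ that keeps $\bmx$ a variable. Along a substitution $\Gps$ with $\bmx\Gps = \bm\Ge$ for some endpoint $\bm\Ge$, we instead have $G\Gps = \tygel{\bm\Ge}{A_0}{A_1}{a_0.a_1.R}\Gps \steps A_\Ge\Gps$, whence the $\hcomp$ continues inside $A_\Ge\Gps$, so $\hcomp{G}{r}{s}{Q}{\sys{\xi_i}{y.Q_i}}\Gps$ reduces to the corresponding instance of $M_{\Ge,s}$; by \rulename{Gel-Intro-$\partial$} this equals $\tmgel{\bm\Ge}{M_{0,s}}{M_{1,s}}{P}\Gps = M_\id\Gps$, and both are well-typed at $A_\Ge\Gps$ since $A_\Ge$ is Kan (this endpoint computation is essentially Lemma~\ref{lem:gel-hcom-boundary}). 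Using that $G$ is a pretype by Lemma~\ref{lem:gel-formation-pretype}, the hypotheses of Lemma~\ref{lem:coherent-term-expansion} thus reduce to a single point: that $\tmgel{\bmx}{M_{0,s}}{M_{1,s}}{P}$ is well-typed at $G$.

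To see this I would check the premises of \rulename{Gel-Intro}. Each $M_{\Ge,y}$ is an $\hcomp$ in the Kan type $A_\Ge$ whose cap $\usubstdim{Q}{\bm\Ge}{\bmx}$ lies in $\usubstdim{G}{\bm\Ge}{\bmx} = A_\Ge$ by \rulename{Gel-Form-$\partial$}, and whose faces $\usubstdim{Q_i}{\bm\Ge}{\bmx}$ are well-typed and compatible by restricting the hypotheses on $Q$ and the $Q_i$ to $\bmx = \bm\Ge$; hence $M_{\Ge,y}$ inhabits $A_\Ge$ with $y$ free, and the degeneracy law for $\hcomp$ gives $M_{\Ge,r} = \usubstdim{Q}{\bm\Ge}{\bmx}$. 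Consequently \rulename{Gel-Elim} makes $\tmungel{\bmx.Q}$ inhabit $\usubst{R}{M_{0,r},M_{1,r}}{a_0,a_1}$, which is the $r$-endpoint of the line $y.\usubst{R}{M_{0,y},M_{1,y}}{a_0,a_1}$; this line is a well-formed Kan type line because $R$ is a type over $a_0 : A_0, a_1 : A_1$ and each $M_{\Ge,y}$ depends only on $y$. Thus $P$ is a heterogeneous composition with a well-typed cap, and for its tube I would use the operator $\forall \bmx.(-)$: a constraint $\xi_i$ mentioning $\bmx$ has $\forall \bmx.\xi_i = (\bm0 = \bm1)$ and so drops out, while a $\xi_i$ apart from $\bmx$ has $\forall \bmx.\xi_i = \xi_i$, under which $\tmungel{\bmx.Q_i}$ is well-typed, has type $\usubst{R}{M_{0,y},M_{1,y}}{a_0,a_1}$ (the $\hcomp$ face law forces $\usubstdim{Q_i}{\bm\Ge}{\bmx} = M_{\Ge,y}$ there), and agrees with $\tmungel{\bmx.Q}$ at $y = r$; pairwise compatibility of the tube follows the same way. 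Hence $P$ inhabits $\usubst{R}{M_{0,s},M_{1,s}}{a_0,a_1}$, and \rulename{Gel-Intro} gives that $\tmgel{\bmx}{M_{0,s}}{M_{1,s}}{P}$ inhabits $G$, which completes the argument via Lemma~\ref{lem:coherent-term-expansion}.

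I expect the main obstacle to be precisely this constraint bookkeeping: matching the $\forall \bmx.\xi_i$-tube of the composition $P$ against the $\usubstdim{\xi_i}{\bm\Ge}{\bmx}$-tubes defining the $M_{\Ge,y}$, and verifying the cap and pairwise-coherence equations for $P$. This is where affinity and the $\forall \bmx$ operator do genuine work, and it mirrors the corresponding manipulations for $\hcomp$ at $\tyv$-types in cartesian cubical type theory~\cite{angiuli19}; everything else is a mechanical unfolding of the candidate judgments, with full details available in~\cite{cavallo19a}.
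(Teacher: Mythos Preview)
Your proposal is correct and follows essentially the same route as the paper: apply Lemma~\ref{lem:coherent-term-expansion}, case-split on whether $\bmx\Gps$ is a variable or an endpoint, and in the variable case verify that the reduct $\tmgel{\bmx}{M_{0,s}}{M_{1,s}}{P}$ is well-typed using \rulename{Gel-Intro}, \rulename{Gel-Elim}, and the Kan structure of $A_\Ge$ and $R$. Your detailed analysis of the $\forall\bmx.\xi_i$ tube for $P$ is more explicit than the paper's own proof, which simply cites the relevant rules; the paper, in turn, flags one point you gloss over when you say ``the same holds after any interval substitution $\Gps$ that keeps $\bmx$ a variable'': the reduct contains $\tmungel{\bmx.Q}$ and $\tmungel{\bmx.Q_i}$, and one must check that this capture of $\bmx$ commutes with $\Gps$, which is precisely where affinity of bridge substitution is used (not only in the $\forall\bmx$ bookkeeping).
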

\begin{proof}
  By Lemma~\ref{lem:coherent-term-expansion}. For every $\dctxsubst{\GPS'}{\Gps}{(\ctxbdim{\GPS}[x])}$, we
  have two cases.
  \begin{itemize}[label=$\triangleright$]
  \item $\bmx\Gps = \bm\Ge \in \{\bm0,\bm1\}$. Then
    $\hcomp{G}{r}{s}{Q}{\sys{\xi_i}{y.Q_i}}\Gps \steps \hcomp{A_\Ge}{r}{s}{Q}{\sys{\xi_i}{y.Q_i}}\Gps$, and we
    have
    $\eqtm<\GPS'>{\hcomp{A_\Ge}{r}{s}{Q}{\sys{\xi_i}{y.Q_i}}\Gps}{\tmgel{\bmx}{M_{0,s}}{M_{1,s}}{P}\Gps}{G\Gps}$
    by \rulename{Gel-Intro-$\partial$} and the assumption that $A$ is Kan.
  \item $\bmx\Gps$ is a variable.  Then
    $\hcomp{G}{r}{s}{Q}{\sys{\xi_i}{y.Q_i}}\Gps \steps \tmgel{\bmx}{M_{0,s}}{M_{1,s}}{P}\Gps$, and we have
    $\wftm<\GPS'>{\tmgel{\bmx}{M_{0,s}}{M_{1,s}}{P}\Gps}{G\Gps}$ by \rulename{Gel-Intro-$\partial$},
    \rulename{Gel-Elim}, and the assumption that the $A_\Ge$ and $R$ are Kan. We use here that the capture of
    $\bmx$ by $\tmungel$ in the definition of the reduct commutes with $\Gps$, which relies on the affinity
    of bridge interval substitution. \qedhere
  \end{itemize}
\end{proof}

\begin{rul}[\rulename{Gel-Form}]%
  \label{rule:gel-form}
  If $\wfbdim<\GPS>{\bmr}$, $\eqtype<\ctxres{\GPS}{\bmr}>{A_\Ge}{A'_\Ge}$ for each $\Ge \in \{0,1\}$, and
  $\eqtype[\ctxsnoc{\ctxsnoc{\ctxres{\GPS}{\bmr}}[a_0]{A_0}}[a_1]{A_1}]{R}{R'}$, then we have the following.
  \[
    \eqtype<\GPS>{\tygel{\bmr}{A_0}{A_1}{a_0.a_1.R}}{\tygel{\bmr}{A'_0}{A'_1}{a_0.a_1.R'}}
  \]
\end{rul}
\begin{proof}
  We must check that $\tygel$ supports the Kan operations. We give the proof for $\hcomp$.  Abbreviate
  $G \eqdef \tygel{\bmr}{A_0}{A_1}{a_0.a_1.R}$ and $G' \eqdef \tygel{\bmr}{A_0'}{A_1'}{a_0.a_1.R'}$. Let
  $\dctxsubst{\GPS'}{\Gps}{\GPS}$, $\wfpdim<\GPS'>{r,s}$, $n \in \BN$, $\wfcst<\GPS'>{\Gx_i}$ for all $i < n$,
  $\eqtm<\GPS'>{Q}{Q'}{G\Gps}$, $\eqtm<\ctxpdim{\GPS'}[y]>{Q_i}{Q'_j}{G\Gps}$ for all $i,j < n$, and
  $\eqtm<\GPS'>{Q}{\usubstdim{Q_i}{r}{y}}{G\Gps}$ for all $i < n$ be given. If $\bmr\Gps$ is a constant,
  then we simply apply \rulename{Gel-Form-$\partial$} and Lemma~\ref{lem:gel-hcom-boundary} everywhere.

  If $\bmr\Gps$ is a variable $\bmx$, then
  $\eqtm<\GPS'>{\hcomp{G}{r}{s}{Q}{\sys{\xi_i}{y.Q_i}}}{\tmgel{\bmx}{M_{0,s}}{M_{1,s}}{P}}{G\Gps}$
  and
  $\eqtm<\GPS'>{\hcomp{G'}{r}{s}{Q'}{\sys{\xi_i}{y.Q_i'}}}{\tmgel{\bmx}{M'_{0,s}}{M'_{1,s}}{P'}}{G'\Gps}$
  as defined in Lemma~\ref{lem:gel-hcom-beta}. Then we have the following.
  \begin{itemize}[label=$\triangleright$]
  \item
    $\eqtm<\GPS'>{\hcomp{G}{r}{s}{Q}{\sys{\xi_i}{y.Q_i}}}{\hcomp{G'}{r}{s}{Q'}{\sys{\xi_i}{y.Q_i'}}}{G\Gps}$
    follows from the fact that
    $\eqtm<\GPS'>{\tmgel{\bmx}{M_{0,s}}{M_{1,s}}{P}}{\tmgel{\bmx}{M'_{0,s}}{M'_{1,s}}{P'}}{G\Gps}$, which
    holds by \rulename{Gel-Intro-$\partial$}, \rulename{Gel-Elim}, and the assumption that the $A_\Ge$ and $R$
    are Kan.
  \item $\eqtm<\GPS'>{\hcomp{G}{r}{s}{Q}{\sys{\xi_i}{y.Q_i}}}{\usubstdim{Q_i}{s}{y}}{G\Gps}$ if $\Gx_i$ is
    true follows by cases on $\Gx_i$. If $\bmx$ does not occur in $\Gx_i$, then $\forall \bmx. \Gx_i =
    \Gx_i$. It follows by the boundary equations for $\hcomp$ in $A_\Ge$ and $R$ that the composite is equal
    to
    $\usubstdim{\tmgel{\bmx}{\usubstdim{Q_i}{\bm0}{\bmx}}{\usubstdim{Q_i}{\bm1}{\bmx}}{\tmungel{\bmx.Q_i}}}{s}{y}$,
    and this term is equal to $\usubstdim{Q_i}{s}{y}$ by \rulename{Gel-$\eta$}. If $\bmx$ does occur in
    $\Gx_i$, then the constraint must be either $\bmx = \bm0$ or $\bmx = \bm1$, in which case it is
    contradictory that $\Gx_i$ is true.
  \item $\eqtm<\GPS'>{\hcomp{G}{r}{r}{Q}{\sys{\xi_i}{y.Q_i}}}{Q}{G\Gps}$ holds by the corresponding Kan
    equations for the $A_\Ge$ and $R$ together with \rulename{Gel-Intro} and \rulename{Gel-$\eta$}. \qedhere
  \end{itemize}

\end{proof}

\section{Formal parametric type theory}\label{sec:formal}

While we have anchored our type theory in a computational interpretation, we would also like to use parametric
cubical type theory as a logic for reasoning about other settings. For this reason, we abstract a formal type
theory from the collection of inference rules we have developed in the preceding sections. The proofs of those
inference rules, as given for $\tygel$-types in \cref{sec:computational-rules}, establish that the
computational interpretation is one model of the formalism. In \cref{sec:presheaf}, we see that the theory can
also be interpreted in cartesian-affine bicubical sets.

We focus on parametric type theory here; for the cubical ingredients, we defer to prior work~\cite[Appendix
B]{angiuli19}.  In the pure parametric case, the theory is defined by the judgments shown in \cref{fig:formal}
and their equality counterparts. We take care to ensure our definition constitutes a generalized algebraic
theory (GAT)~\cite{cartmell86}, using for example explicit substitutions.\footnote{%
  We will nevertheless permit ourselves a certain amount of routine syntactic sugar; for one, we will not
  fully annotate terms.
} %
Ensuring admissibility of substitution---that every term is equal to one containing no explicit
substitutions---requires some innovation. In particular, the theory presented in~\cite{bernardy15} does not
satisfy admissibility of substitution, a consequence of the way rules using interval terms (such as bridge
elimination) are formulated. Rectifying this issue motivates the introduction of the context restriction
operator $\ctxres{-}{-}$ we have already encountered. We present a formulation of context restriction as an
explicit context former characterized as a left adjoint to extension by an interval variable.

\begin{figure}
  \centering
  \begin{tabular}{ll}
    $\wfctx*{\GG}$ & $\GG$ is a context \\
    $\wfbdim*[\GG]{\bmr}$ & $\bmr$ is a bridge interval term in context $\GG$ \\
    $\wftype*[\GG]{A}$ & $A$ is a type in context $\GG$ \\
    $\wftm*[\GG]{M}{A}$ & $M$ is a term of type $A$ in context $\GG$ \\
    $\wfsubst*[\GG]{\Gd}{\GD}$ & $\Gd$ is a substitution for context $\GD$ in context $\GG$ \\
  \end{tabular}

  \caption{Judgments of formal parametric type theory}%
  \label{fig:formal}
\end{figure}

We defer serious metatheoretic analysis of the formalism we present, such as normalization or decidability of
equality, to future work.

\subsection{The bridge interval}

The main novelty is our treatment of bridge interval restriction. Rather than relying on an operation
$\ctxres{-}{\bmr}$ on raw contexts---which would destroy the algebraic character of the theory---we treat context
restriction as a primitive context-forming operation.
\begin{mathparpagebreakable}
  \inferrule[ctx-nil]
  { }
  {\wfctx*{\ctxnil}}
  \and
  \inferrule[ctx-term]
  {\wftype*[\GG]{A}}
  {\wfctx*{\ctxsnoc{\GG}{A}}}
  \and
  \inferrule[ctx-$\BFI$]
  {\wfctx*{\GG}}
  {\wfctx*{\ctxbdim{\GG}}}
  \and
  \inferrule[ctx-restrict]
  {\wfctx*{\GG} \\
    \wfbdim*[\GG]{\bmr}}
  {\wfctx*{\ctxres*{\GG}{\bmr}}}
\end{mathparpagebreakable}
As is usual for ordinary terms, interval terms include variables and are closed under (explicit)
substitutions. We defer the matter of the constants $\bm0$ and $\bm1$ for the moment.
\begin{mathparpagebreakable}
  \inferrule[$\BFI$-var]
  { }
  {\wfbdim*[\ctxbdim{\GG}]{\bdimvar}}
  \and
  \inferrule[$\BFI$-subst]
  {\wfbdim*[\GD]{\bmr} \\
    \wfsubst*[\GG]{\Gd}{\GD}}
  {\wfbdim*[\GG]{\bdimsubst{\bmr}{\Gd}}}
\end{mathparpagebreakable}

Restriction is characterized by its relationship with extension by a bridge interval variable. Given an
interval term $\wfbdim*[\GG]{\bmr}$ and substitution $\wfsubst*[\ctxres*{\GG}{\bmr}]{\Gd}{\GD}$, we may build
a substitution $\wfsubst*[\GG]{\substbdim*{\Gd}{\bmr}}{\ctxbdim{\GD}}$. Conversely, given
$\wfsubst*[\GG]{\Gd}{\ctxbdim{\GD}}$, we may project a term $\wfbdim*[\GG]{\bdimsubst{\bdimvar}{\Gd}}$ and
substitution $\wfsubst*[\ctxres*{\GG}{\bdimsubst{\bdimvar}{\Gd}}]{\substbtranspose{\Gd}}{\GD}$. This sets up
an adjunction between the category of contexts $\GG$ and its slice over the bridge interval, which is to say
the category of substitutions $\wfbdim*[\GG]{\bmr}$, with $\ctxres{-}{-}$ as the left adjoint and
$\ctxbdim{-}$ as the right.
\begin{mathpar}
  \inferrule[subst-$\BFI$]
  {\wfbdim*[\GG]{\bmr} \\
    \wfsubst*[\ctxres*{\GG}{\bmr}]{\Gd}{\GD}}
  {\wfsubst*[\GG]{\substbdim*{\Gd}{\bmr}}{\ctxbdim{\GD}}}
  \and
  \inferrule[subst-restrict]
  {\wfsubst*[\GG]{\Gd}{\ctxbdim{\GD}}}
  {\wfsubst*[\ctxres*{\GG}{\bdimsubst{\bdimvar}{\Gd}}]{\substbtranspose{\Gd}}{\GD}}
  \\
  \inferrule[subst-eq-$\BFI$]
  {\wfctx*{\GD} \\
    \wfsubst*[\GG]{\Gd}{\ctxbdim{\GD}}}
  {\eqsubst*[\GG]{\Gd}{\substbdim*{\substbtranspose{\Gd}}{\bdimsubst{\bdimvar}{\Gd}}}{\ctxbdim{\GD}}}
  \and
  \inferrule[subst-eq-restrict]
  {\wfbdim*[\GG]{\bmr} \\
    \wfsubst*[\ctxres*{\GG}{\bmr}]{\Gd}{\GD}}
  {\eqsubst*[\ctxres*{\GG}{\bmr}]{\Gd}{\substbtranspose{(\substbdim*{\Gd}{\bmr})}}{\GD}}
\end{mathpar}
These rules induce a functorial action by interval extension,
$\wfsubst*[\ctxbdim{\GG}]{\funcbdim{\Gd} \eqdef
  \substbdim*{(\substconc{\Gd}{\substbtranspose{\substid}})}{\bdimvar}}{\ctxbdim{\GD}}$, as well as an action
by restriction,
$\wfsubst*[\ctxres*{\GG}{\bdimsubst{\bmr}{\Gd}}]{\funcres{\Gd}{\bmr} \eqdef
  \substbtranspose{(\substconc{\substbdim*{\substid}{\bmr}}{\Gd})}}{\ctxres*{\GD}{\bmr}}$. Using these, we
additionally require that the correspondence is natural.

\begin{mathpar}
  \inferrule[subst-$\BFI$-natural]
  {\wfsubst*[\GG]{\Gd}{\GD} \\
    \wfbdim*[\GX]{\bmr} \\
    \wfsubst*[\ctxres*{\GX}{\bmr}]{\Gg}{\GG}}
  {\eqsubst*[\GX]{\substbdim*{(\substconc{\Gd}{\Gg})}{\bmr}}{\substconc{\funcbdim{\Gd}}{(\substbdim*{\Gg}{\bmr})}}{\ctxbdim{\GD}}}
  \and
  \inferrule[subst-restrict-natural]
  {\wfsubst*[\GG]{\Gd}{\ctxbdim{\GD}} \\
    \wfsubst*[\GX]{\Gg}{\GG}}
  {\eqsubst*[\ctxres*{\GX}{\bdimsubst{\bdimvar}{\substconc{\Gd}{\Gg}}}]{\substbtranspose{(\substconc{\Gd}{\Gg})}}{\substconc{\substbtranspose{\Gd}}{(\funcres{\Gg}{\bdimsubst{\bdimvar}{\Gd}})}}{\GD}}
\end{mathpar}

The structural laws
and constants are then given as generating substitutions (together with the expected equations between them,
such as $\substconc{\substprojbdim}{\substbface{\Ge}} = \substid$ and naturality laws).

\begin{mathpar}
  \inferrule[subst-face]
  {\Ge \in \{0,1\}}
  {\wfsubst*[\GG]{\substbface{\Ge}}{\ctxbdim{\GG}}}
  \and
  \inferrule[subst-degen]
  { }
  {\wfsubst*[\ctxbdim{\GG}]{\substprojbdim}{\GG}}
  \and
  \inferrule[subst-exchange]
  {\wfctx*{\GG}}
  {\wfsubst*[\ctxbdim{\ctxbdim{\GG}}]{\substbex}{\ctxbdim{\ctxbdim{\GG}}}}
\end{mathpar}

Note that the existence of a substitution $\wfsubst*[\GG]{\substbface{\Ge}}{\ctxbdim{\GG}}$ is
slightly stronger than the existence of a term $\wfbdim*[\GG]{\overline{\substbface{\bm\Ge}}}$; the
latter would only give us a substitution
$\wfsubst*[\GG]{\substbdim*{\substid}{\overline{\substbface{\Ge}}}}{\ctxbdim{\ctxres*{\GG}{\bdimsubst{\bdimvar}{\overline{\substbface{\Ge}}}}}}$.

We note that the rules for $\BFI$ we have presented so far are consistent with an interpretation by a
structural interval, in which case context restriction would be the identity function. It is not until we
introduce rules for $\tmextent$ and $\tygel$ that the structural interval ceases to model the theory.

On the cubical side, we can treat path interval variables in the same way as term variables. However, we also
need the principle that bridge and path variables can be exchanged.
\begin{mathparpagebreakable}
  \inferrule[subst-$\BI$]
  {\wfsubst*[\GG]{\Gd}{\GD} \\
    \wfpdim*[\GD]{r}}
  {\wfsubst*[\GG]{\substpdim*{\Gd}{r}}{\ctxpdim{\GD}}}
  \and
  \inferrule[subst-proj-$\BI$]
  { }
  {\wfsubst*[\ctxpdim{\GG}]{\substprojpdim}{\GG}}
  \and
  \inferrule[subst-$\BI\BFI$]
  {\wfctx*{\GG}}
  {\wfsubst*[\ctxpdim{\ctxbdim{\GG}}]{\substpb}{\ctxbdim{\ctxpdim{\GG}}}}
\end{mathparpagebreakable}

The substitution $\substpb$ serves to invert the substitution
$\wfsubst*[\ctxbdim{\ctxpdim{\GG}}]{\substpdim*{\funcbdim{\substprojpdim}}{\pdimsubst{\pdimvar}{\substprojbdim}}}{\ctxpdim{\ctxbdim{\GG}}}$,
and expresses that path terms are always apart from bridge terms. Besides this principle, the cubical and
parametric sides of the theory only interact via the allowance for bridge constraints in $\hcomp$ terms and
the inclusion of rules for computing Kan operations in $\tybridge$- and $\tygel$-types, which we may formulate
following the operational semantics shown in \cref{fig:kan}.

\subsection{Type and term formers}%
\label{sec:formal:type-formers}

With the judgmental infrastructure in place, it is fairly straightforward to translate the computational type
formers introduced in \cref{sec:parametric} to the formal setting. We describe the rules for $\tybridge$-types
here; rules for $\tygel$-types and $\tmextent$ may be found in \cref{app:formal}.  The formation,
introduction, and elimination rules for $\tybridge$-types follow exactly the pattern of Figure~\ref{fig:bridge-types}.

\begin{mathpar}
  \inferrule
  {\wftype*[\ctxbdim{\GG}]{A} \\
    \wftm*[\GG]{M_0}{\tysubst{A}{\substbface0}} \\
    \wftm*[\GG]{M_1}{\tysubst{A}{\substbface1}}}
  {\wftype*[\GG]{\tybridge{A}{M_0}{M_1}}}
  \and
  \inferrule
  {\wftype*[\ctxbdim{\GG}]{A} \\
    \wftm*[\ctxbdim{\GG}]{M}{A}}
  {\wftm*[\GG]{\tmblam{M}}{\tybridge{A}{\tmsubst{M}{\substbface0}}{\tmsubst{M}{\substbface1}}}}
  \and
  \inferrule
  {\wfbdim*[\GG]{\bmr} \\
    \wftype*[\ctxbdim{\ctxres*{\GG}{\bmr}}]{A} \\
    \wftm*[\ctxres*{\GG}{\bmr}]{M_0}{\tysubst{A}{\substbface0}} \\
    \wftm*[\ctxres*{\GG}{\bmr}]{M_1}{\tysubst{A}{\substbface1}} \\
    \wftm*[\ctxres*{\GG}{\bmr}]{P}{\tybridge{A}{M_0}{M_1}}}
  {\wftm*[\GG]{\tmbapp{P}{\bmr}}{\tysubst{A}{\substbdim*{\substid}{\bmr}}}}
\end{mathpar}

It is the elimination rule---along with the rules for $\tmextent$ and $\tygel$-types---that necessitates the
introduction of the interval restriction operator. In~\cite{bernardy15}, bridge elimination is instead
described by a rule of the following kind.

\begin{mathpar}
  \inferrule[tm-app-bcm]
  {\wftype*[\ctxbdim{\GG}]{A} \\
    \wftm*[\GG]{M_0}{\tysubst{A}{\substbface0}} \\
    \wftm*[\GG]{M_1}{\tysubst{A}{\substbface1}} \\
    \wftm*[\GG]{P}{\tybridge{A}{M_0}{M_1}}}
  {\wftm*[\ctxbdim{\GG}]{\mathsf{app}(P)}{A}}
\end{mathpar}

This form of elimination is inter-derivable with our own: one may set
$\tmbapp{P}{\bmr} \eqdef \tmsubst{\mathsf{app}(P)}{\substbdim*{\substid}{\bmr}}$ or conversely
$\mathsf{app}(P) \eqdef \tmbapp{\tmsubst{P}{\substbtranspose{\substid}}}{\bdimvar}$. However, the~\cite{bernardy15} rule produces a formalism in which substitution is not admissible, that is, a theory in
which not every term is equal to one containing no use of the $\tmsubst{-}{-}$ operator. Given $P$ as in the
rule and a substitution $\wfsubst*[\GD]{\Gg}{\ctxbdim{\GG}}$, there is no way to reduce the term
$\tmsubst{\mathsf{app}(P)}{\Gg}$ unless it happens that $\GD = \ctxbdim{\GD'}$ and $\Gg = \funcbdim{\Gg'}$ for
some $\wfsubst*[\GD']{\Gg'}{\GG}$, in which case
$\tmsubst{\mathsf{app}(P)}{\Gg} = \mathsf{app}(\tmsubst{P}{\Gg'})$. By contrast, we may reduce a term
$\tmsubst{(\tmbapp{P}{\bmr})}{\Gg}$ using the functorial action of restriction, as prescribed by the rule
below.

\begin{mathpar}
  \inferrule
  {\wfsubst*[\GD]{\Gg}{\GG} \\
    \wfbdim*[\GG]{\bmr} \\
    \wftype*[\ctxbdim{\ctxres*{\GG}{\bmr}}]{A} \\
    \wftm*[\ctxres*{\GG}{\bmr}]{M_0}{\tysubst{A}{\substbface0}} \\
    \wftm*[\ctxres*{\GG}{\bmr}]{M_1}{\tysubst{A}{\substbface1}} \\
    \wftm*[\ctxres*{\GG}{\bmr}]{P}{\tybridge{A}{M_0}{M_1}}}
  {\eqtm*[\GG]{\tmsubst{(\tmbapp{P}{\bmr})}{\Gg}}{\tmbapp{\tmsubst{P}{\funcres{\Gg}{\bmr}}}{\bdimsubst{\bmr}{\Gg}}}{\tysubst{\tysubst{A}{\substbdim*{\substid}{\bmr}}}{\Gg}}}
\end{mathpar}

Finally, the $\beta$-, $\eta$-, and boundary rules for $\tybridge$-types can be expressed as follows. Note that these
rules respectively make use of the unit
$\wfsubst*[\GG]{\substbdim*{\substid}{\bmr}}{\ctxbdim{\ctxres*{\GG}{\bmr}}}$ and counit
$\wfsubst*[\ctxres*{\ctxbdim{\GG}}{\bdimvar}]{\substbtranspose{\substid}}{\GG}$ of the adjunction between
$\ctxres{-}{-}$ and $\ctxbdim{-}$.

\begin{mathpar}
  \inferrule
  {\wfbdim*[\GG]{\bmr} \\
    \wftype*[\ctxbdim{\ctxres*{\GG}{\bmr}}]{A} \\
    \wftm*[\ctxbdim{\ctxres*{\GG}{\bmr}}]{M}{A}}
  {\eqtm*[\GG]{\tmbapp{\tmlam{M}}{\bmr}}{\tmsubst{M}{\substbdim*{\substid}{\bmr}}}{\tysubst{A}{\substbdim*{\substid}{\bmr}}}}
  \and
  \inferrule
  {\wftype*[\ctxbdim{\GG}]{A} \\
    \wftm*[\GG]{M_0}{\tysubst{A}{\substbface0}} \\
    \wftm*[\GG]{M_1}{\tysubst{A}{\substbface1}} \\
    \wftm*[\GG]{P}{\tybridge{A}{M_0}{M_1}}}
  {\eqtm*[\GG]{P}{\tmblam{\tmbapp{\tmsubst{P}{\substbtranspose{\substid}}}{\bdimvar}}}{\tybridge{A}{M_0}{M_1}}}
  \and
  \inferrule
  {\Ge \in \{0, 1\} \\
    \wftype*[\ctxbdim{\GG}]{A} \\
    \wftm*[\GG]{M_0}{\tysubst{A}{\substbface0}} \\
    \wftm*[\GG]{M_1}{\tysubst{A}{\substbface1}} \\
    \wftm*[\GG]{P}{\tybridge{A}{M_0}{M_1}}}
  {\eqtm*[\GG]{\tmbapp{\tmsubst{P}{\substbtranspose{\substbface\Ge}}}{\bdimsubst{\bdimvar}{\substbface\Ge}}}{M_\Ge}{\tysubst{A}{\substbface\Ge}}}

\end{mathpar}

\section{A semantics in bicubical sets}\label{sec:presheaf}

We now describe a second semantics for the formal type theory of \cref{sec:formal} in a presheaf category of
\emph{bicubical sets}, adapting Angiuli et al.'s presheaf semantics for cubical type theory~\cite{abcfhl}.

\begin{defi}
  We define the \emph{cartesian-affine bicube category} $\bicube$ to have as objects interval contexts $\GPS$
  and as morphisms interval substitutions $\wfsubst<\GPS'>{\Gps}{\GPS}$, as specified in Definition~\ref{def:closed-interval-judgments}.
\end{defi}

\begin{rem}
  The category $\bicube$ is equivalent to a product $\cart \times \aff$ of two \emph{cube categories}, the
  \emph{cartesian cube category} $\cart$ consisting of path interval contexts and the \emph{affine cube
    category} $\aff$ consisting of bridge interval contexts.
\end{rem}

The presheaf category $\PSh{\bicube}$ is the category of contravariant functors from $\bicube$ to $\Set$,
meaning that its objects are families of sets indexed by interval contexts with transition maps for each
interval substitution. This parallels the situation in the computational interpretation, where types are given
meaning by families of relations indexed by such contexts. We use $\YoSym$ (hiragana `yo') to denote the
Yoneda embedding $\bicube \to \PSh{\bicube}$.

\begin{rem}
  Bernardy, Coquand, and Moulin instead interpret their type theory in a category of \emph{refined presheaves}
  on $\aff$~\cite{bernardy15}. Roughly, a refined presheaf is a $\GPS$-indexed family where for each
  $\GPS \in \aff$, we have not merely a set but a \emph{$\GPS$-set}, a family of sets indexed by sub-contexts
  $\GPS' \subseteq \GPS$. This refinement is used to validate the equivalents in their setting of equations
  $\tybridge{\bmx.\tygel{\bmx}{A_0}{A_1}{R}} = R$ and
  $C = \tmblam[x]{\tygel{\bmx}{A_0}{A_1}{\tybridge{\bmx.\tmbapp{C}{\bmx}}}}$, as mentioned in
  \cref{sec:parametric:gel}. When we build parametric type theory on a cubical base, we no longer need these
  equations to hold exactly, as we can prove they hold up to a path using univalence (Theorem~\ref{thm:relativity}).
\end{rem}

\subsection{Judgments and cubical type theory}

We recall the presheaf interpretation of the judgments of cubical type theory developed in~\cite{cchm,abcfhl},
which draw on earlier presheaf interpretations of dependent type theory~\cite{hofmann97}.

\begin{defi}
  A \emph{semantic context} is a presheaf $G \in \PSh{\bicube}$; a \emph{semantic substitution} between
  contexts $G',G$ is a presheaf morphism (\ie, natural transformation) $\Ga : G' \to G$.
\end{defi}

\begin{defi}
  A \emph{semantic pretype} over a context $G \in \PSh{\bicube}$ is a presheaf $T \in \PSh{(\int G)}$ over the
  category of elements $\int G$, which is to say the following data:
  \begin{itemize}[label=$\triangleright$]
  \item for every $\GPS \in \bicube$ and $g \in G(\GPS)$, a set $T(\GPS,g)$;
  \item for every $\wfsubst<\GPS'>{\Gps}{\GPS}$ and $g \in G(\GPS)$, a map
    $T(\Gps) : T(\GPS',G(\Gps)(g)) \to T(\GPS,g)$.
  \end{itemize}
\end{defi}

\begin{defi}
  A \emph{semantic element} $t$ of a pretype $T$ in context $G$ is a family of elements
  $t(\GPS,g) \in T(\GPS,g)$ indexed by $\GPS \in \bicube$ and $g \in G(\GPS)$ such that
  $T(\Gps)(t(\GPS,g)) = t(\GPS',G(\Gps)(g))$ for every $\wfsubst<\GPS'>{\Gps}{\GPS}$ and $g \in G(\GPS)$.
\end{defi}

A semantic \emph{type} is then a pretype equipped with coercion and homogeneous composition operators
implementing the rules shown in Figure~\ref{fig:kan}. We give the definition of coercion operator here and
leave it to the reader to infer the corresponding notion of \emph{homogeneous composition operator}.

\begin{defi}
  Given a pretype $T$ over $G$, a \emph{coercion operator} $c$ for $T$ is a family of elements as follows: for
  every $\GPS \in \bicube$, interval terms $\wfpdim<\GPS>{r,s}$, element $g \in G(\ctxpdim{\GPS}[x])$,
  and $t \in T(\GPS,G(\substpdim{\substid[\GPS]}{r}[x])(g))$, we require an element
  $c(\GPS,r,s,g,t) \in T(\GPS,G(\substpdim{\substid[\GPS]}{s}[x])(g))$. We ask that these satisfy the
  following properties.
  \begin{itemize}[label=$\triangleright$]
  \item $T(\Gps)(c(\GPS,r,s,g,t)) = c(\GPS',r\Gps,s\Gps,G(\Gps)(g),T(\Gps)(t))$ for every
    $\wfsubst<\GPS'>{\Gps}{\GPS}$.
  \item $c(\GPS,r,r,g,t) = t$.
  \end{itemize}
\end{defi}

\begin{defi}
  A \emph{semantic type} $(T,c,h)$ over $G$ is a triple consisting of a semantic pretype $T$ over $G$ with
  coercion and homogeneous composition operators $c$ and $h$.
\end{defi}

\begin{rem}
  A semantic substitution $\Ga : G' \to G$ acts on types and terms over $G$ by reindexing; we write $\Ga^*T$
  and $\Ga^*t$ for the action on types and terms respectively.
\end{rem}

\begin{defi}
  A \emph{semantic interval term} over $G$ is a presheaf morphism $r : G \to \yon{(\ctxpdim{}[x])}$. A
  \emph{semantic constraint} is a morphism $r : G \to \subdec$ where $\subdec$ is the \emph{decidable
    subobject classifier} in $\PSh{\bicube}$, which classifies monomorphisms $m : H' \rightarrowtail H$ in
  $\PSh{\bicube}$ such that $m(\GPS)$ has decidable image for all $\GPS$.
\end{defi}

Angiuli \etal's~\cite[Theorem 1]{abcfhl} shows that cartesian cubical type theory can be interpreted using
these semantic judgments in any presheaf category whose base category contains a suitably structured interval
object.

\begin{prop}
  $\PSh{\bicube}$ interprets cubical type theory with an infinite hierarchy of univalent universes, each
  closed under dependent function and product types, $\typath$-types, and $\tyv$-types.
\end{prop}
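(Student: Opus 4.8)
The plan is to obtain the proposition as an instance of Angiuli et al.'s interpretation theorem \cite[Theorem~1]{abcfhl}, which produces a model of cartesian cubical type theory with an infinite hierarchy of univalent universes---each closed under $\Pi$-, $\Sigma$-, $\typath$-, and $\tyv$-types---inside any presheaf topos equipped with a suitably axiomatized interval object together with the decidable subobject classifier acting as its cofibration classifier. Concretely, I would take the semantic interval to be the representable presheaf $\mathbb{I} := \yon{(\ctxpdim{}[x])}$ and the cofibration classifier to be the $\subdec$ introduced above, and then verify that $(\PSh{\bicube},\mathbb{I},\subdec)$ meets the hypotheses of that theorem.

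Discharging those hypotheses is the only real work, and here I would use the decomposition $\bicube \simeq \cart \times \aff$ noted above, with projection $p : \bicube \to \cart$. The topos-theoretic part of the cofibration axioms---closure under $\top$, $\bot$, $\wedge$, $\vee$---holds automatically in any presheaf topos. The remaining axioms are all statements about $\mathbb{I}$ and its interaction with $\subdec$: that $\mathbb{I}$ has two disjoint endpoints $0,1 : 1 \to \mathbb{I}$ and the structural (weakening/contraction/exchange) maps, that it is connected in the sense that $(i = 0) \vee (i = 1)$ is not $\top$, and that cofibrations are closed under the quantifier $\forall\,x$ along $\mathbb{I}$. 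Since $\mathbb{I} = \yon{(\ctxpdim{}[x])}$ is (the image under $p^*$ of) the interval object of the cartesian cube category $\cart$, and $p^*$ preserves finite limits, each of these reduces to its counterpart in $\PSh{\cart}$, where it holds by the original cartesian cubical model \cite{angiuli18,abcfhl}---modulo a routine check that the constructions involved are preserved by $p^*$, which for the $\forall$-closure is a Beck--Chevalley argument. The affine factor $\aff$ contributes no path structure and so is inert for the cubical (Kan) layer.

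With the hypotheses in hand, \cite[Theorem~1]{abcfhl} yields the interpretation directly: the universe hierarchy comes from the Hofmann--Streicher construction \cite{hofmann97} with the usual inaccessible-cardinal size bounds, which replays verbatim over the small category $\bicube$; each universe is fibrant and univalent by the ABCFHL argument via $\tyv$- (equivalently $\tyglue$-) types; and each is closed under $\Pi$-, $\Sigma$-, $\typath$-, and $\tyv$-types with their coercion and homogeneous-composition operators. I expect the main obstacle to be bookkeeping rather than a new idea: running through ABCFHL's catalogue of interval and cofibration axioms and confirming each for $(\mathbb{I},\subdec)$ in the bicubical topos---the place that genuinely requires care being the $\forall$-closure and the tininess-type properties of $\mathbb{I}$, where one must check that the relevant right adjoints still land among the decidable subobjects of $\PSh{\bicube}$---together with checking that the universe construction transfers. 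The conceptual content is entirely in the cartesian case, which \cite{abcfhl} has already settled.
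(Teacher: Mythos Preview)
Your overall strategy---invoke \cite[Theorem~1]{abcfhl} with $\mathbb{I} = \yon{(\ctxpdim{}[x])}$ and cofibration classifier $\subdec$---is exactly what the paper does. However, you miss the one genuine obstacle and overcomplicate the rest.

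The obstacle is that \cite[Theorem~1]{abcfhl}, as stated, requires the base category to have finite products, and $\bicube$ does not: there is no product of $(\ctxbdim{}[\bmx])$ with $(\ctxbdim{}[\bmy])$, since bridge variables are affine. Your list of hypotheses (endpoints, connectedness, $\forall$-closure, tininess) omits this assumption entirely, so your plan of ``verifying that $(\PSh{\bicube},\mathbb{I},\subdec)$ meets the hypotheses'' cannot succeed as written. The paper's fix is to observe that the \emph{proof} of the theorem only uses the functor $- \times (\ctxpdim{}[x])$ on the base, and this does exist in $\bicube$ because path variables are structural. Your decomposition $\bicube \simeq \cart \times \aff$ and the functor $p^*$ do not sidestep this: you are still applying the theorem to $\PSh{\bicube}$, not transporting a model from $\PSh{\cart}$, so you still need the base-category hypothesis (or its weakening) for $\bicube$ itself.

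Conversely, the axioms you do discuss need no reduction to $\PSh{\cart}$ at all. The interval $\yon{(\ctxpdim{}[x])}$ is representable in $\PSh{\bicube}$, hence tiny; its endpoints, disjointness, connectedness, and the diagonal/exchange maps are immediate from the structure of $\cart$ sitting inside $\bicube$; and $\forall x$-closure of decidable subobjects is a direct pointwise check. The Beck--Chevalley detour is unnecessary. The paper's proof is accordingly two sentences: cite the theorem, then note and resolve the finite-products mismatch.
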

\begin{proof}
  By Angiuli \etal's Theorem 1~\cite{abcfhl}. The formulation of cartesian cubical type theory given there is
  slightly different from our own (for example taking $\comp$ rather than $\coe$ and $\hcomp$ as primitive),
  but not in any essential way.

  We note that the statement of the theorem in~\cite{abcfhl} requires that the base category is closed under
  finite products, which is not the case for $\bicube$: the cartesian product of the contexts
  $(\ctxbdim{}[\bmx])$ and $(\ctxbdim{}[\bmy])$ does not exist. However, the proof only actually requires that
  the product functor $- \times (\ctxpdim{}[x])$ exists, and this is indeed the case in $\bicube$.
\end{proof}

\subsection{Bridge interval and restriction}

We now turn to the parametric side of the theory. As with the path interval, we interpret bridge interval
terms in a context $G$ as morphisms $\bmr : G \to \yon{(\ctxbdim{}[x])}$. To interpret bridge interval context
extension and restriction, we observe that we have an adjunction between $\bicube$ and its slice category over
the affine interval $(\ctxbdim{}[\bmx])$. Note that elements of this slice category consist of contexts $\GPS$
paired with bridge interval terms $\wfbdim<\GPS>{\bmr}$.
\[
  \begin{tikzcd}[column sep=4em]
    \bicube/(\ctxbdim{}[\bmx]) \ar[bend left]{r}[above,font=\normalsize]{\cubres} \ar[phantom]{r}{\bot} &
    \bicube\phantom{/(\ctxbdim{}[\bmx])} \ar[bend left]{l}[below,font=\normalsize]{\cubext}
  \end{tikzcd}
\]

The right adjoint $\cubext$ sends a context $\GPS$ to the extended context $(\ctxbdim{\GPS}[\bmx])$ with its
canonical projection $\wfbdim<\ctxbdim{\GPS}[\bmx]>{\bmx}$. The left adjoint is interval restriction: it sends
a pair $(\GPS,\bmr)$ to the restricted context $\ctxres{\GPS}{\bmr}$ defined here as in Section~\ref{sec:parametric:interval}.
\begin{align*}
  \ctxres{\GPS}{\bm\Ge} &\eqdef \GPS \quad \text{if $\bm\Ge \in \{\bm0,\bm1\}$} \\
  \ctxres{(\ctxpdim{\GPS}[y])}{\bmx} &\eqdef \ctxpdim{\ctxres{\GPS}{\bmx}}[y] \\
  \ctxres{(\ctxbdim{\GPS}[y])}{\bmx} &\eqdef
  \left\{
    \begin{array}{ll}
      \GPS & \text{if $\bmx = \bmy$} \\
      \ctxbdim{\ctxres{\GPS}{\bmx}}[y] & \text{if $\bmx \neq \bmy$}
    \end{array}
  \right.
\end{align*}

This adjunction in the base category induces, among other things, the following pair of adjoint functors
between the presheaf category and its slice. We implicitly use the equivalence
$\PSh{(\bicube/\GPS)} \simeq \PSh{\bicube}/\yon{\GPS}$ between presheaves on slice categories and slices over
representables.
\[
  \begin{tikzcd}[column sep=7em, row sep=4em]
    \PSh{\bicube}/\yon{(\ctxbdim{}[\bmx])} \ar[bend left]{r}[above,font=\normalsize]{{\cubres}_!} \ar[phantom]{r}{\bot} &
    \PSh{\bicube} \ar[bend left]{l}[font=\normalsize]{\cubres^* \cong \cubext_!}
  \end{tikzcd}
\]

Here $\cubres^*$ is precomposition with $\cubres$, while $\cubres_!$ and $\cubext_!$ are each defined by
left Kan extension. Both $\cubext_!$ and $\cubres^*$ are left adjoint to $\cubext^*$, so are
necessarily isomorphic. As for $\cubres_!$, it may be explicitly calculated as the following coend.
\[
  \cubres_!(G,\bmr)(\GPS) = \int^{\wfbdim*<\GPS'>{\bms}} \{g \in G(\Psi') \mid \bmr(\Psi')(g) = \bms\} \times \{\Gps \mid \wfsubst<\GPS>{\Gps}{\ctxres{\GPS'}{\bms}}\}
\]
For our purposes, however, it is only necessary to know that the extensions ${\cubres}_!$ and $\cubext_!$
apply the base functors on representables, that is, that
${\cubres}_!(\yon{\GPS},\bmr) \cong \yon{(\ctxres{\GPS}{\bmr})}$ and
$\cubext_!(\yon{\GPS}) \cong \yon{(\substbdim{}{\bmx}[\bmx])} : \yon{(\ctxbdim{\GPS}[\bmx])} \to
\yon{(\ctxbdim{}[\bmx])}$; this is a general property of Kan extensions. Henceforth we write $\cubprod{G}$ for
the object part of $\cubext_!G$ and $\sembvar{G} : \cubprod{G} \to \yon{(\ctxbdim{}[\bmx])}$ for the
associated projection.

We use $\cubres_!$ to interpret the type-theoretic restriction of a context by an interval term, likewise
$\cubprod{-}$ to interpret extension by an interval variable and $\sembvar{G}$ for the variable rule. The
isomorphism between hom-sets given by the adjunction $\cubres_! \dashv \cubext_!$ implements the substitution
constructors \rulename{subst-$\BFI$} and \rulename{subst-restrict}. The structural rules for the bridge
interval derive from natural transformations in the base category via the action of ${(-)}_!$; for example, the
endpoint transformation $\bm\Ge : \funcid \to \pi \circ \cubext$ defined by
$\wfsubst<\GPS>{\bm\Ge(\GPS) \eqdef (\substbdim{\substid[\GPS]}{\bm\Ge}[\bmx])}{(\ctxbdim{\GPS}[\bmx])}$
induces a corresponding transformation $\bm\Ge_! : \funcid \to (\cubprod{-})$ in the presheaf category
interpreting the endpoint substitution \rulename{subst-face}.

\subsection{Type and term formers}

To interpret the rules for forming types and terms---$\tybridge$-types, $\tygel$-types, and $\tmextent$---it
is useful to observe that the semantic judgments, like the computational ones in \cref{sec:computational}, are
determined by their instantiations at interval contexts (\ie, representables). For example, a semantic type
$T$ in context $G$ is determined by the types $g^* T$ for $g : \yon{\GPS} \to G$: recalling that the Yoneda
lemma identifies morphisms $g : \yon{\GPS} \to G$ with elements $g \in G(\GPS)$, we have as we have
$T(\GPS,g) = (g^*T)(\GPS,\substid[\GPS])$. Conversely, if we have a family of types $T_g$ over $\yon{\GPS}$
for every $g : \yon{\GPS} \to G$ such that ${(\yon{\Gps})}^*T_g = T_{g \circ \yon{\Gps}}$ for all
$\wfsubst<\GPS'>{\Gps}{\GPS}$, then this determines a type $T$ over $G$: take
$T(\GPS,g) \eqdef T_g(\GPS,\substid[\GPS])$. A similar principle applies to terms.

The upshot is that we may verify that rules hold in an arbitrary context by showing they hold (naturally) in
any interval context, as we did for the computational interpretation in Section~\ref{sec:computational-rules}.
In the restricted case we may take advantage of the characterizations
${\cubres}_!(\yon{\GPS},\bmr) \cong \yon{(\ctxres{\GPS}{\bmr})}$ and
$\cubext_!(\yon{\GPS}) \cong \yon{(\substbdim{}{\bmx}[\bmx])} : \yon{(\ctxbdim{\GPS}[\bmx])} \to
\yon{(\ctxbdim{}[\bmx])}$, saving us from formal reasoning with the general Kan extension.

\begin{thm}%
  \label{thm:presheaf-bridge}
  $\PSh{\bicube}$ is closed under $\tybridge$-pretypes.
\end{thm}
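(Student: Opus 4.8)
The plan is to construct a semantic $\tybridge$-pretype by the slicewise strategy laid out just before the theorem: it suffices to give, for each $\GPS \in \bicube$ and each element $g : \yon{\GPS} \to G$ (equivalently $g \in G(\GPS)$) classifying a semantic pretype $T$ over $\cubprod{\yon{\GPS}} \cong \yon{(\ctxbdim{\GPS}[\bmx])}$ together with two sections $M_0, M_1$ of its restrictions along the endpoint maps, a pretype $\sembridge{T}{M_0}{M_1}$ over $\yon{\GPS}$, naturally in $\Gps : \GPS' \to \GPS$. Concretely, over a further context $\GPS'$ with $\Gps : \GPS' \to \GPS$, I would define the set $\sembridge{T}{M_0}{M_1}(\GPS', \Gps)$ to be the set of semantic elements $t$ of $(\yon{\Gps})^*$-reindexed $T$ over $\cubprod{\yon{\GPS'}}$ whose restrictions along $\substbface0$ and $\substbface1$ agree with the corresponding reindexings of $M_0$ and $M_1$ — mirroring exactly the value-type-system clause $\sysbridge{\Gt}$ from \cref{sec:computational:construct}. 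The transition maps are induced by the functorial action of $\cubprod{-}$ on morphisms of $\bicube$.

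The key steps, in order: (1) spell out the presheaf $\sembridge{T}{M_0}{M_1}$ over $\yon{\GPS}$ as above, checking it really is a presheaf, i.e. that the restriction maps are functorial — this is routine naturality bookkeeping using that $\cubprod{-}$ and the endpoint transformations $\bm\Ge_!$ are functorial/natural. (2) Verify the gluing/descent condition $(\yon{\Gps})^* \sembridge{T_g}{M_0}{M_1} = \sembridge{T_{g\circ\yon{\Gps}}}{M_0\Gps}{M_1\Gps}$ so that these slicewise pretypes assemble into a genuine semantic pretype over an arbitrary context $G$; this follows because reindexing commutes with $\cubprod{-}$ on representables and with the endpoint maps. (3) Define the introduction form: a semantic element of $T$ over $\cubprod{\yon{\GPS}}$ whose endpoints are $M_0, M_1$ is tautologically an element of $\sembridge{T}{M_0}{M_1}(\GPS, \substid)$, giving $\semblam{-}$; and the elimination form $\sembapp{-}$ at an interval term $\bmr : G \to \yon{(\ctxbdim{}[\bmx])}$ is obtained by reindexing a bridge element along the classifying map of $\bmr$ through the counit $\cubprod{\cubres_!(G,\bmr)} \to G$ of the adjunction — this is precisely where the restriction infrastructure $\cubres_!$ does its work, matching the formal rule \rulename{tm-app} from \cref{sec:formal:type-formers}. (4) Check the $\beta$-, $\eta$-, and boundary equations; $\beta$ and boundary are immediate from the definitions, and $\eta$ follows from the fact that a semantic element of $T$ over $\cubprod{\yon{\GPS}}$ is determined by its action at all $\GPS'$, i.e.\ from the naturality built into the definition of semantic element.

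The main obstacle I expect is step (3), specifically getting the elimination form and its substitution/naturality rule right. The subtlety is exactly the one the paper flags in \cref{sec:formal:type-formers}: bridge application is only legal against an interval term $\bmr$ that is "apart from" $P$, and semantically this is encoded by the adjunction $\cubres_! \dashv \cubext_!$ rather than by a naive exponential. I would need to chase the unit $G \to \cubext_! \cubres_!(G, \bmr)$ and counit carefully to see that $\sembapp{P}$ applied to $\bmr$ lands in the correctly-reindexed type $\bmr^* T$, and that this is natural in the ambient context — i.e.\ that it satisfies the analogue of the displayed substitution-commutation rule for $\tmbapp{P}{\bmr}$. Everything else is a direct presheaf-theoretic transcription of the $\sysbridge{\Gt}$ clause, and since $\tybridge$-types carry no "unstable" operational behaviour (as remarked at the start of \cref{sec:computational-rules}), no delicate case analysis on whether $\bmr$ is a constant or a variable is needed here — that complication is deferred to the $\tygel$ and $\tmextent$ cases.
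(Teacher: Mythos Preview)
Your proposal is correct and matches the paper's approach closely: both restrict to representable contexts, define $\sembridge{T}{M_0}{M_1}(\GPS',\Gps)$ as elements of $T$ over $\ctxbdim{\GPS'}[\bmx]$ with prescribed endpoints, and implement application via the functorial action of restriction on substitutions. One small correction: the map you invoke for elimination is the \emph{unit} $G \to \cubprod{\cubres_!(G,\bmr)}$ of the adjunction $\cubres_! \dashv \cubext_!$, not the counit, and it points in the opposite direction from what you wrote---reindexing the bridge element along this unit is exactly what the paper's concrete formula $T(\substbdim{}{\bmr\Gps}[\bmx])(u(\ctxres{\GPS'}{\bmr\Gps}, \ctxres{\Gps}{\bmr}))$ computes.
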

\begin{proof}
  Per the argument above, we narrow our attention without loss of generality to the cases where the ambient
  context is representable.

  \begin{itemize}[label=$\triangleright$]
  \item \emph{Formation.} \\
    Let a semantic pretype $T$ in context $\cubprod{\yon{\GPS}} \cong \yon{(\ctxbdim{\GPS}[\bmx])}$ be given
    together with endpoint elements $t_0$ of $\yon{(\substbdim{}{\bm0}[\bmx])}^*T$ and $t_1$ of
    $\yon{(\substbdim{}{\bm1}[\bmx])}^*T$. We define a semantic pretype $\sembridge{T}{t_0}{t_1}$ over
    $\yon{\GPS}$ as follows.
    \[
      \sembridge{T}{t_0}{t_1}(\GPS',\Gps) \eqdef
      \{
      a \in T((\ctxbdim{\GPS'}[\bmx]), (\substbdim{\Gps}{\bmx}[\bmx]))
      \mid
      \text{$\forall \Ge.\ T(\substbdim{}{\bm\Ge}[\bmx])(a) = t_\Ge(\GPS',\Gps)$} \}
    \]
    That is, an element of $\sembridge{T}{t_0}{t_1}$ in context $\GPS'$ is an element of $T$ in context
    $(\ctxbdim{\GPS'}[\bmx])$ with the requested endpoints. The action of $\sembridge{T}{t_0}{t_1}$ on
    substitutions is likewise defined from the action of $T$ in the natural way.
  \item \emph{Introduction.} \\
    Similarly, given a semantic element $t$ of $T$ such that $\yon{(\substbdim{}{\bm1}[\bmx])}^*t = t_0$ and
    $\yon{(\substbdim{}{\bm1}[\bmx])}^*t = t_1$, we have an abstracted element $\semblam{t}$ of
    $\sembridge{T}{t_0}{t_1}$ defined as follows.
    \[
      \semblam{t}(\GPS,g) \eqdef t((\ctxbdim{\GPS}[\bmx]), (\substbdim{\Gps}{\bmx}[\bmx]))
    \]
  \item \emph{Elimination.} \\
    To interpret application, we assume now that we have some
    $\bmr : \yon{\GPS} \to \yon{(\ctxbdim{\GPS}[\bmx])}$ and that the pretype $T$ lies in context
    $\cubprod{\cubres_!(\yon{\GPS},\bmr)} \cong \yon{(\ctxbdim{\ctxres{\GPS}{\bmr}}[\bmx])}$. Given
    an element $u$ of $\sembridge{T}{t_0}{t_1}$,
    \[
      \sembapp{u}(\GPS',\Gps) \eqdef
      T(\substbdim{}{\bmr\Gps}[\bmx])(u(\ctxres{\GPS'}{\bmr\Gps}, \ctxres{\Gps}{\bmr}))
    \]
    Here $\wfsubst<\ctxres{\GPS'}{\bmr\Gps}>{\ctxres{\Gps}{\bmr}}{\ctxres{\GPS}{\bmr}}$ is the functorial
    action of restriction on $\Gps$. By definition of the bridge type, the term
    $u(\ctxres{\GPS'}{\bmr\Gps}, \ctxres{\Gps}{\bmr})$ is an element of
    $T(\ctxbdim{(\ctxres{\GPS'}{\bmr\Gps}}[\bmx]),(\substbdim{\ctxres{\Gps}{\bmr}}{\bmx}[\bmx]))$; applying 
    $T(\substbdim{}{\bmr\Gps}[\bmx])$ thus gives an element of $T(\GPS',\Gps)$.
  \end{itemize}

  \noindent We leave it to the reader to check that these definitions are natural and that the $\beta$-,
  $\eta$-, and boundary rules are satisfied.
\end{proof}

We may show that the model interprets $\tybridge$-\emph{types}---that is, that $\tybridge$-pretypes can be
equipped with Kan operations---following the computational definition of $\coe$ and $\hcomp$ in
\cref{fig:opsem}; we leave this to the reader. Alternatively, one may follow the definition of composition for
$\typath$-types in the BCH model~\cite[\S7.2]{bch}.

\begin{thm}
  $\PSh{\bicube}$ interprets $\tygel$-pretypes.
\end{thm}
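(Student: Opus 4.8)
The plan is to follow the proof of \cref{thm:presheaf-bridge} closely, transcribing the computational construction $\sysgel{\Gt}$ from \cref{sec:computational:construct} into the presheaf setting. As in the bridge case, semantic pretypes and elements over a context $G$ are determined by their reindexings along maps out of representables, so it suffices to interpret the rules of \cref{fig:gel} when the ambient context is a representable $\yon{\GPS}$, with the restricted premises handled via the identification ${\cubres}_!(\yon{\GPS},\bmr) \cong \yon{(\ctxres{\GPS}{\bmr})}$ already established. Given a semantic bridge term $\bmr : \yon{\GPS} \to \yon{(\ctxbdim{}[\bmx])}$, semantic pretypes $A_0, A_1$ over $\yon{(\ctxres{\GPS}{\bmr})}$, and a semantic pretype $R$ over $\ctxsnoc{\ctxsnoc{\yon{(\ctxres{\GPS}{\bmr})}}[a_0]{A_0}}[a_1]{A_1}$, I would define the pretype $\semgel{\bmr}{A_0}{A_1}{R}$ over $\yon{\GPS}$ stagewise, splitting on whether $\bmr\Gps$ is an endpoint or a variable exactly as $\sysgel{\Gt}$ splits on $\bmr\Gps$: at an endpoint stage the fibre is the corresponding fibre of $A_\Ge$, and at a variable stage it is the set of triples $(m_0, m_1, p)$ with $m_\Ge$ in $A_\Ge$ and $p$ in $R$ over $m_0, m_1$ (all at the restricted stage), with transition maps inherited componentwise and with collapse of the bridge variable to an endpoint acting by projection to the relevant $m_\Ge$. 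This makes $\semgel{\bm\Ge}{A_0}{A_1}{R}$ literally equal to (the reindexing of) $A_\Ge$, validating \rulename{Gel-Form-$\partial$}.

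The remaining rules are then read off with little work. For \rulename{Gel-Intro}, from semantic elements $M_0, M_1, P$ over $\yon{(\ctxres{\GPS}{\bmr})}$ the interpretation of $\tmgel{\bmr}{M_0}{M_1}{P}$ sends a stage to $M_\Ge$ at endpoint stages and to the triple $(M_0,M_1,P)$ at variable stages; this is natural and satisfies \rulename{Gel-Intro-$\partial$} for the same reason as above. For \rulename{Gel-Elim}, an element $Q$ of $\semgel{\bmx}{A_0}{A_1}{R}$ over $\cubprod{\yon{\GPS}} \cong \yon{(\ctxbdim{\GPS}[\bmx])}$, with $\bmx$ the generic variable $\sembvar{\yon{\GPS}}$, takes a triple value at every stage (since over a genuine bridge variable the fibre of the pretype is exactly the triples), so $\tmungel{\bmx.Q}$ is defined by extracting the $p$-component at each stage; one checks that the endpoint components of that triple agree with $Q$ at $\bm0$ and $\bm1$, so the result lies in the appropriate fibre of $R$. \rulename{Gel-$\beta$} follows since the generic instance of $\tmgel{\bmx}{M_0}{M_1}{P}$ is the triple $(M_0,M_1,P)$, whose $p$-component is $P$. \rulename{Gel-$\eta$} is checked stagewise: over a variable stage the left-hand side is some triple and the right-hand side reassembles it from its components (the endpoint reindexings and the $\tmungel$-output), while over an endpoint stage both sides are $Q$ at $\bm\Ge$.

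The part requiring real attention — though it is bookkeeping rather than insight — is verifying that these case-split assignments genuinely define presheaves and presheaf morphisms: the transition maps of $\semgel{\bmr}{A_0}{A_1}{R}$, and the components of the interpretations of $\tmgel$ and $\tmungel$, must be coherent across the interface between ``variable'' and ``endpoint'' stages, which is the presheaf shadow of the reduction $\tygel{\bm\Ge}{A_0}{A_1}{a_0.a_1.R} \steps A_\Ge$ and the corresponding $\tmgel$ reduction. Note that here the crucial fact that an element of the $\tygel$-pretype over a bridge variable is precisely a triple holds by construction, whereas in the computational interpretation the analogue relied on evaluation lemmas such as \cref{lem:evaluation}. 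As in the bridge case, one then additionally equips these pretypes with Kan structure by transcribing the operational rules for $\coe$ and $\hcomp$ on $\tygel$ from \cref{fig:opsem}, which I would leave to the reader. It is worth remarking that this plain-presheaf interpretation suffices only because, over a cubical base, we do not require the strict equations $\tybridge{\bmx.\tygel{\bmx}{A_0}{A_1}{R}} = R$ and the surjectivity law of~\cite{bernardy15}; those are recovered up to a path via \cref{thm:relativity}, so the refined presheaves of the BCM model are unnecessary.
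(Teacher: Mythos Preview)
Your proposal is correct and follows essentially the same approach as the paper: both define $\semgel{\bmr}{A_0}{A_1}{R}$ at a representable stage $\yon{\GPS}$ by case analysis on whether $\bmr\Gps$ is an endpoint (in which case the fibre is that of $A_\Ge$) or a variable (in which case it is the set of triples $(m_0,m_1,p)$ at the restricted stage). The paper in fact only writes out the formation rule explicitly and declares the remaining rules straightforward, so your treatment of introduction, elimination, $\beta$, and $\eta$ is more detailed than the paper's own proof.
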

\begin{proof}
  We prove the formation rule, following the computational definition in Section~\ref{sec:computational:construct}.  It is straightforward to see how the introduction and elimination rules
  follow.

  Let a interval term $\bmr : \yon{\GPS} \to \yon{(\ctxbdim{}[\bmx])}$, semantic pretypes $T_0,T_1$ in context
  $\cubres_!(\yon{\GPS},\bmr) \cong \yon{(\ctxres{\GPS}{\bmr}})$, and a semantic pretype $R$ in context 
  $\yon{(\ctxres{\GPS}{\bmr}}).(T_0 \times T_1)$---here $-.-$ is the semantic equivalent of context 
  extension---be given. We define the $\tygel$-pretype as follows.
  \begin{align*}
    \semgel{\bmr}{T_0}{T_1}{R}(\GPS',\Gps)
    &\eqdef T_{\Ge}(\GPS',\ctxres{\Gps}{\bmr})
    &\text{if $\bmr\Gps = \bm\Ge$} \\
    \semgel{\bmr}{T_0}{T_1}{R}(\GPS',\Gps)
    &\eqdef
      \left\{(a_0,a_1,t) \,\middle|\,
      \begin{array}{l}
        a_\Ge \in T_{\Ge}(\ctxres{\GPS'}{\bmr\Gps},\ctxres{\Gps}{\bmr}) \\
        t \in {(\yon{(\ctxres{\Gps}{\bmr})}.(a_0 \times a_1))}^*R(\ctxres{\GPS'}{\bmr\Gps},\substid)
      \end{array}
    \right\}
    &\text{otherwise}
  \end{align*}
\end{proof}

As with $\tybridge$-types, the Kan operations may be implemented following the computational definition given
in \cref{fig:opsem}. We note that homogeneous composition relies on the closure of the decidable subobject
classifier $\subdec$ under $\forall \bmx. -$; this parallels the use of $\forall x. -$ for composition in
$\tyg$-, $\tyglue$-, or $\tyv$-types in~\cite{bch,cchm,abcfhl}. As $\tybridge$-types resemble BCH
$\typath$-types, so do $\tygel$-types resemble BCH $\tyg$-types. Coercion for $\tygel$ is, however, much
simpler than for its cubical equivalents, because the ``direction'' of a coercion is always a path variable
and therefore orthogonal to the direction $\bmr$ of $\tygel{\bmr}{A}{B}{R}$: one may coerce ``across'' a
$\tyv$-type, but not across a $\tygel$-type.

We finish by sketching the interpretation of $\tmextent$. Suppose we are given dimension term
$\bmr : \yon{\GPS} \to \yon{(\ctxbdim{}[\bmx])}$, type $T$ in context
$\yon{(\ctxbdim{\ctxres{\GPS}{\bmr}}[\bmx])}$, and element $t$ of $\yon{(\substbdim{}{\bmr}[\bmx])}^*T$,
together with clause data for the endpoint and variable cases. For any $\GPS'$ and
$\wfsubst<\GPS'>{\Gps}{\GPS}$, we have $t(\GPS',\Gps) \in T(\GPS',(\substbdim{\Gps}{\bmr\Gps}[\bmx]))$; we
proceed by inspecting the status of $\bmr\Gps$. If $\bmr\Gps$ is an endpoint, then we have
$t(\GPS',\Gps) \in T(\GPS',(\substbdim{\Gps}{\bmr\Gps}[\bmx])) =
(\yon{(\substbdim{}{\bm\Ge}[\bmx])}^*T)(\GPS',\Gps)$ and may pass this term to the appropriate endpoint
clause. If $\bmr\Gps$ is a variable, then we employ the substitution
$\wfsubst<\ctxbdim{\ctxres{\GPS'}{\bmr\Gps}}[\bmy]>{\rho}{\GPS'}$ that renames $\bmr\Gps$ to a fresh variable
$\bmy$. We have
$T(\rho)(t(\GPS',\Gps)) \in
T((\ctxbdim{\ctxres{\GPS'}{\bmr\Gps}}[\bmy]),(\substbdim{\ctxres{\Gps}{\bmr}}{\bmy}[\bmx]))$, which per the
proof of Theorem~\ref{thm:presheaf-bridge} is exactly a bridge at $T$. We may then supply this bridge to the
variable clause of $\tmextent$.

\section{Related and future work}\label{sec:related}

\begin{figure}
  \centering
  \begin{tabular}{lll}
    \toprule
    This paper &~\cite{bernardy15} &~\cite{moulin16} \\
    \midrule
    $\tybridge{\bmx.A}{a_0}{a_1}$ & $A \ni_{\bmx} a$ & $(\forall \bmx. A) \ni a$ \\
    $\tmblam[\bmx]{a}$ & $a \cdot {\bmx}$ & $(\langle \bmx \rangle a)!$ \\
    $\tmbapp{p}{\bmx}$ & $(a,_{\bmx} p)$ & $\llparenthesis a,_{\bmx} p \rrparenthesis$ \\
    $\tmextent{\bmx}{-}{a_0.t_0}{a_1.t_1}{a_0.a_1.\overline{a}.u}$ & $\langle \lambda a. t ,_{\bmx} \lambda a. \lambda \overline{a}. u \rangle$ & $\langle\!{\mid} \lambda a. t ,_{\bmx} \lambda a. \lambda \overline{a}. u {\mid}\!\rangle$ \\
    $\tygel{\bmx}{A_0}{A_1}{a_0.a_1.R}$ & $(a : A) \times_{\bmx} R$ & $A \bowtie_{\bmx} R$ \\
    $\tmgel{\bmx}{a_0}{a_1}{c}$ & $(a,_{\bmx} c)$ & $\llparenthesis a,_{\bmx} p \rrparenthesis$ \\
    $\tmungel{\bmx.a}$ & $a \cdot {\bmx}$ & $(\langle \bmx \rangle a)!$ \\
    \bottomrule
  \end{tabular}
  \caption{Translation dictionary for internal parametricity}%
  \label{fig:translation}
\end{figure}

\subsection{Related work}

Mechanically, our parametric cubical type theory is not much more than the union of Angiuli \etal's cartesian
cubical type theory~\cite{angiuli18,abcfhl,angiuli19} and Bernardy, Coquand, and Moulin's parametric type
theory~\cite{bernardy15}. As mentioned in \cref{sec:parametric:gel,sec:presheaf}, we do drop some equations
required for $\tygel$-types in the BCM type theory which are not necessary in the cubical setting and
complicate model constructions. Accordingly, our proof of relativity is novel. The formulation of context
restriction in formalism is also novel, though inspired by Cheney's work on nominal type theory~\cite{cheney12}, and resolves the issue with admissibility of substitution present in the BCM theory. Finally,
Bernardy \etal\ present unary rather than binary parametricity, but from a conceptual perspective this is only
a cosmetic difference, a matter of how many constants are included in the bridge interval.%
\footnote{%
  We conjecture that binary internal parametricity is more powerful than unary parametricity, but that ternary
  parametricity and so on provide no additional strength, because we can iterate binary parametricity to mimic
  $2^n$-ary parametricity for any $n$.
} %
As our notation is quite different from that of Bernardy \etal, we provide a comparison in
\cref{fig:translation}. Note that the mapping is not one-to-one because of the additional equations imposed in
their theory. We also include notations from Moulin's thesis~\cite{moulin16}. In that work, the notion of a
function $(i : \BI) \to A$ without a fixed endpoint (called a ``ray'') is included separately from bridge
types, and term formers that are primitive in~\cite{bernardy15} are often implemented as combinations of terms
relating first interval dependency to rays and then rays to bridges. In particular, $A \bowtie_{\bmx} R$ is
syntactic sugar for a term $\llparenthesis A , \GPS_A R \rrparenthesis @ \bmx$, while
$\langle\!{\mid} f ,_{\bmx} h {\mid}\!\rangle$ is sugar for
$\langle\!{\mid} f ,\Phi_f h {\mid}\!\rangle @ \bmx$; as a result, the equivalents of $\tygel$ and $\tmextent$
are sometimes called $\Psi$- and $\Phi$-operators respectively in the literature.

A second approach to internal parametricity has been proposed by Nuyts, Vezzosi, and Devriese~\cite{nuyts17}. Their system resembles our own in that it is based on bridges and paths, each of which is
represented by a kind of map from an interval. Whereas our bridge and path structures are more-or-less
orthogonal to each other, Nuyts \etal\ use a modality to connect the two. Terms are checked under different
modalities depending on whether they are used in type or element positions, capturing the phase separation
between type and element-level computation that is often identified as a consequence of parametricity.  We see
the two approaches of Bernardy \etal\ and Nuyts \etal\ as internalizing different perspectives on
parametricity: the former internalizes the relational interpretation, while the latter internalizes this phase
separation.

Nuyts \etal\ also distinguish between \emph{continuous} and \emph{parametric} function types: the former
preserve paths and bridges, while the latter take bridges to paths. By contrast, we consider the former to
already be ``parametric''---as we have seen, one can prove parametricity theorems in our setting using only
this property. However, the stronger condition does obviate the need to identify the class of bridge-discrete
types as a replacement for the identity extension lemma. For example, any parametric function $\tyuniv \to A$
in their setting is constant, without any assumptions on $A$ (\cf\ Lemma~\ref{lem:to-bridge-discrete}),
because it takes the bridges in $\tyuniv$ to paths. Also notable is that their path and bridge intervals both
behave structurally, whereas we use an affine interval for bridges. Given the other divergences from Bernardy
\etal's approach, it is difficult to say how the issues we raise with using structural variables for
parametricity affect their system, if at all; it seems that they are ameliorated by the stronger condition on
parametric functions. One notable limitation is that \emph{iterated} parametricity is impossible, that is, the
results produced by parametricity are not subject to further parametricity theorems. This is addressed in a
successor system~\cite{nuyts18}, which introduces an infinite hierarchy of bridge-like relationships
associated with universe level and is capable of capturing iterated parametricity as well as other modal forms
of hypothesis such as \emph{irrelevant} hypotheses.

Nuyts's thesis~\cite{nuyts20} provides a more systematic analysis of the different univalence-like type
formers used in cubical and parametric type theories---$\tyv$, $\tyglue$, $\tyg$, $\tygel$---as derivable from
a \emph{transpension} type former, characterized as the right adjoint to the interval function type former
$(i : \BI) \to -$. This type former corresponds in our setting to the operator $\tygel{\bmx}{\top}{\top}{-}$;
Nuyts derives $\tygel$ from this special case in combination with quantification over the boundary of
$\bmx$.

Tabareau, Tanter, and Sozeau~\cite{tabareau18} develop a theory of \emph{univalent parametricity} in the
Calculus of (Inductive) Constructions. This system defines a kind of relation across which results can be
transported, much as we transport results across isomorphisms using univalence, but develops a logical
relation incorporating ideas from parametricity in order to improve the usability properties of the transport
function. Although univalence and parametricity are both involved, therefore, the objectives are largely
orthogonal to our own.

Riehl and Shulman's \emph{directed type theory}~\cite{riehl17} is a theory in the same mold as our own: it has
two layers of higher structure, one which is used to express equality and one which is used for general
relations. In their case, the goal is to identify those types whose ``bridge'' structure has the structure of
an $(\infty,1)$-category, then use the theory as a language for synthetic higher category theory. Where our
semantics is based on a product of cube categories, they use a product of simplex categories. Interestingly,
their bisimplicial semantics fails to support a universe whose bridges are relations, for reasons that evoke
our comparison of $\tyv$- and $\tygel$-types in \cref{sec:parametric:gel}~\cite{riehl18}. However, the theory
\emph{does} support a universe of \emph{covariant discrete fibrations} in which bridges correspond to
functions (``directed univalence''). More recently, Weaver and Licata~\cite{weaver20} have developed a cubical
(and constructive) variation on this theory, based on the product of two structural cube categories. Like
Riehl and Shulman's theory, this theory supports a universe satisfying directed univalence, but we suspect it
too fails to support a relativistic universe.

Our work fits into traditions of both proof-relevant equality and proof-relevant parametricity. The former is,
of course, a primary focus of the field of homotopy type theory. Proof-relevant and higher-dimensional
variations on parametricity have been developed by Atkey \etal~\cite{atkey14}, Ghani \etal~\cite{ghani15},
and Sojakova and Johann~\cite{sojakova18}. More generally, Benton, Hofmann, and Nigam~\cite{benton14} use a
proof-relevant logical relation to study abstract effects, and proof-relevant logical families have recently
been deployed as tools for proving metatheorems for dependent type theories~\cite{shulman15,coquand18c}.

\subsection{Future work}

Our exploration in \cref{sec:practice} shows that internal parametricity can be effectively employed to prove
difficult theorems involving higher inductive types. However, this only means that these results can be
obtained in \emph{internally parametric} type theory; we would also like to know they are true in
non-parametric type theory. We believe a fruitful approach would be to combine parametric and non-parametric
type theories into a single, \emph{modal} theory containing a mode for parametric results and a mode for
non-pointwise results. In particular, the presheaf categories $\PSh{\cart}$ and $\PSh{\bicube}$, which
interpret cubical and parametric cubical type theory respectively, can be related by \emph{axiomatic
  cohesion}, which has been previously been used in the design of modal type theories~\cite{schreiber12,shulman18}.

The formalism we develop in \cref{sec:formal} must be supported by metatheoretic results such as normalization
in order to be truly utile. We have implemented an experimental type-checker for (non-cubical) parametric type
theory, \texttt{ptt}, based on \emph{normalization by evaluation}; in theory, this implementation implicitly
contains a proof of normalization for the \cref{sec:formal} formalism. However, we have not attempted to
extract such a proof, nor have we verified the algorithm's correctness. The current \texttt{ptt} theory is
also somewhat weaker than that of \cref{sec:formal}: we found it more convenient to give the $\tygel$ type a
positive eliminator rather than a projection with $\eta$-principle. The $\eta$-expansion rule we have used in
this paper applies only to terms that can be put in the form $\usubstdim{Q}{\bmr}{\bmx}$, a condition that
is to our knowledge expensive and painful (though we believe possible) to check.

\section*{Acknowledgments}

We thank Carlo Angiuli, Steve Awodey, Daniel Gratzer, Kuen-Bang Hou (Favonia), Dan Licata, Anders
M\"{o}rtberg, Emily Riehl, Christian Sattler, Michael Shulman, Jonathan Sterling, and Andrew Swan for many
helpful discussions.

\pagebreak
\appendix

\section{Formal parametric type theory}\label{app:formal}

Rules for pushing substitutions through type and term formers are omitted.

\subsection{Contexts}~%

\begin{mathparpagebreakable}
  \inferrule[ctx-nil]
  { }
  {\wfctx*{\ctxnil}}
  \and
  \inferrule[ctx-term]
  {\wftype*[\GG]{A}}
  {\wfctx*{\ctxsnoc{\GG}{A}}}
  \and
  \inferrule[ctx-$\BFI$]
  {\wfctx*{\GG}}
  {\wfctx*{\ctxbdim{\GG}}}
  \and
  \inferrule[ctx-restrict]
  {\wfctx*{\GG} \\
    \wfbdim*[\GG]{\bmr}}
  {\wfctx*{\ctxres*{\GG}{\bmr}}}
\end{mathparpagebreakable}

\subsection{Interval terms}~%

\begin{mathparpagebreakable}
  \inferrule[$\BFI$-var]
  { }
  {\wfbdim*[\ctxbdim{\GG}]{\bdimvar}}
  \and
  \inferrule[$\BFI$-subst]
  {\wfbdim*[\GD]{\bmr} \\
    \wfsubst*[\GG]{\Gd}{\GD}}
  {\wfbdim*[\GG]{\bdimsubst{\bmr}{\Gd}}}
\end{mathparpagebreakable}

\subsection{Interval term equality}~%

\begin{mathparpagebreakable}
  \inferrule[$\BFI$-subst-id]
  {\wfbdim*[\GG]{\bmr}}
  {\eqbdim*[\GG]{\bdimsubst{\bmr}{\substid}}{\bmr}}
  \and
  \inferrule[$\BFI$-subst-conc]
  {\wfbdim*[\GD_0]{\bmr} \\
    \wfsubst*[\GD_1]{\Gd_0}{\GD_0} \\
    \wfsubst*[\GG]{\Gd_1}{\GD_1}}
  {\eqbdim*[\GG]{\bdimsubst{\bmr}{\substconc{\Gd_0}{\Gd_1}}}{\bdimsubst{\bdimsubst{\bmr}{\Gd_0}}{\Gd_1}}}
  \and
  \inferrule[$\BFI$-subst-term]
  {\wfbdim*[\GG]{\bmr} \\
    \wfsubst*[\ctxres*{\GG}{\bmr}]{\Gd}{\GD}}
  {\eqbdim*[\GG]{\bdimsubst{\bdimvar}{\substsnoc*{\Gd}{\bmr}}}{\bmr}}
\end{mathparpagebreakable}

\subsection{Substitutions}~%

\begin{mathparpagebreakable}
  \inferrule[subst-nil]
  { }
  {\wfsubst*[\GG]{\substnil*}{\ctxnil}}
  \and
  \inferrule[subst-id]
  { }
  {\wfsubst*[\GG]{\substid}{\GG}}
  \and
  \inferrule[subst-conc]
  {\wfsubst*[\GD_1]{\Gd_0}{\GD_0} \\
    \wfsubst*[\GG]{\Gd_1}{\GD_1}}
  {\wfsubst*[\GG]{\substconc{\Gd_0}{\Gd_1}}{\GD_0}}
  \and
  \inferrule[subst-term]
  {\wfsubst*[\GG]{\Gd}{\GD} \\
    \wftm*[\GG]{M}{\tysubst{A}{\Gd}}}
  {\wfsubst*[\GG]{\substsnoc*{\Gd}{M}}{\ctxsnoc{\GD}{A}}}
  \and
  \inferrule[subst-proj]
  {\wftype*[\GG]{A}}
  {\wfsubst*[\ctxsnoc{\GG}{A}]{\substproj}{\GG}}
  \and
  \inferrule[subst-$\BFI$]
  {\wfbdim*[\GG]{\bmr} \\
    \wfsubst*[\ctxres*{\GG}{\bmr}]{\Gd}{\GD}}
  {\wfsubst*[\GG]{\substbdim*{\Gd}{\bmr}}{\ctxbdim{\GD}}}
  \and
  \inferrule[subst-restrict]
  {\wfsubst*[\GG]{\Gd}{\ctxbdim{\GD}}}
  {\wfsubst*[\ctxres*{\GG}{\bdimsubst{\bdimvar}{\Gd}}]{\substbtranspose{\Gd}}{\GD}}
  \and
  \inferrule[subst-face]
  {\Ge \in \{0,1\}}
  {\wfsubst*[\GG]{\substbface{\Ge}}{\ctxbdim{\GG}}}
  \and
  \inferrule[subst-degen]
  { }
  {\wfsubst*[\ctxbdim{\GG}]{\substprojbdim}{\GG}}
  \and
  \inferrule[subst-exchange]
  {\wfctx*{\GG}}
  {\wfsubst*[\ctxbdim{\ctxbdim{\GG}}]{\substbex}{\ctxbdim{\ctxbdim{\GG}}}}
  \and
\end{mathparpagebreakable}

We introduce the following abbreviations for the functorial actions of the three forms of context extension.

\begin{mathparpagebreakable}
  \inferrule
  {\wfsubst*[\GG]{\Gd}{\GD} \\
    \wftype*[\ctxmod{\GD}{\Gm}]{A}}
  {\wfsubst*[\ctxsnoc{\GG}{\tysubst{A}{\Gd}}]{\funcsnoc{\Gd} \eqdef \substsnoc*{(\substconc{\Gd}{\substproj})}{\tmvar}}{\ctxsnoc{\GD}{A}}}
  \and
  \inferrule
  {\wfsubst*[\GG]{\Gd}{\GD}}
  {\wfsubst*[\ctxbdim{\GG}]{\funcbdim{\Gd} \eqdef \substbdim*{(\substconc{\Gd}{\substbtranspose{\substid}})}{\bdimvar}}{\ctxbdim{\GD}}}
  \and
  \inferrule
  {\wfsubst*[\GG]{\Gd}{\GD} \\
    \wfbdim*[\GD]{\bmr}}
  {\wfsubst*[\ctxres*{\GG}{\bdimsubst{\bmr}{\Gd}}]{\funcres{\Gd}{\bmr} \eqdef \substbtranspose{(\substconc{\substbdim*{\substid}{\bmr}}{\Gd})}}{\ctxres*{\GD}{\bmr}}}
\end{mathparpagebreakable}

\subsection{Substitution equality}~%

\begin{mathparpagebreakable}
  \inferrule[subst-nil-eta]
  {\wfsubst*[\GG]{\Gd}{\ctxnil}}
  {\eqsubst*[\GG]{\Gd}{\substnil*}{\ctxnil}}
  \and
  \inferrule[subst-id-conc]
  { }
  {\eqsubst*[\GG]{\substconc{\substid}{\Gd}}{\Gd}{\GD}}
  \and
  \inferrule[subst-conc-id]
  { }
  {\eqsubst*[\GG]{\substconc{\Gd}{\substid}}{\Gd}{\GD}}
  \and
  \inferrule[subst-conc-conc]
  {\wfsubst*[\GD_1]{\Gd_0}{\GD_0} \\
    \wfsubst*[\GD_2]{\Gd_1}{\GD_1} \\
    \wfsubst*[\GG]{\Gd_2}{\GD_2}}
  {\eqsubst*[\GG]{\substconc{(\substconc{\Gd_0}{\Gd_1})}{\Gd_2}}{\substconc{\Gd_0}{(\substconc{\Gd_1}{\Gd_2})}}{\GD_0}}
  \and
  \inferrule[subst-proj-term]
  {\wfsubst*[\GG]{\Gd}{\GD} \\
    \wftype*[\GD]{A} \\
    \wftm*[\GG]{M}{A}}
  {\eqsubst*[\GG]{\substconc{\substproj}{(\substsnoc*{\Gd}{M})}}{\Gd}{\GD}}
  \and
  \inferrule[subst-term-eta]
  {\wftype*[\GD]{A} \\
    \wfsubst*[\GG]{\Gd}{\ctxsnoc{\GD}{A}}}
  {\eqsubst*[\GG]{\Gd}{\substsnoc*{(\substconc{\substproj}{\Gd})}{\tmsubst{\tmvar}{\Gd}}}{\ctxsnoc{\GD}{A}}}
  \and
  \inferrule[subst-eq-$\BFI$]
  {\wfctx*{\GD} \\
    \wfsubst*[\GG]{\Gd}{\ctxbdim{\GD}}}
  {\eqsubst*[\GG]{\Gd}{\substbdim*{\substbtranspose{\Gd}}{\bdimsubst{\bdimvar}{\Gd}}}{\ctxbdim{\GD}}}
  \and
  \inferrule[subst-eq-restrict]
  {\wfbdim*[\GG]{\bmr} \\
    \wfsubst*[\ctxres*{\GG}{\bmr}]{\Gd}{\GD}}
  {\eqsubst*[\ctxres*{\GG}{\bmr}]{\Gd}{\substbtranspose{(\substbdim*{\Gd}{\bmr})}}{\GD}}
  \and
  \inferrule[subst-$\BFI$-natural]
  {\wfsubst*[\GG]{\Gd}{\GD} \\
    \wfbdim*[\GX]{\bmr} \\
    \wfsubst*[\ctxres*{\GX}{\bmr}]{\Gg}{\GG}}
  {\eqsubst*[\GX]{\substbdim*{(\substconc{\Gd}{\Gg})}{\bmr}}{\substconc{\funcbdim{\Gd}}{(\substbdim*{\Gg}{\bmr})}}{\ctxbdim{\GD}}}
  \and
  \inferrule[subst-restrict-natural]
  {\wfsubst*[\GG]{\Gd}{\ctxbdim{\GD}} \\
    \wfsubst*[\GX]{\Gg}{\GG}}
  {\eqsubst*[\ctxres*{\GX}{\bdimsubst{\bdimvar}{\substconc{\Gd}{\Gg}}}]{\substbtranspose{(\substconc{\Gd}{\Gg})}}{\substconc{\substbtranspose{\Gd}}{(\funcres{\Gg}{\bdimsubst{\bdimvar}{\Gd}})}}{\GD}}
  \and
  \inferrule[subst-face-natural]
  {\Ge \in \{0,1\} \\
    \wfsubst*[\GG]{\Gd}{\GD}}
  {\eqsubst*[\GG]{\substconc{\funcbdim{\Gd}}{\substbface\Ge}}{\substconc{\substbface\Ge}{\Gd}}{\ctxbdim{\GD}}}
  \and
  \inferrule[subst-degen-natural]
  {\wfsubst*[\GG]{\Gd}{\GD}}
  {\eqsubst*[\ctxbdim{\GG}]{\substconc{\Gd}{\substprojbdim}}{\substconc{\substprojbdim}{\funcbdim{\Gd}}}{\GD}}
  \and
  \inferrule[subst-exchange-natural]
  {\wfsubst*[\GG]{\Gd}{\GD}}
  {\eqsubst*[\ctxbdim{\ctxbdim{\GG}}]{\substconc{\funcbdim[2]{\Gd}}{\substbex}}{\substconc{\substbex}{\funcbdim[2]{\Gd}}}{\ctxbdim{\ctxbdim{\GD}}}}
  \and
  \inferrule[subst-proj-face]
  {\Ge \in \{0,1\}}
  {\eqsubst*[\GG]{\substconc{\substprojbdim}{\substbface\Ge}}{\substid}{\GG}}
  \and
  \inferrule[subst-proj-exchange]
  { }
  {\eqsubst*[\ctxbdim{\ctxbdim{\GG}}]{\substconc{\substprojbdim}{\substbex}}{\funcbdim{\substprojbdim}}{\ctxbdim{\GG}}}
  \and
  \inferrule[subst-exchange-exchange]
  { }
  {\eqsubst*[\ctxbdim{\ctxbdim{\GG}}]{\substconc{\substbex}{\substbex}}{\substid}{\ctxbdim{\ctxbdim{\GG}}}}
\end{mathparpagebreakable}

\subsection{Types}~%

\begin{mathparpagebreakable}
  \inferrule[ty-subst]
  {\wftype*[\GD]{A} \\
    \wfsubst*[\GG]{\Gd}{\GD}}
  {\wftype*[\GG]{\tysubst{A}{\Gd}}}
\end{mathparpagebreakable}

\subsection{Type equality}~%

\begin{mathparpagebreakable}
  \inferrule[ty-subst-id]
  { }
  {\eqtype*[\GG]{\tysubst{A}{\substid}}{A}}
  \and
  \inferrule[ty-subst-conc]
  {\wftype*[\GD_0]{A} \\
    \wfsubst*[\GD_1]{\Gd_0}{\GD_0} \\
    \wfsubst*[\GG]{\Gd_1}{\GD_1}}
  {\eqtype*[\GG]{\tysubst{A}{\substconc{\Gd_0}{\Gd_1}}}{\tysubst{\tysubst{A}{\Gd_0}}{\Gd_1}}}
\end{mathparpagebreakable}

\subsection{Terms}~%

\begin{mathparpagebreakable}
  \inferrule[tm-var]
  {\wftype*[\GG]{A}}
  {\wftm*[\ctxsnoc{\GG}{A}]{\tmvar}{\tysubst{A}{\substproj}}}
  \and
  \inferrule[tm-subst]
  {\wfsubst*[\GG]{\Gd}{\GD} \\
    \wftm*[\GD]{M}{A}}
  {\wftm*[\GG]{\tmsubst{M}{\Gd}}{\tysubst{A}{\Gd}}}
\end{mathparpagebreakable}

\subsection{Term equality}~%

\begin{mathparpagebreakable}
  \inferrule[tm-subst-id]
  {\wftm*[\GG]{M}{A}}
  {\eqtm*[\GG]{\tmsubst{M}{\substid}}{M}{A}}
  \and
  \inferrule[tm-subst-conc]
  {\wftm*[\GD_0]{M}{A} \\
    \wfsubst*[\GD_1]{\Gd_0}{\GD_0} \\
    \wfsubst*[\GG]{\Gd_1}{\GD_1}}
  {\eqtm*[\GG]{\tmsubst{M}{\substconc{\Gd_0}{\Gd_1}}}{\tmsubst{\tmsubst{M}{\Gd_0}}{\Gd_1}}{\tysubst{\tysubst{A}{\Gd_0}}{\Gd_1}}}
  \and
  \inferrule[tm-subst-term]
  {\wfsubst*[\GG]{\Gd}{\GD} \\
    \wftype*[\GD]{A} \\
    \wftm*[\GG]{M}{\tysubst{A}{\Gd}}}
  {\eqtm*[\GG]{\tmsubst{\tmvar}{\substsnoc*{\Gd}{M}}}{M}{\tysubst{A}{\Gd}}}
\end{mathparpagebreakable}

\subsection{Bridge types}~%

\begin{mathparpagebreakable}
  \inferrule[ty-bridge]
  {\wftype*[\ctxbdim{\GG}]{A} \\
    \wftm*[\GG]{M_0}{\tysubst{A}{\substbface0}} \\
    \wftm*[\GG]{M_1}{\tysubst{A}{\substbface1}}}
  {\wftype*[\GG]{\tybridge{A}{M_0}{M_1}}}
  \and
  \inferrule[tm-blam]
  {\wftype*[\ctxbdim{\GG}]{A} \\
    \wftm*[\ctxbdim{\GG}]{M}{A}}
  {\wftm*[\GG]{\tmblam{M}}{\tybridge{A}{\tmsubst{M}{\substbface0}}{\tmsubst{M}{\substbface1}}}}
  \and
  \inferrule[tm-bapp]
  {\wfbdim*[\GG]{\bmr} \\
    \wftype*[\ctxbdim{\ctxres*{\GG}{\bmr}}]{A} \\
    \wftm*[\ctxres*{\GG}{\bmr}]{M_0}{\tysubst{A}{\substbface0}} \\
    \wftm*[\ctxres*{\GG}{\bmr}]{M_1}{\tysubst{A}{\substbface1}} \\
    \wftm*[\ctxres*{\GG}{\bmr}]{P}{\tybridge{A}{M_0}{M_1}}}
  {\wftm*[\GG]{\tmbapp{P}{\bmr}}{\tysubst{A}{\substbdim*{\substid}{\bmr}}}}
  \and
  \inferrule[tm-bapp-boundary]
  {\Ge \in \{0, 1\} \\
    \wftype*[\ctxbdim{\GG}]{A} \\
    \wftm*[\GG]{M_0}{\tysubst{A}{\substbface0}} \\
    \wftm*[\GG]{M_1}{\tysubst{A}{\substbface1}} \\
    \wftm*[\GG]{P}{\tybridge{A}{M_0}{M_1}}}
  {\eqtm*[\GG]{\tmbapp{\tmsubst{P}{\substbtranspose{\substbface\Ge}}}{\bdimsubst{\bdimvar}{\substbface\Ge}}}{M_\Ge}{\tysubst{A}{\substbface\Ge}}}
  \and
  \inferrule[tm-blam-beta]
  {\wfbdim*[\GG]{\bmr} \\
    \wftype*[\ctxbdim{\ctxres*{\GG}{\bmr}}]{A} \\
    \wftm*[\ctxbdim{\ctxres*{\GG}{\bmr}}]{M}{A}}
  {\eqtm*[\GG]{\tmbapp{\tmlam{M}}{\bmr}}{\tmsubst{M}{\substbdim*{\substid}{\bmr}}}{\tysubst{A}{\substbdim*{\substid}{\bmr}}}}
  \and
  \inferrule[tm-blam-eta]
  {\wftype*[\ctxbdim{\GG}]{A} \\
    \wftm*[\GG]{M_0}{\tysubst{A}{\substbface0}} \\
    \wftm*[\GG]{M_1}{\tysubst{A}{\substbface1}} \\
    \wftm*[\GG]{P}{\tybridge{A}{M_0}{M_1}}}
  {\eqtm*[\GG]{P}{\tmblam{\tmbapp{\tmsubst{P}{\substbtranspose{\substid}}}{\bdimvar}}}{\tybridge{A}{M_0}{M_1}}}
\end{mathparpagebreakable}

\subsection{Gel types}~%

\begin{mathparpagebreakable}
  \inferrule[ty-gel]
  {\wfbdim*[\GG]{\bmr} \\
    \wftype*[\ctxres*{\GG}{\bmr}]{A_0} \\
    \wftype*[\ctxres*{\GG}{\bmr}]{A_1} \\
    \wftype*[\ctxsnoc{\ctxsnoc{\ctxres*{\GG}{\bmr}}{A_0}}{\tysubst{A_1}{\substproj}}]{R}}
  {\wftype*[\GG]{\tygel{\bmr}{A_0}{A_1}{R}}}
  \and
  \inferrule[ty-gel-boundary]
  {\Ge \in \{0,1\} \\
    \wftype*[\GG]{A_0} \\
    \wftype*[\GG]{A_1} \\
    \wftype*[\ctxsnoc{\ctxsnoc{\GG}{A_0}}{\tysubst{A_1}{\substproj}}]{R}}
  {\eqtype*[\GG]{\tygel{\bm\Ge}{\tysubst{A_0}{\substbtranspose{\substbface\Ge}}}{\tysubst{A_1}{\substbtranspose{\substbface\Ge}}}{\tysubst{R}{\funcsnoc[2]{\substbtranspose{\substbface\Ge}}}}}{A_\Ge}}
  \and
  \inferrule[tm-gel]
  {\wfbdim*[\GG]{\bmr} \\
    \wftm*[\ctxres*{\GG}{\bmr}]{M_0}{A_0} \\
    \wftm*[\ctxres*{\GG}{\bmr}]{M_1}{A_1} \\
    \wftype*[\ctxsnoc{\ctxsnoc{\ctxres*{\GG}{\bmr}}{A_0}}{\tysubst{A_1}{\substproj}}]{R} \\
    \wftm*[\ctxres*{\GG}{\bmr}]{P}{\tysubst{R}{\substsnoc*{\substsnoc*{\substid}{M_0}}{M_1}}}}
  {\wftm*[\GG]{\tmgel{\bmr}{M_0}{M_1}{P}}{\tygel{\bmr}{A_0}{A_1}{R}}}
  \and
  \inferrule[tm-gel-boundary]
  {\Ge \in \{0,1\} \\
    \wftm*[\GG]{M_0}{A_0} \\
    \wftm*[\GG]{M_1}{A_1} \\
    \wftype*[\ctxsnoc{\ctxsnoc{\GG}{A_0}}{\tysubst{A_1}{\substproj}}]{R} \\
    \wftm*[\GG]{P}{\tysubst{R}{\substsnoc*{\substsnoc*{\substid}{M_0}}{M_1}}}}
  {\eqtm*[\GG]{\tmgel{\bm\Ge}{\tmsubst{M_0}{\substbtranspose{\substbface\Ge}}}{\tmsubst{M_1}{\substbtranspose{\substbface\Ge}}}{\tmsubst{P}{\substbtranspose{\substbface\Ge}}}}{M_\Ge}{A_\Ge}}
  \and
  \inferrule[tm-ungel]
  {\wftype*[\GG]{A_0} \\
    \wftype*[\GG]{A_1} \\
    \wftype*[\ctxsnoc{\ctxsnoc{\GG}{A_0}}{\tysubst{A_1}{\substproj}}]{R} \\
    \wftm*[\ctxbdim{\GG}]{Q}{\tygel{\bdimvar}{\tysubst{A_0}{\substbtranspose{\substid}}}{\tysubst{A_1}{\substbtranspose{\substid}}}{\tysubst{R}{{\funcsnoc[2]{\substbtranspose{\substid}}}}}}}
  {\wftm*[\GG]{\tmungel{Q}}{\tysubst{R}{\substsnoc*{\substsnoc*{\substid}{\tysubst{Q}{\substbface0}}}{\tysubst{Q}{\substbface1}}}}}
  \and
  \inferrule[tm-gel-beta]
  {\wftm*[\GG]{M_0}{A_0} \\
    \wftm*[\GG]{M_1}{A_1} \\
    \wftype*[\ctxsnoc{\ctxsnoc{\GG}{A_0}}{\tysubst{A_1}{\substproj}}]{R} \\
    \wftm*[\GG]{P}{\tysubst{R}{\substsnoc*{\substsnoc*{\substid}{M_0}}{M_1}}}}
  {\eqtm*[\GG]{\tmungel{\tmgel{\bdimvar}{\tmsubst{M_0}{\substbtranspose{\substid}}}{\tmsubst{M_1}{\substbtranspose{\substid}}}{\tmsubst{P}{\substbtranspose{\substid}}}}}{P}{\tysubst{R}{\substsnoc*{\substsnoc*{\substid}{M_0}}{M_1}}}}
  \and
  \inferrule[tm-gel-eta]
  {\wfbdim*[\GG]{\bmr} \\
    \wftype*[\ctxres*{\GG}{\bmr}]{A_0} \\
    \wftype*[\ctxres*{\GG}{\bmr}]{A_1} \\
    \wftype*[\ctxsnoc{\ctxsnoc{\ctxres*{\GG}{\bmr}}{A_0}}{\tysubst{A_1}{\substproj}}]{R} \\
    \wftm*[\ctxbdim{\ctxres*{\GG}{\bmr}}]{Q}{\tygel{\bdimvar}{\tysubst{A_0}{\substbtranspose{\substid}}}{\tysubst{A_1}{\substbtranspose{\substid}}}{\tysubst{R}{\funcsnoc[2]{\substbtranspose{\substid}}}}}}
  {\eqtm*[\GG]{\tmsubst{Q}{\substbdim*{\substid}{\bmr}}}{\tmgel{\bmr}{\tmsubst{Q}{\substbface0}}{\tmsubst{Q}{\substbface1}}{\tmungel{Q}}}{\tygel{\bmr}{A_0}{A_1}{R}}}
\end{mathparpagebreakable}

\subsection{Extent}~%

\begin{mathparpagebreakable}
  \inferrule[tm-extent]
  {\wfbdim*[\GG]{\bmr} \\
    \wftype*[\ctxbdim{\ctxres*{\GG}{\bmr}}]{A} \\
    \wftype*[\ctxsnoc{\ctxbdim{\ctxres*{\GG}{\bmr}}}{A}]{B} \\
    \wftm*[\GG]{M}{\tysubst{A}{\substbdim*{\substid}{\bmr}}} \\
    \wftm*[\ctxsnoc{\ctxres*{\GG}{\bmr}}{\tysubst{A}{\substbface0}}]{N_0}{\tysubst{B}{\funcsnoc{\substbface0}}} \\
    \wftm*[\ctxsnoc{\ctxres*{\GG}{\bmr}}{\tysubst{A}{\substbface1}}]{N_1}{\tysubst{B}{\funcsnoc{\substbface1}}} \\
    \wftm*[\ctxsnoc{\ctxsnoc{\ctxsnoc{\ctxres*{\GG}{\bmr}}{\tysubst{A}{\substbface0}}}{\tysubst{A}{\substconc{\substbface1}{\substproj}}}}{\tybridge{\tysubst{A}{\substproj[2]}}{\tmsubst{\tmvar}{\substproj}}{\tmvar}}]{N}{\tybridge{\tysubst{B}{\substsnoc*{\substbdim*{(\substconc{\substproj[3]}{\substbtranspose{\substid}})}{\bdimvar}}{\tmbapp{\tmsubst{\tmvar}{\substbtranspose{\substid}}}{\bdimvar}}}}{\tmsubst{N_0}{\substproj[2]}}{\tmsubst{N_1}{\substconc{\funcsnoc{\substproj}}{\substproj}}}}}
  {\wftm*[\GG]{\tmextent{\bmr}{M}{N_0}{N_1}{N}}{\tysubst{B}{\substsnoc*{\substbdim*{\substid}{\bmr}}{M}}}}
  \and
  \inferrule[tm-extent-boundary]
  {\Ge \in \{0,1\} \\
    \wftype*[\ctxbdim{\GG}]{A} \\
    \wftype*[\ctxsnoc{\ctxbdim{\GG}}{A}]{B} \\
    \wftm*[\GG]{M}{\tysubst{A}{\substbface\Ge}} \\
    \wftm*[\ctxsnoc{\GG}{\tysubst{A}{\substbface0}}]{N_0}{\tysubst{B}{\funcsnoc{\substbface0}}} \\
    \wftm*[\ctxsnoc{\GG}{\tysubst{A}{\substbface1}}]{N_1}{\tysubst{B}{\funcsnoc{\substbface1}}} \\
    \wftm*[\ctxsnoc{\ctxsnoc{\ctxsnoc{\GG}{\tysubst{A}{\substbface0}}}{\tysubst{A}{\substconc{\substbface1}{\substproj}}}}{\tybridge{\tysubst{A}{\substproj[2]}}{\tmsubst{\tmvar}{\substproj}}{\tmvar}}]{N}{\tybridge{\tysubst{B}{\substsnoc*{\substbdim*{(\substconc{\substproj[3]}{\substbtranspose{\substid}})}{\bdimvar}}{\tmbapp{\tmsubst{\tmvar}{\substbtranspose{\substid}}}{\bdimvar}}}}{\tmsubst{N_0}{\substproj[2]}}{\tmsubst{N_1}{\substconc{\funcsnoc{\substproj}}{\substproj}}}}}
  {\eqtm*[\GG]{\tmextent{\bdimsubst{\bdimvar}{\substbface\Ge}}{M}{\tmsubst{N_0}{\funcsnoc{\substbtranspose{\substbface\Ge}}}}{\tmsubst{N_1}{\funcsnoc{\substbtranspose{\substid}}}}{\tmsubst{N}{\funcsnoc[3]{\substbtranspose{\substid}}}}}{\tmsubst{N_\Ge}{\substsnoc*{\substid}{M}}}{\tysubst{B}{\substsnoc*{\substbface\Ge}{M}}}}
  \and
  \inferrule[tm-extent-beta]
  {\wfbdim*[\GG]{\bmr} \\
    \wftype*[\ctxbdim{\ctxres*{\GG}{\bmr}}]{A} \\
    \wftype*[\ctxsnoc{\ctxbdim{\ctxres*{\GG}{\bmr}}}{A}]{B} \\
    \wftm*[\ctxbdim{\ctxres*{\GG}{\bmr}}]{M}{A} \\
    \wftm*[\ctxsnoc{\ctxres*{\GG}{\bmr}}{\tysubst{A}{\substbface0}}]{N_0}{\tysubst{B}{\funcsnoc{\substbface0}}} \\
    \wftm*[\ctxsnoc{\ctxres*{\GG}{\bmr}}{\tysubst{A}{\substbface1}}]{N_1}{\tysubst{B}{\funcsnoc{\substbface1}}} \\
    \wftm*[\ctxsnoc{\ctxsnoc{\ctxsnoc{\ctxres*{\GG}{\bmr}}{\tysubst{A}{\substbface0}}}{\tysubst{A}{\substconc{\substbface1}{\substproj}}}}{\tybridge{\tysubst{A}{\substproj[2]}}{\tmsubst{\tmvar}{\substproj}}{\tmvar}}]{N}{\tybridge{\tysubst{B}{\substsnoc*{\substbdim*{(\substconc{\substproj[3]}{\substbtranspose{\substid}})}{\bdimvar}}{\tmbapp{\tmsubst{\tmvar}{\substbtranspose{\substid}}}{\bdimvar}}}}{\tmsubst{N_0}{\substproj[2]}}{\tmsubst{N_1}{\substconc{\funcsnoc{\substproj}}{\substproj}}}}}
  {\eqtm*[\GG]{\tmextent{\bmr}{\tmsubst{M}{\substsnoc*{\substid}{\bmr}}}{N_0}{N_1}{N}}{\tmbapp{\tmsubst{N}{\substsnoc*{\substsnoc*{\substsnoc*{\substid}{\tmsubst{M}{\substbface0}}}{\tmsubst{M}{\substbface1}}}{\tmblam{M}}}}{\bmr}}{\tysubst{B}{\substsnoc*{\substbdim*{\substid}{\bmr}}{M}}}}
\end{mathparpagebreakable}

\bibliographystyle{alpha}
\bibliography{main}

\newcommand{\etalchar}[1]{$^{#1}$}
\begin{thebibliography}{CCHM15}

\bibitem[ABC{\etalchar{+}}19]{abcfhl}
Carlo Angiuli, Guillaume Brunerie, Thierry Coquand, Kuen-Bang~Hou (Favonia),
  Robert Harper, and Daniel~R. Licata.
\newblock Syntax and models of cartesian cubical type theory.
\newblock Unpublished draft, February 2019.

\bibitem[ACS15]{ahrens15}
Benedikt Ahrens, Paolo Capriotti, and R{\'{e}}gis Spadotti.
\newblock Non-wellfounded trees in homotopy type theory.
\newblock In Thorsten Altenkirch, editor, {\em 13th International Conference on
  Typed Lambda Calculi and Applications, {TLCA} 2015, July 1-3, 2015, Warsaw,
  Poland}, volume~38 of {\em LIPIcs}, pages 17--30. Schloss Dagstuhl -
  Leibniz-Zentrum f{\"{u}}r Informatik, 2015.

\bibitem[AFH18]{angiuli18}
Carlo Angiuli, Kuen-Bang~Hou (Favonia), and Robert Harper.
\newblock Cartesian cubical computational type theory: Constructive reasoning
  with paths and equalities.
\newblock In {\em 27th {EACSL} Annual Conference on Computer Science Logic,
  {CSL} 2018, September 4-7, 2018, Birmingham, {UK}}, pages 6:1--6:17, 2018.

\bibitem[AGJ14]{atkey14}
Robert Atkey, Neil Ghani, and Patricia Johann.
\newblock A relationally parametric model of dependent type theory.
\newblock In {\em The 41st Annual {ACM} {SIGPLAN-SIGACT} Symposium on
  Principles of Programming Languages, {POPL} '14, San Diego, CA, USA, January
  20-21, 2014}, pages 503--516, 2014.

\bibitem[All87]{allen87}
Stuart Allen.
\newblock A non-type-theoretic definition of {Martin-L{\"{o}}f's} types.
\newblock In {\em Proceedings of the Symposium on Logic in Computer Science
  {(LICS} '87), Ithaca, New York, USA, June 22-25, 1987}, pages 215--221, 1987.

\bibitem[Ang19]{angiuli19}
Carlo Angiuli.
\newblock {\em Computational Semantics of Cartesian Cubical Type Theory}.
\newblock PhD thesis, Carnegie Mellon University, 2019.

\bibitem[AW09]{awodey09}
Steve Awodey and Michael~A. Warren.
\newblock Homotopy theoretic models of identity types.
\newblock {\em Math. Proc. Cambridge Philos. Soc.}, 146(1):45--55, 2009.

\bibitem[Awo18]{awodey18}
Steve Awodey.
\newblock A cubical model of homotopy type theory.
\newblock {\em Ann. Pure Appl. Logic}, 169(12):1270--1294, 2018.

\bibitem[BCH13]{bch}
Marc Bezem, Thierry Coquand, and Simon Huber.
\newblock A model of type theory in cubical sets.
\newblock In {\em 19th International Conference on Types for Proofs and
  Programs, {TYPES} 2013, April 22-26, 2013, Toulouse, France}, pages 107--128,
  2013.

\bibitem[BCH19]{bezem19}
Marc Bezem, Thierry Coquand, and Simon Huber.
\newblock The univalence axiom in cubical sets.
\newblock {\em J. Autom. Reasoning}, 63(2):159--171, 2019.

\bibitem[BCM15]{bernardy15}
Jean{-}Philippe Bernardy, Thierry Coquand, and Guilhem Moulin.
\newblock A presheaf model of parametric type theory.
\newblock {\em Electr. Notes Theor. Comput. Sci.}, 319:67--82, 2015.

\bibitem[BELS16]{booij16}
Auke~Bart Booij, Mart{\'{\i}}n~H{\"{o}}tzel Escard{\'{o}}, Peter~LeFanu
  Lumsdaine, and Michael Shulman.
\newblock Parametricity, automorphisms of the universe, and excluded middle.
\newblock In Silvia Ghilezan, Herman Geuvers, and Jelena Ivetic, editors, {\em
  22nd International Conference on Types for Proofs and Programs, {TYPES} 2016,
  May 23-26, 2016, Novi Sad, Serbia}, volume~97 of {\em LIPIcs}, pages
  7:1--7:14. Schloss Dagstuhl - Leibniz-Zentrum f{\"{u}}r Informatik, 2016.

\bibitem[BHN14]{benton14}
Nick Benton, Martin Hofmann, and Vivek Nigam.
\newblock Abstract effects and proof-relevant logical relations.
\newblock In {\em The 41st Annual {ACM} {SIGPLAN-SIGACT} Symposium on
  Principles of Programming Languages, {POPL} '14, San Diego, CA, USA, January
  20-21, 2014}, pages 619--632, 2014.

\bibitem[BJP10]{bernardy10}
Jean{-}Philippe Bernardy, Patrik Jansson, and Ross Paterson.
\newblock Parametricity and dependent types.
\newblock In {\em {ICFP} 2010, Baltimore, Maryland, USA, September 27-29,
  2010}, pages 345--356, 2010.

\bibitem[BM12]{bernardy12}
Jean{-}Philippe Bernardy and Guilhem Moulin.
\newblock A computational interpretation of parametricity.
\newblock In {\em {LICS} 2012, Dubrovnik, Croatia, June 25-28, 2012}, pages
  135--144, 2012.

\bibitem[BM13]{bernardy13}
Jean{-}Philippe Bernardy and Guilhem Moulin.
\newblock Type-theory in color.
\newblock In {\em ICFP 2013, Boston, MA, {USA} - September 25 - 27, 2013},
  pages 61--72, 2013.

\bibitem[Bru18]{brunerie18}
Guillaume Brunerie.
\newblock Computer-generated proofs for the monoidal structure of the smash
  product.
\newblock \emph{Homotopy Type Theory Electronic Seminar Talks}, November 2018.

\bibitem[Car86]{cartmell86}
John Cartmell.
\newblock Generalised algebraic theories and contextual categories.
\newblock {\em Ann. Pure Appl. Log.}, 32:209--243, 1986.

\bibitem[CCHM15]{cchm}
Cyril Cohen, Thierry Coquand, Simon Huber, and Anders M{\"{o}}rtberg.
\newblock Cubical type theory: {A} constructive interpretation of the
  univalence axiom.
\newblock In {\em 21st International Conference on Types for Proofs and
  Programs, {TYPES} 2015, May 18-21, 2015, Tallinn, Estonia}, pages 5:1--5:34,
  2015.

\bibitem[CH18]{chtt-iv}
Evan Cavallo and Robert Harper.
\newblock Computational higher type theory {IV:} inductive types.
\newblock {\em CoRR}, abs/1801.01568, 2018.

\bibitem[CH19a]{cavallo19b}
Evan Cavallo and Robert Harper.
\newblock Higher inductive types in cubical computational type theory.
\newblock {\em {PACMPL}}, 3({POPL}):1:1--1:27, 2019.

\bibitem[CH19b]{cavallo19a}
Evan Cavallo and Robert Harper.
\newblock Parametric cubical type theory, 2019.

\bibitem[CH20]{cavallo20b}
Evan Cavallo and Robert Harper.
\newblock Internal parametricity for cubical type theory.
\newblock In Maribel Fern{\'{a}}ndez and Anca Muscholl, editors, {\em 28th
  {EACSL} Annual Conference on Computer Science Logic, {CSL} 2020, January
  13-16, 2020, Barcelona, Spain}, volume 152 of {\em LIPIcs}, pages
  13:1--13:17. Schloss Dagstuhl - Leibniz-Zentrum f{\"{u}}r Informatik, 2020.

\bibitem[Che12]{cheney12}
James Cheney.
\newblock A dependent nominal type theory.
\newblock {\em Logical Methods in Computer Science}, 8(1), 2012.

\bibitem[CHM18]{coquand18}
Thierry Coquand, Simon Huber, and Anders M\"ortberg.
\newblock On higher inductive types in cubical type theory.
\newblock In {\em LICS 2018, Oxford, UK, July 9-12, 2018}, 2018.

\bibitem[CMS20]{cavallo20}
Evan Cavallo, Anders M{\"{o}}rtberg, and Andrew~W. Swan.
\newblock Unifying cubical models of univalent type theory.
\newblock In Maribel Fern{\'{a}}ndez and Anca Muscholl, editors, {\em 28th
  {EACSL} Annual Conference on Computer Science Logic, {CSL} 2020, January
  13-16, 2020, Barcelona, Spain}, volume 152 of {\em LIPIcs}, pages
  14:1--14:17. Schloss Dagstuhl - Leibniz-Zentrum f{\"{u}}r Informatik, 2020.

\bibitem[Coq18]{coquand18c}
Thierry Coquand.
\newblock Canonicity and normalisation for dependent type theory.
\newblock {\em CoRR}, abs/1810.09367, 2018.

\bibitem[DP02]{davey02}
Brian~A. Davey and Hilary~A. Priestley.
\newblock {\em Introduction to Lattices and Order, Second Edition}.
\newblock Cambridge University Press, 2002.

\bibitem[GJF{\etalchar{+}}15]{ghani15}
Neil Ghani, Patricia Johann, Fredrik~Nordvall Forsberg, Federico Orsanigo, and
  Tim Revell.
\newblock Bifibrational functorial semantics of parametric polymorphism.
\newblock {\em Electr. Notes Theor. Comput. Sci.}, 319:165--181, 2015.

\bibitem[Hof97]{hofmann97}
Martin Hofmann.
\newblock {\em Syntax and Semantics of Dependent Types}, pages 79--130.
\newblock Publications of the Newton Institute. Cambridge University Press,
  1997.

\bibitem[HS98]{hofmann95}
Martin Hofmann and Thomas Streicher.
\newblock The groupoid interpretation of type theory.
\newblock In {\em Twenty-five years of constructive type theory ({V}enice,
  1995)}, volume~36 of {\em Oxford Logic Guides}, pages 83--111. Oxford Univ.
  Press, New York, 1998.

\bibitem[Kan55]{kan55}
Daniel~M. Kan.
\newblock Abstract homotopy.\ {I}.
\newblock {\em Proceedings of the National Academy of Sciences of the United
  States of America}, 41(12):1092--1096, 1955.

\bibitem[KD13]{krishnaswami13}
Neelakantan~R. Krishnaswami and Derek Dreyer.
\newblock Internalizing relational parametricity in the extensional calculus of
  constructions.
\newblock In {\em Computer Science Logic 2013 {(CSL} 2013), {CSL} 2013,
  September 2-5, 2013, Torino, Italy}, pages 432--451, 2013.

\bibitem[KL12]{kapulkin12}
Chris Kapulkin and Peter~LeFanu Lumsdaine.
\newblock The simplicial model of univalent foundations (after {V}oevodsky),
  2012.
\newblock arXiv:1211.2851.

\bibitem[KL20]{kapulkin20}
Chris Kapulkin and Peter~LeFanu Lumsdaine.
\newblock The law of excluded middle in the simplicial model of type theory.
\newblock Unpublished note, 2020.

\bibitem[KvR19]{kraus19}
Nicolai Kraus and Jakob von Raumer.
\newblock Path spaces of higher inductive types in homotopy type theory.
\newblock In {\em 34th Annual {ACM/IEEE} Symposium on Logic in Computer
  Science, {LICS} 2019, Vancouver, BC, Canada, June 24-27, 2019}, pages 1--13.
  {IEEE}, 2019.

\bibitem[ML75]{martin-lof75}
Per Martin-L\"of.
\newblock An intuitionistic theory of types: predicative part.
\newblock In H.E. Rose and J.C. Shepherdson, editors, {\em Logic Colloquium
  '73}, volume~80 of {\em Studies in Logic and the Foundations of Mathematics},
  pages 73--118. North-Holland, 1975.

\bibitem[ML82]{martin-lof82}
Per Martin-L\"of.
\newblock Constructive mathematics and computer programming.
\newblock In L.J. Cohen, J.~{\L}o\'s, H.~Pfeiffer, and K.-P. Podewski, editors,
  {\em Logic, Methodology and Philosophy of Science}, volume~VI, pages
  153--175, 1982.

\bibitem[Mou16]{moulin16}
Guilhem Moulin.
\newblock {\em Internalizing Parametricity}.
\newblock PhD thesis, Chalmers University of Technology, Gothenburg, Sweden,
  2016.

\bibitem[ND18]{nuyts18}
Andreas Nuyts and Dominique Devriese.
\newblock Degrees of relatedness: {A} unified framework for parametricity,
  irrelevance, ad hoc polymorphism, intersections, unions and algebra in
  dependent type theory.
\newblock In {\em Proceedings of the 33rd Annual {ACM/IEEE} Symposium on Logic
  in Computer Science, {LICS} 2018, Oxford, UK, July 09-12, 2018}, pages
  779--788, 2018.

\bibitem[Nuy20]{nuyts20}
Andreas Nuyts.
\newblock {\em Contributions to Multimode and Presheaf Type Theory}.
\newblock PhD thesis, KU Leuven, Leuven, Belgium, 2020.

\bibitem[NVD17]{nuyts17}
Andreas Nuyts, Andrea Vezzosi, and Dominique Devriese.
\newblock Parametric quantifiers for dependent type theory.
\newblock {\em {PACMPL}}, 1({ICFP}):32:1--32:29, 2017.

\bibitem[OP18]{orton18}
Ian Orton and Andrew~M. Pitts.
\newblock Axioms for modelling cubical type theory in a topos.
\newblock {\em Logical Methods in Computer Science}, 14(4), 2018.

\bibitem[Pit13]{pitts13}
Andrew~M. Pitts.
\newblock {\em Nominal Sets: Names and Symmetry in Computer Science}.
\newblock Cambridge University Press, Cambridge, 2013.

\bibitem[Pit14]{pitts14}
Andrew~M. Pitts.
\newblock Nominal presentation of cubical sets models of type theory.
\newblock In {\em 20th International Conference on Types for Proofs and
  Programs, {TYPES} 2014, May 12-15, 2014, Paris, France}, pages 202--220,
  2014.

\bibitem[Rey83]{reynolds83}
John~C. Reynolds.
\newblock Types, abstraction and parametric polymorphism.
\newblock In {\em {IFIP} Congress}, pages 513--523, 1983.

\bibitem[Rie18]{riehl18}
Emily Riehl.
\newblock On the directed univalence axiom.
\newblock Talk slides, AMS Special Session on Homotopy Type Theory, Joint
  Mathematics Meetings, January 2018.

\bibitem[Rij18]{rijke18}
Egbert Rijke.
\newblock {\em Classifying Types: Topics in synthetic homotopy theory}.
\newblock PhD thesis, Carnegie Mellon University, 2018.

\bibitem[RR94]{robinson94}
E.~P. Robinson and Giuseppe Rosolini.
\newblock Reflexive graphs and parametric polymorphism.
\newblock In {\em Proceedings of the Ninth Annual Symposium on Logic in
  Computer Science {(LICS} '94), Paris, France, July 4-7, 1994}, pages
  364--371. {IEEE} Computer Society, 1994.

\bibitem[RS17]{riehl17}
Emily Riehl and Michael Shulman.
\newblock A type theory for synthetic $\infty$-categories.
\newblock {\em Higher Structures}, 1(1):116--193, 2017.

\bibitem[Shu15]{shulman15}
Michael Shulman.
\newblock Univalence for inverse diagrams and homotopy canonicity.
\newblock {\em Math. Struct. Comput. Sci.}, 25(5):1203--1277, 2015.

\bibitem[Shu18]{shulman18}
Michael Shulman.
\newblock Brouwer's fixed-point theorem in real-cohesive homotopy type theory.
\newblock {\em Math. Struct. Comput. Sci.}, 28(6):856--941, 2018.

\bibitem[SJ18]{sojakova18}
Kristina Sojakova and Patricia Johann.
\newblock A general framework for relational parametricity.
\newblock In {\em Proceedings of the 33rd Annual {ACM/IEEE} Symposium on Logic
  in Computer Science, {LICS} 2018, Oxford, UK, July 09-12, 2018}, pages
  869--878, 2018.

\bibitem[SS12]{schreiber12}
Urs Schreiber and Michael Shulman.
\newblock Quantum gauge field theory in cohesive homotopy type theory.
\newblock In Ross Duncan and Prakash Panangaden, editors, {\em Proceedings 9th
  Workshop on Quantum Physics and Logic, {QPL} 2012, Brussels, Belgium, 10-12
  October 2012}, volume 158 of {\em {EPTCS}}, pages 109--126, 2012.

\bibitem[Tak01]{takeuti01}
Izumi Takeuti.
\newblock The theory of parametricity in lambda cube.
\newblock Technical Report 1217, Kyoto University, 2001.

\bibitem[TTS18]{tabareau18}
Nicolas Tabareau, {\' E}ric Tanter, and Matthieu Sozeau.
\newblock Equivalences for free: Univalent parametricity for effective
  transport.
\newblock {\em Proceedings of the ACM on Programming Languages},
  2(ICFP):92:1--92:29, September 2018.

\bibitem[{Uni}13]{hott-book}
The {Univalent Foundations Program}.
\newblock {\em Homotopy Type Theory: Univalent Foundations of Mathematics}.
\newblock \url{https://homotopytypetheory.org/book}, Institute for Advanced
  Study, 2013.

\bibitem[VAG{\etalchar{+}}]{unimath}
Vladimir Voevodsky, Benedikt Ahrens, Daniel Grayson, et~al.
\newblock {UniMath: Univalent Mathematics}.
\newblock Available at \url{https://github.com/UniMath}.

\bibitem[vD18]{van-doorn18}
Floris van Doorn.
\newblock {\em On the Formalization of Higher Inductive Types and Synthetic
  Homotopy Theory}.
\newblock PhD thesis, Carnegie Mellon University, 2018.

\bibitem[vdBG12]{van-den-berg12}
Benno van~den Berg and Richard Garner.
\newblock Topological and simplicial models of identity types.
\newblock {\em {ACM} Trans. Comput. Log.}, 13(1):3:1--3:44, 2012.

\bibitem[Voe15]{voevodsky15}
Vladimir Voevodsky.
\newblock An experimental library of formalized mathematics based on the
  univalent foundations.
\newblock {\em Mathematical Structures in Computer Science}, 25:1278--1294,
  2015.

\bibitem[Wad89]{wadler89}
Philip Wadler.
\newblock Theorems for free!
\newblock In {\em {FPCA} 1989, London, UK, September 11-13, 1989}, pages
  347--359, 1989.

\bibitem[War08]{warren08}
Michael~Alton Warren.
\newblock {\em {Homotopy Theoretic Aspects of Constructive Type Theory}}.
\newblock PhD thesis, Carnegie Mellon University, 2008.

\bibitem[WL20]{weaver20}
Matthew~Z. Weaver and Daniel~R. Licata.
\newblock A constructive model of directed univalence in bicubical sets.
\newblock In {\em {LICS} '20: 35th Annual {ACM/IEEE} Symposium on Logic in
  Computer Science, Saarbr{\"{u}}cken, Germany, July 8-11, 2020}, pages
  915--928, 2020.

\end{thebibliography}

\end{document}